    \let\Cref\crtCref
    \let\cref\crtcref
\newtheorem*{proposition*}{Proposition}
\newtheorem*{theorem*}{Theorem}
\newtheorem*{corollary*}{Corollary}
\crefname{theorem}{Thm.}{Thms.}
\crefname{proposition}{Prop.}{Props.}
\crefname{lemma}{Lem.}{Lems.}
\crefname{corollary}{Cor.}{Cors.}
\crefname{definition}{Def.}{Defs.}
\crefname{section}{\S}{\S}
\crefname{appendix}{\S}{\S}
\crefname{subsubsubappendix}{\S}{\S}
\crefname{figure}{Fig.}{Figs.}
\definecolor[named]{ACMBlue}{cmyk}{1,0.1,0,0.1}
\definecolor[named]{ACMYellow}{cmyk}{0,0.16,1,0}
\definecolor[named]{ACMOrange}{cmyk}{0,0.42,1,0.01}
\definecolor[named]{ACMRed}{cmyk}{0,0.90,0.86,0}
\definecolor[named]{ACMLightBlue}{cmyk}{0.49,0.01,0,0}
\definecolor[named]{ACMGreen}{cmyk}{0.20,0,1,0.19}
\definecolor[named]{ACMPurple}{cmyk}{0.55,1,0,0.15}
\definecolor[named]{ACMDarkBlue}{cmyk}{1,0.58,0,0.21}
\let\orgdescriptionlabel\descriptionlabel
\renewcommand*{\descriptionlabel}[1]{%
    \let\orglabel\label
    \let\label\@gobble
    \phantomsection
    \edef\@currentlabel{#1}%
    \let\label\orglabel
    \orgdescriptionlabel{#1}%
}
\let\original@footnote\footnote
\newcommand{\align@footnote}[1]{%
    \ifmeasuring@
        \chardef\@tempfn=\value{footnote}%
        \footnotemark
        \setcounter{footnote}{\@tempfn}%
    \else
        \iffirstchoice@
        \original@footnote{#1}%
        \fi
    \fi}
\pretocmd{\start@align}{\let\footnote\align@footnote}{}{}
\newcounter{mylabelcounter}
\newcommand{\labeltext}[2]{%
#1\refstepcounter{mylabelcounter}%
\immediate\write\@auxout{%
  \string\newlabel{#2}{{1}{\thepage}{{\unexpanded{#1}}}{mylabelcounter.\number\value{mylabelcounter}}{}}%
}%
}
\newcommand{\proofcasebase}[1]{\colorbox{black!10}{\vphantom{()}\hspace{.2em}\textsf{#1}\hspace{.2em}}~}
\newcommand{\proofcase}[1]{\noindent\setlength{\fboxsep}{0pt}\proofcasebase{#1}}
\newcommand{\subproofcase}[1]{\noindent\setlength{\fboxsep}{0pt}\quad\proofcasebase{#1}}
\newcommand{\subsubproofcase}[1]{\noindent\setlength{\fboxsep}{0pt}\qquad\proofcasebase{#1}}
\DeclareSymbolFont{extraup}{U}{zavm}{m}{n}
\DeclareMathSymbol{\varspadesuit}{\mathalpha}{extraup}{85}
\DeclareMathSymbol{\varheartsuit}{\mathalpha}{extraup}{86}
\DeclareMathSymbol{\vardiamondsuit}{\mathalpha}{extraup}{87}
\DeclareMathSymbol{\varclubsuit}{\mathalpha}{extraup}{88}
\DeclareMathSymbol{\shortminus}{\mathalpha}{operators}{`-}
\DeclareRobustCommand{\lipicsEnd}{%
	\leavevmode\unskip\penalty9999 \hbox{}\nobreak\hfill
	\quad\hbox{$\lrcorner$}%
}
\BODY\end{proof}}
\DeclareRobustCommand{\claimqedhere}{\qedhere}
\definecolor{yoshikieditcolor}{RGB}{200,230,200}
\tikzstyle{mynode} = [inner sep = 1.5pt, fill= gray!20,  font = \scriptsize, draw, circle]
\tikzstyle{mysmallnode} = [inner sep = 1.pt, fill= gray!20, font = \scriptsize, draw, circle]
\tikzset{earrow/.style={>={{[flex] Latex[length=.08cm, width=2.5pt]}}}}
\tikzset{homoarrow/.style={earrow, line width = .8pt, densely dotted, color = olive, opacity=0.8}}
\DeclarePairedDelimiter\set{\{}{\}}
\DeclarePairedDelimiter\tuple{\langle}{\rangle}
\knowledgenewrobustcmd{\range}[2]{\cmdkl{[}#1..#2\cmdkl{]}}
\knowledgenewrobustcmd{\rangeone}[1]{\cmdkl{[}#1\cmdkl{]}}
\knowledgenewrobustcmd{\card}{\#}
\knowledgenewrobustcmd{\pfun}{\mathbin{\cmdkl{\rightharpoonup}}}
\knowledgenewrobustcmd{\fdom}{\cmdkl{\operatorname{dom}}}
\knowledgenewrobustcmd{\dcup}{\mathbin{\cmdkl{\sqcup}}}
\newcommand{\const}[1]{\mathsf{#1}}
\newcommand{\bl}{\cdot}
\newcommand{\defeq}{\coloneq}
\knowledgenewrobustcmd{\nat}{\cmdkl{\mathbb{N}}}
\knowledgenewrobustcmd{\Z}{\cmdkl{\mathbb{Z}}}
\knowledgenewrobustcmd{\pset}{\cmdkl{\wp}}
\knowledgenewrobustcmd{\univ}[1]{\cmdkl{|}#1\cmdkl{|}}
\NewDocumentCommand\la{O{1}}{%
    \ifcase#1 undefined
    \or L
    \or K
    \else undefined \fi
}
\NewDocumentCommand\rel{O{1}}{%
    \ifcase#1 undefined
    \or R
    \or S
    \else undefined \fi
}
\knowledgenewrobustcmd{\diagonal}{\cmdkl{\triangle}}
\knowledgenewrobustcmd{\eps}{\cmdkl{\varepsilon}}
\NewDocumentCommand\word{O{1}}{%
    \ifcase#1 undefined
    \or w
    \or v
    \else undefined \fi
}
\knowledgenewrobustcmd{\len}[1]{\cmdkl{\|}#1\cmdkl{\|}}
\DeclarePairedDelimiter\dia{\langle}{\rangle}
\DeclarePairedDelimiter\bo{[}{]}
\newcommand{\truec}{\mathtt{T}}
\newcommand{\falsec}{\mathtt{F}}
\newcommand{\union}{\mathbin{+}}
\newcommand{\compo}{\mathbin{;}}
\newcommand{\emp}{\const{0}}
\newcommand{\id}{\const{1}}
\DeclareMathOperator*{\bigcompo}{\scalerel*{\compo}{\sum}}
\newcommand{\com}[1]{\overline{#1}}
\newcommand{\dom}{\mathtt{\mathop{d}}}
\newcommand{\adom}{\mathtt{\mathop{a}}}
\newcommand{\capid}{\cap_{\id}}
\newcommand{\capcomid}{\cap_{\com{\id}}}
\newcommand{\vsig}{\mathbb{A}}
\knowledgenewrobustcmd{\exprvsig}{\cmdkl{\vsig}}
\NewDocumentCommand\afml{O{1}}{%
    \ifcase#1 undefined
    \or p
    \or q
    \else undefined \fi
}
\NewDocumentCommand\fml{O{1}}{%
    \ifcase#1 undefined
    \or \varphi
    \or \psi
    \or \rho
    \else undefined \fi
}
\NewDocumentCommand\fmlset{O{1}}{%
    \ifcase#1 undefined
    \or \Gamma
    \or \Delta
    \or \Lambda
    \else undefined \fi
}
\newrobustcmd{\fmlclass}{\Phi}
\knowledgenewrobustcmd{\fmlclassgen}[1]{\fmlclass^{#1}}
\newcommand{\psig}{\mathbb{P}}
\knowledgenewrobustcmd{\exprpsig}{\cmdkl{\psig}}
\NewDocumentCommand\term{O{1}}{%
    \ifcase#1 undefined
    \or t
    \or s
    \or u
    \else undefined \fi
}
\NewDocumentCommand\aterm{O{1}}{%
    \ifcase#1 undefined
    \or a
    \or b
    \else undefined \fi
}
\newrobustcmd{\termclass}{\Pi}
\knowledgenewrobustcmd{\termclassgen}[1]{\termclass^{#1}}
\knowledgenewrobustcmd{\termsubst}[1]{\cmdkl{[}#1\cmdkl{]}}
\NewDocumentCommand\expr{O{1}}{%
    \ifcase#1 undefined
    \or E
    \or F
    \or G
    \else undefined \fi
}
\newrobustcmd{\exprclass}{\mathcal{E}}
\knowledgenewrobustcmd{\exprsubst}[1]{\cmdkl{[}#1\cmdkl{]}}
\newcommand{\sig}{\mathrm{S}}
\knowledgenewrobustcmd{\sigREwLA}{\cmdkl{\sig}_{\text{"\REwLA"}}}
\NewDocumentCommand\kframe{O{1}}{%
    \ifcase#1 undefined
    \or \mathfrak{F}
    \else undefined \fi
}
\NewDocumentCommand\struc{O{1}}{%
    \ifcase#1 undefined
    \or \mathfrak{A}
    \or \mathfrak{B}
    \else undefined \fi
}
\knowledgenewrobustcmd{\strucsubst}[1]{\cmdkl{[}#1\cmdkl{]}}
\knowledgenewrobustcmd{\strucuniv}{\cmdkl{U}}
\knowledgenewrobustcmd{\wordstruc}{\cmdkl{\struc}}
\knowledgenewrobustcmd{\generatedsubstruc}[2]{#1\cmdkl{[}#2\cmdkl{..]}}
\NewDocumentCommand\vertex{O{1}}{%
    \ifcase#1 undefined
    \or c
    \or d
    \or e
    \else undefined \fi
}
\NewDocumentCommand\alg{O{1}}{%
    \ifcase#1 undefined
    \or \mathcal{A}
    \or \mathcal{B}
    \else undefined \fi
}
\NewDocumentCommand\algclass{O{1}}{%
    \ifcase#1 undefined
    \or \mathcal{C}
    \else undefined \fi
}
\NewDocumentCommand\strucclass{O{1}}{%
    \ifcase#1 undefined
    \or \mathcal{C}
    \else undefined \fi
}
\NewDocumentCommand\valclass{O{1}}{%
    \ifcase#1 undefined
    \or \mathcal{V}
    \else undefined \fi
}
\knowledgenewrobustcmd{\sem}[2]{\cmdkl{\llbracket}#1\cmdkl{\rrbracket}^{#2}} %
\knowledgenewrobustcmd{\GREL}[1]{\cmdkl{\mathsf{GREL}}^{#1}}
\knowledgenewrobustcmd{\GRELpreorder}[1]{\cmdkl{\mathsf{GREL}}_{\cmdkl{\lesssim}}^{#1}}
\knowledgenewrobustcmd{\GRELpartialorder}[1]{\cmdkl{\mathsf{GREL}}_{\cmdkl{\le}}^{#1}}
\knowledgenewrobustcmd{\GRELfinpartialorder}[1]{\cmdkl{\mathsf{GREL}}_{\cmdkl{\le_{\mathrm{fin}}}}^{#1}}
\knowledgenewrobustcmd{\GRELspartialorder}[1]{\cmdkl{\mathsf{GREL}}_{\cmdkl{<}}^{#1}}
\knowledgenewrobustcmd{\GRELfinlin}[1]{\cmdkl{\mathsf{GREL}}_{\cmdkl{\le_{\mathrm{fin\shortminus lin}}}}^{#1}}
\knowledgenewrobustcmd{\GRELsfinlin}[1]{\cmdkl{\mathsf{GREL}}_{\cmdkl{<_{\mathrm{fin\shortminus lin}}}}^{#1}}
\knowledgenewrobustcmd{\GRELstfinlin}[1]{\cmdkl{\mathsf{GREL}}_{\cmdkl{\le_{\mathrm{fin\shortminus lin}}^{\mathrm{st}}}}^{#1}}
\knowledgenewrobustcmd{\GRELstsfinlin}[1]{\cmdkl{\mathsf{GREL}}_{\cmdkl{<_{\mathrm{fin\shortminus lin}}^{\mathrm{st}}}}^{#1}}
\knowledgenewrobustcmd{\REL}[1]{\cmdkl{\mathsf{REL}}^{#1}}
\knowledgenewrobustcmd{\EqT}{\cmdkl{\mathrm{EqT}}}
\knowledgenewrobustcmd{\Th}{\cmdkl{\mathrm{Th}}}
\knowledgenewrobustcmd{\wlang}{\cmdkl{\mathcal{L}}}
\knowledgenewrobustcmd{\mlang}{\cmdkl{\mathcal{M}}}
\knowledgenewrobustcmd{\swlang}{\cmdkl{\mathcal{L}_{\mathrm{s}}}}
\knowledgenewrobustcmd{\smlang}{\cmdkl{\mathcal{M}_{\mathrm{s}}}}
\knowledgenewrobustcmd{\modelsfml}{\mathrel{\cmdkl{\models}}}
\newrobustcmd{\Hilbertstyle}{\mathcal{H}}
\knowledgenewrobustcmd{\termclassREwLA}[1]{\cmdkl{\termclass}_{\cmdkl{\REwLA}}^{#1}}
\knowledgenewrobustcmd{\semREwLA}[2]{\cmdkl{\llbracket}#1\cmdkl{\rrbracket}^{#2}}
\knowledgenewrobustcmd{\termclassREwLAp}[1]{\cmdkl{\termclass}_{\cmdkl{\REwLAp}}^{#1}}
\knowledgenewrobustcmd{\semREwLAp}[2]{\cmdkl{\llbracket}#1\cmdkl{\rrbracket}^{#2}}
\knowledgenewrobustcmd{\sigPDL}{\cmdkl{\sig}_{\cmdkl{\PDL}}}
\knowledgenewrobustcmd{\termclassPDL}[1]{\cmdkl{\termclass}_{\cmdkl{\PDL}}^{#1}}
\knowledgenewrobustcmd{\fmlclassPDL}[1]{\cmdkl{\fmlclass}_{\cmdkl{\PDL}}^{#1}}
\knowledgenewrobustcmd{\exprclassPDL}[1]{\cmdkl{\exprclass}_{\cmdkl{\PDL}}^{#1}}
\knowledgenewrobustcmd{\HilbertstylePDL}{\Hilbertstyle^{\PDL}}
\knowledgenewrobustcmd{\vdashPDL}{\mathrel{\cmdkl{\vdash}}_{\cmdkl{\HilbertstylePDL}}}
\knowledgenewrobustcmd{\PDLmodels}{\mathrel{\cmdkl{\models}}} %
\knowledgenewrobustcmd{\semPDL}[2]{\cmdkl{\llbracket}#1\cmdkl{\rrbracket}^{#2}}
\knowledgenewrobustcmd{\HilbertstyleHKTPDL}{\Hilbertstyle^{\HKTPDL}}
\knowledgenewrobustcmd{\vdashHKTPDL}{\mathrel{\cmdkl{\vdash}}_{\cmdkl{\HilbertstyleHKTPDL}}}
\knowledgenewrobustcmd{\sigPDLREwLAp}{\cmdkl{\sig}_{\cmdkl{\PDLREwLAp}}}
\knowledgenewrobustcmd{\termclassPDLREwLAp}[1]{\cmdkl{\termclass}_{\cmdkl{\PDLREwLAp}}^{#1}}
\knowledgenewrobustcmd{\fmlclassPDLREwLAp}[1]{\cmdkl{\fmlclass}_{\cmdkl{\PDLREwLAp}}^{#1}}
\knowledgenewrobustcmd{\exprclassPDLREwLAp}[1]{\cmdkl{\exprclass}_{\cmdkl{\PDLREwLAp}}^{#1}}
\knowledgenewrobustcmd{\semPDLREwLAp}[2]{\cmdkl{\llbracket}#1\cmdkl{\rrbracket}^{#2}}
\knowledgenewrobustcmd{\HilbertstylePDLREwLAp}{\Hilbertstyle^{\PDLREwLAp}_{\le_{\mathrm{fin\shortminus lin}}}}
\knowledgenewrobustcmd{\vdashPDLREwLAp}{\mathrel{\cmdkl{\vdash}}_{\cmdkl{\HilbertstylePDLREwLAp}}}
\knowledgenewrobustcmd\trPDLREwLAptoREwLAp{\cmdkl{\operatorname{Tr}}}
\knowledgenewrobustcmd{\fromifreePDL}[1]{#1\exprsubst{\fromifreePDLsubst}}
\knowledgenewrobustcmd{\fromifreePDLsubst}{\cmdkl{\Theta_0}}
\knowledgenewrobustcmd{\HilbertstylestPDLREwLAp}{\Hilbertstyle^{\PDLREwLAp}_{\le_{\mathrm{fin\shortminus lin}}^{\mathrm{st}}}}
\knowledgenewrobustcmd{\vdashstPDLREwLAp}{\mathrel{\cmdkl{\vdash}}_{\cmdkl{\HilbertstylestPDLREwLAp}}}
\knowledgenewrobustcmd{\termclassPDLREwLA}[1]{\cmdkl{\termclass}_{\cmdkl{\PDLREwLA}}^{#1}}
\knowledgenewrobustcmd{\fmlclassPDLREwLA}[1]{\cmdkl{\fmlclass}_{\cmdkl{\PDLREwLA}}^{#1}}
\knowledgenewrobustcmd{\tonorm}[1]{#1^{\cmdkl{\varheartsuit}}}
\knowledgenewrobustcmd{\tonormone}[1]{#1^{\cmdkl{\varheartsuit_1}}}
\knowledgenewrobustcmd{\tonormtwo}[1]{#1^{\cmdkl{\varheartsuit_2}}}
\knowledgenewrobustcmd{\PDLtonorm}[1]{#1^{\cmdkl{\heartsuit}}}
\knowledgenewrobustcmd{\PDLtonormone}[1]{#1^{\cmdkl{\heartsuit_1}}}
\knowledgenewrobustcmd{\PDLtonormtwo}[1]{#1^{\cmdkl{\heartsuit_2}}}
\knowledgenewrobustcmd{\sigifreePDL}{\cmdkl{\sig}_{\ifreePDL}}
\knowledgenewrobustcmd{\termclassifreePDL}[1]{\cmdkl{\termclass}_{\cmdkl{\ifreePDL}}^{#1}}
\knowledgenewrobustcmd{\fmlclassifreePDL}[1]{\cmdkl{\fmlclass}_{\cmdkl{\ifreePDL}}^{#1}}
\knowledgenewrobustcmd{\exprclassifreePDL}[1]{\cmdkl{\exprclass}_{\cmdkl{\ifreePDL}}^{#1}}
\knowledgenewrobustcmd{\semifreePDL}[2]{\cmdkl{\llbracket}#1\cmdkl{\rrbracket}^{#2}}
\knowledgenewrobustcmd{\HilbertstyleifreePDLsfinlin}{\Hilbertstyle^{\ifreePDL}_{<_{\mathrm{fin\shortminus lin}}}}
\knowledgenewrobustcmd{\vdashifreePDLsfinlin}{\mathrel{\cmdkl{\vdash}}_{\cmdkl{\HilbertstyleifreePDLsfinlin}}}
\knowledgenewrobustcmd{\nvdashifreePDLsfinlin}{\mathrel{\cmdkl{\nvdash}}_{\cmdkl{\HilbertstyleifreePDLsfinlin}}}
\knowledgenewrobustcmd{\HilbertstylestifreePDL}{\Hilbertstyle^{\ifreePDL}_{<_{\mathrm{fin\shortminus lin}}^{\mathrm{st}}}}
\knowledgenewrobustcmd{\vdashstifreePDL}{\mathrel{\cmdkl{\vdash}}_{\cmdkl{\HilbertstylestifreePDL}}}
\knowledgenewrobustcmd{\nvdashstifreePDL}{\mathrel{\cmdkl{\nvdash}}_{\cmdkl{\HilbertstylestifreePDL}}}
\knowledgenewrobustcmd{\HilbertstyleifreePDL}{\Hilbertstyle^{\ifreePDL}}
\knowledgenewrobustcmd{\vdashifreePDL}{\mathrel{\cmdkl{\vdash}}_{\cmdkl{\HilbertstyleifreePDL}}}
\knowledgenewrobustcmd{\nvdashifreePDL}{\mathrel{\cmdkl{\nvdash}}_{\cmdkl{\HilbertstyleifreePDL}}}
\knowledgenewrobustcmd{\cl}{\cmdkl{\operatorname{cl}}}
\knowledgenewrobustcmd{\tcl}{\cmdkl{\operatorname{tcl}}}
\knowledgenewrobustcmd{\clone}{\cmdkl{\operatorname{cl}'}}
\knowledgenewrobustcmd{\cltwo}{\cmdkl{\tilde{\operatorname{cl}}}}
\knowledgenewrobustcmd{\clbo}{\cmdkl{\tilde{\operatorname{cl}}^{\square}}}
\knowledgenewrobustcmd{\at}{\cmdkl{\operatorname{at}}}
\NewDocumentCommand\atom{O{1}}{%
    \ifcase#1 undefined
    \or \alpha
    \or \beta
    \or \gamma
    \else undefined \fi
}
\NewDocumentCommand\atomset{O{1}}{%
    \ifcase#1 undefined
    \or \mathscr{A}
    \or \mathscr{B}
    \else undefined \fi
}
\newcommand{\atomtof}[1]{\widehat{#1}}
\newcommand{\atomsettof}[1]{\bigvee #1}
\knowledgenewrobustcmd{\canonicalmodel}{\cmdkl{\struc}}
\knowledgenewrobustcmd{\canonicalmodelwithoutLob}{\cmdkl{\struc}}
\knowledgenewrobustcmd{\canonicalmodelst}{\cmdkl{\struc}}
\knowledgenewrobustcmd{\GRELstsfintree}{\cmdkl{\mathsf{GREL}}_{\cmdkl{<_{\mathrm{fin\shortminus tree}}^{\mathrm{st}}}}}
\knowledgenewrobustcmd{\GRELstfintree}[1]{\cmdkl{\mathsf{GREL}}_{\cmdkl{\le_{\mathrm{fin\shortminus tree}}^{\mathrm{st}}}}^{#1}}
\knowledgenewrobustcmd{\GRELstfinbintree}[1]{\cmdkl{\mathsf{GREL}}_{\cmdkl{\le_{\mathrm{fin\shortminus bintree}}^{\mathrm{st}}}}^{#1}}
\knowledgenewrobustcmd{\GRELfinlinprime}[1]{\cmdkl{\mathsf{GREL}}_{\cmdkl{\le_{\mathrm{fin\shortminus lin}}'}}^{#1}}
\knowledgenewrobustcmd{\GRELpartialorderprime}[1]{\cmdkl{\mathsf{GREL}}_{\cmdkl{\le'}}^{#1}}
\knowledgenewrobustcmd{\tonormprime}[1]{#1^{\cmdkl{\varheartsuit'}}}
\NewDocumentCommand\automaton{O{1}}{%
    \ifcase#1
        undefined
    \or \mathscr{A}
    \or \mathscr{B}
    \else undefined
    \fi
}
\NewDocumentCommand\trace{O{1}}{%
    \ifcase#1
        undefined
    \or \tau
    \else undefined
    \fi
}
\knowledgenewrobustcmd{\series}{\mathbin{\cmdkl{\diamond}}}
\knowledgenewrobustcmd{\tonormbin}[1]{#1^{\cmdkl{\vardiamondsuit}}}
\knowledgenewrobustcmd{\tonormuni}[1]{#1^{\cmdkl{\clubsuit}}}
\knowledgenewrobustcmd{\tonormonetermvariable}[1]{#1^{\cmdkl{\spadesuit}}}
\knowledgenewrobustcmd{\clex}{\cmdkl{\tilde{\operatorname{cl}}}}
\knowledgenewrobustcmd{\clexbo}{\cmdkl{\tilde{\operatorname{cl}}^{\square}}}
\knowledgenewrobustcmd{\clexex}{\cmdkl{\tilde{\operatorname{cl}}'}}
\knowledgenewrobustcmd{\Iverson}[1]{\cmdkl{[}#1\cmdkl{]}}
\knowledgenewrobustcmd{\wlangASA}{\cmdkl{\mathcal{L}}}
\knowledgenewrobustcmd{\treelangATA}{\cmdkl{\mathcal{L}}}
\knowledgenewrobustcmd{\runlangATA}{\cmdkl{\mathcal{R}}}
\knowledgenewrobustcmd{\termendmarker}{\cmdkl{\term[3]_{\mathdollar}}}
\knowledgenewrobustcmd{\termendmarkertwo}{\cmdkl{\term[3]_{\mathdollar}}}
\newcommand{\RE}{\ensuremath{\mathrm{RE}}}
\newcommand{\REwLA}{\ensuremath{\mathrm{REwLA}}}
\newcommand{\REwLAp}{\ensuremath{\mathrm{REwLA+}}}
\newcommand{\PDL}{\ensuremath{\mathrm{PDL}}}
\newcommand{\DPDL}{\ensuremath{\mathrm{DPDL}}}
\newcommand{\HKTPDL}{\ensuremath{\mathrm{PDL}'}}
\newcommand{\PDLREwLAp}{\ensuremath{\mathrm{PDL}_{\mathrm{{REwLA+}}}}}
\newcommand{\PDLREwLA}{\ensuremath{\mathrm{PDL}_{\mathrm{{REwLA}}}}}
\newcommand{\ifreePDL}{\ensuremath{\mathrm{PDL}^{-}}}
\begin{document}
\VerbatimFootnotes
\title{A Complete Propositional Dynamic Logic for Regular Expressions with Lookahead
}
\titlerunning{A Complete PDL for REwLA}
\author{
Yoshiki Nakamura\inst{1,2}\orcidlink{0000-0003-4106-0408}
}
\authorrunning{
Y. Nakamura
}
\institute{
Chiba University, Japan \and
Institute of Science Tokyo, Japan\\
\email{nakamura.yoshiki.ny@gmail.com}
}
\maketitle              %
\begin{abstract}
We consider (logical) reasoning for \emph{regular expressions with lookahead} (REwLA).
In this paper,
we give an axiomatic characterization for both the (match-)language equivalence and 
the largest substitution-closed equivalence that is sound for the (match-)language equivalence.
To achieve this, we introduce a variant of propositional dynamic logic (PDL) on finite linear orders,
extended with two operators:
the restriction to the identity relation and the restriction to its complement.
Our main contribution is a sound and complete Hilbert-style finite axiomatization for the logic, which captures the equivalences of REwLA.
Using the extended operators, the completeness is established via a reduction into an \emph{identity-free variant of PDL} on finite strict linear orders.
Moreover, the extended PDL has the same computational complexity as REwLA.

 \keywords{%
Regular expressions with lookahead \and
PDL \and
Completeness \and
L{\"o}b's axiom \and
Kleene algebra with antidomain.
}
\end{abstract}

\section{Introduction}\label{section: introduction}
\AP
While classical ""regular expressions"" (\reintro*\kl{\RE}) \cite{kleeneRepresentationEventsNerve1951} are built from 
constants and the operators:
"concatenation" ($\compo$), "union" ($\union$), and "Kleene star" ($\bl^{*}$),
various extensions are implemented in real-world ""regexes"" (see, "eg", \cite{friedlMasteringRegularExpressions2006, thepcre2developersPerlcompatibleRegularExpressions}).
To optimize "regexes" ("eg", "wrt", its length or the complexity of matching algorithms),
we are interested in transforming a given "expression@regular expression" into an equivalent "one@regular expression".
In classical \kl{\RE}, there is a finite (and quasi-equational) algebraic axiomatization \cite{kozenCompletenessTheoremKleene1991}, known as "Kleene algebra", which is sound and complete for "language equivalence".
Another framework is \emph{"propositional dynamic logic"} (\emph{"\PDL"}) of \kl{\RE} with rich tests \cite{fischerPropositionalDynamicLogic1979,harelDynamicLogic2000}.
"\PDL" also enjoys a sound and complete axiomatization \cite{gabbayAxiomatizationsLogicsPrograms1977,parikhCompletenessPropositionalDynamic1978} and embeds the "language equivalence".
A natural (naive) question is whether such sound and complete axiomatic systems can be extended with the operators employed in "regexes".

As a first step, we consider \emph{"regular expressions with lookahead"} ("\REwLA") \cite{morihataTranslationRegularExpression2012,miyazakiDerivativesRegularExpressions2019}.
""Lookahead"" allows us to assert that a certain pattern is satisfied in the future of the current position.
For instance,
the expression \verb#((?!ab)(a|b))*# expresses the set $\set{\mathtt{b}^n \mathtt{a}^{m} \mid n, m \ge 0}$,
where the \emph{"negative lookahead"} \verb#(?!ab)# asserts that the next two "characters" are not \verb#ab# and the symbol ``\verb#|#'' expresses the "union" ($\union$).
Unlike \kl{\RE},
the "language equivalence" of "\REwLA" is \emph{not} closed under "substitutions",
so there is no sound and complete set of ("substitution-closed") axiom schemas.
For instance, although \verb#((?!ab)(a|b))*# and \verb#b*a*# have the same "language@@string" as above,
substituting \verb#b# with \verb#a# yields \verb#((?!aa)(a|a))*# and \verb#a*a*#,
which define the sets $\set{\eps, \mathtt{a}}$ and $\set{\mathtt{a}^{n} \mid n \ge 0}$, breaking the "language equivalence".\footnote{Another instance is \Verb#(?=a)b# $=$ $\emptyset$, where the \emph{"positive lookahead"} \Verb#?=# asserts that the next "character" is \Verb#a# and the symbol $\emptyset$ expresses the empty "language@string",
"cf", \Verb#(?=a)a# $\neq$ $\emptyset$.
}
For that reason, we also study the \emph{largest "substitution-closed" equivalence} that is sound for the "language equivalence", which has a sound and complete set of axiom schemas (\Cref{theorem: PDL REwLA completeness}) via a slight encoding (\Cref{proposition: largest substitution-closed language equivalence}).
Such equivalences are also useful from the perspective of reusability.
This equivalence is alternatively characterized
by the "relational semantics@@general" on "finite linear orders",
based on the semantics of slices of "strings" \cite{mamourasEfficientMatchingRegular2024},
where the "valuations" are filled (\Cref{proposition: largest substitution-closed}).

\subsubsection*{Contributions}
The main contribution of this paper is to present
an axiomatic characterization for "\REwLA" "wrt" both
(i) the "substitution-closed" equivalence (\Cref{theorem: PDL REwLA completeness}), and
(ii) the standard "language equivalence" (\Cref{theorem: completeness match-language equivalence}).
While several algebraic equational properties have been investigated ("eg", \cite[Definition 2.2 ``Kleene algebra with lookahead'']{miyazakiDerivativesRegularExpressions2019}\cite[Lemma 11]{mamourasEfficientMatchingRegular2024}), no complete axiomatizations have yet been presented, to our knowledge ("cf", "eg", \cite[p.~92:10]{mamourasEfficientMatchingRegular2024}).

In this paper, we present a finite axiomatization for an extended "\PDL" on "finite linear orders", which embeds the equivalences above for "\REwLA" (\Cref{section: PDLREwLAp}).
More precisely,
we introduce "\PDLREwLAp":
"\PDL" with
the restriction to the identity relation ($\bl^{\capid}$) and
the restriction to the complement of the identity relation ($\bl^{\capcomid}$).
Using these operators,
we can decompose the "relational semantics@@general" into the identity-part and the identity-free-part (\Cref{section: reduction to identity-free}).
We then can give a reduction
from the completeness theorem of "\PDLREwLAp" on "finite linear orders"
to the completeness theorem of an \emph{identity-free} variant of "\PDL" (denoted by "\ifreePDL") on "finite \emph{strict} linear orders@finite strict linear orders" (\Cref{section: PDL- completeness}).
While axiomatizations for fragments/variants of "\PDL" on finite trees \cite{krachtSyntacticCodesGrammar1995,krachtInessentialFeatures1997,afanasievPDLOrderedTrees2005} (or equivalently, on "finite strict linear orders" via "bisimulation") were presented \cite{blackburnLinguisticsLogicFinite1994,blackburnProofSystemFinite1996,palmPropositionalTenseLogic1999}, no axiomatization for full "\PDL" have yet been presented, to our knowledge.
A key of their axiomatizations is to employ \emph{"L{\"o}b's axiom"}.
Also for "\ifreePDL", we can employ "L{\"o}b's axiom", thanks to the absence of the identity-part.
Our approach eliminating the identity-part is inspired by Brunet's reduction \cite{brunetCompleteAxiomatisationFragment2020},
which shows the completeness of 
the equational theory of (reversible) Kleene lattices interpreted as algebras of languages from that of identity-free Kleene lattices \cite{doumaneCompletenessIdentityfreeKleene2018}.
Here, in our reduction, the operator $\bl^{\capcomid}$ is introduced for employing "L{\"o}b's axiom" forcibly (\Cref{figure: PDLREwLAp axioms}).

Moreover, the extension above does not increase the complexity,
in that
the "theory" of "\PDLREwLAp" on "finite linear orders" ($\GRELfinlin{}$) and
the embedded "substitution-closed equivalence" problems of "\REwLA"
are "EXPTIME"-complete (\Cref{theorem: complexity PDLREwLAp substitution-closed}),
and
the "theory" of "\PDLREwLAp" on a subclass of "finite linear orders" ($\GRELstfinlin{}$) and
the embedded standard "(match-)language equivalence@match-language equivalence" problems of "\REwLA" are "PSPACE"-complete (\Cref{theorem: complexity PDLREwLAp standard}),
respectively (\Cref{section: complexity}).

\subsubsection*{Organization}
In \Cref{section: preliminaries}, we give basic definitions of "\REwLA" and "\PDL".
In \Cref{section: PDLREwLAp}, we introduce "\PDLREwLAp" and its Hilbert-style axiomatization $\HilbertstylePDLREwLAp$ on "finite linear orders".
In \Cref{section: PDL-,section: reduction to identity-free,section: PDL- completeness}, we prove the completeness theorem of $\HilbertstylePDLREwLAp$.
After introducing "\ifreePDL" and its Hilbert-style axiomatization $\HilbertstyleifreePDLsfinlin$ on "finite strict linear orders" in \Cref{section: PDL-},
we provide the reduction from the completeness theorem of $\HilbertstyleifreePDLsfinlin$ to that of $\HilbertstylePDLREwLAp$ in \Cref{section: reduction to identity-free}.
In \Cref{section: PDL- completeness}, we prove the completeness theorem of $\HilbertstyleifreePDLsfinlin$.
In \Cref{section: completeness match-language equivalence}, we also give an axiomatic characterization for the standard ("match@match-language equivalence"-)"language equivalence".
In \Cref{section: complexity}, we consider the computational complexity.
In \Cref{section: conclusion}, we conclude this paper with future work.

\nointro\pfun
\nointro\fdom
\nointro{power set}
\nointro{cardinality}

\nointro{preorder}
\nointro{partial order}
\nointro{strict partial order}
\nointro{linear order}
\nointro{strict linear order}
\nointro{finite linear order}
\nointro{finite strict linear order}

\nointro{vertices}
\nointro{edges}
\nointro{tree}
\nointro{isomorphic}

\nointro{fresh}
\nointro{substitution}
\nointro{axioms}
\nointro{rule}
\nointro{derivable}
\nointro{admissible}
\nointro{substitution-instances}

\nointro{Kleene algebra}

\nointro{closure}
\section{Preliminaries}\label{section: preliminaries} 
\begin{scope}\knowledgeimport{general}
\AP
We write $\intro*\nat$ for the set of non-negative integers.
For $l, r \in \nat$,
we write $\intro*\range{l}{r}$ for the set $\set{i \in \nat \mid l \le i \le r}$.
For a set $X$, we write $\intro*\pset(X)$ for the "power set" of $X$
and write $\intro*\card(X)$ for the "cardinality" of $X$.
We often use $\intro*\dcup$ to denote that the set "union" $\cup$ is disjoint.

\AP
For a set $X$, we write $X^*$ for the set of all "strings" over $X$.
We write $\intro*\eps$ for the "empty string".
For a "string" $\word = a_1 \dots a_n$, we write $\intro*\len{\word}$ for the ""length"" $n$ of $\word$.

\AP
Given two disjoint sets $\vsig$ (for "term variables") and $\psig$ (for "formula variables"),
we use $\aterm[1], \aterm[2], \dotsc \in \vsig$ to denote ""term variables"" and use $\afml[1], \afml[2], \dotsc \in \psig$ to denote ""formula variables"".
We will use $\term[1],\term[2],\term[3],\dotsc$ to denote ""terms"",
use $\fml[1],\fml[2],\fml[3],\dotsc$ to denote ""formulas"", and
use $\expr[1],\expr[2],\expr[3],\dotsc$ to denote ""expressions"", "ie", "terms" or "formulas".
\AP
An ""equation"" $\term[1] = \term[2]$ is a pair of "terms".
We denote by $\term[1] \le \term[2]$ the "equation" $\term[1] \union \term[2] = \term[2]$.
For an "expression" $\expr$, we write $\intro*\exprvsig(\expr)$ for the set of "term variables" occurring in $\expr$ and $\intro*\exprpsig(\expr)$ for the set of "formula variables" occurring in $\expr$, respectively.

\AP
A ""frame"" $\kframe$ is a tuple $\tuple{\univ{\kframe}, \strucuniv^{\kframe}}$,
where
its ""universe"" $\intro*\univ{\kframe}$ is a non-empty set and
its ""universal relation"" $\intro*\strucuniv^{\kframe} \subseteq \univ{\kframe}^2$ is a binary relation.
Given two disjoint sets $\vsig$ and $\psig$,
a ""generalized structure"" $\struc$ on a "frame" $\kframe$ is a tuple $\tuple{\univ{\struc}, \strucuniv^{\struc}, \set{\aterm^{\struc}}_{\aterm \in \vsig}, \set{\afml^{\struc}}_{\afml \in \psig}}$,
where
$\tuple{\univ{\struc}, \strucuniv^{\struc}} = \kframe$,
$\aterm^{\struc} \subseteq \strucuniv^{\struc}$ is a binary relation for each $\aterm \in \vsig$, and
$\afml^{\struc} \subseteq \univ{\struc}$ is a unary relation for each $\afml \in \psig$.
We say that $\struc$ is a ""structure"" if $\strucuniv^{\struc} = \univ{\struc}^2$.
We write
$\intro*\GREL{}$ ("resp", $\intro*\REL{}$)
for the class of all "generalized structures" ("resp", "structures").
\AP
We also write
$\intro*\GRELpreorder{}$ ("resp", $\intro*\GRELfinlin{}$, $\intro*\GRELsfinlin{}$) for
all $\struc \in \GREL{}$ "st" $\strucuniv^{\struc}$ is a "preorder" ("resp", "finite linear order", "finite strict linear order").

\AP
Given an $\struc \in \GREL{}$,
the ""semantics"" $\intro*\sem{\bl}{\struc} \colon \termclassgen{\vsig,\psig} \to \pset(\strucuniv^{\struc}) \dcup \fmlclassgen{\vsig,\psig} \to \pset(\univ{\struc})$ is partially%
\footnote{Note that $\sem{\bl}{\struc}$ may be ill-defined, when $\sem{\term}{\struc} \not\subseteq \strucuniv^{\struc}$ for some $\term$.} defined as the unique ""homomorphism"" (as the two sort algebra) extending the ""valuation"" $(\lambda \aterm. \aterm^{\struc}) \dcup (\lambda \afml. \afml^{\struc}) \colon 
(\vsig \to \pset(\strucuniv^{\struc})) \dcup (\psig \to \pset(\univ{\struc}))$.
Here, $\termclassgen{\vsig,\psig}$ ("resp", $\fmlclassgen{\vsig,\psig}$) denotes the class of all "terms" ("resp", "formulas").

\AP
Let $\strucclass \subseteq \GREL{}$ be a class "st" $\sem{\bl}{\struc}$ is well-defined for all $\struc \in \strucclass$.
We say that an "equation" $\term[1] = \term[2]$ ("resp", "formula" $\fml$) is ""valid"" on $\strucclass$ if $\sem{\term[1]}{\struc} = \sem{\term[2]}{\struc}$ ("resp", $\sem{\fml}{\struc} = \univ{\struc}$) for all $\struc \in \strucclass$; we denote them by $\strucclass \intro*\modelsfml \term[1] = \term[2]$ ("resp", $\strucclass \reintro*\modelsfml \fml$).
The ""equational theory"" ("resp", ""theory"") on $\strucclass$ is the class of all 
"equations" ("resp", "formulas") "valid" on $\strucclass$.
\end{scope}

\begin {scope}\knowledgeimport {REwLA}
\subsection{REwLA: Regular Expressions with LookAhead}\label{subsection: REwLA}
\AP
\phantomintro*\kl{term}%
\intro*\kl{Regular expressions with lookahead} (\reintro*\kl{\REwLA}) \cite{morihataTranslationRegularExpression2012,miyazakiDerivativesRegularExpressions2019} are generated by the following grammar:%
\footnote{%
For later convenience, we use "Kleene plus" ($\bl^{+}$) instead of "Kleene star" ($\bl^{*}$) as a primitive operator.
The notations for "negative lookahead" ($\adom$) and "positive lookahead" ($\dom$)
are based on "antidomain" and "domain" (in the context of Kleene algebra with (anti)domain \cite{desharnaisKleeneAlgebraDomain2006,desharnaisModalSemiringsRevisited2008,desharnaisInternalAxiomsDomain2011}), where we use the superscript notation for short.}
\phantomintro\termclassREwLA%
\phantomintro*\kl{character}%
\phantomintro*\kl{empty string}%
\phantomintro*\kl{empty set}%
\phantomintro*\kl{concatenation}%
\phantomintro*\kl{union}%
\phantomintro*\kl{Kleene plus}%
\phantomintro*\kl{negative lookahead}%
\begingroup
\allowdisplaybreaks
\begin{align*}
    \term[1], \term[2], \term[3] \in \reintro*\termclassREwLA{\vsig}
    &\;\Coloneqq\; &&&&\hspace{.5em}\phantom{ \mid } \aterm && \text{[\reintro*\kl{character}, "term variable" $\aterm \in \vsig$]} \\
    &\hspace{-5em}\mid \id &&\hspace{-5em} \text{[\reintro*\kl{empty string}]} &
    &\mid \emp && \text{[\reintro*\kl{empty set}]}\\
    &\hspace{-5em}\mid \term[2] \compo \term[3] &&\hspace{-5em} \text{[\reintro*\kl{concatenation}]}&
    &\mid \term[2] \union \term[3] && \text{[\reintro*\kl{union}]}\\
    &\hspace{-5em}\mid \term[2]^+  &&\hspace{-5em}\text{[\reintro*\kl{Kleene plus}]} &
    &\mid \term[2]^{\adom} && \text{[\reintro*\kl{negative lookahead} ``$\texttt{?!} \term[2]$'']}
\end{align*}
\endgroup
We usually abbreviate $\term[1] \compo \term[2]$ to $\term[1] \term[2]$.
We use parentheses in ambiguous situations. %
We write $\sum_{i = 1}^{n} \term[1]_i$ for the term $\emp \union \term[1]_1 \union \dots \union \term[1]_n$,
and write $\bigcompo_{i = 1}^{n} \term[1]_i$ for the term $\id \compo \term[1]_1 \compo \dots \compo \term[1]_n$.
We use the following abbreviations:
\AP
\phantomintro*\kl{Kleene star}%
\phantomintro*\kl{iteration}%
\phantomintro*\kl{positive lookahead}%
\begin{align*}
    \term^{*} &\defeq \id \union \term^{+}  && \text{[\reintro*\kl{Kleene star}]} &
    \term^{n} &\defeq \bigcompo_{i = 1}^{n} \term[1] && \text{[$n$-th \reintro*\kl{iteration} ($n \in \nat$)]}\\
    \term^{\dom} &\defeq (\term^{\adom})^{\adom} && \text{[\reintro*\kl{positive lookahead} ``$\texttt{?{=}} \term$'']}
\end{align*}

\subsection{Relational Semantics, Match-Languages, and Languages}
\AP We recall the algebraic semantics from \emph{slices of strings} \cite{mamourasEfficientMatchingRegular2024}.%
\footnote{"Cf" \emph{matching relation} \cite{chidaLookaheadsRegularExpressions2023}, where assignments for backward references are forgotten.}
Below, we slightly reformulate in terms of algebras of binary relations.
Given a set $X$, we consider the following operators on binary relations on $X$:
\AP%
\phantomintro*\kl{identity relation}%
\phantomintro*\kl{empty relation}%
\phantomintro*\kl{relational composition}%
\phantomintro*\kl{transitive closure}%
\phantomintro*\kl{antidomain}%
\phantomintro*\diagonal%
\begin{align*}
    \id &\;\defeq\; \reintro*\diagonal_{X} \;\defeq\; \set{\tuple{\vertex[1], \vertex[1]} \mid \vertex[1] \in X} \tag*{[\reintro*\kl{identity relation}]}\\
    \emp &\;\defeq\; \emptyset \tag*{[\reintro*\kl{empty relation}]}\\
    \rel \compo \rel[2] &\;\defeq\; \set{\tuple{\vertex[1], \vertex[3]} \mid \exists \vertex[2] \in X, \tuple{\vertex[1], \vertex[2]} \in \rel \text{ and } \tuple{\vertex[2], \vertex[3]} \in \rel[2]} \tag*{[\reintro*\kl{relational composition}]}\\
    \rel \union \rel[2] &\;\defeq\; \rel \cup \rel[2] \tag*{[\reintro*\kl{union}]}\\
    \rel^{+} &\;\defeq \set{\tuple{\vertex[1]_0, \vertex[1]_n} \mid \exists n \ge 1, \exists \vertex[1]_1, \dots, \exists \vertex_{n-1}, \forall i < n, \tuple{\vertex_{i}, \vertex_{i+1}} \in \rel} \tag*{[\reintro*\kl{TC}]}\\
    \rel^{\adom} &\;\defeq\; \set{\tuple{\vertex[1], \vertex[1]} \in \diagonal_{X} \mid \forall \vertex[2] \in X, \tuple{\vertex[1], \vertex[2]} \not\in \rel} \tag*{[\reintro*\kl{antidomain}]}
\end{align*}
\AP
Additionally, we define the following operators:%
\phantomintro*\kl{reflexive transitive closure}%
\phantomintro*\kl{domain}%
\begin{align*}
    \rel^{*} &\defeq \id \union \rel^{+}  && \text{[\reintro*\kl{RTC}]} &
    \rel^{n} &\defeq \bigcompo_{i = 1}^{n} \rel[1] && \text{[$n$-th \reintro*\kl{iteration}]} &
    \rel^{\dom} &\defeq (\rel^{\adom})^{\adom} && \text{[\reintro*\kl{domain}]}
\end{align*}
\AP
Given an $\struc \in \GRELpreorder{}$ (on "preorder"),
the ""semantics"" $\intro*\semREwLA{\cdot}{\struc}$ of "\REwLA"
is well-defined, where each operator on $\pset(\strucuniv^{\struc})$ \footnote{%
Each "\REwLA" operator on $\pset(\strucuniv^{\struc})$ is well-defined, because $\strucuniv^{\struc}$ is a "preorder".
Note that $\id$ requires reflexivity of $\strucuniv^{\struc}$ and $\compo$ requires transitivity of $\strucuniv^{\struc}$.} is interpreted as above.

\AP
The ""string structure"" $\intro*\wordstruc^{\word}$ of a "string" $\word = \aterm[1]_1 \dots \aterm[1]_n \in \vsig^*$
is the "generalized structure" defined by
$\strucuniv^{\wordstruc^{\word}} \defeq \set{\tuple{i, j} \in \range{0}{n}^2 \mid i \le j}$,
$\aterm[2]^{\wordstruc^{\word}} \defeq \set{\tuple{i, i+1} \mid i \in \range{0}{n - 1} \text{ and } \aterm[1]_{i+1} = \aterm[2]}$ for $\aterm[2] \in \vsig$, and $\afml^{\wordstruc^{\word}} = \emptyset$ for $\afml \in \psig$.

\AP
\phantomintro{match-language equivalence}%
\phantomintro{language equivalence}%
The ""match-language"" $\intro*\mlang(\term)$ and the ""language@string"" %
$\intro*\wlang(\term)$ are defined as follows \cite{mamourasEfficientMatchingRegular2024}:
\begin{align*}
\reintro*\mlang(\term) &\defeq \set{\tuple{\word, i, j} \mid \word \in \vsig^*, \tuple{i, j} \in \sem{\term}{\wordstruc^{\word}}}, &\hspace{-.7em}
\reintro*\wlang(\term) &\defeq \set{\word \in \vsig^* \mid \tuple{0, \len{\word}} \in \sem{\term}{\wordstruc^{\word}}}.
\end{align*}

Let $\intro*\GRELstfinlin{}$ denote the class of all $\struc[2] \in \GRELfinlin{}$ (with "finite linear order") that are "isomorphic" to $\wordstruc^{\word}$ for some $\word$ after replacing $\afml^{\struc[2]}$ with $\emptyset$ for each $\afml \in \psig$.
Since "\REwLA" "terms@@REwLA" do not contain any "formula variables", we have:
\[\mlang(\term[1]) = \mlang(\term[2])
~\iff~ \set{\wordstruc^{\word} \mid \word \in \vsig^*} \modelsfml \term[1] = \term[2]
~\iff~ \GRELstfinlin{} \modelsfml \term[1] = \term[2].\]
We can embed the "language equivalence" into the "match-language equivalence" via a slight encoding,
where the "term" $\termendmarker$ is defined to express the end of the "string".
\AP\phantomintro\termendmarker%
\ifthenelse{\boolean{conference}}{\begin{proposition}}{%
\begin{proposition}[\Cref{section: proposition: language equivalence}]}\label{proposition: language equivalence}%
\gdef\propositionlanguageequivalence{%
$\wlang(\term[1]) = \wlang(\term[2])$ "iff" $\GRELstfinlin{} \modelsfml \term[1] \termendmarker = \term[2] \termendmarker$,
where $\reintro*\termendmarker \defeq \left(\sum_{\aterm \in \exprvsig(\term[1]\, \term[2])}\aterm \right)^{\adom}$.
}
\propositionlanguageequivalence
\end{proposition}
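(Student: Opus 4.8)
The plan is to use the chain of equivalences displayed just above, applied with $\term[1]\,\termendmarker,\term[2]\,\termendmarker$ in place of $\term[1],\term[2]$, to reduce the claim to the statement that $\sem{\term[1]\,\termendmarker}{\wordstruc^{\word[1]}}=\sem{\term[2]\,\termendmarker}{\wordstruc^{\word[1]}}$ for every $\word[1]\in\vsig^*$ if and only if $\wlang(\term[1])=\wlang(\term[2])$. Write $V\defeq\exprvsig(\term[1]\,\term[2])$ and fix $\word[1]=\aterm[1]_1\dots\aterm[1]_n$. A direct computation gives $\sem{\sum_{\aterm[1]\in V}\aterm[1]}{\wordstruc^{\word[1]}}=\set{\tuple{i,i+1}\mid i<n,\ \aterm[1]_{i+1}\in V}$, hence $\sem{\termendmarker}{\wordstruc^{\word[1]}}=\set{\tuple{j,j}\mid j=n\text{ or }\aterm[1]_{j+1}\notin V}$, and since this is a set of diagonal pairs, $\sem{\term[i]\,\termendmarker}{\wordstruc^{\word[1]}}=\set{\tuple{p,k}\in\sem{\term[i]}{\wordstruc^{\word[1]}}\mid k=n\text{ or }\aterm[1]_{k+1}\notin V}$ for $i\in\set{1,2}$.

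The technical core will be a locality lemma: for every \REwLA{} term $\term$, every $\word[1]=\aterm[1]_1\dots\aterm[1]_n$, and all $0\le p\le k\le n$ with $k=n$ or $\aterm[1]_{k+1}\notin\exprvsig(\term)$, one has $\tuple{p,k}\in\sem{\term}{\wordstruc^{\word[1]}}$ iff $\aterm[1]_{p+1}\dots\aterm[1]_k\in\wlang(\term)$. I would derive this from two auxiliary facts, each by structural induction on $\term$. The first is a support lemma, $\sem{\term}{\wordstruc^{\word[1]}}\subseteq\set{\tuple{i,j}\mid i\le j,\ \aterm[1]_{i+1}\dots\aterm[1]_j\in(\exprvsig(\term))^{*}}$, which is immediate (the edge $\tuple{i,i+1}$ is labelled by $\aterm[1]_{i+1}$ and can only be traversed through the subterm $\aterm[1]_{i+1}$); in particular, when $\aterm[1]_{k+1}\notin\exprvsig(\term)$, any pair of $\sem{\term}{\wordstruc^{\word[1]}}$ starting at a position $\le k$ also ends $\le k$. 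The second is a pair of truncation invariances: discarding the positions $<p$ turns $\sem{\term}{\wordstruc^{\word[1]}}$ into $\sem{\term}{\wordstruc^{\aterm[1]_{p+1}\dots\aterm[1]_n}}$ (up to the index shift), which holds unconditionally as \REwLA{} has no lookbehind; and, when $\aterm[1]_{k+1}\notin\exprvsig(\term)$, discarding the positions $>k$ turns it into $\sem{\term}{\wordstruc^{\aterm[1]_1\dots\aterm[1]_k}}$. Both are routine inductions — in the lookahead case $\term[2]^{\adom}$ one invokes the support lemma to see that a lookahead evaluated inside the retained segment cannot reach outside it, so its (diagonal) value is preserved. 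Chaining the two invariances and unfolding the definition of $\wlang$ yields the locality lemma.

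Given this, the forward direction is immediate: if $\wlang(\term[1])=\wlang(\term[2])$, then for each $\word[1]$ a pair $\tuple{p,k}$ with $k=n$ or $\aterm[1]_{k+1}\notin V$ lies in $\sem{\term[i]\,\termendmarker}{\wordstruc^{\word[1]}}$ iff $\aterm[1]_{p+1}\dots\aterm[1]_k\in\wlang(\term[i])$ (the locality lemma applies since $\exprvsig(\term[i])\subseteq V$), so the two sets coincide, and by the reduction above $\GRELstfinlin{}\modelsfml\term[1]\,\termendmarker=\term[2]\,\termendmarker$. For the converse, assume $\sem{\term[1]\,\termendmarker}{\wordstruc^{\word[1]}}=\sem{\term[2]\,\termendmarker}{\wordstruc^{\word[1]}}$ for all $\word[1]$ and take $\word[1]\in\wlang(\term[1])$ with $n=\len{\word[1]}$. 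By the support lemma, $\word[1]\in(\exprvsig(\term[1]))^{*}\subseteq V^{*}$, so every letter of $\word[1]$ lies in $V$. Since $\tuple{0,n}\in\sem{\term[1]}{\wordstruc^{\word[1]}}$ and $\tuple{n,n}\in\sem{\termendmarker}{\wordstruc^{\word[1]}}$, we get $\tuple{0,n}\in\sem{\term[1]\,\termendmarker}{\wordstruc^{\word[1]}}=\sem{\term[2]\,\termendmarker}{\wordstruc^{\word[1]}}$, hence $\tuple{0,k}\in\sem{\term[2]}{\wordstruc^{\word[1]}}$ for some $k$ with $k=n$ or $\aterm[1]_{k+1}\notin V$; the latter cannot happen, so $k=n$ and $\word[1]\in\wlang(\term[2])$. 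By symmetry, $\wlang(\term[1])=\wlang(\term[2])$.

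The only genuine obstacle I foresee is the right-truncation invariance for the lookahead operator: a lookahead a priori constrains an unbounded suffix, so it looks sensitive to $\word[1]$ beyond position $k$; the support lemma removes this difficulty, since it confines every run of a subterm whose variables avoid $\aterm[1]_{k+1}$ to the positions $\le k$. The remaining ingredients — the two computations around $\termendmarker$, the left-truncation (generated-substructure) invariance, and the bookkeeping with the displayed equivalences — are routine.
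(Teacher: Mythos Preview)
Your proof is correct and follows the same overall strategy as the paper: pass to string structures via the displayed equivalence for $\mlang$, then argue through truncation invariances and the semantics of $\termendmarker$. The paper's argument is terser. For the direction from $\wlang(\term[1])=\wlang(\term[2])$ to validity it works by contrapositive, applies its generated-substructure lemma (your left-truncation invariance) to set the left endpoint of a witnessing pair to $0$, and then asserts $\sem{\termendmarker}{\wordstruc^{\word}}=\{(\len{\word},\len{\word})\}$ to force the right endpoint to be $\len{\word}$. That assertion is only literally correct when every letter of $\word$ lies in $V=\exprvsig(\term[1]\,\term[2])$; your explicit computation of $\sem{\termendmarker}$ and your locality lemma---in particular the right-truncation invariance, which the paper does not isolate---handle the general case cleanly. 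So your argument is a more careful variant of the paper's, with the locality lemma making explicit a step the paper leaves implicit.
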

\noindent
We thus consider $\GRELstfinlin{}$ for the standard "(match)-language equivalence@match-language equivalence".

\subsection{Substitution-Closed Equivalences}\label{section: substitution-closed equivalnce}
\AP
For a "term" $\term$ and a "substitution" $\Theta$ mapping each "term variable" to a "term",
we write $\term\intro*\termsubst{\Theta}$ for the "term" obtained by substituting each "term variable" $x$ with $\Theta(x)$.

\AP \phantomintro*\kl{substitution-closed equivalence}%
A "binary relation" $R$ on \kl{terms} is \intro*\kl{substitution-closed} if,
for all \kl{terms} $\term[1], \term[2]$ and all "substitutions" $\Theta$,
if $\tuple{\term[1], \term[2]} \in R$, then $\tuple{\term[1]\termsubst{\Theta}, \term[2]\termsubst{\Theta}} \in R$.
In this paper, we consider the \emph{largest} "substitution-closed" equivalence relation contained in "match-language equivalence" ("resp", "language equivalence"); henceforth, just \emph{the "substitution-closed" equivalence sound for "(match-)language equivalence@match-language equivalence"} or the ""substitution-closed@substitution-closed language equivalence"" (\AP""match@substitution-closed match-language equivalence""-)\reintro*\kl[substitution-closed language equivalence]{language equivalence}.
By definition, they are characterized by the equivalence of the ""match-languages with substitutions"" $\AP\intro*\smlang(\term)$ ("resp", the ""languages  with substitutions"" $\AP\intro*\swlang(\term)$), defined as follows:
\begin{align*}
\reintro*\smlang(\term) &~\defeq~~\smashoperator{\bigcup_{\Theta \text{ a "substitution"}}}~ \set{\Theta} \times \mlang(\term\termsubst{\Theta}), &
\reintro*\swlang(\term) &~\defeq~~\smashoperator{\bigcup_{\Theta \text{ a "substitution"}}}~ \set{\Theta} \times \wlang(\term\termsubst{\Theta}).
\end{align*}

For every "string" $\word$ and relation $\rel \subseteq \strucuniv^{\wordstruc^{\word}}$, one can see that $\rel = \semREwLA{\term[3]}{\wordstruc^{\word}}$ holds by some "\REwLA" $\term[3]$.
Thus, every $\struc \in \GRELfinlin{}$ can be ``represented'' by some $\wordstruc^{\word}$ with some "substitution".
From this observation, we see the following:
\ifthenelse{\boolean{conference}}{\begin{proposition}}{
\begin{proposition}[\Cref{section: largest substitution-closed}]}
\label{proposition: largest substitution-closed}%
\gdef\largestsubstitutionclosed{%
$\smlang(\term[1]) = \smlang(\term[2])$ "iff" $\GRELfinlin{} \modelsfml \term[1] = \term[2]$.
}
\largestsubstitutionclosed
\end{proposition}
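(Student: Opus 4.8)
The plan is to prove both implications by translating between applying a substitution $\Theta$ to a term and reinterpreting the term variables inside a string structure, using the observation recorded just above the statement that every binary relation $\rel \subseteq \strucuniv^{\wordstruc^{\word}}$ equals $\semREwLA{\term[3]}{\wordstruc^{\word}}$ for some $\term[3] \in \termclassREwLA{\vsig}$. The technical core is a routine \emph{substitution lemma}: for a substitution $\Theta$ and a string $\word$, let $\struc[2]$ be the generalized structure that agrees with $\wordstruc^{\word}$ except that each term variable $\aterm$ is reinterpreted as $\semREwLA{\Theta(\aterm)}{\wordstruc^{\word}}$. Since the REwLA semantics on the finite linear order $\strucuniv^{\wordstruc^{\word}}$ is well-defined and always lands inside $\strucuniv^{\wordstruc^{\word}}$, the structure $\struc[2]$ has the same universe and universal relation as $\wordstruc^{\word}$, so $\struc[2] \in \GRELfinlin{}$; moreover $\semREwLA{\term\termsubst{\Theta}}{\wordstruc^{\word}} = \semREwLA{\term}{\struc[2]}$ for every $\term \in \termclassREwLA{\vsig}$, by induction on $\term$ (each REwLA operator commutes with the homomorphism defining the semantics).

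For $(\Leftarrow)$, assume $\GRELfinlin{} \modelsfml \term[1] = \term[2]$ and fix a substitution $\Theta$. For each string $\word$ the substitution lemma supplies $\struc[2] \in \GRELfinlin{}$ with $\semREwLA{\term[k]\termsubst{\Theta}}{\wordstruc^{\word}} = \semREwLA{\term[k]}{\struc[2]}$ for $k \in \set{1,2}$, so the hypothesis gives $\semREwLA{\term[1]\termsubst{\Theta}}{\wordstruc^{\word}} = \semREwLA{\term[2]\termsubst{\Theta}}{\wordstruc^{\word}}$. Ranging over all $\word$ yields $\mlang(\term[1]\termsubst{\Theta}) = \mlang(\term[2]\termsubst{\Theta})$, and since $\Theta$ was arbitrary, $\smlang(\term[1]) = \smlang(\term[2])$.

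For $(\Rightarrow)$, assume $\smlang(\term[1]) = \smlang(\term[2])$ and take $\struc \in \GRELfinlin{}$; I must show $\semREwLA{\term[1]}{\struc} = \semREwLA{\term[2]}{\struc}$. Validity of a term equation is invariant under isomorphism and independent of the interpretation of formula variables (REwLA terms have none), so I may assume $\univ{\struc} = \range{0}{n}$ and $\strucuniv^{\struc} = \set{\tuple{i,j} \mid i \le j}$, and then $\strucuniv^{\struc} = \strucuniv^{\wordstruc^{\word}}$ for the length-$n$ string $\word$ all of whose characters are a fixed term variable $\aterm[2]$. For each $\aterm \in \exprvsig(\term[1]\, \term[2])$ we have $\aterm^{\struc} \subseteq \strucuniv^{\wordstruc^{\word}}$, so by the cited observation there is $\term[3]_{\aterm} \in \termclassREwLA{\vsig}$ with $\semREwLA{\term[3]_{\aterm}}{\wordstruc^{\word}} = \aterm^{\struc}$ — concretely, each singleton $\set{\tuple{i,j}}$ with $i \le j$ equals $\semREwLA{(\aterm[2]^{n - i})^{\dom} \compo (\aterm[2]^{n - i + 1})^{\adom} \compo \aterm[2]^{j - i}}{\wordstruc^{\word}}$ (the domain and antidomain tests pin the left endpoint to $i$), and a finite union realizes $\aterm^{\struc}$. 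Let $\Theta$ be a substitution with $\Theta(\aterm) = \term[3]_{\aterm}$ for every $\aterm \in \exprvsig(\term[1]\, \term[2])$. The structure $\struc[2]$ produced by the substitution lemma then agrees with $\struc$ on all term variables occurring in $\term[1], \term[2]$, so $\semREwLA{\term[k]}{\struc} = \semREwLA{\term[k]}{\struc[2]} = \semREwLA{\term[k]\termsubst{\Theta}}{\wordstruc^{\word}}$. From $\smlang(\term[1]) = \smlang(\term[2])$, the $\Theta$-slices agree, giving $\mlang(\term[1]\termsubst{\Theta}) = \mlang(\term[2]\termsubst{\Theta})$ and in particular $\semREwLA{\term[1]\termsubst{\Theta}}{\wordstruc^{\word}} = \semREwLA{\term[2]\termsubst{\Theta}}{\wordstruc^{\word}}$; chaining the three equalities yields $\semREwLA{\term[1]}{\struc} = \semREwLA{\term[2]}{\struc}$.

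There is no deep obstacle once the representation observation (granted just above the statement) is in hand; the argument is essentially bookkeeping. The one point that needs care is the substitution lemma — specifically that the reinterpreted structure $\struc[2]$ again lies in $\GRELfinlin{}$, which hinges on REwLA values over a finite linear order being relations on that same order — and, if one wants the representation observation to be self-contained, spelling out the singleton-isolating terms above, which is exactly where lookahead ($\bl^{\adom}$, hence $\bl^{\dom}$) is genuinely used: the classical fragment $\RE$ cannot single out a pair $\tuple{i,j}$ by its absolute position in $\wordstruc^{\word}$.
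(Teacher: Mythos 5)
Your proof is correct and follows essentially the same route as the paper's: the easy direction is the same reinterpretation-of-variables argument, and the hard direction uses exactly the paper's word $\const{c}^{m}$ together with the same singleton-isolating terms $(\const{c}^{m-i})^{\dom}(\const{c}^{m-i+1})^{\adom}\const{c}^{j-i}$ and the induced substitution. The only difference is presentational — you state the substitution lemma explicitly where the paper leaves it as an observation — so nothing further is needed.
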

Additionally, by a similar encoding as in \Cref{proposition: language equivalence} where we use the operator $\capcomid$ (see "\REwLAp" in \Cref{section: PDLREwLAp}), the "substitution-closed language equivalence" also can be characterized, as follows:
\AP\phantomintro\termendmarkertwo
\ifthenelse{\boolean{conference}}{\begin{proposition}}{%
\begin{proposition}[\Cref{section: largest substitution-closed language equivalence}]}%
\label{proposition: largest substitution-closed language equivalence}%
\gdef\largestsubstitutionclosedlanguageequivalence{%
$\swlang(\term[1]) = \swlang(\term[2])$ "iff" $\GRELfinlin{} \modelsfml \term[1] \termendmarkertwo = \term[2] \termendmarkertwo$,
where $\reintro*\termendmarkertwo \defeq \left(\sum_{\aterm \in \exprvsig(\term[1]\, \term[2])} \aterm^{\capcomid} \right)^{\adom}$.
}
\largestsubstitutionclosedlanguageequivalence
\end{proposition}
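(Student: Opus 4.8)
The plan is to imitate the end-marker encoding from the proof of \Cref{proposition: language equivalence}, now over $\GRELfinlin{}$ instead of string structures and with $\sum_{\aterm} \aterm^{\capcomid}$ in place of $\sum_{\aterm} \aterm$. First I would reduce both sides to a statement about one distinguished pair. On the left, $\swlang(\term[1]) = \swlang(\term[2])$ means that $\wlang(\term[1]\termsubst{\Theta}) = \wlang(\term[2]\termsubst{\Theta})$ for every substitution $\Theta$; unfolding $\wlang$ and using the substitution lemma, this says $\tuple{0, \len{\word}} \in \sem{\term[1]}{\struc}$ iff $\tuple{0, \len{\word}} \in \sem{\term[2]}{\struc}$ for the structure $\struc$ on $\tuple{\range{0}{\len{\word}}, \le}$ that interprets $\aterm$ by $\sem{\Theta(\aterm)}{\wordstruc^{\word}}$, and by the observation preceding \Cref{proposition: largest substitution-closed} (every relation inside $\strucuniv^{\wordstruc^{\word}}$ is realized by some \REwLA{} term, and \REwLA{} terms carry no formula variables) these structures exhaust $\GRELfinlin{}$ up to the inessential formula-variable part, with $\tuple{0, \len{\word}}$ being the pair of $\le$-least and $\le$-greatest elements. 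So $\swlang(\term[1]) = \swlang(\term[2])$ is equivalent to: for every $\struc \in \GRELfinlin{}$ with least and greatest elements $\vertex[1]_\bot, \vertex[1]_\top$, $\tuple{\vertex[1]_\bot, \vertex[1]_\top} \in \sem{\term[1]}{\struc}$ iff $\tuple{\vertex[1]_\bot, \vertex[1]_\top} \in \sem{\term[2]}{\struc}$ --- call this $(\star)$. On the right, $\sem{\termendmarkertwo}{\struc}$ is the identity relation on $D_{\struc} \defeq \set{\vertex[1] \mid \text{no } \aterm^{\struc}\text{-edge } (\aterm \in \exprvsig(\term[1]\,\term[2])) \text{ leaves } \vertex[1] \text{ to a different point}}$, so $\GRELfinlin{} \modelsfml \term[1]\,\termendmarkertwo = \term[2]\,\termendmarkertwo$ is equivalent to: for every $\struc \in \GRELfinlin{}$ and every $\vertex[2] \in D_{\struc}$, $\tuple{\vertex[1], \vertex[2]} \in \sem{\term[1]}{\struc}$ iff $\tuple{\vertex[1], \vertex[2]} \in \sem{\term[2]}{\struc}$ --- call this $(\star\star)$. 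It then remains to prove $(\star) \Leftrightarrow (\star\star)$.

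Since the greatest element of any $\struc \in \GRELfinlin{}$ lies in $D_{\struc}$, instantiating $(\star\star)$ at the least and greatest elements gives $(\star)$. For the converse, given $\struc \in \GRELfinlin{}$ and $\vertex[2] \in D_{\struc}$ with $\vertex[1] \le \vertex[2]$ (otherwise both sides are vacuously false, since $\sem{\term[1]}{\struc}, \sem{\term[2]}{\struc} \subseteq \strucuniv^{\struc}$), it suffices to produce $\struc[2] \in \GRELfinlin{}$ whose least and greatest elements $\vertex'_\bot, \vertex'_\top$ satisfy $\tuple{\vertex'_\bot, \vertex'_\top} \in \sem{\term[1]}{\struc[2]} \Leftrightarrow \tuple{\vertex[1], \vertex[2]} \in \sem{\term[1]}{\struc}$, and the same for $\term[2]$; then $(\star)$ applied to $\struc[2]$ transfers.

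The crux is this construction, and it rests on two facts about the \REwLA{} semantics. (A) \emph{Downward restriction is faithful}: for $\vertex[1] \in \univ{\struc}$, the substructure $\struc^{\ge\vertex[1]}$ of $\struc$ induced on $\set{\vertex' \mid \vertex[1] \le \vertex'}$ satisfies $\sem{\term}{\struc^{\ge\vertex[1]}} = \sem{\term}{\struc} \cap (\univ{\struc^{\ge\vertex[1]}})^2$ for every \REwLA{} term $\term$ --- a routine induction, using that interpretations lie inside the order, so each operator, $\bl^{\adom}$ included, only looks forward from points $\ge \vertex[1]$. (B) \emph{Order-obliviousness}: the interpretation of a \REwLA{} term in a structure depends only on its underlying set and its relations $\aterm^{\struc}$, not on the order $\strucuniv^{\struc}$, since no \REwLA{} operator refers to $\strucuniv^{\struc}$ (the order only makes the operators total on $\pset(\strucuniv^{\struc})$). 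Now, when $\vertex[1] < \vertex[2]$, take the underlying set and relations of $\struc^{\ge\vertex[1]}$ and re-order the points so that $\vertex[2]$ becomes the greatest, keeping the old order elsewhere. This is a legitimate member of $\GRELfinlin{}$ \emph{exactly because} $\vertex[2] \in D_{\struc}$: since no relevant edge leaves $\vertex[2]$, the new order still contains every $\aterm^{\struc}$-edge. Its least element is $\vertex[1]$ and its greatest is $\vertex[2]$; by (B) it has the same interpretations as $\struc^{\ge\vertex[1]}$, which by (A) agree with $\sem{\term[1]}{\struc}$ and $\sem{\term[2]}{\struc}$ on $\tuple{\vertex[1], \vertex[2]}$, so $(\star)$ applies. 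The degenerate case $\vertex[1] = \vertex[2]$ I would handle with the one-point substructure on $\set{\vertex[2]}$: a short induction shows that $\vertex[2] \in D_{\struc}$ propagates to every \REwLA{} subterm of $\term[1]$ and $\term[2]$ (no such subterm's interpretation in $\struc$ has a proper edge leaving $\vertex[2]$), so evaluating the loop $\tuple{\vertex[2], \vertex[2]}$ collapses to that one-point structure, whose point is at once least and greatest.

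I expect the main obstacle to be precisely this last step. In contrast with classical \RE{}, one cannot simply restrict $\struc$ to the interval $\set{\vertex' \mid \vertex[1] \le \vertex' \le \vertex[2]}$: a negative lookahead $\bl^{\adom}$ at an interior point inspects $\aterm^{\struc}$-edges that leave the interval to the right, so truncation is unfaithful. Re-ordering rather than truncating fixes this, but it is sound only by combining order-obliviousness with the hypothesis $\vertex[2] \in D_{\struc}$ that $\termendmarkertwo$ is engineered to express --- which is why $\termendmarkertwo$ uses $\aterm^{\capcomid}$ rather than the bare $\aterm$: a loop $\tuple{\vertex[2], \vertex[2]} \in \aterm^{\struc}$ is no obstacle to moving $\vertex[2]$ to the top, and so must not exclude $\vertex[2]$ from $D_{\struc}$.
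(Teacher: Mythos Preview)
Your proof is correct and follows essentially the same route as the paper: the substitution lemma for the easy direction, and for the hard direction the two key observations that the evaluation of \REwLA{} terms ignores $\strucuniv^{\struc}$ (your (B)) and is preserved under passing to an upward-closed substructure (your (A)), which together let you move the witness $\vertex[2]$ to the top and $\vertex[1]$ to the bottom. The paper differs only in packaging: it argues the hard direction by contrapositive on a fixed counterexample, reorders \emph{before} restricting, and uses the \emph{generated} substructure (by $\bigcup_{\aterm}\aterm^{\struc}$-reachability, \Cref{lemma: upward substructure}) rather than your order-based $\struc^{\ge\vertex[1]}$; because $\vertex[2]_0 \in D_{\struc}$ makes the generated substructure from $\vertex[1]_0 = \vertex[2]_0$ collapse to a single point automatically, the paper avoids your separate $\vertex[1] = \vertex[2]$ case.
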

\noindent
We thus consider $\GRELfinlin{}$ for the "substitution-closed (match-)language equivalence@substitution-closed match-language equivalence".
Such equivalences are useful from the perspective of reusability,
as well as
when extending our system with additional operators ("cf", the axiom \text{\nameref{equation: PDL}} in \Cref{figure: PDLREwLAp axioms} and the axiom \text{\nameref{equation: PROP}} in \Cref{figure: PDL- axioms}).

\Cref{figure: four equivalences} summarizes the inclusions among the equivalences; they are strict by the given "equations".
For $(\aterm \aterm^{\adom})^{\dom} = \emp$,
observe that $\tuple{\vertex[2]_M, \vertex[2]_M} \not\in \sem{(\aterm \aterm^{\adom})^{\dom}}{\struc}$ where $\vertex[2]_M$ is maximum on $\struc$.
Hence, $\swlang((\aterm \aterm^{\adom})^{\dom}) = \emptyset$, although clearly $\mlang((\aterm \aterm^{\adom})^{\dom}) \neq \emptyset$.
\begin{figure}[t]
\centering
\begin{tikzpicture}[baseline = -.5ex]
\tikzstyle{vert}=[draw = black, circle, fill = gray!10, inner sep = 2pt, minimum size = 1pt, font = \scriptsize]
\tikzstyle{class}=[draw = black, rectangle, fill = gray!10, inner sep = 2pt, minimum size = 1.5em, font = \scriptsize]
\graph[grow right = 2.cm, branch down = 1.cm]{
{scm/{$\smlang$}[class, line width = 1.2pt], m/{$\mlang$}[class, line width = 1.2pt]} -!- {scw/{$\swlang$}[class], w/{$\wlang$}[class]}
};
\node[left = 9.5em of scw](scwr) {$\GRELfinlin{}$ {\scriptsize (\Cref{proposition: largest substitution-closed language equivalence,proposition: largest substitution-closed})}};
\node[below = .5cm of scwr](wr){$\GRELstfinlin{}$ {\scriptsize (\Cref{proposition: language equivalence})}};
\node[above = -.2ex of scwr](rel) {\footnotesize (corresponding relational class)};
\draw (scm) edge[opacity=0] node[pos = .5](mmed){} (m);
\draw (scw) edge[opacity=0] node[pos = .5](wmed){} (w);
\draw ($(mmed)+(-6.,0)$) edge[dashed, draw=gray] ($(wmed)+(.5,0)$);
\node[right = .5cm of wmed, align = left](eq){{\footnotesize $((\aterm[1]\aterm[2])^{\adom}(\aterm[1] \union \aterm[2]))^* = \aterm[2]^*\aterm[1]^*$}\\
{\footnotesize $\aterm[1]^{\dom} \aterm[2] = \emp$ \qquad (\Cref{section: introduction})}
};
\draw (scm) edge[opacity=0] node[pos = .5](scmed){} (scw);
\draw (m) edge[opacity=0] node[pos = .5](stmed){} (w);
\draw ($(scmed)+(0,+.2)$) edge[dashed, draw=gray] ($(stmed)+(0,-.2)$);
\node[above = 1ex of scmed, align = right](eq2){%
{\footnotesize $(\aterm[1]\aterm[1]^{\adom})^{\dom} = \emp$}
};
\graph[use existing nodes, edges={color=black, pos = .5, earrow}, edge quotes={fill=white, inner sep=1pt,font= \scriptsize}]{
    scm -> scw;
    scm -> m;
    scw -> w;
    m -> w;
};
\end{tikzpicture}
\caption{Inclusions among the four language equivalences.
Each arrow from $X$ to $Y$ means that the equivalence induced by $X$ is a subset of that induced by $Y$.}
\label{figure: four equivalences}
\end{figure}
\end{scope}

\begin {scope}\knowledgeimport {PDL}

\subsection{PDL}\label{section: PDL}
\AP
We recall ""propositional dynamic logic"" (\reintro*\kl{\PDL}) of regular programs with rich tests \cite{fischerPropositionalDynamicLogic1979,harelDynamicLogic2000}.
The set of ""formulas@@PDL"" and ""terms@@PDL"" are defined by the following grammar:
\phantomintro{\fmlclassPDL}
\phantomintro{\termclassPDL}
\phantomintro{test}
\phantomintro{\exprclassPDL}
\phantomintro(PDL){expressions}
\begin{align*}
    \fml[1], \fml[2], \fml[3] \in \reintro*\fmlclassPDL{\vsig, \psig} &\;\Coloneqq\; \afml \mid \fml[1] \to \fml[2] \mid \falsec \mid \bo{\term[1]} \fml && \text{[modal logic, $\afml \in \psig$]} \tag{\reintro*\kl{formulas}}\\
    \term[1], \term[2], \term[3] \in \reintro*\termclassPDL{\vsig, \psig} &\;\Coloneqq\; \aterm
    \mid \term[2] \compo \term[3]
    \mid \term[2] \union \term[3]
    \mid \term[2]^{+} && \text{[regular programs, $\aterm \in \vsig$]}\\
    &\qquad\mid \fml[1]? && \text{[\reintro*\kl{test}]}
    \tag{\reintro*\kl{terms}}
\end{align*}
We use the following abbreviations as usual:
$\lnot \fml \defeq \fml \to \falsec$,
$\fml[1] \lor \fml[2] \defeq \lnot \fml[1] \to \fml[2]$,
$\fml[1] \land \fml[2] \defeq \lnot ((\lnot \fml[1]) \lor (\lnot \fml[2]))$,
$(\fml[1] \leftrightarrow \fml[2]) \defeq (\fml[1] \to \fml[2]) \land (\fml[2] \to \fml[1])$,
$\truec \defeq \lnot \falsec$,
$\dia{\term[1]}\fml[1] \defeq \lnot \bo{\term[1]} \lnot \fml[1]$,
$\id \defeq \truec?$, and
$\emp \defeq \falsec?$.
We call $\bo{\term}$ the ""box operator"" and $\dia{\term}$ the ""diamond operator"".

\AP
Given a set $X$, we consider the following operators on unary relations on $X$:
\begin{align*}
    A \to B & \;\defeq\; (X \setminus A) \cup B,&
    \falsec & \;\defeq\; \emptyset,\\
    \bo{\rel} A & \;\defeq\; \set{\vertex[1] \in X \mid \forall \vertex[2], \tuple{\vertex[1], \vertex[2]} \in \rel \text{ implies } \vertex[2] \in A},&
    A? &\;\defeq\; \set{\tuple{\vertex[1], \vertex[1]} \mid \vertex[1] \in A},
\end{align*}
\AP
Given an $\struc \in \GRELpreorder{\vsig, \psig}$,
the ""semantics@@PDL"" $\intro*\semPDL{\bl}{\struc}$ of "\PDL" is well-defined, where the operators are interpreted as above and as in "\REwLA".

\AP
"\PDL" has a sound and complete Hilbert-style finite axiomatization on $\REL{}$ \cite{gabbayAxiomatizationsLogicsPrograms1977,parikhCompletenessPropositionalDynamic1978}:\footnote{\label{footnote: PDL completeness}%
The axioms were introduced by Segerberg \cite{segerbergCompletenessTheoremModal1982}
and the following completeness was shown independently by Gabbay \cite{gabbayAxiomatizationsLogicsPrograms1977} and Parikh \cite{parikhCompletenessPropositionalDynamic1978}; see \cite{harelDynamicLogic2000}.
Here, precisely, our syntax is a minor variant using "Kleene plus" ($\bl^{+}$).\ifthenelse{\boolean{conference}}{}{ (See \Cref{section: footnote: PDL completeness}.)}}
$\REL{} \modelsfml \fml$ "iff" $\vdashPDL \fml$,
where \AP$\intro*\vdashPDL \fml$ if $\fml$ is "derivable" in the system.
As $\strucuniv^{\struc}$ is not used in the evaluation,
we also have:
$\REL{} \modelsfml \fml$ "iff" $\GRELpreorder{} \modelsfml \fml$.
\end{scope}
\begin{scope} \knowledgeimport{PDLREwLAp}
\section{PDL for REwLA+}\label{section: PDLREwLAp}
\AP
In this section, we introduce PDL for a slight extension of "\REwLA".
""Extended regular expressions with lookahead"" (\reintro*\kl{\REwLAp}) are generated by the following grammar:
\phantomintro\termclassREwLAp
\phantomintro(REwLAp){terms}
\begin{align*}
    \term[1], \term[2], \term[3] \in \reintro*\termclassREwLAp{\vsig} &\;\Coloneqq\; \aterm
    \mid \term[2] \compo \term[3]
    \mid \term[2] \union \term[3]
    \mid \term[2]^{+}
    \mid \term[2]^{\adom} \hspace{4em} \tag*{[syntax of "\REwLA"]}\\
    &\hspace{-2em}\mid \term[2]^{\capid} \tag*{[restriction to the identity relation]}\\
    &\hspace{-2em}\mid \term[2]^{\capcomid} \tag*{[restriction to the complement of the identity relation]}
\end{align*}
Given a set $X$, we define the following two operators on $\pset(X^2)$ as follows:%
\footnote{The operators $\bl^{\capid}$ and $\bl^{\capcomid}$ can be found, "eg",
in \cite[strong loop predicate]{daneckiPropositionalDynamicLogic1984}\cite[graph loop]{nakamuraUndecidabilityPositiveCalculus2024} and
in \cite{nakamuraFiniteVariableOccurrenceFragment2023}, respectively.
The notation is based on \cite{nakamuraFiniteVariableOccurrenceFragment2023}.}

\begin{align*}
    R^{\capid} &\;\defeq\; R \cap \diagonal_{X},&
    R^{\capcomid} &\;\defeq\; R \setminus \diagonal_{X}.
\end{align*}

\AP
For the "\PDL" of "\REwLAp", denoted by ""\PDLREwLAp"",
the set of ""formulas"" and ""terms"" are mutually defined by the following grammar:
\phantomintro\fmlclassPDLREwLAp
\phantomintro\termclassPDLREwLAp
\begin{align*}
    \fml[1], \fml[2], \fml[3] \in \reintro*\fmlclassPDLREwLAp{\vsig, \psig}
    &\;\Coloneqq\;
    \afml \mid \fml[1] \to \fml[2] \mid \falsec \mid \bo{\term[1]} \fml \tag{\reintro*\kl{formulas}}\\
    \term[1], \term[2], \term[3] \in \reintro*\termclassPDLREwLAp{\vsig, \psig}
    &\;\Coloneqq\;
    \aterm
    \mid \term[2] \compo \term[3]
    \mid \term[2] \union \term[3]
    \mid \term[2]^{+}
    \mid \term[2]^{\adom}
    \mid \term[2]^{\capid} 
    \mid \term[2]^{\capcomid}
    \mid \fml[1]? \tag{\reintro*\kl{terms}}
\end{align*}
\AP
Let $\intro*\exprclassPDLREwLAp{\vsig, \psig} \defeq \fmlclassPDLREwLAp{\vsig, \psig} \dcup \termclassPDLREwLAp{\vsig, \psig}$
denote the set of ""expressions@@PDLREwLAp"".
\AP
\phantomintro(REwLAp){semantics}%
\phantomintro\semREwLAp%
Given an $\struc \in \GRELpreorder{\vsig, \psig}$,
the ""semantics@@PDLREwLAp"" $\intro*\semPDLREwLAp{\bl}{\struc}$ of "\PDLREwLAp"
is well-defined,
where the operators are interpreted as in "\REwLAp" and "\PDL".
We use the notations as "\PDL" (in \Cref{section: PDL}).

\subsection{Embedding equivalences}
The "theories" of "\PDLREwLAp" embed ("substitution-closed" | standard) ("match-language@substitution-closed match-language equivalence" | "language@substitution-closed language equivalence") equivalences,
because we can embed "equations" into "formulas" as an analog of the embedding in "\PDL" \cite[\S 5]{fischerPropositionalDynamicLogic1979}, as follows.
\ifthenelse{\boolean{conference}}{\begin{proposition}}{%
\begin{proposition}[\Cref{section: proposition: REwLAp to PDLREwLAp}]}
\label{proposition: REwLAp to PDLREwLAp}
\gdef\REwLAptoPDLREwLAp{%
Let $\algclass \in \set{\GRELfinlin{}, \GRELstfinlin{}}$.
For all "\REwLAp", $\term[1]$ and $\term[2]$, we have:
\[\algclass \modelsfml \term[1] = \term[2] \quad\iff\quad \algclass \modelsfml \bo{\term[1]}\afml \leftrightarrow \bo{\term[2]}\afml,\]
where $\afml$ is a "fresh" "formula variable".
}
\REwLAptoPDLREwLAp
\end{proposition}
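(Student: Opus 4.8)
The plan is to adapt the standard embedding of term equivalence into "\PDL" via the "box operator" \cite[\S 5]{fischerPropositionalDynamicLogic1979}: $\bo{\term}$ turns the relation denoted by a term $\term$ into a unary predicate, and by interpreting the spare variable $\afml$ suitably one can ``read off'' whether a prescribed pair belongs to that relation. The key observation that makes everything routine is that "\REwLAp" terms carry no "formula variables", so for any "\REwLAp" term $\term$ and any $\struc$ the relation $\sem{\term}{\struc}$ is independent of how the "formula variables" are interpreted; in particular $\afml$ is automatically "fresh" for $\term[1]$ and $\term[2]$, and changing $\afml^{\struc}$ leaves both $\sem{\term[1]}{\struc}$ and $\sem{\term[2]}{\struc}$ unchanged. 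For $(\Rightarrow)$ I would just note that if $\algclass \modelsfml \term[1] = \term[2]$, then $\sem{\term[1]}{\struc} = \sem{\term[2]}{\struc}$ for every $\struc \in \algclass$, hence $\sem{\bo{\term[1]}\afml}{\struc} = \sem{\bo{\term[2]}\afml}{\struc}$ (the "box operator" being a fixed function of the underlying relation together with $\afml^{\struc}$), and therefore $\sem{\bo{\term[1]}\afml \leftrightarrow \bo{\term[2]}\afml}{\struc} = \univ{\struc}$ for every such $\struc$, that is, $\algclass \modelsfml \bo{\term[1]}\afml \leftrightarrow \bo{\term[2]}\afml$.

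For $(\Leftarrow)$ I would argue by contraposition. Suppose $\term[1] = \term[2]$ is not "valid" on $\algclass$; fix $\struc \in \algclass$ with $\sem{\term[1]}{\struc} \neq \sem{\term[2]}{\struc}$ and choose a pair in the symmetric difference, "wlog" some $\tuple{\vertex[1], \vertex[2]} \in \sem{\term[1]}{\struc} \setminus \sem{\term[2]}{\struc}$ (the other case is symmetric, exchanging $\term[1]$ and $\term[2]$). Let $\struc'$ be the "generalized structure" obtained from $\struc$ by putting $\afml^{\struc'} \defeq \univ{\struc} \setminus \set{\vertex[2]}$ and keeping every other component unchanged. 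Then $\struc' \in \algclass$: membership in $\GRELfinlin{}$ constrains only the "frame" $\tuple{\univ{\struc}, \strucuniv^{\struc}}$, and membership in $\GRELstfinlin{}$ additionally constrains the "term variable" interpretations (via "isomorphism", after emptying all "formula variable" valuations, to some "string structure" $\wordstruc^{\word}$), and redefining $\afml^{\struc}$ touches none of these; moreover $\sem{\term[1]}{\struc'} = \sem{\term[1]}{\struc}$ and $\sem{\term[2]}{\struc'} = \sem{\term[2]}{\struc}$ by the observation above, so the separating pair survives. Now $\tuple{\vertex[1], \vertex[2]} \in \sem{\term[1]}{\struc'}$ together with $\vertex[2] \notin \afml^{\struc'}$ gives $\vertex[1] \notin \sem{\bo{\term[1]}\afml}{\struc'}$, whereas every $\vertex[3]$ with $\tuple{\vertex[1], \vertex[3]} \in \sem{\term[2]}{\struc'}$ satisfies $\vertex[3] \neq \vertex[2]$, hence $\vertex[3] \in \afml^{\struc'}$, so $\vertex[1] \in \sem{\bo{\term[2]}\afml}{\struc'}$. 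Therefore $\vertex[1] \notin \sem{\bo{\term[1]}\afml \leftrightarrow \bo{\term[2]}\afml}{\struc'}$, and since $\struc' \in \algclass$ this shows that $\bo{\term[1]}\afml \leftrightarrow \bo{\term[2]}\afml$ is not "valid" on $\algclass$ either, which completes the contrapositive.

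I do not expect a genuine obstacle: once the separating valuation has been picked, the argument is only a few lines. The two points that must be got right — and that let one argument serve both $\algclass = \GRELfinlin{}$ and $\algclass = \GRELstfinlin{}$ — are that the "\REwLAp" term semantics is insensitive to the interpretation of the "formula variables" (so that the separating pair survives the passage from $\struc$ to $\struc'$) and that both relational classes are closed under arbitrary re-interpretation of their "formula variables" (so that $\struc'$ is again a legitimate member of $\algclass$).
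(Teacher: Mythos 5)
Your proposal is correct and follows essentially the same route as the paper's proof: both directions use the same key construction, namely re-interpreting the fresh variable $\afml$ as $\univ{\struc}\setminus\set{\vertex[2]}$ so that membership of $\tuple{\vertex[1],\vertex[2]}$ in $\sem{\term}{\struc}$ is read off from $\bo{\term}\afml$ at $\vertex[1]$, together with the observations that \REwLAp{} terms ignore formula variables and that both classes are closed under re-valuation of $\afml$. The only difference is presentational — you argue the backward direction by contraposition from a separating pair, whereas the paper runs the same computation directly for an arbitrary pair — which changes nothing of substance.
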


Unlike the complexity gap between "\PDL" ("EXPTIME"-complete \cite{fischerPropositionalDynamicLogic1979}) and "\RE" ("PSPACE"-complete \cite{stockmeyerWordProblemsRequiring1973}),
"\PDLREwLAp" does not increase the complexity from "\REwLAp",
because the "antidomain" ("negative lookahead") can express the "box" and (unary) negation (see also, "eg", \cite{desharnaisKleeneAlgebraDomain2006}\cite[Def.\ 6]{sedlarComplexityKleeneAlgebra2023}).
\ifthenelse{\boolean{conference}}{\begin{proposition}}{%
\begin{proposition}[\Cref{section: proposition: PDLREwLAp to REwLAp}]}%
\label{proposition: PDLREwLAp to REwLAp}%
\gdef\PDLREwLAptoREwLAp{%
There exists some polynomial-time reduction from the "theory" of "\PDLREwLAp" on $\GRELfinlin{}$ ("resp" $\GRELstfinlin{}$)
to the "equational theory" of "\REwLAp" on $\GRELfinlin{}$ ("resp" $\GRELstfinlin{}$).
}
\PDLREwLAptoREwLAp
\end{proposition}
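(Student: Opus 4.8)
The plan is to exhibit a structural translation $\trPDLREwLAptoREwLAp$ from "\PDLREwLAp" expressions to "\REwLAp" terms that sends every formula $\fml$ to a term denoting the sub-"identity relation" $\set{\tuple{\vertex, \vertex} \mid \vertex \in \semPDLREwLAp{\fml}{\struc}}$, so that validity of $\fml$ becomes validity of the "equation" $\trPDLREwLAptoREwLAp(\fml) = \id$. The key observation (already noted above) is that the "antidomain" $\bl^{\adom}$ expresses both unary negation --- set $\trPDLREwLAptoREwLAp(\falsec) \defeq \emp$ and $\trPDLREwLAptoREwLAp(\fml[1] \to \fml[2]) \defeq (\trPDLREwLAptoREwLAp(\fml[1]))^{\adom} \union \trPDLREwLAptoREwLAp(\fml[2])$ --- and the "box" --- set $\trPDLREwLAptoREwLAp(\bo{\term}\fml) \defeq (\trPDLREwLAptoREwLAp(\term) \compo (\trPDLREwLAptoREwLAp(\fml))^{\adom})^{\adom}$ --- while on terms $\trPDLREwLAptoREwLAp$ simply commutes with every "\REwLAp" operator, with $\trPDLREwLAptoREwLAp(\fml?) \defeq \trPDLREwLAptoREwLAp(\fml)$ for tests. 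A simultaneous induction on expressions, using that every $\trPDLREwLAptoREwLAp(\fml)$ denotes a sub-"identity relation" and that $(A?)^{\adom}$ denotes $(X \setminus A)?$, then gives $\sem{\trPDLREwLAptoREwLAp(\term)}{\struc} = \semPDLREwLAp{\term}{\struc}$ and $\sem{\trPDLREwLAptoREwLAp(\fml)}{\struc} = \set{\tuple{\vertex, \vertex} \mid \vertex \in \semPDLREwLAp{\fml}{\struc}}$ for every $\struc$ in the class in question (on $\GRELfinlin{}$ and $\GRELstfinlin{}$ the semantics is total). As each clause adds only a bounded amount of syntax, $\trPDLREwLAptoREwLAp$ is polynomial-time computable.

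It remains to handle the "formula variables", which "\REwLAp" lacks. Over $\GRELfinlin{}$ this is immediate: take a "fresh" "term variable" $\aterm_{\afml}$ for each "formula variable" $\afml$ and set $\trPDLREwLAptoREwLAp(\afml) \defeq \aterm_{\afml}^{\capid}$; since $R \mapsto R^{\capid} = R \cap \diagonal_X$ maps the arbitrary binary relations onto exactly the arbitrary sub-"identity relations", letting the "valuation" of $\aterm_{\afml}$ vary freely over $\GRELfinlin{}$-structures realises all valuations of $\afml$, so $\GRELfinlin{} \modelsfml \fml$ "iff" $\GRELfinlin{} \modelsfml \trPDLREwLAptoREwLAp(\fml) = \id$. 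I expect the main obstacle to be the $\GRELstfinlin{}$ case: there the "term variables" are \emph{not} free --- each is pinned to a partial successor relation of a word --- so $\aterm_{\afml}^{\capid}$ is always the "empty relation", while the "formula variables" of $\GRELstfinlin{}$-structures are still arbitrary subsets that must be simulated. My proposed fix is a \emph{stretching} encoding: replace each position of a word by a gadget of $k \defeq \card(\exprpsig(\fml))$ consecutive edges, the $i$-th labelled by a "fresh" $\const{b}_i$ if the $i$-th "formula variable" holds there and by another "fresh" $\const{b}_i'$ otherwise; then redefine $\trPDLREwLAptoREwLAp$ so that an $\aterm$-step first traverses a gadget --- $\trPDLREwLAptoREwLAp(\aterm) \defeq \mathrm{g} \compo \aterm$ with $\mathrm{g} \defeq \bigcompo_{i=1}^{k}(\const{b}_i \union \const{b}_i')$ --- and the $i$-th "formula variable" becomes the "domain" test that reads the gadget up to its $i$-th edge, with all other clauses guarded by $\mathrm{g}^{\dom}$ so the image stays supported on gadget-starts. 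The blow-up is linear in $k$, so polynomiality is preserved, and the target "equation" becomes $\trPDLREwLAptoREwLAp(\fml) \union (\mathrm{g}^{\dom})^{\adom} = \id$, whose second disjunct covers the positions that are not gadget-starts.

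The delicate point --- and where I expect the real work --- is showing $\GRELstfinlin{} \modelsfml \fml$ "iff" $\GRELstfinlin{} \modelsfml \trPDLREwLAptoREwLAp(\fml) \union (\mathrm{g}^{\dom})^{\adom} = \id$ on \emph{all} of $\GRELstfinlin{}$, not only on the gadget-shaped structures produced by the encoding. The ``only if'' direction is unproblematic, since the gadget-shaped structures are themselves $\GRELstfinlin{}$-structures and collectively realise every $\wordstruc^{\word}$ together with an arbitrary choice of subsets for its "formula variables", so validity of the "equation" forces validity of $\fml$. For the converse one must check that, on an arbitrary $\wordstruc^{\word}$, evaluating $\trPDLREwLAptoREwLAp(\fml)$ at a position mimics the evaluation of $\fml$ at the corresponding point of \emph{some} genuine $\GRELstfinlin{}$-structure, where a position from which the word does not continue with a full gadget is read as a point with no outgoing $\aterm$-edge and with the unread "formula variables" false --- a legitimate configuration, at which $\fml$ (being "valid") therefore holds. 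Making this functional-"bisimulation"-style simulation precise, and checking that it commutes with every "\REwLAp" operator --- in particular $\bl^{+}$, $\bl^{\capid}$ and $\bl^{\capcomid}$ --- is the heart of the argument; granting it, the two displayed equivalences yield the claimed polynomial-time reduction in both cases.
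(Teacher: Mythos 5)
Your treatment of the $\GRELfinlin{}$ case is essentially the paper's own proof: the paper defines the same compositional translation $\trPDLREwLAptoREwLAp$ (sending $\falsec$ to $\emp$, $\fml[1] \to \fml[2]$ to $\trPDLREwLAptoREwLAp(\fml[1])^{\adom} \union \trPDLREwLAptoREwLAp(\fml[2])$, $\bo{\term}\fml$ to $(\trPDLREwLAptoREwLAp(\term) \compo \trPDLREwLAptoREwLAp(\fml)^{\adom})^{\adom}$, commuting with all term operators, and targeting the equation $\trPDLREwLAptoREwLAp(\fml) = \id$), the only cosmetic difference being that it encodes a formula variable $\afml$ as $f(\afml)^{\dom}$ for a fresh term variable $f(\afml)$ rather than as $\aterm_{\afml}^{\capid}$; on $\GRELfinlin{}$ both devices realize exactly the arbitrary subsets of the diagonal, so this half is fine and matches the paper.

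The gap is in the $\GRELstfinlin{}$ case. You correctly observe that the fresh-term-variable trick degenerates there (term variables are pinned to slices of the successor relation, so $\aterm_{\afml}^{\capid}$ is always empty and $f(\afml)^{\dom}$ cannot realize an arbitrary set of positions), but the gadget construction you substitute for it is left unproved exactly where all the work lies. Concretely: the soundness direction needs that from an \emph{arbitrary} word $\word$ over the extended alphabet $\vsig \dcup \set{\const{b}_1, \const{b}_1', \dots}$ and a gadget-start $i$ at which $\trPDLREwLAptoREwLAp(\fml)$ fails, one can extract a structure that genuinely lies in $\GRELstfinlin{\vsig, \psig}$ at which $\fml$ fails; yet such a $\word$ need not tile into blocks, the blocks reachable from $i$ can overlap or be misaligned, and the extracted object must be shown to be isomorphic to some $\wordstruc^{\word[2]}$ equipped with a valuation of $\psig$ --- which is the entire content of the reduction. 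Moreover the claimed commutation of this simulation with $\bl^{\adom}$, $\bl^{+}$, $\bl^{\capid}$ and $\bl^{\capcomid}$ is a mutual induction of the same strength as a truth lemma, and the $\bl^{\adom}$ clause (a universal quantification over \emph{all} successor positions of $\wordstruc^{\word}$, including positions interior to gadgets) is not obviously compatible with restricting attention to gadget-starts; the auxiliary disjunct $(\mathrm{g}^{\dom})^{\adom}$ must also be shown not to mask a failure witness. Since you explicitly defer this simulation lemma as the heart of the argument, the proposal does not establish the $\GRELstfinlin{}$ half of the statement.
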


\subsection{Soundness and completeness}
\AP
We write $\intro*\vdashPDLREwLAp \fml$ if the "formula" $\fml$ is derivable in the proof system of \Cref{figure: PDLREwLAp axioms}.
\begin{figure}[t]
    \begin{tcolorbox}[colback=black!3, top = .3ex, bottom = .3ex, left = .3em, right = .3em]
    \begin{minipage}[t]{0.1\columnwidth}
    \textbf{Rules}:
    \end{minipage}
    \hfill
    \begin{minipage}[t]{0.25\columnwidth}
        \vspace{-4ex}
        \begin{align*}
        \begin{prooftree}
            \hypo{\fml}
            \hypo{\fml \to \fml[2]}
            \infer2{\fml[2]}
        \end{prooftree}
        \tag{MP} \label{rule: MP}
        \end{align*}
    \end{minipage}
    \hfill
    \begin{minipage}[t]{0.15\columnwidth}
        \vspace{-4ex}
        \begin{align*}
        \begin{prooftree}
            \hypo{\fml}
            \infer1{\bo{\term}\fml}
        \end{prooftree} 
        \tag{Nec} \label{rule: NEC}
        \end{align*}
    \end{minipage}
    \hfill\mbox{}

    \tcbline
    \textbf{Axioms}:

    \noindent
    \proofcase{\kl{\PDL} axioms} \hfill All \kl{substitution-instances} of  valid \kl{\PDL} \kl{formulas} (on $\REL{}$) \hfill \labeltext{(PDL)}{equation: PDL}

    \noindent
    \proofcase{Axiom for $\adom$} \hfill $\bo{\term[1]^{\adom}}\fml[1] \leftrightarrow \bo{\bo{\term[1]}\falsec?} \fml[1]$ \hfill \labeltext{($\adom$)}{equation: adom}

    \noindent
    \proofcase{Axioms for $\capid$ and $\capcomid$}

    \noindent
    \hspace{-.5em}
    \begin{minipage}[b]{0.42\columnwidth}
    \vspace{-2ex}
    \scalebox{.95}{\parbox{1.05\linewidth}{%
    \begin{align*}
    &    \bo{\term[1]^{\capid}} \fml[1] \leftrightarrow \bo{\dia{\term[1]^{\capid}}\truec?} \fml[1] \tag{$\capid$-T} \label{equation: T capid}\\
    &    \bo{(\term[1] \term[2])^{\capid}} \fml[1] \leftrightarrow \bo{\term[1]^{\capid} \term[2]^{\capid}}\fml[1] \tag{$\capid$-$\compo$} \label{equation: capid-compo}\\
    &   \bo{(\term[1] \union \term[2])^{\capid}} \fml[1] \leftrightarrow \bo{\term[1]^{\capid} \union \term[2]^{\capid}}\fml[1] \tag{$\capid$-$\union$} \label{equation: capid-union}\\
    &    \bo{(\term[1]^+)^{\capid}} \fml[1] \leftrightarrow \bo{\term[1]^{\capid}} \fml[1] \tag{$\capid$-$\bl^{+}$} \label{equation: capid-*}\\
    & \bo{(\term[1]^{\adom})^{\capid}} \fml[1] \leftrightarrow \bo{\term[1]^{\adom}} \fml[1] \tag{$\capid$-$\bl^{\adom}$} \label{equation: capid-adom}\\
    &    \bo{(\term[1]^{\capid})^{\capid}} \fml[1] \leftrightarrow \bo{\term[1]^{\capid}} \fml[1] \tag{$\capid$-$\bl^{\capid}$} \label{equation: capid-capid}\\
    &    \bo{(\term[1]^{\capcomid})^{\capid}} \fml[1] \leftrightarrow \truec \tag{$\capid$-$\bl^{\capcomid}$} \label{equation: capid-capcomid}\\
    &    \bo{(\fml[2]?)^{\capid}} \fml[1] \leftrightarrow \bo{\fml[2]?} \fml[1] \tag{$\capid$-$?$} \label{equation: capid-test}
  \end{align*}
    }}
    \end{minipage}
    \hfill
    \hspace{-0.2em}
    \begin{minipage}[b]{0.569\columnwidth}
    \vspace{-2ex}
    \scalebox{.95}{\parbox{1.05\linewidth}{%
    \begin{align*}
    &    \bo{\term[1]} \fml[1] \leftrightarrow \bo{\term[1]^{\capid} \union \term[1]^{\capcomid}} \fml[1] \tag{$\capid$-$\union$-$\capcomid$} \label{equation: capcomid-union-capid}\\
    &    \bo{(\term[1] \term[2])^{\capcomid}} \fml[1] \leftrightarrow \bo{\term[1]^{\capcomid} \term[2]^{\capid} \union \term[1]^{\capid} \term[2]^{\capcomid} \union \term[1]^{\capcomid} \term[2]^{\capcomid}} \fml[1]\tag{$\capcomid$-$\compo$} \label{equation: capcomid-compo}\\
    &    \bo{(\term[1] \union \term[2])^{\capcomid}} \fml[1] \leftrightarrow \bo{\term[1]^{\capcomid} \union \term[2]^{\capcomid}} \fml[1]\tag{$\capcomid$-$\union$} \label{equation: capcomid-union}\\
    &    \bo{(\term[1]^{+})^{\capcomid}} \fml[1] \leftrightarrow \bo{(\term[1]^{\capcomid})^{+}} \fml[1] \tag{$\capcomid$-$\bl^{+}$} \label{equation: capcomid-*}\\
    & \bo{(\term[1]^{\adom})^{\capcomid}} \fml[1] \leftrightarrow \truec \tag{$\capcomid$-$\bl^{\adom}$} \label{equation: capcomid-adom}\\
    &    \bo{(\term[1]^{\capid})^{\capcomid}} \fml[1] \leftrightarrow \truec \tag{$\capcomid$-$\bl^{\capid}$} \label{equation: capcomid-capid}\\
    &    \bo{(\term[1]^{\capcomid})^{\capcomid}} \fml[1] \leftrightarrow \bo{\term[1]^{\capcomid}} \fml[1] \tag{$\capcomid$-$\bl^{\capcomid}$} \label{equation: capcomid-capcomid}\\
    &    \bo{(\fml[2]?)^{\capcomid}} \fml[1] \leftrightarrow \truec \tag{$\capcomid$-$?$} \label{equation: capcomid-test}
    \end{align*}
    }}
    \end{minipage}

    \vspace{-1ex}
    \proofcase{(Restricted) \kl{L{\"o}b's axiom}}
    \hfill $\bo{(\term^{\capcomid})^{+}}(\bo{(\term^{\capcomid})^{+}} \fml \to \fml) \to \bo{(\term^{\capcomid})^{+}}\fml$
    \hfill \labeltext{(L{\"o}b-$\bl^{\capcomid +}$)}{equation: Lob'}
    \end{tcolorbox}
    \caption{$\HilbertstylePDLREwLAp$: Rules and axioms for \kl{\PDLREwLAp} on \kl{finite linear orders}.}
    \label{figure: PDLREwLAp axioms}
\end{figure}
Intuitively, the axioms are given based on as follows.
\begin{itemize}
    \item The axiom for $\adom$ is for replacing the "antidomain" $\term^{\adom}$ with the "box" test $\bo{\term}\falsec?$;

    \item The axioms for $\capid$ and $\capcomid$ are
    for transforming the "formula" so that both $\bl^{\capid}$ and $\bl^{\capcomid}$ only apply to "term variables";

    \item A variant of the \AP""L{\"o}b's axiom"" \nameref{equation: Lob'} is employed for the well-foundedness of $\GRELfinlin{}$.
\end{itemize}

In this paper, we prove the following completeness result.
\begin{theorem}[Main theorem]\label{theorem: PDL REwLA completeness}
    For every "\PDLREwLAp" "formula" $\fml$, we have:
    \[\GRELfinlin{} \modelsfml \fml \quad\iff\quad {} \vdashPDLREwLAp \fml.\]
\end{theorem}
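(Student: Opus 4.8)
The plan is to prove the two directions separately: the ``$\Leftarrow$'' direction (soundness) by a direct axiom-by-axiom verification, and the ``$\Rightarrow$'' direction (completeness) by translating $\PDLREwLAp$ into the identity-free logic $\ifreePDL$ and reducing to a completeness theorem for $\ifreePDL$ on finite strict linear orders.

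\textbf{Soundness.} I would check that each rule of $\HilbertstylePDLREwLAp$ preserves validity on $\GRELfinlin{}$ and that each axiom of \Cref{figure: PDLREwLAp axioms} is valid there. The \nameref{equation: PDL} axioms hold since validity of a $\PDL$ formula transfers from $\REL{}$ to $\GRELpreorder{} \supseteq \GRELfinlin{}$ and is preserved by substitution; \nameref{equation: adom} is immediate from the identity $\semPDLREwLAp{\term^{\adom}}{\struc} = \semPDLREwLAp{\bo{\term}\falsec?}{\struc}$. The substantive cases are the $\capid$/$\capcomid$ axioms, which hold because on a partial order $\vertex[1] \le \vertex[2] \le \vertex[1]$ forces $\vertex[1] = \vertex[2]$, so that the decomposition of each operator into its identity part and its identity-free part is exactly the one the axioms describe; each axiom is then a short unfolding of the semantics of $\bl^{\capid}$ and $\bl^{\capcomid}$. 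Finally \nameref{equation: Lob'} is valid because on a finite linear order $\semPDLREwLAp{(\term^{\capcomid})^{+}}{\struc}$ is an irreflexive transitive relation on a finite set, hence well-founded, so the usual L{\"o}b argument applies.

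\textbf{Completeness.} I would argue the contrapositive. Using the $\capid$/$\capcomid$ axioms together with \nameref{equation: adom}, I would define a translation $\tonorm{\cdot}$ on $\PDLREwLAp$ formulas that (a) eliminates each $\bl^{\adom}$ in favour of a box-test $\bo{\term}\falsec?$, and (b) drives $\bl^{\capid}$ and $\bl^{\capcomid}$ inward until they apply only to term variables; then, splitting every box via $\bo{\term}\fml \leftrightarrow \bo{\term^{\capid}}\fml \land \bo{\term^{\capcomid}}\fml$ and absorbing the identity part into the formula via $\bo{\fml[2]?}\fml \leftrightarrow (\fml[2] \to \fml)$, the result $\tonorm{\fml}$ is an $\ifreePDL$ formula, at the price of a fresh formula variable $\afml_{\aterm}$ recording the self-loop part $\aterm^{\capid}$ of each term variable $\aterm$. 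I would then prove (i) $\vdashPDLREwLAp \fml \leftrightarrow \tonorm{\fml}$, by induction on $\fml$, each rewriting step being exactly one of the axioms of \Cref{figure: PDLREwLAp axioms} together with $\PDL$-provable equivalences such as $\bo{\term \term[2]}\fml \leftrightarrow \bo{\term}\bo{\term[2]}\fml$ and $\bo{\term \union \term[2]}\fml \leftrightarrow \bo{\term}\fml \land \bo{\term[2]}\fml$; and (ii) $\GRELfinlin{} \modelsfml \fml$ iff $\GRELsfinlin{} \modelsfml \tonorm{\fml}$, by matching a model $\struc \in \GRELfinlin{}$ with a model $\struc'$ on a finite strict linear order obtained by deleting the diagonal from $\strucuniv^{\struc}$ and from each $\aterm^{\struc}$ and recording the deleted self-loops of $\aterm$ in $\afml_{\aterm}^{\struc'}$, then checking $\semPDLREwLAp{\fml}{\struc} = \semifreePDL{\tonorm{\fml}}{\struc'}$ by induction. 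Granting the completeness theorem $\GRELsfinlin{} \modelsfml \psi \iff \vdashifreePDLsfinlin \psi$ for $\HilbertstyleifreePDLsfinlin$, the chain $\GRELfinlin{} \modelsfml \fml \iff \GRELsfinlin{} \modelsfml \tonorm{\fml} \iff \vdashifreePDLsfinlin \tonorm{\fml} \implies \vdashPDLREwLAp \tonorm{\fml} \iff \vdashPDLREwLAp \fml$ closes the ``$\Rightarrow$'' direction, where the middle implication uses that every axiom and rule of $\HilbertstyleifreePDLsfinlin$ is derivable in $\HilbertstylePDLREwLAp$ --- in particular, L{\"o}b's axiom of $\ifreePDL$ becomes an instance of \nameref{equation: Lob'} once $\bl^{\capcomid}$ is inserted to witness identity-freeness, which is precisely what $\bl^{\capcomid}$ is there for.

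\textbf{The identity-free completeness, and the main obstacle.} For the deferred completeness of $\HilbertstyleifreePDLsfinlin$ I would adapt the Fischer--Ladner canonical-model proof of $\PDL$-completeness to force a finite, strictly ordered model: from a consistent formula form its closure under the Fischer--Ladner rules, take the consistent atoms (maximal consistent subsets of the closure) as worlds with the canonical accessibility for each term variable, prove the Truth Lemma for the regular-program modalities from the closure properties, and then invoke L{\"o}b's axiom --- sound here precisely because $\ifreePDL$ has no identity part, so every program strictly progresses --- to make the accessibility well-founded, yielding after filtration a finite model that, up to bisimulation, lies in $\GRELsfinlin{}$. This is the step I expect to be the main obstacle: one has to reconcile the fixpoint/filtration reasoning that $\PDL$-completeness needs in order to handle $\bl^{+}$ with the L{\"o}b-style well-foundedness argument forced by the finite-strict-linear-order semantics, and then ensure the resulting model genuinely lies in the intended class rather than in an arbitrary finite well-founded frame. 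By contrast, soundness is a direct check and the reduction above is essentially bookkeeping, the axioms of \Cref{figure: PDLREwLAp axioms} having been reverse-engineered to make $\tonorm{\cdot}$ provably correct.
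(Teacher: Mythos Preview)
Your proposal is correct and follows essentially the same route as the paper: soundness by direct axiom verification, completeness via the normal-form translation that separates identity and non-identity parts, reduction to $\ifreePDL$ on $\GRELsfinlin{}$, and the deferred Fischer--Ladner canonical model with a L{\"o}b-based ordering of atoms for $\ifreePDL$. The one place where the paper is more careful than your sketch is in bookkeeping the two signatures: it keeps $\tonorm{\fml}$ as a $\PDLREwLAp$ formula and introduces a separate $\ifreePDL$ formula $\psi$ over fresh variables with $\tonorm{\fml} = \psi[\Theta_0]$, so that your middle implication is really ``$\vdashifreePDLsfinlin \psi \Rightarrow \vdashPDLREwLAp \psi[\Theta_0]$'', and the L{\"o}b case of that transfer is handled not by merely ``inserting $\bl^{\capcomid}$'' but by re-applying $\tonorm{\cdot}$ to the substituted L{\"o}b instance and using that $\ifreePDL$ terms have provably-$\falsec$ identity part, which is what exposes it as a genuine instance of \nameref{equation: Lob'}.
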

The soundness ($\Longleftarrow$) is straightforward.
Note that some axioms only hold on $\GRELfinlin{}$.
For instance, the axiom \eqref{equation: capid-compo} fails on $\REL{}$; consider the following case:
\begin{tikzpicture}[baseline = -.5ex]
\tikzstyle{vert}=[draw = black, circle, fill = gray!10, inner sep = 2pt, minimum size = 1pt, font = \scriptsize]
\graph[grow right = 1.5cm, branch down = 4.5ex]{
{0/{}[vert]} -!- {1/{}[vert]}
};
\graph[use existing nodes, edges={color=black, pos = .5, earrow}, edge quotes={fill=white, inner sep=1pt,font= \scriptsize}]{
    0 ->["$\term[1]$", bend left = 15] 1;
    1 ->["$\term[2]$", bend left = 15] 0;
};
\end{tikzpicture}
(which does not appear in $\GRELfinlin{}$ by antisymmetricity).
This axiom \eqref{equation: capid-compo} is based on \cite[(3.1)]{andrekaEquationalTheoryKleene2011} for language Kleene lattices.
Also, the axiom \text{\nameref{equation: Lob'}} fails on non-well-founded relations.
\ifthenelse{\boolean{conference}}{}{See \Cref{section: soundness} for more details.}
For the completeness ($\Longrightarrow$),
after an identity-free variant of "\PDL" on "finite strict linear orders" (denoted by $\GRELsfinlin{}$) is introduced in \Cref{section: PDL-},
it is shown by the following two steps:
\begin{enumerate}
    \item We give a reduction from the completeness of the "identity-free \PDL". (\Cref{section: reduction to identity-free})
    \item We show the completeness of the "identity-free \PDL". (\Cref{section: PDL- completeness})
\end{enumerate}

\end{scope}
\begin{scope} \knowledgeimport{PDL-}

\section{Identity-Free PDL on Finite Strict Linear Orders} \label{section: PDL-}
\AP
In this section,
we define ""identity-free \PDL""\footnote{A similar fragment is found in \cite{dasCyclicProofsHypersequents2022},
which, in our setting, can be regarded as ``test-free'' "identity-free PDL". See also \cite[\S 2.2]{kozenTypedKleeneAlgebra1998} for identity-free Kleene algebra.}
(\reintro*\kl{\ifreePDL}, for short), as a syntax fragment of "\PDL".
\AP
The ""formulas@@PDL-"" and ""terms@@PDL-"" are mutually generated by the following grammar:
\phantomintro\fmlclassifreePDL%
\phantomintro\termclassifreePDL%
\begin{align*}
    \fml[1], \fml[2], \fml[3] \in \reintro*\fmlclassifreePDL{\vsig, \psig} &\;\Coloneqq\; \afml \mid \fml[1] \to \fml[2] \mid \falsec \mid \bo{\term[1]} \fml \tag{\reintro*\kl{formulas}}\\
    \term[1], \term[2], \term[3] \in \reintro*\termclassifreePDL{\vsig, \psig}
    &\;\Coloneqq\; \aterm
    \mid \term[2] \compo \term[3]
    \mid \term[2] \union \term[3]
    \mid \term[2]^{+}
    \mid \fml? \compo \term[3]
    \mid \term[2] \compo \fml? \tag{\reintro*\kl{terms}}
\end{align*}
\AP
Let $\intro*\exprclassifreePDL{\vsig, \psig} \defeq \fmlclassifreePDL{\vsig, \psig} \dcup \termclassifreePDL{\vsig, \psig}$
denote the set of ""expressions@@PDL-"".
\AP
Given a "generalized structure" $\struc$ on a \emph{transitive} "frame",
the ""semantics@@PDL-"" $\intro*\semifreePDL{\bl}{\struc}$ of "\ifreePDL" is well-defined,
where the operators are interpreted based on "\PDL".
We use the same notations as with "\PDL" (in \Cref{section: PDL}).

Our syntax restriction is given so that
if each "term variable" $\aterm$ is ``identity-free'' ("ie", $\semifreePDL{\aterm}{\struc} \cap \diagonal_{\univ{\struc}} = \emptyset$),
then each "\ifreePDL" "term@@PDL-" $\term$ is also identity-free.
Thus, for the well-definedness of the "semantics@@PDL-" $\semifreePDL{\bl}{\struc}$ of "\ifreePDL",
we do not have to require the \emph{reflexivity} for $\strucuniv^{\struc}$ in "\ifreePDL", contrary to "\PDL".\footnote{\label{footnote: PDL- and PDL}%
Yet, "\ifreePDL" "formulas@@PDL-" have the same expressive power as "\PDL" "formulas@@PDL" (on $\GRELpreorder{}$). \ifthenelse{\boolean{conference}}{}{(See \Cref{section: footnote: PDL- and PDL}.)}}

\begin{figure}[t]
    \begin{tcolorbox}[colback=black!3, top = .3ex, bottom = .3ex, left = .3em, right = .3em]
    \textbf{Rules}: \eqref{rule: MP} and \eqref{rule: NEC}

    \tcbline
    \textbf{Axioms}:

    \noindent
    \proofcase{Prop. axioms} \hfill All \kl{substitution-instances} of valid propositional formulas \hfill \labeltext{(Prop)}{equation: PROP}

    \noindent
    \proofcase{Normal modal logic axiom $+$ variants of Segerberg's axioms $+$ L{\"o}b's axiom}

    \begin{minipage}[t]{0.40\columnwidth}
        \vspace{-4ex}
        \begin{align*}
            &\bo{\term[1] \term[2]} \fml[1] \;\leftrightarrow\; \bo{\term[1]}\bo{\term[2]}\fml[1]
            \tag{$\compo$} \label{equation: compo}\\
            &\bo{\term[1]^+}\fml[1] \;\leftrightarrow\; (\bo{\term}\fml[1] \land \bo{\term}\bo{\term[1]^+} \fml[1]) \tag{$\bl^{+}$} \label{equation: *}\\
            & \bo{\fml[2]? \compo \term} \fml[1] \leftrightarrow (\fml[2] \to \bo{\term}\fml[1]) \tag{?-L} \label{equation: test L}\\
            &\bo{\term}(\fml[1] \to \fml[2]) \to (\bo{\term}\fml[1] \to \bo{\term} \fml[2])
            \tag{K} \label{equation: K}
        \end{align*}
    \end{minipage}
    \hfill
    \begin{minipage}[t]{0.55\columnwidth}
        \vspace{-4.5ex}
        \begin{align*}
            &\bo{\term[1] \union \term[2]} \fml[1] \;\leftrightarrow\; \bo{\term[1]}\fml[1] \land \bo{\term[2]}\fml[2]
            \tag{$\union$} \label{equation: union}\\
            &(\bo{\term[1]}\fml[1] \land \bo{\term[1]^+}(\fml[1] \to \bo{\term[1]}\fml[1])) \to \bo{\term[1]^+}\fml[1] \tag{$\bl^{+}$-Ind} \label{equation: ind}\\
            &\bo{\term \compo \fml[2]?} \fml[1] \leftrightarrow \bo{\term}(\fml[2] \to \fml[1]) \tag{?-R} \label{equation: test R}\\
            & \bo{\term^{+}}(\bo{\term^{+}} \fml \to \fml) \to \bo{\term^{+}}\fml \tag{L{\"o}b-$\bl^{+}$}\label{equation: Lob}
        \end{align*}
        \end{minipage}
    \end{tcolorbox}
    \caption{$\HilbertstyleifreePDLsfinlin$: Rules and axioms for \kl{\ifreePDL} on \kl{finite strict linear orders}.}
    \label{figure: PDL- axioms}
\end{figure}

\AP
We write $\intro*\vdashifreePDLsfinlin \fml \phantomintro\nvdashifreePDLsfinlin$ if $\fml$ is "derivable" in the system of \Cref{figure: PDL- axioms}.
The rules \eqref{equation: test L} and \eqref{equation: test R} are variants of the axiom for tests in "\PDL".
The rule \eqref{equation: Lob} is a variant of the "L{\"o}b's axiom".
Since $\HilbertstyleifreePDLsfinlin$ has \eqref{rule: NEC}\eqref{equation: K},
from the completeness result of the modal logic $\mathbf{K}$, we have:
\begin{align*}
&\text{For each "valid" $\fml$ on $\REL{}$ with modalities restricted to $\bo{\aterm}$ for a fixed $\aterm \in \vsig$,} \\
&\text{all "substitution-instances" of $\fml$ are "derivable" in $\HilbertstyleifreePDLsfinlin$.} \tag{$\mathbf{K}$}
\label{equation: Ksub}
\end{align*}
This system is sound and complete "wrt" $\GRELsfinlin{}$, which will be shown (\Cref{section: PDL- completeness}).
\begin{theorem}[\Cref{section: PDL- completeness}]\label{theorem: PDL- completeness}
    For every "\ifreePDL" "formula@@PDL-" $\fml$,
    we have:
    \[\GRELsfinlin{} \modelsfml \fml \quad\iff\quad {} \vdashifreePDLsfinlin \fml.\]
\end{theorem}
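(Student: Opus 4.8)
The soundness direction ($\Longleftarrow$) I would verify by the usual induction on derivations: \eqref{rule: MP} and \eqref{rule: NEC} preserve "validity" on $\GRELsfinlin{}$, and each axiom of \Cref{figure: PDL- axioms} is "valid" there. The only genuinely new point is \eqref{equation: Lob}: on any $\struc \in \GRELsfinlin{}$ every "term@@PDL-" $\term$ is "identity-free" (each $\aterm^{\struc} \subseteq \strucuniv^{\struc}$ and $\strucuniv^{\struc}$ is a "strict linear order", so $\semifreePDL{\term}{\struc}$ misses the diagonal), whence $\semifreePDL{\term[1]^{+}}{\struc}$ is a transitive relation contained in $\strucuniv^{\struc}$ and therefore well-founded; \eqref{equation: Lob} holds on every transitive well-founded frame. (\eqref{equation: test L} and \eqref{equation: test R} are sound because, by the grammar of "\ifreePDL", a test $\fml[2]?$ occurs only inside $\fml[2]?\compo\term$ or $\term\compo\fml[2]?$.)

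For completeness ($\Longrightarrow$) I would argue by contraposition: assuming $\nvdashifreePDLsfinlin \fml$, I construct some $\struc \in \GRELsfinlin{}$ with $\semifreePDL{\fml}{\struc} \neq \univ{\struc}$. The plan is a canonical-model construction that grafts Segerberg's completeness argument for provability logic onto the Fischer--Ladner machinery for "\PDL". First I fix a finite Fischer--Ladner-style closure $\cl(\fml) \ni \fml$, closed under "subformula"s, single negations, and the unfoldings read off the axioms \eqref{equation: compo}, \eqref{equation: union}, \eqref{equation: *}, \eqref{equation: test L}, \eqref{equation: test R} (so that, e.g., $\bo{\term[1]^{+}}\fml[2] \in \cl(\fml)$ pulls in $\bo{\term[1]}\fml[2]$ and $\bo{\term[1]}\bo{\term[1]^{+}}\fml[2]$), and additionally carrying the L{\"o}b instances $\bo{\term[1]^{+}}(\bo{\term[1]^{+}}\fml[2] \to \fml[2])$ that the $\bl^{+}$-step of the truth lemma will need. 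An "atom" is a maximal $\HilbertstyleifreePDLsfinlin$-consistent subset of $\cl(\fml)$; since $\nvdashifreePDLsfinlin \fml$, there is an atom $\atom$ with $\lnot\fml \in \atom$. I then form the finite canonical structure $\canonicalmodel$ on the set of atoms, with $\afml^{\canonicalmodel} \defeq \set{\atom[2] : \afml \in \atom[2]}$ and, for each $\aterm \in \vsig$, the usual $\tuple{\atom[2], \atom[3]} \in \aterm^{\canonicalmodel}$ iff $\set{\fml[2] : \bo{\aterm}\fml[2] \in \atom[2]} \subseteq \atom[3]$.

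The technical heart is the truth lemma, proved by simultaneous induction on "expression@@PDL-"s in $\cl(\fml)$: for every "formula@@PDL-" $\fml[2] \in \cl(\fml)$ and atom $\atom[2]$, $\fml[2] \in \atom[2]$ iff $\atom[2] \in \semifreePDL{\fml[2]}{\canonicalmodel}$, together with the matching existence statement for "term@@PDL-"s (if $\bo{\term}\fml[2] \in \cl(\fml)$ and $\bo{\term}\fml[2] \notin \atom[2]$ then some $\atom[3]$ has $\tuple{\atom[2], \atom[3]} \in \semifreePDL{\term}{\canonicalmodel}$ and $\fml[2] \notin \atom[3]$, and conversely). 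The propositional cases are routine; the cases $\compo$, $\union$, $\fml[2]?\compo\term$, $\term\compo\fml[2]?$ use \eqref{rule: NEC}, \eqref{equation: K}, \eqref{equation: compo}, \eqref{equation: union}, \eqref{equation: test L}, \eqref{equation: test R} (plus \eqref{equation: Ksub} for the normal-modal-logic bookkeeping); and the case of a term $\term[1]^{+}$ is the crux. Here I would run the Fischer--Ladner argument via \eqref{equation: *} and \eqref{equation: ind} to obtain a \emph{finite} witnessing $\term[1]$-path, and use \eqref{equation: Lob} together with the L{\"o}b instances in $\cl(\fml)$ exactly as in the provability-logic proof, i.e.\ to force the $\term[1]$-reachability among the relevant atoms to be well-founded, so that the path can be taken to terminate at a $\lnot\fml[2]$-atom. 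Concretely I expect this to yield, after restricting to the cone of atoms reachable from $\atom$ and applying the standard bounded unravelling (its correctness underwritten by \eqref{equation: Lob}), a \emph{finite} model $\struc[2]$ in which $\bigcup_{\aterm \in \vsig} \aterm^{\struc[2]}$ is acyclic and $\fml$ still fails at (the copy of) $\atom$.

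Finally I would land in $\GRELsfinlin{}$ by the observation that the "semantics@@PDL-" of "\ifreePDL" never inspects the $\strucuniv$-component of a "generalized structure": it is computed purely from the "universe" and the $\aterm$- and $\afml$-relations. So I extend $\bigcup_{\aterm \in \vsig} \aterm^{\struc[2]}$ to a "finite strict linear order" $\prec$ on $\univ{\struc[2]}$ (possible since that union is acyclic), and let $\struc$ be $\struc[2]$ with $\strucuniv^{\struc} \defeq \prec$, keeping all $\aterm^{\struc[2]}$ and $\afml^{\struc[2]}$ unchanged; then $\aterm^{\struc} \subseteq \prec$ for all $\aterm$, so $\struc \in \GRELsfinlin{}$, while $\semifreePDL{\fml[2]}{\struc} = \semifreePDL{\fml[2]}{\struc[2]}$ for every $\fml[2] \in \cl(\fml)$, whence $\semifreePDL{\fml}{\struc} \neq \univ{\struc}$, contradicting $\GRELsfinlin{} \modelsfml \fml$. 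I expect the main obstacle to be precisely the $\term[1]^{+}$ case of the truth lemma together with the design of $\cl(\fml)$: making the closure finite yet rich enough to run the Fischer--Ladner iteration argument for "\PDL" and Segerberg's well-foundedness argument for L{\"o}b's axiom \emph{simultaneously}, so that the canonical accessibility can be both unravelled to a finite acyclic model and made to witness every $\dia{\term[1]^{+}}\fml[2]$.
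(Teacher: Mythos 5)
Your soundness argument and the general shape of the completeness argument (Fischer--Ladner closure, atoms, canonical model on atoms, truth lemma with the $\term[1]^{+}$ case as the crux) match the paper. The gap is exactly at the point you yourself flag as "the main obstacle": you never actually produce a finite model whose atomic accessibility relations are jointly extendable to a "strict linear order". With the standard canonical relation $\tuple{\atom[2],\atom[3]} \in \aterm^{\canonicalmodel}$ iff $\set{\fml[2] : \bo{\aterm}\fml[2] \in \atom[2]} \subseteq \atom[3]$, nothing prevents reflexive points or cycles among atoms; the mere presence of \eqref{equation: Lob} in the proof system does not remove them. "Standard bounded unravelling" of such a model is infinite, so it cannot be "underwritten by \eqref{equation: Lob}" after the fact --- well-foundedness of the canonical accessibility is precisely what has to be \emph{established}, not assumed. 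Moreover, even a per-term well-foundedness argument (one $\term[1]^{+}$ at a time, GL-style) would not suffice: membership in $\GRELsfinlin{}$ requires a \emph{single} strict linear order containing every $\aterm^{\struc}$ simultaneously, so the acyclicity must be global across all term variables at once.

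The paper's mechanism for this is the part your proposal is missing. Using the L{\"o}b \emph{rule} (derivable from \eqref{equation: Lob}) at the meta-level, it first proves that every non-empty set $\atomset$ of atoms contains some $\atom$ with $\atomtof{\atom} \land \bo{\term^{+}}\atomsettof{(\at(\fml_0)\setminus\atomset)}$ consistent (Lemma~\ref{lemma: add root}), and iterates this with $\term = \sum\exprvsig(\fml_0)$ to linearly order \emph{all} atoms $\atom_1,\dots,\atom_n$ so that every unboxing witness for every $\bo{\term}\fml[2] \notin \atom_i$ can be found at some strictly later index $j>i$ (Lemma~\ref{lemma: ordering atoms}). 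The canonical model then takes $\strucuniv^{\canonicalmodel^{\fml_0}}$ to be this strict linear order and \emph{restricts} $\aterm^{\canonicalmodel^{\fml_0}}$ to forward-pointing consistent pairs, so the structure is in $\GRELsfinlin{}$ by construction and no unravelling or post-hoc topological sort is needed; the $\Longrightarrow$ direction of the box case of the truth lemma is then discharged by the ordering lemma rather than by saturation alone. Relatedly, your plan to seed $\cl(\fml)$ with L{\"o}b instances is unnecessary in this setup (L{\"o}b is applied only to disjunctions of atoms, which need not lie in the closure), and your restriction "to the cone of atoms reachable from $\atom$" is also not needed once the global order is in place. Your final observation --- that the "\ifreePDL" semantics does not inspect $\strucuniv^{\struc}$, so any strict linear order containing $\bigcup_{\aterm}\aterm^{\struc}$ will do --- is correct, but it only helps once acyclicity has been secured.
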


\end{scope}
\section{Reduction from the Completeness of Identity-Free PDL}\label{section: reduction to identity-free}
In this section,
assuming the completeness theorem for "\ifreePDL" on $\GRELsfinlin{}$ (\Cref{theorem: PDL- completeness}),
we prove the completeness theorem for "\PDLREwLAp" on $\GRELfinlin{}$ (\Cref{theorem: PDL REwLA completeness}).
To this end, we transform "\PDLREwLAp" into a normal form such that
\begin{itemize}
    \item the "antidomain" does not occur (by using the "rule" \text{\nameref{equation: adom}}),
    \item each "term variable" $\aterm$ occurs either in the form $\aterm^{\capid}$ or the form $\aterm^{\capcomid}$ (by using the "rules" for $\capid$ and $\capcomid$).
\end{itemize}
First, the following congruence rules \eqref{rule: cong} are "derivable" in $\vdashPDLREwLAp$ inheriting those in "\PDL"\ifthenelse{\boolean{conference}}{}{ (\Cref{proposition: PDL cong})},
where $\fml[2]$ in $\set{\bl}_{\fml[2]}$ ranges over any "formulas@@PDLREwLAp":%
\begin{align*}
    \begin{aligned}
    &
    \begin{prooftree}[small]
        \hypo{\fml[1]_1 \;\leftrightarrow\; \fml[2]_1}
        \hypo{\fml[1]_2 \;\leftrightarrow\; \fml[2]_2}
        \infer2{\fml[1]_1 \to \fml[1]_2 \;\leftrightarrow\; \fml[2]_1 \to \fml[2]_2}
    \end{prooftree}
    \hspace{2.em}
    \begin{prooftree}[small]
        \hypo{\fml[1] \leftrightarrow \fml[2]}
        \infer1{\bo{\term} \fml[1] \leftrightarrow \bo{\term} \fml[2]}
    \end{prooftree}
    \hspace{2.em}
    \begin{prooftree}[small]
        \hypo{\fml[2] \leftrightarrow \fml[3]}
        \infer1{\bo{\fml[2]?}\fml \leftrightarrow \bo{\fml[3]?}\fml}
    \end{prooftree}    
    \\
    &
    \begin{prooftree}[small]
        \hypo{\set{\bo{\term[1]_1}\fml[2] \leftrightarrow \bo{\term[2]_1}\fml[2]}_{\fml[2]}}
        \hypo{\set{\bo{\term[1]_2}\fml[2] \leftrightarrow \bo{\term[2]_2}\fml[2]}_{\fml[2]}}
        \infer2{\bo{\term[1]_1 \star \term[1]_2}\fml \leftrightarrow \bo{\term[2]_1 \star \term[2]_2}\fml}
    \end{prooftree}
    \mbox{ $\star \in \set{\compo, \union}$}
    \hspace{2.em}
    \begin{prooftree}[small]
        \hypo{\set{\bo{\term[1]}\fml[2] \leftrightarrow \bo{\term[2]}\fml[2]}_{\fml[2]}}
        \infer1{\bo{\term[1]^{+}}\fml \leftrightarrow \bo{\term[2]^{+}}\fml}
    \end{prooftree}
    \end{aligned}
    \tag{Cong}\label{rule: cong}
\end{align*}

We define the following translation.
Intuitively,
$\tonormone{\term}$ and $\tonormtwo{\term}$
express the identity-part and the identity-free-part of a "term@@PDLREwLAp" $\term$, respectively (\Cref{proposition: normal form}).
\begin{definition}\label{definition: normal form}
\AP
    The function $\intro*\tonorm{\bl}$ of type
    \[(\fmlclassPDLREwLAp{} \to \fmlclassPDLREwLAp{})
    \dcup
    (\termclassPDLREwLAp{} \to (\fmlclassPDLREwLAp{} \times \termclassPDLREwLAp{})),\]
    is defined as follows,
    where we write $\tuple{\intro*\tonormone{\term}, \intro*\tonormtwo{\term}} \defeq \tonorm{\term}$ for $\term \in \termclassPDLREwLAp{}$:
    \begin{gather*}
        \tonorm{\afml} \defeq \afml,\hspace{.8em}
        \tonorm{(\fml[2] \to \fml[3])} \defeq \tonorm{\fml[2]} \to \tonorm{\fml[3]},\hspace{.8em} 
        \tonorm{\falsec} \defeq \falsec,\hspace{.8em} 
        \tonorm{(\bo{\term} \fml[2])} \defeq (\tonormone{\term} \to \tonorm{\fml[2]}) \land \bo{\tonormtwo{\term}} \tonorm{\fml[2]}, \\
        \tonormone{\aterm} \defeq \dia{\aterm^{\capid}}\truec,\hspace{.2em}
        \tonormone{(\term[2] \compo \term[3])} \defeq \tonormone{\term[2]} \land \tonormone{\term[3]},\hspace{.2em}
        \tonormone{(\term[2] \union \term[3])} \defeq \tonormone{\term[2]} \lor \tonormone{\term[3]},\hspace{.2em}
        \tonormone{(\term[2]^{+})} \defeq \tonormone{\term[2]},\\
        \tonormone{(\term[2]^{\adom})} \defeq \lnot \tonormone{\term[2]} \land \bo{\tonormtwo{\term[2]}} \falsec,\hspace{.8em}
        \tonormone{(\term[2]^{\capid})} \defeq \tonormone{\term[2]},\hspace{.8em}
        \tonormone{(\term[2]^{\capcomid})} \defeq \falsec,\hspace{.8em}
        \tonormone{(\fml[1]?)} \defeq \tonorm{\fml[1]}, \\
        \tonormtwo{\aterm} \defeq \aterm^{\capcomid},\hspace{.5em}
        \tonormtwo{(\term[2] \compo \term[3])} \defeq
        \tonormtwo{\term[2]} \tonormone{\term[3]}? \union
        \tonormone{\term[2]}? \tonormtwo{\term[3]} \union
        \tonormtwo{\term[2]} \tonormtwo{\term[3]},\hspace{.5em}
        \tonormtwo{(\term[2] \union \term[3])} \defeq \tonormtwo{\term[2]} \union \tonormtwo{\term[3]},\\ 
        \tonormtwo{(\term[2]^{+})} \defeq (\tonormtwo{\term[2]})^{+},\hspace{.35em}
        \tonormtwo{(\term[2]^{\adom})} \defeq \emp,\hspace{.35em}
        \tonormtwo{(\term[2]^{\capid})} \defeq \emp,\hspace{.35em}
        \tonormtwo{(\term[2]^{\capcomid})} \defeq \tonormtwo{\term[2]},\hspace{.35em}
        \tonormtwo{(\fml[1]?)} \defeq \emp. \tag*{\lipicsEnd}
    \end{gather*}
\end{definition}
\ifthenelse{\boolean{conference}}{\begin{proposition}}{%
\begin{proposition}[\Cref{section: proposition: normal form}]}%
\label{proposition: normal form}%
\gdef\propositionnormalform{%
For every "expression@@PDLREwLAp" $\expr \in \exprclassPDLREwLAp{}$, we have the following.
\begin{enumerate}
    \item \label{proposition: normal form 1} If $\expr = \fml$ is a "formula@@PDLREwLAp", ${} \vdashPDLREwLAp \tonorm{\fml} \leftrightarrow \fml$.
    \item \label{proposition: normal form 2} If $\expr = \term$ is a "term@@PDLREwLAp", ${} \vdashPDLREwLAp \bo{\tonormone{\term}?} \fml[3] \leftrightarrow \bo{\term^{\capid}} \fml[3]$, where $\fml[3]$ is any "formula@@PDLREwLAp".
    \item \label{proposition: normal form 3} If $\expr = \term$ is a "term@@PDLREwLAp", ${} \vdashPDLREwLAp \bo{\tonormtwo{\term}} \fml[3] \leftrightarrow \bo{\term^{\capcomid}} \fml[3]$, where $\fml[3]$ is any "formula@@PDLREwLAp".
\end{enumerate}
}
\propositionnormalform
\end{proposition}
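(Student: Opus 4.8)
The plan is to prove all three parts \emph{simultaneously}, by mutual structural induction on the expression $\expr$: part~(\ref{proposition: normal form 1}) when $\expr$ is a formula, parts~(\ref{proposition: normal form 2}) and~(\ref{proposition: normal form 3}) when $\expr$ is a term, the recursion descending only to genuine subexpressions of $\expr$. Throughout I would use freely the \PDL{}-level reasoning available through the \nameref{equation: PDL} axioms together with the derivable congruence rules~\eqref{rule: cong}: the composition law $\bo{\term[2]\term[3]}\fml \leftrightarrow \bo{\term[2]}\bo{\term[3]}\fml$, the union law $\bo{\term[2]\union\term[3]}\fml \leftrightarrow \bo{\term[2]}\fml \land \bo{\term[3]}\fml$ and its $n$-ary analogue, the test law $\bo{\fml[2]?}\fml \leftrightarrow (\fml[2]\to\fml)$, propositional tautologies, and replacement of provably equivalent subexpressions under $\to$, $\bo{\bl}$, $\compo$, $\union$, $\bl^{+}$, and tests. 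Two preliminary remarks streamline the bookkeeping. First, by the test law, part~(\ref{proposition: normal form 2}) amounts to showing ${} \vdashPDLREwLAp \bo{\term^{\capid}}\fml[3] \leftrightarrow (\tonormone{\term} \to \fml[3])$, and this is the form I would actually establish. Second, combining~\eqref{equation: capcomid-union-capid} with the union law yields, for every term $\term$ and formula $\fml[3]$, the splitting ${} \vdashPDLREwLAp \bo{\term}\fml[3] \leftrightarrow \bo{\term^{\capid}}\fml[3] \land \bo{\term^{\capcomid}}\fml[3]$, which is the engine of the whole argument.

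\emph{Formula cases, part~(\ref{proposition: normal form 1}).} The cases $\afml$ and $\falsec$ are immediate, since $\tonorm{\bl}$ is the identity there, and $\fml[2]\to\fml[3]$ follows from the induction hypothesis on $\fml[2]$ and $\fml[3]$ and congruence for $\to$. The decisive case is $\bo{\term}\fml[2]$: from the splitting $\bo{\term}\fml[2] \leftrightarrow \bo{\term^{\capid}}\fml[2] \land \bo{\term^{\capcomid}}\fml[2]$, I would rewrite $\bo{\term^{\capid}}\fml[2]$ into $\tonormone{\term}\to\fml[2]$ using the induction hypothesis~(\ref{proposition: normal form 2}) for the subterm $\term$ and the test law, rewrite $\bo{\term^{\capcomid}}\fml[2]$ into $\bo{\tonormtwo{\term}}\fml[2]$ using~(\ref{proposition: normal form 3}) for $\term$, and finally replace $\fml[2]$ by $\tonorm{\fml[2]}$ using the induction hypothesis~(\ref{proposition: normal form 1}) for $\fml[2]$ with propositional reasoning and congruence for $\bo{\bl}$; the outcome is exactly $\tonorm{(\bo{\term}\fml[2])} = (\tonormone{\term}\to\tonorm{\fml[2]}) \land \bo{\tonormtwo{\term}}\tonorm{\fml[2]}$.

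\emph{Term cases, parts~(\ref{proposition: normal form 2}) and~(\ref{proposition: normal form 3}).} The recipe is uniform in the outermost constructor of $\term$. For~(\ref{proposition: normal form 2}): push $\bl^{\capid}$ one step inward with the matching axiom from \Cref{figure: PDLREwLAp axioms} --- \eqref{equation: T capid} for a term variable, and \eqref{equation: capid-compo}, \eqref{equation: capid-union}, \eqref{equation: capid-*}, \eqref{equation: capid-adom}, \eqref{equation: capid-capid}, \eqref{equation: capid-capcomid}, \eqref{equation: capid-test} for the other connectives --- then hand the resulting subterms and subformulas to the induction hypotheses and collapse by \PDL{}-level reasoning, so that the test law identifies the result with $\bo{\tonormone{\term}?}\fml[3]$. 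For~(\ref{proposition: normal form 3}): do the same with $\bl^{\capcomid}$, where the term-variable case is trivial because $\tonormtwo{\aterm} = \aterm^{\capcomid}$, and the other connectives are handled by \eqref{equation: capcomid-compo}, \eqref{equation: capcomid-union}, \eqref{equation: capcomid-*}, \eqref{equation: capcomid-adom}, \eqref{equation: capcomid-capid}, \eqref{equation: capcomid-capcomid}, \eqref{equation: capcomid-test}. Three clauses warrant care. First, $\term = \term[2]^{\adom}$ in part~(\ref{proposition: normal form 2}): after \eqref{equation: capid-adom} and \nameref{equation: adom}, one is reduced to proving ${} \vdashPDLREwLAp \bo{\term[2]}\falsec \leftrightarrow \tonormone{(\term[2]^{\adom})}$; since $\bo{\term[2]}\falsec$ is \emph{not} a subexpression of $\term[2]^{\adom}$, this cannot be obtained by citing part~(\ref{proposition: normal form 1}), and must instead be re-derived directly from the induction hypotheses~(\ref{proposition: normal form 2}) and~(\ref{proposition: normal form 3}) for $\term[2]$, exactly as in the $\bo{\term}\fml[2]$ case. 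Second, $\term = \term[2]\term[3]$ in part~(\ref{proposition: normal form 3}): \eqref{equation: capcomid-compo} produces a three-way union, and one must interleave the composition, test, and union laws with the induction hypotheses for $\term[2]$ and $\term[3]$ so that the result syntactically matches $\tonormtwo{\term[2]} \tonormone{\term[3]}? \union \tonormone{\term[2]}? \tonormtwo{\term[3]} \union \tonormtwo{\term[2]} \tonormtwo{\term[3]}$. Third, $\term = \term[2]^{+}$ in part~(\ref{proposition: normal form 3}): after \eqref{equation: capcomid-*}, one passes from $\bo{(\term[2]^{\capcomid})^{+}}\fml[3]$ to $\bo{(\tonormtwo{\term[2]})^{+}}\fml[3]$ by the congruence rule for $\bl^{+}$ in~\eqref{rule: cong}, whose schematic premise $\set{\bo{\term[2]^{\capcomid}}\fml[2] \leftrightarrow \bo{\tonormtwo{\term[2]}}\fml[2]}_{\fml[2]}$ is precisely the induction hypothesis~(\ref{proposition: normal form 3}) for $\term[2]$, available for every formula.

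\emph{Main obstacle.} I do not expect a conceptual difficulty: the axioms of $\HilbertstylePDLREwLAp$ were engineered so that each defining clause of \Cref{definition: normal form} becomes provable, so the body of the proof is a long but mechanical case analysis. What does demand attention is the bookkeeping of the \emph{mutual} induction --- keeping the recursion confined to true subexpressions (the $\bl^{\adom}$ clause of part~(\ref{proposition: normal form 2}) being where this matters, as above) and making the \PDL{}-level rewritings in the $\compo$-clauses line up on the nose with the defining equations of $\tonorm{\bl}$, $\tonormone{\bl}$, and $\tonormtwo{\bl}$. Note finally that, unlike the completeness theorem itself, this proposition uses none of \nameref{equation: Lob'}.
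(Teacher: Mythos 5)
Your proposal is correct and follows essentially the same route as the paper's proof: a mutual induction on the expression, using \eqref{equation: capcomid-union-capid} together with the union law to split $\bo{\term}\fml[3]$ into $\bo{\term^{\capid}}\fml[3] \land \bo{\term^{\capcomid}}\fml[3]$, pushing $\bl^{\capid}$ and $\bl^{\capcomid}$ one step inward with the matching axioms of \Cref{figure: PDLREwLAp axioms}, invoking \eqref{rule: cong} for the $\bl^{+}$ clause, and, in the $\bl^{\adom}$ clause of part~\ref{proposition: normal form 2}, re-deriving $\tonormone{(\term[2]^{\adom})} \leftrightarrow \bo{\term[2]}\falsec$ from the induction hypotheses~\ref{proposition: normal form 2} and~\ref{proposition: normal form 3} for $\term[2]$ rather than from part~\ref{proposition: normal form 1} --- exactly as the paper does. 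The only caveat is outside your control: for the union clause of part~\ref{proposition: normal form 2} to collapse to $\bo{\tonormone{(\term[2] \union \term[3])}?}\fml[3]$ as you describe, one needs $\tonormone{(\term[2] \union \term[3])} = \tonormone{\term[2]} \lor \tonormone{\term[3]}$ (as in the analogous \Cref{definition: normal form to PDL-}), not the $\land$ printed in \Cref{definition: normal form}.
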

\begin{proof}
    By easy induction on $\expr$, using the "axioms" for $\capid$ and $\capcomid$ with \eqref{rule: cong}.
\end{proof}

Let $\psig'$ and $\vsig'$ be sets disjoint from $\psig$ and $\vsig$ (and having the same "cardinality" as $\vsig$),
let $\pi_1$ be a bijection from $\psig'$ to $\vsig$, and
let $\pi_2$ be a bijection from $\vsig'$ to $\vsig$.
\AP
Let $\intro*\fromifreePDLsubst$ be the "substitution" mapping
each $\afml \in \psig$ to itself,
each $\afml \in \psig'$ to $\dia{\pi_1(\afml)^{\capid}} \truec$, and
each $\aterm \in \vsig'$ to $\pi_2(\aterm)^{\capcomid}$.
For each "\ifreePDL" "expression@@PDL-" $\expr[2] \in \exprclassifreePDL{\vsig', \psig \dcup \psig'}$,
\AP
we write $\intro*\fromifreePDL{\expr[2]} \in \exprclassPDLREwLAp{\vsig, \psig}$ for the "\PDLREwLAp" "expression@@PDLREwLAp" obtained from $\expr[2]$
by applying $\Theta_0$.
\AP
We observe that for every $\fml \in \fmlclassPDLREwLAp{\vsig, \psig}$,
there is some $\fml[2] \in \fmlclassifreePDL{\vsig', \psig \dcup \psig'}$ such that $\tonorm{\fml} = \fromifreePDL{\fml[2]}$.
By construction, we have the following lemma.\footnote{We may explicitly write, "eg", $\GRELfinlin{\vsig, \psig}$ instead of $\GRELfinlin{}$, for clarity.}
\begin{lemma}\label{lemma: finite linear orders to finite strict linear orders}
    For every "formula@@PDLREwLAp" $\fml \in \fmlclassifreePDL{\vsig', \psig \dcup \psig'}$,
    we have:
    \[\GRELfinlin{\vsig, \psig} \modelsfml \fromifreePDL{\fml} \quad\Longrightarrow\quad \GRELsfinlin{\vsig', \psig \dcup \psig'} \modelsfml \fml.\]
\end{lemma}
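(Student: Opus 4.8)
The plan is to prove the contrapositive by exhibiting an explicit countermodel. Suppose $\fml$ is not "valid" on $\GRELsfinlin{\vsig', \psig \dcup \psig'}$; then there is a "generalized structure" $\struc$ whose "frame" is a "finite strict linear order", together with a point $c_0 \in \univ{\struc}$ such that $c_0 \notin \semifreePDL{\fml}{\struc}$. From $\struc$ I would build a "generalized structure" $\struc'$ over $\vsig, \psig$ by taking the reflexive closure of that order, that is, $\strucuniv^{\struc'} \defeq \strucuniv^{\struc} \cup \diagonal_{\univ{\struc}}$ (a "finite linear order", so both $\semifreePDL{\bl}{\struc}$ and $\semPDLREwLAp{\bl}{\struc'}$ are well-defined), by setting $\afml^{\struc'} \defeq \afml^{\struc}$ for $\afml \in \psig$, and, for each $\aterm \in \vsig$,
\[
    \aterm^{\struc'} \;\defeq\; (\pi_2^{-1}(\aterm))^{\struc} \;\cup\; \set{\tuple{c, c} \mid c \in (\pi_1^{-1}(\aterm))^{\struc}} .
\]
The first summand is contained in the irreflexive relation $\strucuniv^{\struc}$, hence disjoint from $\diagonal_{\univ{\struc}}$; the second is contained in $\diagonal_{\univ{\struc}} \subseteq \strucuniv^{\struc'}$. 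So $\aterm^{\struc'} \subseteq \strucuniv^{\struc'}$, whence $\struc' \in \GRELfinlin{\vsig, \psig}$; moreover $\aterm^{\struc'} \cap \diagonal_{\univ{\struc}} = \set{\tuple{c,c} \mid c \in (\pi_1^{-1}(\aterm))^{\struc}}$ and $\aterm^{\struc'} \setminus \diagonal_{\univ{\struc}} = (\pi_2^{-1}(\aterm))^{\struc}$.

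Next I would establish, by a simultaneous induction on $\expr[2] \in \exprclassifreePDL{\vsig', \psig \dcup \psig'}$, that
\[
    \semPDLREwLAp{\fromifreePDL{\expr[2]}}{\struc'} \;=\; \semifreePDL{\expr[2]}{\struc} .
\]
Since the substitution $\Theta_0$ commutes with every constructor of "\ifreePDL" and the "semantics@@PDLREwLAp" and "semantics@@PDL-" interpret each operator occurring (the propositional connectives, "box", "composition", "union", $\bl^{+}$, tests, and — after $\Theta_0$ — $\bl^{\capid}$ and $\bl^{\capcomid}$) in the same way, depending only on the "universe" and the interpretations of the atomic symbols, the inductive step is routine; the content is in the base cases. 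For $\aterm \in \vsig'$ we have $\fromifreePDL{\aterm} = \pi_2(\aterm)^{\capcomid}$, whose value in $\struc'$ is $\pi_2(\aterm)^{\struc'} \setminus \diagonal_{\univ{\struc}} = \aterm^{\struc} = \semifreePDL{\aterm}{\struc}$ (using that $\aterm^{\struc} \subseteq \strucuniv^{\struc}$ is already diagonal-free). For $\afml \in \psig'$ we have $\fromifreePDL{\afml} = \dia{\pi_1(\afml)^{\capid}}\truec$, and since $\pi_1(\afml)^{\struc'} \cap \diagonal_{\univ{\struc}} = \set{\tuple{c,c} \mid c \in \afml^{\struc}}$, evaluating the "diamond operator" gives $\semPDLREwLAp{\dia{\pi_1(\afml)^{\capid}}\truec}{\struc'} = \afml^{\struc} = \semifreePDL{\afml}{\struc}$; for $\afml \in \psig$ both sides are $\afml^{\struc}$ by construction. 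Instantiating the displayed identity at $\expr[2] = \fml$ gives $c_0 \notin \semPDLREwLAp{\fromifreePDL{\fml}}{\struc'}$, that is, $\fromifreePDL{\fml}$ is not "valid" on $\GRELfinlin{\vsig, \psig}$, which is the contrapositive of the claim.

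I do not expect a real obstacle here: this is the ``easy'', semantic direction, whose only subtleties are bookkeeping. One must place exactly the diagonal pairs named by $\pi_1^{-1}(\aterm)$ and exactly the off-diagonal pairs named by $\pi_2^{-1}(\aterm)$ into $\aterm^{\struc'}$, and fit both inside $\strucuniv^{\struc'}$ — which is why we must pass to the reflexive closure of the strict order. It is also worth recording (and is used silently in the inductive step) that the relational operators "composition", $\bl^{+}$, $\bl^{\capid}$, $\bl^{\capcomid}$ and the modal operators refer only to the "universe", never to the "universal relation", so the term-level and modal computations carried out in $\struc'$ agree with those in $\struc$ even though one "frame" is strict and the other reflexive. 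Finiteness, linearity and antisymmetry of the orders play no role in this transfer; they are needed only for the axioms and the other steps of the completeness proof.
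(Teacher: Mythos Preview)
Your proposal is correct and is essentially identical to the paper's own proof: the paper also takes an arbitrary $\struc \in \GRELsfinlin{\vsig', \psig \dcup \psig'}$, builds $\struc' \in \GRELfinlin{\vsig, \psig}$ on the same universe with $\strucuniv^{\struc'} = \strucuniv^{\struc} \cup \diagonal_{\univ{\struc}}$ and $\aterm^{\struc'} = (\pi_2^{-1}(\aterm))^{\struc} \cup \set{\tuple{c,c}\mid c \in (\pi_1^{-1}(\aterm))^{\struc}}$, and then shows $\semifreePDL{\expr[2]}{\struc} = \semPDLREwLAp{\fromifreePDL{\expr[2]}}{\struc'}$ by easy induction. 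The only cosmetic difference is that the paper states the argument directly rather than as a contrapositive, and your write-up spells out the base cases that the paper leaves implicit.
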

\begin{proof}
    Let $\struc \in \GRELsfinlin{\vsig', \psig \dcup \psig'}$.
    We define the $\struc[2] \in \GRELfinlin{\vsig, \psig}$ as follows:
    \begin{align*}
        \univ{\struc[2]} &~\defeq~ \univ{\struc},&
        \strucuniv^{\struc[2]} &~\defeq~ \strucuniv^{\struc} \cup \diagonal_{\univ{\struc}},\\
        \afml^{\struc[2]} &~\defeq~ \afml^{\struc} \text{ for $\afml \in \psig$},&
        \aterm^{\struc[2]} &~\defeq~ \semPDLREwLAp{\pi_1^{-1}(\aterm)^{\capid}}{\struc} \cup \semPDLREwLAp{\pi_2^{-1}(\aterm)^{\capcomid}}{\struc} \text{ for $\aterm \in \vsig$}.
    \end{align*}
    Then, 
    $\afml^{\struc[1]} = \semPDLREwLAp{\dia{\pi_1(\afml)^{\capid}}\truec}{\struc[2]}$ for $\afml \in \psig'$ and 
    $\aterm^{\struc[1]} = \semPDLREwLAp{\pi_2(\aterm)^{\capcomid}}{\struc[2]}$ for $\aterm \in \vsig'$.
    By easy induction on $\fml[2]$,
    we have
    $\semifreePDL{\fml[2]}{\struc[1]} = \semPDLREwLAp{\fromifreePDL{\fml[2]}}{\struc[2]}$ for all $\fml[2] \in \fmlclassifreePDL{\vsig', \psig \dcup \psig'}$.
    By $\semPDLREwLAp{ \fromifreePDL{\fml}}{\struc[2]} = \univ{\struc[2]}$, we have $\semifreePDL{\fml}{\struc[1]} = \univ{\struc[1]}$.
    Hence, this completes the proof.
\end{proof}
We also have the following lemma.
\begin{lemma}\label{lemma: identity-free PDL to PDLREwLA}
    For every "formula@@PDL-" $\fml \in \fmlclassifreePDL{\vsig', \psig \dcup \psig'}$,
    we have:
    \[{}\vdashifreePDLsfinlin \fml \quad\Longrightarrow\quad {}\vdashPDLREwLAp \fromifreePDL{\fml}.\]
\end{lemma}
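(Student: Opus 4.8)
The plan is to induct on the length of the $\HilbertstyleifreePDLsfinlin$-derivation of $\fml$, translating each step by the application of $\fromifreePDLsubst$. The basic observation is that $\fromifreePDL{\bl}$, being the application of a "substitution", commutes with every "expression@@PDL-" constructor: $\fromifreePDL{(\fml[2] \to \fml[3])} = \fromifreePDL{\fml[2]} \to \fromifreePDL{\fml[3]}$, $\fromifreePDL{\falsec} = \falsec$, $\fromifreePDL{(\bo{\term}\fml[2])} = \bo{\fromifreePDL{\term}}\fromifreePDL{\fml[2]}$, and likewise for $\compo$, $\union$, $\bl^{+}$, and tests. Hence the two rule cases are immediate: the $\fromifreePDLsubst$-translation of an $\eqref{rule: MP}$-step is an $\eqref{rule: MP}$-step in $\HilbertstylePDLREwLAp$, and the translation of a $\eqref{rule: NEC}$-step with program $\term$ is a $\eqref{rule: NEC}$-step with program $\fromifreePDL{\term} \in \termclassPDLREwLAp{\vsig, \psig}$.

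For the axiom cases, I would first note that each of $\nameref{equation: PROP}$, $\eqref{equation: K}$, $\eqref{equation: compo}$, $\eqref{equation: union}$, $\eqref{equation: *}$, $\eqref{equation: ind}$, $\eqref{equation: test L}$, and $\eqref{equation: test R}$ of $\HilbertstyleifreePDLsfinlin$ is, within the "\ifreePDL" syntax, a "substitution-instance" of a fixed "valid" "\PDL" "formula@@PDL" on $\REL{}$ (namely the corresponding propositional, normal-modal, Segerberg, or test axiom of "\PDL"). Since $\fromifreePDL{\bl}$ is itself a "substitution", applying it to a "substitution-instance" of a "valid" "\PDL" "formula@@PDL" yields again a "substitution-instance" of that same "valid" "\PDL" "formula@@PDL" (now by "\PDLREwLAp" "expressions@@PDLREwLAp"), hence an instance of the $\nameref{equation: PDL}$ axiom schema of $\HilbertstylePDLREwLAp$, and therefore "derivable". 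The only remaining axiom is $\eqref{equation: Lob}$, which is not "\PDL"-"valid".

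To handle $\eqref{equation: Lob}$, I would first establish the auxiliary claim that for every $\term \in \termclassifreePDL{\vsig', \psig \dcup \psig'}$ and every $\fml[2] \in \fmlclassPDLREwLAp{}$ we have ${} \vdashPDLREwLAp \bo{(\fromifreePDL{\term})^{\capid}}\fml[2] \leftrightarrow \truec$; informally, $\fromifreePDL{\term}$ is provably identity-free. This goes by structural induction on $\term$ using the $\capid$-axioms with $\eqref{rule: cong}$: the base case $\term = \aterm \in \vsig'$ is $\eqref{equation: capid-capcomid}$ applied to $\fromifreePDL{\aterm} = \pi_2(\aterm)^{\capcomid}$, and the constructor cases push $\bl^{\capid}$ inward through $\compo$, $\union$, $\bl^{+}$, and tests via $\eqref{equation: capid-compo}$, $\eqref{equation: capid-union}$, $\eqref{equation: capid-*}$, $\eqref{equation: capid-test}$, combined with the "\PDL"-"valid" box laws for $\compo$, $\union$, and $?$ (available through $\nameref{equation: PDL}$). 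Combining this claim with $\eqref{equation: capcomid-union-capid}$ and the "\PDL"-"valid" box-of-union law yields ${} \vdashPDLREwLAp \bo{\fromifreePDL{\term}}\fml[2] \leftrightarrow \bo{(\fromifreePDL{\term})^{\capcomid}}\fml[2]$ for all $\fml[2]$, and then $\eqref{rule: cong}$ for $\bl^{+}$ upgrades this to ${} \vdashPDLREwLAp \bo{(\fromifreePDL{\term})^{+}}\fml[2] \leftrightarrow \bo{((\fromifreePDL{\term})^{\capcomid})^{+}}\fml[2]$ for all $\fml[2]$. Finally, I would take the instance of the restricted "L{\"o}b's axiom" $\nameref{equation: Lob'}$ for the program $\fromifreePDL{\term}$ and the "formula@@PDLREwLAp" $\fromifreePDL{\fml}$, which is an axiom of $\HilbertstylePDLREwLAp$, and rewrite each occurrence of $\bo{((\fromifreePDL{\term})^{\capcomid})^{+}}$ into $\bo{(\fromifreePDL{\term})^{+}}$ by the equivalence just established together with $\eqref{rule: cong}$ and propositional reasoning; the result is exactly the $\fromifreePDLsubst$-translation of the $\eqref{equation: Lob}$ instance $\bo{\term^{+}}(\bo{\term^{+}}\fml \to \fml) \to \bo{\term^{+}}\fml$, completing the induction.

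The routine parts will be the commutation of $\fromifreePDL{\bl}$ with the constructors and the repeated appeals to the $\nameref{equation: PDL}$ schema. The main obstacle is the $\eqref{equation: Lob}$ case: proving that $\fromifreePDL{\term}$ is provably identity-free and using this to recover an unrestricted L{\"o}b step over $\fromifreePDL{\term}$ from the restricted axiom $\nameref{equation: Lob'}$ -- which is exactly the role for which $\bl^{\capcomid}$ and $\nameref{equation: Lob'}$ were introduced.
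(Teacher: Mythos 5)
Your proposal is correct, and its overall skeleton coincides with the paper's: induction on the derivation, with both rules and every axiom except \eqref{equation: Lob} discharged because $\fromifreePDLsubst$ is a substitution, those axioms are substitution-instances of formulas valid on $\REL{}$, and composing substitutions again yields instances of the \text{\nameref{equation: PDL}} schema; the only real work is \eqref{equation: Lob}.

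For the \eqref{equation: Lob} case your route differs from the paper's in packaging, though not in substance. The paper goes through the normal-form translation $\tonorm{\bl}$: it proves $\vdashPDLREwLAp \tonormone{\fromifreePDL{\term}} \leftrightarrow \falsec$ and $\vdashPDLREwLAp \tonorm{(\bo{\fromifreePDL{\term}^{+}}\fml[3])} \leftrightarrow \bo{(\fromifreePDL{\term}^{\capcomid})^{+}}\tonorm{\fml[3]}$, reduces $\vdashPDLREwLAp \fromifreePDL{\fml}$ to $\vdashPDLREwLAp \tonorm{(\fromifreePDL{\fml})}$ via \Cref{proposition: normal form}, and observes that the latter is literally an instance of \text{\nameref{equation: Lob'}}. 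You instead prove directly that $\vdashPDLREwLAp \bo{(\fromifreePDL{\term})^{\capid}}\fml[2] \leftrightarrow \truec$ for all $\fml[2]$ by induction on $\term$ (base case by \eqref{equation: capid-capcomid} applied to $\pi_2(\aterm)^{\capcomid}$, constructor cases by the remaining $\capid$-axioms), then use \eqref{equation: capcomid-union-capid} and \eqref{rule: cong} to obtain $\bo{(\fromifreePDL{\term})^{+}}\fml[2] \leftrightarrow \bo{((\fromifreePDL{\term})^{\capcomid})^{+}}\fml[2]$ and rewrite \text{\nameref{equation: Lob'}} into the translated \eqref{equation: Lob} instance. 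Your key claim is interderivable with the paper's (via \Cref{proposition: normal form}.\ref{proposition: normal form 2}, which gives $\bo{\tonormone{\term}?}\fml[3] \leftrightarrow \bo{\term^{\capid}}\fml[3]$), so the two arguments are essentially the same idea implemented at different levels: yours is more self-contained for this lemma, since it does not depend on \Cref{proposition: normal form}, at the cost of redoing inside the Löb case a structural induction that the paper obtains for free from the already-established normal form (which it needs anyway for the completeness reduction). Both are sound.
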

\begin{proof}
    By induction on the derivation tree of ${} \vdashifreePDLsfinlin \fml$.
    Since $\fromifreePDL{\fml}$ is a "substitution-instance" of $\fml$ and each "rule" is closed under "substitutions" ("wrt" both "formula variables" and "term variables"),
    the crucial cases are for \eqref{equation: test L}, \eqref{equation: test R}, and \eqref{equation: Lob},
    which are not equipped in "\PDLREwLAp".
    For \eqref{equation: test L} and \eqref{equation: test R},
    we can show these cases easily by \text{\nameref{equation: PDL}}.
    The remaining case is for \eqref{equation: Lob}.
    We prepare the following claims.
    \begin{claim}\label{lemma: identity-free PDL to PDLREwLA lem}
    For every $\term \in \termclassifreePDL{\vsig', \psig \dcup \psig'}$,
    ${} \vdashPDLREwLAp \tonormone{\fromifreePDL{\term}} \leftrightarrow \falsec$.
    \end{claim}
    \begin{claimproof}
    By induction on $\term$.
    Below, we show some selected cases.
    \proofcase{Case $\term = \aterm$}
    By $\tonormone{(\fromifreePDL{\aterm})}
    = \tonormone{(\pi_2(\aterm)^{\capcomid})}
    = \falsec$.
    \proofcase{Case $\term = \fml[1]? \compo \term[3]$}
    By $\tonormone{(\fromifreePDL{(\fml[1]? \compo \term[3])})}
    = \tonormone{(\fromifreePDL{\fml[1]?})} \land \tonormone{(\fromifreePDL{\term[3]})}
    ~\leftrightarrow_{\text{IH}}~ \tonormone{(\fromifreePDL{\fml[1]?})} \land \falsec
    ~\leftrightarrow_{\text{\nameref{equation: PDL}}}~ \falsec$.
    \end{claimproof}
    \begin{claim}\label{lemma: identity-free PDL to PDLREwLA lem 2}
    For every $\term \in \termclassifreePDL{\vsig', \psig \dcup \psig'}$,
    ${} \vdashPDLREwLAp \tonorm{(\bo{\fromifreePDL{\term}^{+}}\fml[3])} \leftrightarrow \bo{(\fromifreePDL{\term}^{\capcomid})^{+}} \tonorm{\fml[3]}$,
    where $\fml[3]$ is any "formula@@PDLREwLAp".
    \end{claim}
    \begin{claimproof}
        We have:
        \begin{align*}
        \tonorm{(\bo{\fromifreePDL{\term}^{+}}\fml[3])}
        &
        \;\leftrightarrow_{\substack{\text{claim}\\\text{above}}}\;
        ((\falsec \to \tonorm{\fml[3]}) \land \bo{\tonormtwo{(\fromifreePDL{\term}^{+})}} \tonorm{\fml[3]})
        \;\leftrightarrow_{\text{\nameref{equation: PDL}}}\;
        \bo{\tonormtwo{(\fromifreePDL{\term}^{+})}} \tonorm{\fml[3]}\\
        &\;\leftrightarrow_{\text{\Cref{proposition: normal form}}}\;
        \bo{(\fromifreePDL{\term}^{+})^{\capcomid}} \tonorm{\fml[3]}
        \;\leftrightarrow_{\eqref{equation: capcomid-*}}\;
        \bo{(\fromifreePDL{\term}^{\capcomid})^{+}} \tonorm{\fml[3]}. \tag*{\claimqedhere}
        \end{align*}
    \end{claimproof}
    Now, let $\fml = \bo{\term^{+}}(\bo{\term^{+}} \fml[2] \to \fml[2]) \to \bo{\term^{+}} \fml[2]$.
    To prove $\vdashPDLREwLAp \fromifreePDL{\fml}$,
    by \Cref{proposition: normal form},
    it suffices to prove $\vdashPDLREwLAp \tonorm{(\fromifreePDL{\fml})}$.
    By the claim above,
    the "formula@@PDLREwLAp" $\tonorm{(\fromifreePDL{\fml})}$ is equivalent to:
    \[\bo{(\fromifreePDL{\term}^{\capcomid})^{+}}(\bo{(\fromifreePDL{\term}^{\capcomid})^{+}} \tonorm{(\fromifreePDL{\fml})} \to \tonorm{(\fromifreePDL{\fml})}) \to \bo{(\fromifreePDL{\term}^{\capcomid})^{+}} \tonorm{(\fromifreePDL{\fml})}.\]
    As this "formula@@PDLREwLAp" is exactly \text{\nameref{equation: Lob'}}, this completes the proof.
\end{proof}

\begin{proof}[Proof of \Cref{theorem: PDL REwLA completeness} assuming \Cref{theorem: PDL- completeness}]
    Summarizing the lemmas above,
    by letting $\fml[2]$ be such that $\fromifreePDL{\fml[2]} = \tonorm{\fml[1]}$, we have:
    \begin{align*}
        &\GRELfinlin{\vsig, \psig} \modelsfml \fml
        ~\Longrightarrow_{\substack{\text{\Cref{proposition: normal form}}\\\text{with soundness}}}~
        \GRELfinlin{\vsig, \psig} \modelsfml \fromifreePDL{\fml[2]}
        ~\Longrightarrow_{\text{\Cref{lemma: finite linear orders to finite strict linear orders}}}\\
        &
        \GRELsfinlin{\vsig', \psig \dcup \psig'} \modelsfml \fml[2]
        ~\Longrightarrow_{\substack{\text{\Cref{theorem: PDL- completeness}}\\\text{(later)}}}~
        \vdashifreePDLsfinlin \fml[2]
        ~\Longrightarrow_{\text{\Cref{lemma: identity-free PDL to PDLREwLA}}}~
        ~\vdashPDLREwLAp \fromifreePDL{\fml[2]}\\
        &
        \Longrightarrow_{\text{\Cref{proposition: normal form}}}~
        \vdashPDLREwLAp \fml
        ~\Longrightarrow_{\text{soundness}}
        \GRELfinlin{\vsig, \psig} \modelsfml \fml. \tag*{\qedhere}
    \end{align*}
\end{proof}

\subsection*{Example: showing substitution-closed equivalence in $\HilbertstylePDLREwLAp$}
Finally, we give examples of our derivation system to show the "substitution-closed equivalence" in "\REwLA".
First, for "equations" without $\capid$ and $\capcomid$ and not requiring \text{\nameref{equation: Lob'}},
we can easily prove them by \text{\nameref{equation: PDL}} after removing $\adom$.
\begin{example}\label{example 1}
We prove $\GRELfinlin{} \modelsfml (\term[1] \union \term[2])^{\adom} = \term[1]^{\adom} \compo \term[2]^{\adom}$ \cite[\S 9 (8)]{desharnaisModalSemiringsRevisited2008}\cite[Def. 2.2 (4)]{miyazakiDerivativesRegularExpressions2019},
which is equivalent to show $\bo{(\term[1] \union \term[2])^{\adom}} \afml \leftrightarrow \bo{\term[1]^{\adom} \compo \term[2]^{\adom}} \afml$ (\Cref{proposition: REwLAp to PDLREwLAp}).
By transforming it into
$\bo{\bo{\term[1] \union \term[2]}\falsec?} \afml \leftrightarrow \bo{\bo{\term[1]}\falsec? \compo \bo{\term[2]}\falsec?} \afml$ via \text{\nameref{equation: adom}}\eqref{rule: cong},
this is a consequence of \text{\nameref{equation: PDL}}.
\end{example}
By the same argument, we can also show, for example, the following:\footnote{They are the variants of \cite[Lemma 11]{mamourasEfficientMatchingRegular2024} where \Verb!?>! has been replaced with \Verb!?=!.}
\begin{gather*}
\term[1]^{\dom}\term[2]^{\dom} = \term[2]^{\dom} \term[1]^{\dom}, \quad
\term[1]^{\dom}\term[1]^{\dom} = \term[1]^{\dom}, \quad
(\term[1]^{\dom})^{+} = \term[1]^{\dom}, \quad
(\term[1] \union \term[2])^{\dom} = \term[1]^{\dom} \union \term[2]^{\dom}, \quad
(\term[1]^{\dom} \term[2])^{\dom} = \term[1]^{\dom} \term[2]^{\dom}, \\
(\term[1] \term[2]^{\dom})^{\dom} = (\term[1] \term[2])^{\dom}, \quad
\term[1]^{\dom} \union \term[1]^{\adom} = \id, \quad 
\term[1]^{\dom} \term[1]^{\adom} = \emp, \quad
(\term[1]_1^{\dom} \term[2]_1)^{\dom} \term[1]_2^{\dom} \term[2]_2 =
(\term[1]_1^{\dom} \term[1]_2^{\dom}) \term[2]_1^{\dom} \term[2]_2.
\end{gather*}
Particularly, we can also show the "axioms" in boolean domain semiring \cite{desharnaisModalSemiringsRevisited2008,desharnaisInternalAxiomsDomain2011}:
\begin{gather*}
    \term^{\adom} \term = \emp, \quad
    (\term[1] \term[2])^{\adom} \le (\term[1] \term[2]^{\dom})^{\adom}, \quad
    \term^{\dom} \union \term^{\adom} = \id.
\end{gather*}

For "equations" requiring \text{\nameref{equation: Lob'}},
to apply this "axiom",
we transform "formulas@@PDLREwLAp" using the translation $\tonorm{\bl}$.
\begin{example}\label{example 2}
We prove $\GRELfinlin{} \modelsfml \term[3] \le \term[3] \term[3]^{\adom}$
where $\term[3] \defeq (\aterm[1] \aterm[2]^{\adom})^+ \aterm[1] \aterm[2]^{\dom}$,%
\footnote{E.g., for $\aterm[1] = (\mathsf{a} \union \mathsf{b})^{*}$ and $\aterm[2] = \mathsf{\mathsf{b}}$, on $\GRELstfinlin{}$,
the "term@@REwLAp" $\term[3]$ matches when $\mathsf{a}$ is read at least once and the next "character" is $\mathsf{b}$.
By taking the farthest position that $\term[3]$ matches (such position always exists on $\GRELfinlin{}$ with "finite linear order"),
we can not match $\term[3] \term[3]^{\adom}$ from the position, and thus $\term[3] \le \term[3] \term[3]^{\adom}$ holds on $\GRELfinlin{}$.
}
in our system.
Similar to \Cref{example 1}, 
we show $\bo{\term[3] \compo \bo{\term[3]}\falsec?}\afml \to \bo{\term[3]} \afml$.
By \nameref{equation: PDL}, it suffices to show $\bo{\term[3]}(\bo{\term[3]}\afml \to \afml) \to \bo{\term[3]} \afml$.
Note that
\begin{align*}
    \tonormone{\term[3]}
    &~\leftrightarrow~ \tonormone{\aterm[1]} \land \tonormone{(\aterm[2]^{\adom})} \land \tonormone{\aterm[1]} \land (\lnot \tonormone{(\aterm[2]^{\adom})} \land \bo{\tonormone{(\aterm[2]^{\adom})}}\falsec) ~\leftrightarrow~ \falsec \tag{By \text{\nameref{equation: PDL}}}\\
    \bo{\tonormtwo{\term[3]}}\fml[2]
    &~\leftrightarrow~
    \bo{\tonormtwo{((\aterm[1] \aterm[2]^{\adom} \aterm[1])^+)}}\bo{\tonormone{(\aterm[2]^{\dom})}?}\fml[2] \tag{By $\tonormtwo{(\aterm[2]^{\dom})} \leftrightarrow \falsec$}\\
    &~\leftrightarrow~
    \bo{((\aterm[1] \aterm[2]^{\adom} \aterm[1])^{\capcomid})^{+}}\bo{\tonormone{(\aterm[2]^{\dom})}?}\fml[2]  \tag{By \Cref{proposition: normal form} and \eqref{equation: capcomid-*}}.
\end{align*}
Thus by \text{\nameref{equation: PDL}}\eqref{rule: cong}, the "formula@@PDLREwLAp" above is replaced with:
\begin{align*}
&\bo{((\aterm[1] \aterm[2]^{\adom} \aterm[1])^{\capcomid})^{+}}(\bo{((\aterm[1] \aterm[2]^{\adom} \aterm[1])^{\capcomid})^{+}}\bo{\tonormone{(\aterm[2]^{\dom})}?}\afml \to \bo{\tonormone{(\aterm[2]^{\dom})}?}\afml)
\to \bo{((\aterm[1] \aterm[2]^{\adom} \aterm[1])^{\capcomid})^{+}}\bo{\tonormone{(\aterm[2]^{\dom})}?} \afml.
\end{align*}
Hence, by \text{\nameref{equation: Lob'}}, we have shown the "equation" above.
\end{example}

\begin {scope}\knowledgeimport {PDL-}
\section{Completeness of Identity-Free PDL}\label{section: PDL- completeness}
In this section, we finally prove the completeness theorem for "\ifreePDL" on "finite strict linear orders".
Our proof is based on \cite{kozenElementaryProofCompleteness1981}.
Particularly,
\Cref{lemma: ordering atoms} and the direction ($\Longrightarrow$) of \Cref{lemma: prunning'}.\ref{lemma: prunning' 3} are based on \cite[Section 4]{blackburnLinguisticsLogicFinite1994}\cite[Section 4.4]{blackburnProofSystemFinite1996}.
We first defined the "closure@FL-closure" for "\ifreePDL" based on that for standard PDL ("cf" \cite{fischerPropositionalDynamicLogic1979,kozenElementaryProofCompleteness1981}).
\begin{definition}\label{definition: FL-closure}
\AP
The \intro*\kl{FL-closure} $\intro*\cl(\fml)$ of a "\ifreePDL"  "formula" $\fml$ is the smallest set of "\ifreePDL" "formulas" closed under the following rules:
\begin{align*}
    &\fml \in \cl(\fml),&
    &\fml[2] \to \fml[3] \in \cl(\fml) \Longrightarrow \fml[2], \fml[3] \in \cl(\fml),\\
    &\bo{\term[1]} \fml[2] \in \cl(\fml) \Longrightarrow \fml[2] \in \cl(\fml),&
    &\bo{\term[1] \union \term[2]} \fml[2] \in \cl(\fml) \Longrightarrow \bo{\term[1]}\fml[2], \bo{\term[2]}\fml[2] \in \cl(\fml),\\
    &\bo{\term[1]^{+}} \fml[2] \in \cl(\fml) \Longrightarrow \bo{\term[1]}\bo{\term[1]^{+}}\fml[2] \in \cl(\fml),&
    &\bo{\term[1] \compo \term[2]} \fml[2] \in \cl(\fml) \Longrightarrow \bo{\term[1]}\bo{\term[2]}\fml[2] \in \cl(\fml),\\
    &\bo{\fml[3]? \compo \term[1]} \fml[2] \in \cl(\fml) \Longrightarrow \fml[3] \to \bo{\term[1]} \fml[2] \in \cl(\fml),&
    &\bo{\term[1] \compo \fml[3]?} \fml[2] \in \cl(\fml) \Longrightarrow \bo{\term[1]} (\fml[3] \to \fml[2]) \in \cl(\fml).
\end{align*}
\AP
Additionally, we let $\intro*\tcl(\fml) = \set{\term \mid \bo{\term} \fml[2] \in \cl(\fml)}$.
\lipicsEnd\end{definition}
We observe that $\card \cl(\fml)$ is finite and polynomial in the size of $\fml$\ifthenelse{\boolean{conference}}{}{ (see \Cref{section: definition: FL-closure})}.

For a finite "formula" set $\fmlset = \set{\fml[1]_1, \dots, \fml[1]_n}$,
we write $\atomtof{\fmlset}$ for the "formula" $\bigwedge_{i = 1}^{n} \fml[1]_i$.
For a finite set $\atomset$ of finite "formula" sets,
we write $\atomsettof{\atomset}$ for the "formula" $\bigvee_{\fmlset \in \atomset} \atomtof{\fmlset}$.

\AP A "formula" $\fml$ is \intro*\kl{consistent} if ${} \nvdashifreePDLsfinlin \lnot \fml$.
A finite "formula" set $\fmlset$ is \reintro*\kl{consistent} if the "formula" $\atomtof{\fmlset}$ is "consistent".
An \intro*\kl{atom} of a finite "formula" set $\fmlset = \set{\fml[1]_1, \dots, \fml[1]_n}$ is a \kl{consistent} set $\set{\fml[2]_1, \dots, \fml[2]_n}$, 
where each $\fml[2]_i$ is either $\fml[1]_i$ or $\lnot \fml[1]_i$.
We use $\atom[1], \atom[2], \atom[3], \dotsc$ to denote \kl{atoms}.
For a "formula" $\fml$, we write $\intro*\at(\fml)$ for the set of all \kl{atoms} of $\cl(\fml)$.
Note that $\vdashifreePDLsfinlin \bigvee \at(\fml)$.

Below, we list some basic properties of "atoms".
The following proposition gives saturation arguments to obtain an "atom" from a "consistent" set $\fmlset$,
which are easily shown by \eqref{rule: MP}\eqref{equation: Ksub}.
\ifthenelse{\boolean{conference}}{\begin{proposition}}{%
\begin{proposition}[\Cref{section: proposition: atoms consistent saturates}]}%
\label{proposition: atoms consistent saturates}%
\gdef\propositionatomsaturates{%
For every "formula" $\fml[1]$, $\fml[2]$, $\fml[3]$ and "term" $\term$,
the following hold.
\begin{enumerate}
    \item \label{proposition: atoms consistent saturate}
    If $\fml[1]$ is "consistent",
    either $\fml[1] \land \fml[2]$ or
    $\fml[1] \land \lnot \fml[2]$ is "consistent".

    \item \label{proposition: atoms consistent saturate dia}
    If $\fml[1] \land \dia{\term} \fml[2]$ is "consistent",
    either $\fml[1] \land \dia{\term} (\fml[2] \land \fml[3])$ or
    $\fml[1] \land \dia{\term} (\fml[2] \land \lnot \fml[3])$ is "consistent".
\end{enumerate}
}
\propositionatomsaturates
\end{proposition}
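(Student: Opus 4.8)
The plan is to treat both statements as the routine ``saturation'' lemmas that are standard in such completeness proofs: each is proved purely by propositional reasoning (\nameref{equation: PROP}), \eqref{rule: MP}, and the normal modal logic facts packaged in \eqref{equation: Ksub}, with no induction on "formulas" and no structural property of $\cl$ needed. In both cases the argument goes by contraposition through a "derivable" implication of the shape ``$\fml[1] \land (\dots)$ implies a disjunction'', after which one uses the trivial facts that a "derivable" implication with a "consistent" antecedent has a "consistent" consequent, and that a "consistent" disjunction has a "consistent" disjunct.

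For the first statement I would argue directly by contraposition. If neither $\fml[1] \land \fml[2]$ nor $\fml[1] \land \lnot \fml[2]$ is "consistent", then both $\lnot(\fml[1] \land \fml[2])$ and $\lnot(\fml[1] \land \lnot \fml[2])$ are "derivable"; since $(\lnot(\fml[1] \land \fml[2]) \land \lnot(\fml[1] \land \lnot \fml[2])) \to \lnot \fml[1]$ is a propositional tautology, \nameref{equation: PROP} with \eqref{rule: MP} gives $\vdashifreePDLsfinlin \lnot \fml[1]$, contradicting the "consistency" of $\fml[1]$.

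For the second statement the extra ingredient is that the "diamond" distributes over disjunctions. First I would note that the congruence rule for $\bo{\term}$, hence for $\dia{\term}$, is derivable from \eqref{rule: NEC} and \eqref{equation: K} in the usual way, so that the tautology $\fml[2] \leftrightarrow ((\fml[2] \land \fml[3]) \lor (\fml[2] \land \lnot \fml[3]))$ lifts to $\vdashifreePDLsfinlin \dia{\term}\fml[2] \leftrightarrow \dia{\term}((\fml[2] \land \fml[3]) \lor (\fml[2] \land \lnot \fml[3]))$. Next, the schema $\dia{\aterm}(\afml[1] \lor \afml[2]) \to (\dia{\aterm}\afml[1] \lor \dia{\aterm}\afml[2])$ is valid on $\REL{}$ with its only modality of the shape $\bo{\aterm}$, so all its "substitution-instances" are "derivable" by \eqref{equation: Ksub}; substituting the (possibly compound) "term" $\term$ for $\aterm$ and $\fml[2] \land \fml[3]$, $\fml[2] \land \lnot \fml[3]$ for $\afml[1], \afml[2]$, and chaining with the previous equivalence, I obtain $\vdashifreePDLsfinlin \dia{\term}\fml[2] \to (\dia{\term}(\fml[2] \land \fml[3]) \lor \dia{\term}(\fml[2] \land \lnot \fml[3]))$. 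Conjoining the side formula $\fml[1]$ throughout and using \nameref{equation: PROP} then yields a "derivable" implication from $\fml[1] \land \dia{\term}\fml[2]$ to the disjunction of $\fml[1] \land \dia{\term}(\fml[2] \land \fml[3])$ and $\fml[1] \land \dia{\term}(\fml[2] \land \lnot \fml[3])$, whence the claim follows by the elementary bookkeeping about "consistency" noted above.

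The only point I expect to require a line of care is verifying that these two modal facts are genuinely available in $\HilbertstyleifreePDLsfinlin$. The congruence rule is the classical derivation from necessitation \eqref{rule: NEC} and the $\mathbf{K}$-axiom \eqref{equation: K}; and the distribution of $\dia{}$ over $\lor$ is a $\mathbf{K}$-theorem whose only modality, prior to substitution, has the form $\bo{\aterm}$ for a single "term variable", so \eqref{equation: Ksub} applies to all of its "substitution-instances", in particular after replacing $\aterm$ by a compound "term". Everything else is pure propositional manipulation around the definition of "consistency", so I do not anticipate any real difficulty.
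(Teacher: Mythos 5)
Your proof is correct and follows exactly the route the paper takes: its own proof is the one-line remark that ``for each of them, its contrapositive is shown by \eqref{rule: MP}\eqref{equation: Ksub}'', and your argument is precisely that contrapositive, spelled out with the propositional tautologies and the $\mathbf{K}$-theorem $\dia{\aterm}(\afml[1] \lor \afml[2]) \to \dia{\aterm}\afml[1] \lor \dia{\aterm}\afml[2]$ instantiated via \eqref{equation: Ksub}. Your closing point of care --- that \eqref{equation: Ksub} covers substitution-instances replacing the fixed $\aterm$ by a compound "term" --- is exactly the intended reading, as the paper uses the same closure under "substitutions" of "term variables" elsewhere (e.g.\ in \Cref{lemma: identity-free PDL to PDLREwLA}).
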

The following proposition gives properties of "atoms", which are shown in the same manner as \cite[Lemma 2]{kozenElementaryProofCompleteness1981}.
\ifthenelse{\boolean{conference}}{\begin{proposition}}{%
\begin{proposition}[\Cref{section: proposition: atoms cond}]}\label{proposition: atoms cond}%
\gdef\propositionatomscondprop{%
For each "formula" $\fml_0$ and "atom" $\atom \in \at(\fml_0)$,
the following hold.
\begin{enumerate}
    \item \label{proposition: atoms cond to}
    For every $\fml[2] \to \fml[3] \in \cl(\fml_0)$,
    $\fml[2] \to \fml[3] \in \atom$ "iff" $\fml[2] \not\in \atom$ or $\fml[3] \in \atom$.
    
    \item \label{proposition: atoms cond false}
    $\falsec \not\in \atom$.
    
    \item \label{proposition: atoms cond union} 
    For every $\bo{\term[2] \union \term[3]} \fml[2] \in \cl(\fml_0)$,
    $\bo{\term[2] \union \term[3]} \fml[2] \in \atom$ "iff"
    $\bo{\term[2]} \fml[2] \in \atom$ and $\bo{\term[3]} \fml[2] \in \atom$.
    
    \item \label{proposition: atoms cond compo}
    For every $\bo{\term[2] \compo \term[3]} \fml[2] \in \cl(\fml_0)$,
    $\bo{\term[2] \compo \term[3]} \fml[2] \in \atom$ "iff"
    $\bo{\term[2]}\bo{\term[3]} \fml[2] \in \atom$.
    
    \item \label{proposition: atoms cond *}
    For every $\bo{\term[2]^{+}} \fml[2] \in \cl(\fml_0)$,
    $\bo{\term[2]^{+}} \fml[2] \in \atom$ "iff"
    $\bo{\term[2]} \fml[2] \in \atom$ and
    $\bo{\term[2]}\bo{\term[2]^{+}} \fml[2] \in \atom$.

    \item \label{proposition: atoms cond test L}
    For every $\bo{\fml[3]? \compo \term[2]} \fml[2] \in \cl(\fml_0)$,
    $\bo{\fml[3]? \compo \term[2]} \fml[2] \in \atom$ "iff"
    $\fml[3] \to \bo{\term[2]} \fml[2] \in \atom$.

    \item \label{proposition: atoms cond test R}
    For every $\bo{\term[2] \compo \fml[3]?} \fml[2] \in \cl(\fml_0)$,
    $\bo{\term[2] \compo \fml[3]?} \fml[2] \in \atom$ "iff"
    $\bo{\term[2]} (\fml[3] \to \fml[2]) \in \atom$.
\end{enumerate}
}
\propositionatomscondprop
\end{proposition}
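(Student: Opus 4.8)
The plan is to reduce all seven clauses to one \emph{membership criterion} for atoms, which I would establish first. Namely: for $\atom \in \at(\fml_0)$ and any formula $\fml \in \cl(\fml_0)$, the set $\atom$ contains exactly one of $\fml$ and $\lnot\fml$, and consequently $\fml \in \atom$ if and only if $\atomtof{\atom} \land \fml$ is consistent. Both parts are immediate from the definition of $\at(\fml_0)$: an atom contains $\fml$ or $\lnot\fml$, and not both (else $\atomtof{\atom}$ is inconsistent); and if $\fml \in \atom$ then $\atomtof{\atom}$ propositionally implies $\fml$, so $\atomtof{\atom}\land\fml$ is interderivable with the consistent formula $\atomtof{\atom}$, whereas if $\lnot\fml \in \atom$ then $\vdashifreePDLsfinlin \atomtof{\atom}\to\lnot\fml$, so $\atomtof{\atom}\land\fml$ is inconsistent. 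Clause~\ref{proposition: atoms cond false} drops out at once: $\falsec$ is never syntactically a negation $\lnot\fml[2]$, so $\falsec \in \atom$ would make $\falsec$ a conjunct of $\atomtof{\atom}$ and hence $\vdashifreePDLsfinlin \lnot\atomtof{\atom}$, contradicting consistency.

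Next I would dispatch the boolean clause~\ref{proposition: atoms cond to}. By the closure rules of \Cref{definition: FL-closure}, $\fml[2]\to\fml[3] \in \cl(\fml_0)$ forces $\fml[2],\fml[3] \in \cl(\fml_0)$, so the criterion applies to all three formulas. If $\fml[2]\to\fml[3]\in\atom$ and $\fml[2]\in\atom$, then $\atomtof{\atom}$ implies $\fml[3]$ via \eqref{rule: MP}, so $\fml[3]\in\atom$; and conversely, if $\fml[2]\notin\atom$ (so $\lnot\fml[2]\in\atom$) or $\fml[3]\in\atom$, then $\atomtof{\atom}$ implies $\fml[2]\to\fml[3]$ propositionally, so $\fml[2]\to\fml[3]\in\atom$ by the criterion.

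The five modal clauses~\ref{proposition: atoms cond union}--\ref{proposition: atoms cond test R} I would handle by one uniform template, invoking one axiom each: \eqref{equation: union}, \eqref{equation: compo}, \eqref{equation: *}, \eqref{equation: test L}, \eqref{equation: test R}, in turn. In every case the closure rules of \Cref{definition: FL-closure} are rigged so that the unfolded formulas on the right-hand side of the relevant biconditional lie in $\cl(\fml_0)$ (this is exactly why the two test rules are kept separate, to match \eqref{equation: test L} and \eqref{equation: test R}), so the criterion applies there too. Writing $\fml_L$ for the boxed left-hand formula and $\fml_R$ for its unfolding, the axiom gives $\vdashifreePDLsfinlin \fml_L \leftrightarrow \fml_R$, hence $\atomtof{\atom}\land\fml_L$ is consistent iff $\atomtof{\atom}\land\fml_R$ is, so $\fml_L \in \atom$ iff $\atomtof{\atom}\land\fml_R$ is consistent. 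When $\fml_R$ is itself a formula of $\cl(\fml_0)$ (the composition clause and the two test clauses) this is $\fml_R\in\atom$ and we are done. When $\fml_R = \fml_1\land\fml_2$ with $\fml_1,\fml_2 \in \cl(\fml_0)$ (the union and Kleene-plus clauses), I would add the routine observation that $\atomtof{\atom}\land\fml_1\land\fml_2$ is consistent iff $\fml_1\in\atom$ and $\fml_2\in\atom$: the ``if'' direction holds since $\atomtof{\atom}$ implies the conjunction of its conjuncts, and the ``only if'' direction follows by weakening to $\atomtof{\atom}\land\fml_1$ and to $\atomtof{\atom}\land\fml_2$ and applying the criterion.

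I do not expect any genuine obstacle: the whole argument is routine propositional bookkeeping with $\atomtof{\atom}$ together with one axiom per clause, and it runs exactly as \cite[Lemma~2]{kozenElementaryProofCompleteness1981}. The only points needing care are (i) verifying, clause by clause, that \Cref{definition: FL-closure} is indeed closed under precisely the decompositions named in the axioms, so that the membership criterion is available on both sides of each biconditional, and (ii) having in hand both the derivability of $\bigvee\at(\fml_0)$ and the consistency of $\atomtof{\atom}$, which are recorded just before the proposition. The identity-free restriction and the move to $\GRELsfinlin{}$ are irrelevant to this particular statement.
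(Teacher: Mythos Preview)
Your proposal is correct and follows essentially the same approach as the paper's proof: both reduce each clause to the equivalence $\fml\in\atom \iff \atomtof{\atom}\land\fml$ is consistent, and then invoke the matching axiom (\nameref{equation: PROP}, \eqref{equation: union}, \eqref{equation: compo}, \eqref{equation: *}, \eqref{equation: test L}, \eqref{equation: test R}) to rewrite $\atomtof{\atom}\land\fml_L$ into $\atomtof{\atom}\land\fml_R$. The only difference is stylistic: you state the membership criterion explicitly up front and then appeal to it, whereas the paper leaves it implicit and writes the seven clauses as one-line equivalences of conjunctions.
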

The following proposition gives the properties of the "consistency",
which are shown in the same manner as \cite[Lemma 1]{kozenElementaryProofCompleteness1981}.
\ifthenelse{\boolean{conference}}{\begin{proposition}}{%
\begin{proposition}[\Cref{section: proposition: atoms consistent}]}
\label{proposition: atoms consistent}%
\gdef\propositionatoms{%
    Let $\fml_0$ be a "formula".
    Let $\atom, \atom' \in \at(\fml_0)$ be "atoms".
    Let $\term[2], \term[3]$ be "terms" and $\fml[3] \in \cl(\fml_0)$ be a "formula".
    Then the following hold.
    \begin{enumerate}
        \item \label{proposition: atoms consistent union} 
        If $\atomtof{\atom} \land \dia{\term[2] \union \term[3]} \atomtof{\atom}'$ is "consistent",
        $\atomtof{\atom} \land \dia{\term[2]} \atomtof{\atom}'$ is "consistent" or $\atomtof{\atom} \land \dia{\term[3]} \atomtof{\atom}'$ is "consistent".
        
        \item \label{proposition: atoms consistent compo}
        If $\atomtof{\atom} \land \dia{\term[2] \compo \term[3]} \atomtof{\atom}'$ is "consistent",
        there is some "atom" $\atom''$ such that $\atomtof{\atom} \land \dia{\term[2]} \atomtof{\atom}''$ and $\atomtof{\atom}'' \land \dia{\term[3]} \atomtof{\atom}'$ are "consistent".
        
        \item \label{proposition: atoms consistent *}
        If $\atomtof{\atom} \land \dia{\term[2]^+} \atomtof{\atom}'$ is "consistent",
        there are $n \ge 1$ and "atoms" $\atom_0'', \dots, \atom_n''$ with $\atom_0'' = \atom$ and $\atom_n'' = \atom'$ such that
        $\atomtof{\atom}_{i}'' \land \dia{\term[2]} \atomtof{\atom}_{i+1}''$ is "consistent" for every $i < n$.

        \item \label{proposition: atoms consistent test L}
        If $\atomtof{\atom} \land \dia{\fml[3]? \compo \term[2]} \atomtof{\atom}'$ is "consistent",
        $\fml[3] \in \atom$ and $\atom \land \dia{\term[2]} \atom'$ is "consistent".

        \item \label{proposition: atoms consistent test R}
        If $\atomtof{\atom} \land \dia{\term[2] \compo \fml[3]?} \atomtof{\atom}'$ is "consistent",
        $\atom \land \dia{\term[2]} \atom'$ is "consistent" and $\fml[3] \in \atom'$.
    \end{enumerate}
}
\propositionatoms
\end{proposition}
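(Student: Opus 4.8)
The plan is to follow the atom-manipulation arguments from Kozen's elementary completeness proof \cite{kozenElementaryProofCompleteness1981}, specialised to the test and transitive-closure axioms of $\HilbertstyleifreePDLsfinlin$. Three routine facts are used throughout. (a) The ``diamond forms'' of the axioms, obtained from \eqref{equation: union}, \eqref{equation: compo}, \eqref{equation: test L}, \eqref{equation: test R} by contraposition: $\vdashifreePDLsfinlin \dia{\term[2] \union \term[3]}\fml[1] \leftrightarrow \dia{\term[2]}\fml[1] \lor \dia{\term[3]}\fml[1]$, $\vdashifreePDLsfinlin \dia{\term[2] \compo \term[3]}\fml[1] \leftrightarrow \dia{\term[2]}\dia{\term[3]}\fml[1]$, $\vdashifreePDLsfinlin \dia{\fml[3]? \compo \term[2]}\fml[1] \leftrightarrow \fml[3] \land \dia{\term[2]}\fml[1]$, and $\vdashifreePDLsfinlin \dia{\term[2] \compo \fml[3]?}\fml[1] \leftrightarrow \dia{\term[2]}(\fml[3] \land \fml[1])$. (b) Bookkeeping facts about atoms of $\cl(\fml_0)$: the disjunction $\bigvee \at(\fml_0)$ of all atoms is derivable (observed earlier), distinct atoms are provably incompatible, and an atom decides every formula of $\cl(\fml_0)$ — so $\atomtof{\atom} \land \fml[3]$ consistent with $\fml[3] \in \cl(\fml_0)$ forces $\fml[3] \in \atom$, while $\fml[3] \in \atom'$ gives $\vdashifreePDLsfinlin \atomtof{\atom}' \to \fml[3]$. (c) Two consistency-transfer principles: if $\atomtof{\atom} \land (\fml[2] \lor \fml[3])$ is consistent then one of $\atomtof{\atom} \land \fml[2]$, $\atomtof{\atom} \land \fml[3]$ is (conjunction distributes over disjunction in $\vdashifreePDLsfinlin$), and if $\dia{\term[2]}\fml[2]$ is consistent then so is $\fml[2]$ (otherwise $\bo{\term[2]}\lnot\fml[2]$ follows by \eqref{rule: NEC}).

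With these in hand, clauses \ref{proposition: atoms consistent union}, \ref{proposition: atoms consistent test L} and \ref{proposition: atoms consistent test R} are immediate: substitute the relevant diamond form into the assumed consistent formula and apply the transfer principles, using in the two test cases that the atom decides $\fml[3]$. Clause \ref{proposition: atoms consistent compo} takes one extra move: after rewriting with the $\compo$ diamond form one inserts an atom using $\vdashifreePDLsfinlin \dia{\term[3]}\atomtof{\atom}' \leftrightarrow \bigvee_{\atom'' \in \at(\fml_0)} (\atomtof{\atom}'' \land \dia{\term[3]}\atomtof{\atom}')$ (from $\vdashifreePDLsfinlin \bigvee \at(\fml_0)$ and distributivity), pushes this equivalence through $\dia{\term[2]}$ — which distributes over $\lor$ — and applies the disjunction-transfer principle to extract a single atom $\atom''$ with $\atomtof{\atom} \land \dia{\term[2]}(\atomtof{\atom}'' \land \dia{\term[3]}\atomtof{\atom}')$ consistent; weakening inside $\dia{\term[2]}$ then yields consistency of $\atomtof{\atom} \land \dia{\term[2]}\atomtof{\atom}''$, and the diamond-transfer principle yields consistency of $\atomtof{\atom}'' \land \dia{\term[3]}\atomtof{\atom}'$.

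The main obstacle is clause \ref{proposition: atoms consistent *}, the $\bl^{+}$ case, which is the only place the induction axiom \eqref{equation: ind} is needed (the part the paper attributes to \cite{blackburnLinguisticsLogicFinite1994,blackburnProofSystemFinite1996}). I would argue by contradiction. Form the directed graph on $\at(\fml_0)$ with an edge $\atom[2] \to \atom[3]$ iff $\atomtof{\atom[2]} \land \dia{\term[2]}\atomtof{\atom[3]}$ is consistent, and let $S \subseteq \at(\fml_0)$ be the set of atoms reachable from $\atom$ by a path of at least one edge. Assume $\atom' \notin S$ and let $\psi$ be the disjunction of $\atomtof{\atom[2]}$ over $\atom[2] \in S$. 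The key sublemma is: if an atom $\atom[2]$ has no edge into $\at(\fml_0) \setminus S$, then $\vdashifreePDLsfinlin \atomtof{\atom[2]} \to \bo{\term[2]}\psi$ — indeed, for each $\atom[3] \notin S$ the formula $\atomtof{\atom[2]} \land \dia{\term[2]}\atomtof{\atom[3]}$ is inconsistent, hence $\vdashifreePDLsfinlin \atomtof{\atom[2]} \to \bo{\term[2]}\lnot\atomtof{\atom[3]}$; conjoining over all such $\atom[3]$, using \eqref{equation: K}, and rewriting $\bigwedge_{\atom[3] \notin S}\lnot\atomtof{\atom[3]}$ as $\psi$ (by exhaustiveness and incompatibility of atoms) gives the claim. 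Since $\atom$ has no edge leaving $S$ (by definition of $S$) and no $\atom[2] \in S$ has an edge leaving $S$ (one further step stays reachable), the sublemma gives $\vdashifreePDLsfinlin \atomtof{\atom} \to \bo{\term[2]}\psi$ and $\vdashifreePDLsfinlin \psi \to \bo{\term[2]}\psi$; from the latter and \eqref{rule: NEC} we get $\vdashifreePDLsfinlin \bo{\term[2]^{+}}(\psi \to \bo{\term[2]}\psi)$, so \eqref{equation: ind} instantiated at $\fml[1] \defeq \psi$ yields $\vdashifreePDLsfinlin \atomtof{\atom} \to \bo{\term[2]^{+}}\psi$. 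But $\atom' \notin S$ gives $\vdashifreePDLsfinlin \atomtof{\atom}' \to \lnot\psi$, so $\vdashifreePDLsfinlin \atomtof{\atom} \to \lnot\dia{\term[2]^{+}}\atomtof{\atom}'$, contradicting the hypothesis. Hence $\atom' \in S$, and unwinding a witnessing path gives the required chain $\atom = \atom_0'', \dots, \atom_n'' = \atom'$ with $n \ge 1$ and each $\atomtof{\atom}_i'' \land \dia{\term[2]}\atomtof{\atom}_{i+1}''$ consistent. The one genuinely delicate point is arranging the graph-closure so that the loop invariant $\psi$ is exactly the formula accepted by \eqref{equation: ind}; everything else is routine propositional and normal-modal reasoning.
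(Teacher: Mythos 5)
Your proposal is correct and follows essentially the same route as the paper's proof: contraposition of the box axioms for the union and test clauses, insertion of an atom under the diamond for the composition clause (the paper does this by iterating its saturation proposition, you by the equivalent disjunction-over-atoms trick), and, for the Kleene-plus clause, the reachable-set argument with loop invariant $\bigvee S$ closed under one-step successors. The only cosmetic difference is that the paper invokes the derived transitive-closure rule \eqref{rule: TC} where you inline its derivation from \eqref{equation: ind} and \eqref{rule: NEC}; these are interderivable, so nothing is gained or lost.
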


\subsection{Ordering Atoms}\label{section: ordering atoms}
In this subsection,
we give a "finite linear order" on the "atoms" of $\fml_0$,
which will be used to construct a \kl{canonical model} in the class $\GRELsfinlin{}$.

Using \nameref{equation: Lob'}, we have the following lemma.
\begin{lemma}\label{lemma: add root}
    Let $\fml_0$ be a "formula".
    If $\atomset \subseteq \at(\fml_0)$ is a non-empty set, then
    there is some $\atom \in \atomset$
    such that $\atomtof{\atom} \land \bo{\term^{+}} \atomsettof{(\at(\fml_0) \setminus \atomset)}$ is \kl{consistent}.
\end{lemma}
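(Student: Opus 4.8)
The plan is to recast the statement in terms of the dual formula $\chi \defeq \atomsettof{\atomset} = \bigvee_{\atom \in \atomset}\atomtof{\atom}$ and then apply the L\"ob axiom \eqref{equation: Lob} in the familiar provability-logic manner. First I would record two bookkeeping facts. Since $\vdashifreePDLsfinlin \bigvee\at(\fml_0)$ and $\chi \lor \atomsettof{(\at(\fml_0)\setminus\atomset)}$ \emph{is} $\bigvee\at(\fml_0)$, we get $\vdashifreePDLsfinlin \lnot\chi \to \atomsettof{(\at(\fml_0)\setminus\atomset)}$, hence, by \eqref{rule: NEC} and \eqref{equation: K}, $\vdashifreePDLsfinlin \bo{\term^{+}}\lnot\chi \to \bo{\term^{+}}\atomsettof{(\at(\fml_0)\setminus\atomset)}$; so it is enough to find $\atom \in \atomset$ with $\atomtof{\atom} \land \bo{\term^{+}}\lnot\chi$ "consistent". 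Second, since $\atomset \neq \emptyset$, picking $\atom_0 \in \atomset$ gives $\vdashifreePDLsfinlin \atomtof{\atom_0} \to \chi$, and $\atomtof{\atom_0}$ is "consistent" by definition of "atom", so $\chi$ is "consistent".

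Next I would run the L\"ob argument. Write $\sigma \defeq \chi \land \bo{\term^{+}}\lnot\chi$. Instantiating \eqref{equation: Lob} at $\lnot\chi$ gives $\vdashifreePDLsfinlin \bo{\term^{+}}(\bo{\term^{+}}\lnot\chi \to \lnot\chi) \to \bo{\term^{+}}\lnot\chi$; since $\lnot(\bo{\term^{+}}\lnot\chi \to \lnot\chi)$ is propositionally equivalent to $\sigma$, taking the contrapositive and using that $\dia{\term^{+}}$ is monotone (normal modal logic for the modality $\bo{\term^{+}}$) yields $\vdashifreePDLsfinlin \dia{\term^{+}}\chi \to \dia{\term^{+}}\sigma$. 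A case split on $\bo{\term^{+}}\lnot\chi$ (the negation of $\dia{\term^{+}}\chi$) then gives $\vdashifreePDLsfinlin \chi \to (\sigma \lor \dia{\term^{+}}\sigma)$: if $\bo{\term^{+}}\lnot\chi$ holds then $\sigma$ holds outright, and otherwise $\dia{\term^{+}}\chi$ holds and we invoke the previous implication. As $\chi$ is "consistent", so is $\sigma \lor \dia{\term^{+}}\sigma$; and were $\sigma$ inconsistent we would have $\vdashifreePDLsfinlin \lnot\sigma$, hence $\vdashifreePDLsfinlin \lnot\dia{\term^{+}}\sigma$ by \eqref{rule: NEC}, so $\vdashifreePDLsfinlin \lnot(\sigma \lor \dia{\term^{+}}\sigma)$, a contradiction. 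Therefore $\sigma = \chi \land \bo{\term^{+}}\lnot\chi$ is "consistent".

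Finally, $\sigma$ is provably equivalent to the finite disjunction $\bigvee_{\atom \in \atomset}(\atomtof{\atom} \land \bo{\term^{+}}\lnot\chi)$, and a finite disjunction is "consistent" only if one of its disjuncts is, so $\atomtof{\atom} \land \bo{\term^{+}}\lnot\chi$ is "consistent" for some $\atom \in \atomset$; by the first bookkeeping fact this yields the required "consistent" formula $\atomtof{\atom} \land \bo{\term^{+}}\atomsettof{(\at(\fml_0)\setminus\atomset)}$. The only genuinely nontrivial step is the L\"ob instance together with the rewriting $\lnot(\bo{\term^{+}}\lnot\chi \to \lnot\chi) \leftrightarrow \chi \land \bo{\term^{+}}\lnot\chi$ — this is exactly where the well-foundedness encoded by \eqref{equation: Lob} (the semantic counterpart being that $\GRELsfinlin{}$ is well-founded) is used; everything else is propositional calculus \eqref{equation: PROP} and routine normal-modal-logic manipulation of $\bo{\term^{+}}$ via \eqref{equation: K} and \eqref{rule: NEC}.
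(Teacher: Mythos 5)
Your proof is correct and is essentially the paper's argument in contrapositive form: where the paper assumes every $\atomtof{\atom} \land \bo{\term^{+}} \atomsettof{(\at(\fml_0) \setminus \atomset)}$ is inconsistent, derives $\vdashifreePDLsfinlin \bo{\term^{+}}\lnot\atomsettof{\atomset} \to \lnot\atomsettof{\atomset}$, and applies the derived L\"ob \emph{rule} to refute the consistency of the atoms in $\atomset$, you apply the L\"ob \emph{axiom} directly to $\lnot\chi$ to establish consistency of $\chi \land \bo{\term^{+}}\lnot\chi$ and then distribute over $\atomset$. The two are trivially interconvertible (your ``$\sigma$ is consistent'' is exactly the statement that the premise of the L\"ob rule instance is underivable), so this counts as the same approach, correctly executed.
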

\begin{proof}
    Towards a contradiction,
    assume that
    $\vdashifreePDLsfinlin \bo{\term^{+}} \atomsettof{(\at(\fml_0) \setminus \atomset)} \to \lnot \atomtof{\atom}$ for all $\atom \in \atomset$.
    Combining them and 
    $\vdashifreePDLsfinlin \bo{\term^{+}}((\bigvee \at(\fml_0) \setminus \atomset) \leftrightarrow \lnot \bigvee \atomset)$
    (by \nameref{equation: PROP}\eqref{rule: NEC}) with \eqref{rule: MP}\eqref{equation: Ksub}
    yields $\vdashifreePDLsfinlin \bo{\term^{+}} (\lnot \atomsettof{\atomset}) \to \lnot \atomsettof{\atomset}$.
    By applying the \AP""L{\"o}b's rule"" \begin{prooftree}[small]
        \hypo{\bo{\term^{+}} \fml \to \fml}
        \infer1{\fml}
    \end{prooftree},
    which is "derivable" from \eqref{equation: Lob}%
    \ifthenelse{\boolean{conference}}{}{ (see \Cref{section: proposition: Lob})},
    we have
    $\vdashifreePDLsfinlin \lnot \atomsettof{\atomset}$.
    Hence, each $\atomtof{\atom}$ is not "consistent" for every $\atom \in \atomset$.
    This reaches a contradiction, since each "atom" is "consistent" and $\atomset$ is non-empty.
\end{proof}
Using this lemma, we obtain an appropriate "finite linear order" on "atoms".
\begin{lemma}\label{lemma: ordering atoms}
    Let $\fml_0$ be a "formula".
    There is a sequence $\atom_1 \dots \atom_n$ of pairwise distinct \kl{atoms}
    with $\at(\fml_0) = \set{\atom_1, \dots, \atom_n}$
    such that, for all $\bo{\term} \fml[2] \in \cl(\fml_0)$,
    if $\bo{\term} \fml[2] \not\in \atom_i$,
    then there is a $j > i$ such that
    $\atomtof{\atom}_i \land \dia{\term} \atomtof{\atom}_j$ is \kl{consistent} and $\fml[2] \not\in \atom_j$.
\end{lemma}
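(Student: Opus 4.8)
The plan is to construct the sequence \emph{top-down} --- choosing $\atom_n$ first, then $\atom_{n-1}$, down to $\atom_1$ --- by iterating \Cref{lemma: add root}, applied not to the individual programs appearing in $\fml_0$ but to a single ``master'' program $\term^{\star}$ that dominates all of them, so that one linear order serves every box at once. If $\tcl(\fml_0)=\emptyset$ there is nothing to prove, since then no box formula lies in $\cl(\fml_0)$ and any enumeration of $\at(\fml_0)$ works; otherwise let $\term^{\star}$ be the $\union$ of the (finitely many) programs in $\tcl(\fml_0)$. Iterating \eqref{equation: union} and then using \eqref{equation: *} gives ${}\vdashifreePDLsfinlin \bo{(\term^{\star})^{+}}\fml \to \bo{\term}\fml$ for every $\term\in\tcl(\fml_0)$ and every formula $\fml$, hence, dualizing, ${}\vdashifreePDLsfinlin \dia{\term}\fml \to \dia{(\term^{\star})^{+}}\fml$. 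The consequence I will use repeatedly: if $\term\in\tcl(\fml_0)$ and $\chi\land\dia{\term}\theta$ is \kl{consistent}, then so is $\chi\land\dia{(\term^{\star})^{+}}\theta$.

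For the construction, put $\atomset\defeq\at(\fml_0)$ (nonempty). For $k=n,n-1,\dots,1$, apply \Cref{lemma: add root} to the current $\atomset$ with its program parameter taken to be $\term^{\star}$, obtaining $\atom_k\in\atomset$ with $\atomtof{\atom}_k\land\bo{(\term^{\star})^{+}}\atomsettof{(\at(\fml_0)\setminus\atomset)}$ \kl{consistent}, and then delete $\atom_k$ from $\atomset$. Since $\at(\fml_0)\setminus\atomset$ equals $\set{\atom_{k+1},\dots,\atom_n}$ at the step where $\atom_k$ is chosen, this yields, for every $i$, the invariant
\[
(\star_i)\qquad \atomtof{\atom}_i\land\bo{(\term^{\star})^{+}}\left(\bigvee_{l>i}\atomtof{\atom}_l\right) \text{ is consistent}
\]
(the empty disjunction, occurring at $i=n$, being $\falsec$).

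To verify the stated property, fix $i$ and a box formula $\bo{\term}\fml[2]\in\cl(\fml_0)$ with $\bo{\term}\fml[2]\notin\atom_i$; then $\term\in\tcl(\fml_0)$, and the \kl{atom} $\atom_i$ contains $\lnot\bo{\term}\fml[2]$, equivalently $\dia{\term}\lnot\fml[2]$. Writing $\chi$ for the \kl{consistent} conjunction in $(\star_i)$, we get ${}\vdashifreePDLsfinlin\chi\to\dia{\term}\lnot\fml[2]$, so $\chi\land\dia{\term}\lnot\fml[2]$ is \kl{consistent}. Saturating $\lnot\fml[2]$ into a full \kl{atom} inside the $\dia{\term}$, by iterating \Cref{proposition: atoms consistent saturates} (its second clause) over $\cl(\fml_0)$, produces an \kl{atom} $\atom_j\in\at(\fml_0)$ with $\fml[2]\notin\atom_j$ and $\chi\land\dia{\term}\atomtof{\atom}_j$ \kl{consistent}; in particular $\atomtof{\atom}_i\land\dia{\term}\atomtof{\atom}_j$ is \kl{consistent}. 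It remains to see $j>i$: by the master-program fact $\chi\land\dia{(\term^{\star})^{+}}\atomtof{\atom}_j$ is \kl{consistent}, hence so is $\bo{(\term^{\star})^{+}}(\bigvee_{l>i}\atomtof{\atom}_l)\land\dia{(\term^{\star})^{+}}\atomtof{\atom}_j$; the normal-modal-logic principle that $\bo{\term[3]}A$ together with $\dia{\term[3]}B$ implies $\dia{\term[3]}(A\land B)$ (available via \eqref{equation: Ksub}) then makes $\atomtof{\atom}_j\land\bigvee_{l>i}\atomtof{\atom}_l$ \kl{consistent}, so $\atomtof{\atom}_j\land\atomtof{\atom}_l$ is \kl{consistent} for some $l>i$; as distinct \kl{atoms} are mutually inconsistent, $\atom_j=\atom_l$, and $j:=l$ does the job. (When $i=n$ the disjunction is $\falsec$, so the same chain instead forces $\bo{\term}\fml[2]\in\atom_n$, and the condition is vacuous there.)

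The only nonroutine point --- and the whole content of the lemma --- is forcing the witness strictly to the right, $j>i$: the saturation of \Cref{proposition: atoms consistent saturates} by itself yields \emph{some} witnessing \kl{atom} with no control over its position in the order. The top-down use of \Cref{lemma: add root} --- which rests on L{\"o}b's rule, hence on the well-foundedness packaged in \eqref{equation: Lob} --- is exactly what supplies that control, by making $\atom_i$ consistent with ``every $(\term^{\star})^{+}$-successor lies among $\atom_{i+1},\dots,\atom_n$''; combined with the domination of each $\dia{\term}$ ($\term\in\tcl(\fml_0)$) by $\dia{(\term^{\star})^{+}}$ and the pairwise inconsistency of \kl{atoms}, this pins the saturated witness into the future segment. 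Everything else --- deriving $\bo{(\term^{\star})^{+}}\fml\to\bo{\term}\fml$, the $\mathbf{K}$-style manipulations, and the saturation bookkeeping --- is routine from \Cref{figure: PDL- axioms}.
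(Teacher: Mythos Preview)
Your proof is correct and follows essentially the paper's approach: iterate \Cref{lemma: add root} with a single master program to obtain the ordering $(\star_i)$, then show every unfulfilled box has a witness strictly to the right. Two cosmetic differences are worth noting. First, the paper takes the master program to be $\sum\exprvsig(\fml_0)$ (the union of the \emph{term variables}), which then requires an inductive argument (\Cref{proposition: all trace}) to get $\bo{(\sum\exprvsig(\fml_0))^{+}}\fml\to\bo{\term}\fml$ for each $\term\in\tcl(\fml_0)$; your choice $\term^{\star}=\bigcup\tcl(\fml_0)$ makes that step immediate from \eqref{equation: union} and \eqref{equation: *}. Second, in the verification the paper argues by contradiction (combining the boxes $\bo{\term}(\fml[2]\lor\lnot\atomtof{\atom}_j)$ over all $j>i$), whereas you argue directly by saturating inside the diamond and then pinning the witness to the right via $\bo A\land\dia B\to\dia(A\land B)$ together with pairwise inconsistency of atoms. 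Both routes are sound; your notation $\atom_j$ for the saturated atom before it is located in the sequence is slightly awkward but harmless.
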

\begin{proof}
    By applying \Cref{lemma: add root} (with $\term = \sum \exprvsig(\fml_0)$), iteratively,
    there is a sequence $\atom_1 \dots \atom_n$ of pairwise distinct \kl{atoms}
    with $\at(\fml_0) = \set{\atom_1, \dots, \atom_n}$
    such that
    $\atomtof{\atom}_i \land \bo{(\sum \exprvsig(\fml_0))^{+}} \bigvee_{j > i} \atomtof{\atom}_{j}$ is \kl{consistent} for each $i \in \rangeone{n}$.
    By $\vdashifreePDLsfinlin \bo{(\sum \exprvsig(\fml_0))^{+}} \fml[3] \to \bo{\term}\fml[3]$\ifthenelse{\boolean{conference}}{}{ (\Cref{section: proposition: all trace})},
    for every $\term$ with $\exprvsig(\term) \subseteq \exprvsig(\fml_0)$,
    the "formula"
    $\atomtof{\atom}_i \land \bo{\term} \bigvee_{j > i} \atomtof{\atom}_{j}$ is also \kl{consistent} for each $i \in \rangeone{n}$.
    Suppose that
    there is no $j > i$ such that
    $\atomtof{\atom}_i \land \dia{\term} \atomtof{\atom}_j$ is \kl{consistent} and 
    $\fml[2] \not\in \atom_j$;
    namely, $\vdashifreePDLsfinlin \atomtof{\atom}_i \to \lnot \dia{\term} (\lnot \fml[2] \land \atomtof{\atom}_j)$ for all $j > i$.
    Then $\vdashifreePDLsfinlin \atomtof{\atom}_i \to \bo{\term} (\fml[2] \lor \lnot \bigvee_{j > i} \atomtof{\atom}_{j})$ by \eqref{rule: MP}\eqref{equation: Ksub}.
    As $\atomtof{\atom}_i \land \bo{\term} \bigvee_{j > i} \atomtof{\atom}_{j}$ is \kl{consistent},
    $\atomtof{\atom}_i \land \bo{\term} \fml[2]$ is "consistent" by \eqref{rule: MP}\eqref{equation: Ksub}.
    Hence $\bo{\term} \fml[2] \in \atom_i$, which contradicts the assumption.
\end{proof}

\subsection{Canonical Model} \label{section: canonical model}
Let $\fml_0$ be a "formula".
Let $\atom_1, \dots, \atom_n \in \at(\fml_0)$ be the linearly ordered \kl{atoms} obtained from \Cref{lemma: ordering atoms}.
The \AP""canonical model"" $\intro*\canonicalmodel^{\fml_0}$ is the "generalized structure" defined as follows:
\begin{align*}
    \univ{\canonicalmodel^{\fml_0}} &\defeq \set{\atom_1, \dots, \atom_n}, \hspace{3em}
    \strucuniv^{\canonicalmodel^{\fml_0}} \defeq \set{\tuple{\atom_i, \atom_j} \mid 1 \le i < j \le n},\\
    \aterm^{\canonicalmodel^{\fml_0}} &\defeq \set{\tuple{\atom, \atom'} \in \strucuniv^{\canonicalmodel^{\fml_0}} \mid \atomtof{\atom} \land \dia{\aterm}\atomtof{\atom}' \mbox{ is \kl{consistent}}},\ 
    \afml^{\canonicalmodel^{\fml_0}} \defeq \set{\atom \in \univ{\canonicalmodel^{\fml_0}} \mid \afml \in \atom}.
\end{align*}
By definition, we have $\canonicalmodel^{\fml_0} \in \GRELsfinlin{}$.
We have the following truth lemma.

\begin{lemma}[Truth lemma]\label{lemma: prunning'}
    Let $\fml_0$ be a "formula".
    For every "expression" $\expr$,
    we have the following:
    \begin{enumerate}
        \item \label{lemma: prunning' 1}
        If $\expr = \fml \in \cl(\fml_0)$ is a "formula",
        then for all $\atom[2] \in \at(\fml_0)$,
        $\fml \in \atom[2]$ "iff"
        $\atom[2] \in \semifreePDL{\fml}{\canonicalmodel^{\fml_0}}$.

        \item \label{lemma: prunning' 2}
        If $\expr = \term \in \tcl(\fml_0)$ is a "term", then
        for all $\tuple{\atom[2], \atom[2]'} \in \strucuniv^{\canonicalmodel^{\fml_0}}$,
        if $\atomtof{\atom[2]} \land \dia{\term}\atomtof{\atom[2]}'$ is "consistent", then $\tuple{\atom[2], \atom[2]'} \in \semifreePDL{\term}{\canonicalmodel^{\fml_0}}$.

        \item \label{lemma: prunning' 3}
        If $\expr = \term$ is a "term"
        and $\fml[2]$ is a "formula" "st" $\bo{\term}\fml[2] \in \cl(\fml_0)$,
        then for all $\atom[2] \in \at(\fml_0)$,
        $\bo{\term}\fml[2] \not\in \atom[2]$ "iff"
        there is some $\atom[2]'$ such that
        $\tuple{\atom[2], \atom[2]'} \in \semifreePDL{\term}{\canonicalmodel^{\fml_0}}$ and $\fml[2] \not\in \atom[2]'$.
    \end{enumerate}
\end{lemma}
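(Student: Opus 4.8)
The plan is to prove the three items simultaneously, by induction on the structure of the expression $\expr$: for a formula $\expr = \fml$ we establish item~\ref{lemma: prunning' 1}, and for a term $\expr = \term$ we establish item~\ref{lemma: prunning' 2} together with item~\ref{lemma: prunning' 3} for \emph{every} formula $\fml[2]$ with $\bo{\term}\fml[2] \in \cl(\fml_0)$. The closure conditions of \Cref{definition: FL-closure} ensure that every subexpression arising in a case again lies in $\cl(\fml_0)$ (or $\tcl(\fml_0)$), so that each atom decides all the relevant formulas; I would carry this bookkeeping along in each case. The only point where the items feed into one another is the case $\fml = \bo{\term}\fml[2]$ of item~\ref{lemma: prunning' 1}: there $\bo{\term}\fml[2] \notin \atom[2]$ holds, by item~\ref{lemma: prunning' 3} applied to the proper subterm $\term$ and to $\fml[2]$, exactly when there is some $\atom[2]'$ with $\tuple{\atom[2], \atom[2]'} \in \semifreePDL{\term}{\canonicalmodel^{\fml_0}}$ and $\fml[2] \notin \atom[2]'$, which in turn, by the induction hypothesis for item~\ref{lemma: prunning' 1} on $\fml[2]$, is equivalent to $\atom[2] \notin \semifreePDL{\bo{\term}\fml[2]}{\canonicalmodel^{\fml_0}}$; the remaining cases $\afml$, $\fml[2]\to\fml[3]$, $\falsec$ of item~\ref{lemma: prunning' 1} are immediate from the definition of $\canonicalmodel^{\fml_0}$ and the propositional clauses of \Cref{proposition: atoms cond}.

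For item~\ref{lemma: prunning' 2} I would induct on $\term$: the base case $\term = \aterm$ is the definition of $\aterm^{\canonicalmodel^{\fml_0}}$; the cases $\term[2]\union\term[3]$, $\term[2]\compo\term[3]$, $\term[2]^{+}$ follow from the matching clauses of \Cref{proposition: atoms consistent} plus the induction hypothesis; and the test cases $\fml[3]?\compo\term[3]$, $\term[2]\compo\fml[3]?$ combine the test clauses of \Cref{proposition: atoms consistent} with item~\ref{lemma: prunning' 1} for the subformula $\fml[3]$. The $(\Longleftarrow)$ direction of item~\ref{lemma: prunning' 3} is likewise an induction on $\term$: for $\term = \aterm$, from $\tuple{\atom[2],\atom[2]'} \in \aterm^{\canonicalmodel^{\fml_0}}$ and $\fml[2] \notin \atom[2]'$ one gets $\lnot\fml[2] \in \atom[2]'$, so the consistency of $\atomtof{\atom[2]} \land \dia{\aterm}\atomtof{\atom[2]}'$ yields, via \eqref{rule: MP} and \eqref{equation: Ksub}, that $\atomtof{\atom[2]} \land \lnot\bo{\aterm}\fml[2]$ is consistent, whence $\bo{\aterm}\fml[2] \notin \atom[2]$; the inductive cases for $\union$, $\compo$, $\bl^{+}$, and the tests reduce to smaller terms via \Cref{proposition: atoms cond} and, for tests, item~\ref{lemma: prunning' 1}.

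The heart of the argument --- and the step I expect to be the main obstacle --- is the $(\Longrightarrow)$ direction of item~\ref{lemma: prunning' 3}, where the well-foundedness supplied by \Cref{lemma: ordering atoms} is what replaces the automaton-based ``diamond elimination'' of the classical PDL completeness proof. The base case $\term = \aterm$ is precisely \Cref{lemma: ordering atoms}: if $\bo{\aterm}\fml[2] \notin \atom_i$, it delivers some $j > i$ with $\atomtof{\atom}_i \land \dia{\aterm}\atomtof{\atom}_j$ consistent and $\fml[2] \notin \atom_j$, and then $\tuple{\atom_i,\atom_j} \in \aterm^{\canonicalmodel^{\fml_0}}$ by definition. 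The cases $\term[2]\union\term[3]$ and $\term[2]\compo\term[3]$ fall to the induction hypothesis through the union and composition clauses of \Cref{proposition: atoms cond} (for $\compo$: apply the hypothesis for $\term[2]$ to the formula $\bo{\term[3]}\fml[2]$, then the one for $\term[3]$ to $\fml[2]$), and the test cases use item~\ref{lemma: prunning' 1}. The delicate case is $\term = \term[2]^{+}$: by the $\bl^{+}$ clause of \Cref{proposition: atoms cond}, $\bo{\term[2]^{+}}\fml[2] \notin \atom_i$ gives $\bo{\term[2]}\fml[2] \notin \atom_i$ or $\bo{\term[2]}\bo{\term[2]^{+}}\fml[2] \notin \atom_i$; the first subcase closes by the induction hypothesis for $\term[2]$ together with $\semifreePDL{\term[2]}{\canonicalmodel^{\fml_0}} \subseteq \semifreePDL{\term[2]^{+}}{\canonicalmodel^{\fml_0}}$, while in the second the induction hypothesis for $\term[2]$ applied to $\bo{\term[2]^{+}}\fml[2]$ produces an $\atom_j$ with $\tuple{\atom_i,\atom_j} \in \semifreePDL{\term[2]}{\canonicalmodel^{\fml_0}}$ and $\bo{\term[2]^{+}}\fml[2] \notin \atom_j$, and here $\tuple{\atom_i,\atom_j} \in \strucuniv^{\canonicalmodel^{\fml_0}}$ forces $j > i$. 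A \emph{secondary}, downward induction on the index (equivalently, an induction on $n - i$) then applies to $\atom_j$, yielding some $\atom'$ with $\tuple{\atom_j,\atom'} \in \semifreePDL{\term[2]^{+}}{\canonicalmodel^{\fml_0}}$ and $\fml[2] \notin \atom'$, whence $\tuple{\atom_i,\atom'} \in \semifreePDL{\term[2]^{+}}{\canonicalmodel^{\fml_0}}$ by $\semifreePDL{\term[2]\compo\term[2]^{+}}{\canonicalmodel^{\fml_0}} \subseteq \semifreePDL{\term[2]^{+}}{\canonicalmodel^{\fml_0}}$. Threading this position-decreasing secondary induction through the primary induction on $\term$, and observing that its termination rests exactly on the strictness and finiteness of $\strucuniv^{\canonicalmodel^{\fml_0}}$ --- hence ultimately on \Cref{lemma: ordering atoms} and \eqref{equation: Lob} --- is the subtle point; everything else is routine.
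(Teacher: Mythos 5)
Your proposal is correct and its overall skeleton (a simultaneous induction on expressions, with item~\ref{lemma: prunning' 1} for formulas and items~\ref{lemma: prunning' 2}--\ref{lemma: prunning' 3} for terms, the box case of item~\ref{lemma: prunning' 1} fed by item~\ref{lemma: prunning' 3}, and the Kozen-style case analysis for the backward direction of item~\ref{lemma: prunning' 3}) matches the paper's proof. The one place where you genuinely diverge is the forward direction of item~\ref{lemma: prunning' 3}, and there you do substantially more work than necessary. \Cref{lemma: ordering atoms} is stated for \emph{every} $\bo{\term}\fml[2] \in \cl(\fml_0)$, not only for boxes over term variables: the paper applies it once to the compound formula $\bo{\term}\fml[2]$ itself, obtaining directly some $j > i$ with $\atomtof{\atom}_i \land \dia{\term}\atomtof{\atom}_j$ consistent and $\fml[2] \notin \atom_j$, and then converts consistency into semantic membership by item~\ref{lemma: prunning' 2} for the same term $\term$ (already available at that point of the induction step). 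By using \Cref{lemma: ordering atoms} only in the base case $\term = \aterm$ and re-deriving the compound cases by structural induction --- including a secondary downward induction on the atom index for $\term[2]^{+}$ --- you are essentially re-proving, inside the truth lemma, the strengthening that \Cref{lemma: ordering atoms} already packages. Your version does go through (the closure conditions you track put the needed formulas in $\cl(\fml_0)$, and the strictness and finiteness of $\strucuniv^{\canonicalmodel^{\fml_0}}$ make your secondary induction well-founded), but it buys nothing over the paper's two-line argument and slightly obscures where well-foundedness really enters: in the paper it is consumed entirely inside \Cref{lemma: add root} and \Cref{lemma: ordering atoms} via L\"ob's axiom, and the forward direction of item~\ref{lemma: prunning' 3} needs no further induction. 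One last remark: the downward induction on indices that you place in the forward direction is in fact indispensable in the \emph{backward} direction of item~\ref{lemma: prunning' 3} for the case $\term = \term[2]^{+}$ (one must walk the witnessing $\term[2]$-path back from $\atom[2]_{n-1}$ to $\atom[2]_0$, alternating \Cref{proposition: atoms cond} with the inductive hypothesis for $\term[2]$ applied to $\fml[2]$ and to $\bo{\term[2]^{+}}\fml[2]$); your sketch compresses that case to ``reduce to smaller terms via \Cref{proposition: atoms cond}'', which elides exactly this iteration, so make it explicit in the write-up.
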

\begin{proof}
    By induction on $\expr$.
    
    \proofcase{For \ref{lemma: prunning' 1}}
    We distinguish the following cases.
    \proofcase{Case $\fml = \afml$}
    By the definition of $\afml^{\canonicalmodel^{\fml_0}}$.
    \proofcase{Case $\fml = \fml[2] \to \fml[3]$, Case $\fml = \falsec$}
    By \Cref{proposition: atoms cond}, with IH.
    \proofcase{Case $\fml = \bo{\term} \fml[2]$}
    We have:
    $\bo{\term} \fml[2] \not\in \atom[2]$ "iff"
    there is some $\atom[2]' \in \at(\fml_0)$ "st" $\tuple{\atom[2], \atom[2]'} \in \semifreePDL{\term}{\canonicalmodel^{\fml_0}}$ and $\fml[2] \not\in \atom[2]'$ (by IH "wrt" $\term$) "iff"
    $\atom[2] \not\in \semifreePDL{\bo{\term} \fml[2]}{\canonicalmodel^{\fml_0}}$ (by IH "wrt" $\fml[2]$).

    \proofcase{For \ref{lemma: prunning' 2}}
    We distinguish the following cases.
    \proofcase{Case $\term = \aterm$}
    By the definition of $\aterm^{\canonicalmodel^{\fml_0}}$,
    \proofcase{Case $\term = \term[2] \union \term[3]$,
    Case $\term = \term[2] \compo \term[3]$,
    Case $\term = \term[2]^+$,
    Case $\term = \fml[3]? \compo \term[2]$,
    Case $\term = \term[2] \compo \fml[3]?$
    }
    By \Cref{proposition: atoms consistent} with IH for each case.

    \proofcase{For $\Longrightarrow$ of \ref{lemma: prunning' 3}}
    Let $\atom[2] = \atom_i$.
    By \Cref{lemma: ordering atoms},
    there is some $j > i$ such that
    $\atomtof{\atom}_i \land \dia{\term} \atomtof{\atom}_j$ is \kl{consistent} and 
    $\fml[2] \not\in \atom_j$.
    By IH of \ref{lemma: prunning' 2}, $\tuple{\atom_i, \atom_j} \in \semifreePDL{\term}{\canonicalmodel^{\fml_0}}$.

    \proofcase{For $\Longleftarrow$ of \ref{lemma: prunning' 3}}
    We distinguish the following cases.
    Below are shown in the same manner as \cite[Lemma 2]{kozenElementaryProofCompleteness1981}.
    In each case, \Cref{definition: FL-closure} is crucial for using IH.

    \subproofcase{Case $\term = \aterm$}
    By the definition of $\aterm^{\canonicalmodel^{\fml_0}}$, the "formula" $\atomtof{\atom[2]} \land \lnot \bo{\aterm} \lnot \atomtof{\atom[2]}'$ is "consistent".
    By $\lnot \fml[2] \in \atom[2]'$ with \eqref{equation: Ksub}, $\atom[2] \land \lnot \bo{\aterm} \fml[2]$ is "consistent".
    We thus have $\bo{\aterm} \fml[2] \not\in \atom[2]$.

    \subproofcase{Case $\term = \term[2] \union \term[3]$}
    By $\tuple{\atom[2], \atom[2]'} \in \semifreePDL{\term[2] \union \term[3]}{\canonicalmodel^{\fml_0}}$,
    we have
    $\tuple{\atom[2], \atom[2]'} \in \semifreePDL{\term[2]}{\canonicalmodel^{\fml_0}}$ or
    $\tuple{\atom[2], \atom[2]'} \in \semifreePDL{\term[3]}{\canonicalmodel^{\fml_0}}$.
    By IH, we have $\bo{\term[2]} \fml[2] \not\in \atom[2]$ or $\bo{\term[3]} \fml[2] \not\in \atom[2]$.
    By \Cref{proposition: atoms cond},
    in either case,
    we have $\bo{\term[2] \union \term[3]} \fml[2] \not\in \atom[2]$.
    
    \subproofcase{Case $\term = \term[2] \compo \term[3]$}
    By $\tuple{\atom[2], \atom[2]'} \in \semifreePDL{\term[2] \compo \term[3]}{\canonicalmodel^{\fml_0}}$,
    there is some $\atom[2]''$ such that $\tuple{\atom[2], \atom[2]''} \in \semifreePDL{\term[2]}{\canonicalmodel^{\fml_0}}$ and $\tuple{\atom[2]'', \atom[2]'} \in \semifreePDL{\term[3]}{\canonicalmodel^{\fml_0}}$.
    By IH,
    $\bo{\term[3]} \fml[2] \not\in \atom[2]''$.
    By IH,
    $\bo{\term[2]} \bo{\term[3]} \fml[2] \not\in \atom[2]$.
    By \Cref{proposition: atoms cond},
    we have $\bo{\term[2] \compo \term[3]} \fml[2] \not\in \atom[2]$.

    \subproofcase{Case $\term = \term[2]^+$}
    By $\tuple{\atom[2], \atom[2]'} \in \semifreePDL{\term[2]^{+}}{\canonicalmodel^{\fml_0}}$,
    there are some $n \ge 1$ and $\atom[2]_0, \dots, \atom[2]_n$ with $\atom[2]_0 = \atom[2]$ and $\atom[2]_n = \atom[2]'$ such that
    $\tuple{\atom[2]_{i}, \atom[2]_{i+1}} \in \semifreePDL{\term[2]}{\canonicalmodel^{\fml_0}}$ for all $i < n$.
    By induction on $i$ from $n-1$ to $0$, we show $\bo{\term[2]^+} \fml[2] \not\in \atom[2]_i$.
    
    \subsubproofcase{Case $i = n - 1$}
    By IH "wrt" $\term[2]$,
    $\bo{\term[2]} \fml[2] \not\in \atom[2]_{n-1}$.
    By \Cref{proposition: atoms cond},
    $\bo{\term[2]^{+}} \fml[2] \not\in \atom[2]_{n-1}$.

    \subsubproofcase{Case $i < n - 1$}
    By IH "wrt" $i$,
    $\bo{\term[2]} \bo{\term[2]^{+}} \fml[2] \not\in \atom[2]_{i}$.
    By \Cref{proposition: atoms cond},
    $\bo{\term[2]^{+}} \fml[2] \not\in \atom[2]_{i}$.

    Hence, we have $\bo{\term[2]^{+}} \fml[2] \not\in \atom[2]$.
    
    \subproofcase{Case $\term = \fml[3]? \compo \term[2]$}
    By $\tuple{\atom[2], \atom[2]'} \in \semifreePDL{ \fml[3]? \compo \term[2]}{\canonicalmodel^{\fml_0}}$,
    we have $\atom[2] \in \semifreePDL{\fml[3]}{\canonicalmodel^{\fml_0}}$ and $\tuple{\atom[2], \atom[2]'} \in \semifreePDL{\term[2]}{\canonicalmodel^{\fml_0}}$.
    By IH "wrt" $\term[2]$, we have $\bo{\term[2]}\fml[2] \not\in \atom[2]$.
    By IH "wrt" $\fml[3]$, we have $\fml[3] \in \atom[2]$.
    Thus by \Cref{proposition: atoms cond},
    we have $\bo{\fml[3]? \compo \term[2]} \fml[2] \not\in \atom[2]$.

    \subproofcase{Case $\term = \term[2] \compo \fml[3]?$}
    By $\tuple{\atom[2], \atom[2]'} \in \semifreePDL{ \term[2] \compo \fml[3]?}{\canonicalmodel^{\fml_0}}$,
    we have $\tuple{\atom[2], \atom[2]'} \in \semifreePDL{\term[2]}{\canonicalmodel^{\fml_0}}$ and $\atom[2]' \in \semifreePDL{\fml[3]}{\canonicalmodel^{\fml_0}}$.
    By IH "wrt" $\fml[3]$, we have $\fml[3] \in \atom[2]'$.
    We thus have $\fml[3] \to \fml[2] \not\in \atom[2]'$.
    By IH "wrt" $\term[2]$, we have $\bo{\term[2]} (\fml[3] \to \fml[2]) \not\in \atom[2]$.
    Thus by \Cref{proposition: atoms cond},
    we have $\bo{\term[2] \compo \fml[3]?} \fml[2] \not\in \atom[2]$.
\end{proof}

By \Cref{lemma: prunning'}, we are now ready to prove the completeness theorem.
\begin{proof}[Proof of \Cref{theorem: PDL- completeness}]
    \proofcase{Soundness ($\Longleftarrow$)}
    Easy.\ifthenelse{\boolean{conference}}{}{ (See \Cref{section: soundness})}
    \proofcase{Completeness ($\Longrightarrow$)}
    We prove the contrapositive.
    Suppose $\nvdashifreePDLsfinlin \fml$, namely $\lnot \fml$ is "consistent", by \text{\nameref{equation: PROP}}.
    By \Cref{proposition: atoms consistent saturates},
    there is an "atom" $\atom \in \at(\fml)$ such that $\lnot\fml \in \atom$.
    By \Cref{lemma: prunning'}, $\atom \in \semifreePDL{\lnot \fml}{\canonicalmodel^{\fml}}$.
    Hence, $\GRELsfinlin{} \not\modelsfml \fml$.
\end{proof}

\subsection{Remark on "identity-free PDL" on REL}
\AP \phantomintro\nvdashifreePDL%
We write $\intro*\vdashifreePDL \fml$ if $\fml$ is derivable in the system of \Cref{figure: PDL- axioms}
without the "axiom" \eqref{equation: Lob}.
As a corollary, we also have the following completeness theorem:
\begin{corollary}["Cf" \Cref{theorem: PDL- completeness}\ifthenelse{\boolean{conference}}{}{ (\Cref{section: theorem: PDL- completeness without Lob})}]\label{theorem: PDL- completeness without Lob}%
    \gdef\theoremPDLcompletenesswithoutLob{%
    For all "\ifreePDL" "formulas" $\fml$,
    \[
        \REL{} \modelsfml \fml \quad\iff\quad {}\vdashifreePDL \fml.
    \]
    }
    \theoremPDLcompletenesswithoutLob
\end{corollary}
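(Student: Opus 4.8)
The plan is to re‑run the canonical model construction of \Cref{section: PDL- completeness} with the class $\GRELsfinlin{}$ replaced by $\REL{}$ and with all ingredients that rely on \kl{L{\"o}b's axiom} removed. Soundness ($\Longleftarrow$) is routine: every axiom of \Cref{figure: PDL- axioms} except \eqref{equation: Lob}, together with \eqref{rule: MP} and \eqref{rule: NEC}, preserves validity on $\REL{}$ --- these are exactly the standard $\PDL$ axioms --- and $\semifreePDL{\bl}{\struc}$ is well-defined for every $\struc \in \REL{}$ since $\strucuniv^{\struc} = \univ{\struc}^{2}$ is transitive. (As for ordinary $\PDL$, $\strucuniv^{\struc}$ is never consulted when evaluating a formula, so one simultaneously gets $\REL{} \modelsfml \fml \iff \GRELpreorder{} \modelsfml \fml$, and via the expressive‑power correspondence between "\ifreePDL" and $\PDL$ one could alternatively derive the corollary from the completeness of $\PDL$ on $\REL{}$; but the direct canonical‑model proof below reuses the machinery already developed.)

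For completeness ($\Longrightarrow$), fix $\fml_0$ with $\nvdashifreePDL \fml_0$. Keep \Cref{definition: FL-closure} and the notions of \kl{consistent} formula and \kl{atom} verbatim, now relative to $\vdashifreePDL$; \Cref{proposition: atoms consistent saturates,proposition: atoms cond,proposition: atoms consistent} still hold, since their proofs invoke only \eqref{rule: MP}, \eqref{equation: Ksub}, and the axioms of \Cref{figure: PDL- axioms} other than \eqref{equation: Lob}. \Cref{section: ordering atoms} (\Cref{lemma: add root,lemma: ordering atoms}) is dropped entirely. Define the canonical model by $\univ{\canonicalmodel^{\fml_0}} \defeq \at(\fml_0)$, $\strucuniv^{\canonicalmodel^{\fml_0}} \defeq \at(\fml_0)^{2}$ (so $\canonicalmodel^{\fml_0} \in \REL{}$), $\aterm^{\canonicalmodel^{\fml_0}} \defeq \set{\tuple{\atom, \atom'} \mid \atomtof{\atom} \land \dia{\aterm}\atomtof{\atom'} \text{ is \kl{consistent}}}$, and $\afml^{\canonicalmodel^{\fml_0}} \defeq \set{\atom \mid \afml \in \atom}$. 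Then prove the analogue of the truth lemma \Cref{lemma: prunning'} by induction on expressions: items \ref{lemma: prunning' 1} and \ref{lemma: prunning' 2}, and the direction $\Longleftarrow$ of item \ref{lemma: prunning' 3}, carry over essentially word for word, as those cases use only \Cref{proposition: atoms cond,proposition: atoms consistent} and the inductive hypothesis and never the linear order. Granting the truth lemma, completeness follows exactly as in the proof of \Cref{theorem: PDL- completeness}: by \Cref{proposition: atoms consistent saturates} there is an \kl{atom} $\atom$ with $\lnot\fml_0 \in \atom$, whence $\atom \in \semifreePDL{\lnot\fml_0}{\canonicalmodel^{\fml_0}}$ and so $\REL{} \not\modelsfml \fml_0$.

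The one step that must genuinely be redone --- and the main obstacle --- is the direction $\Longrightarrow$ of \Cref{lemma: prunning'}.\ref{lemma: prunning' 3}: from $\bo{\term}\fml[2] \notin \atom$ produce some $\atom'$ with $\tuple{\atom, \atom'} \in \semifreePDL{\term}{\canonicalmodel^{\fml_0}}$ and $\fml[2] \notin \atom'$. In \Cref{section: PDL- completeness} this was obtained uniformly in $\term$ from \Cref{lemma: ordering atoms}, hence from \kl{L{\"o}b's axiom}; without it I would argue by induction on $\term$, following the classical $\PDL$‑completeness proof \cite{kozenElementaryProofCompleteness1981}. For $\term = \aterm$ it is direct from the definition of $\aterm^{\canonicalmodel^{\fml_0}}$ and \Cref{proposition: atoms consistent saturates}; for $\term \in \set{\term[2]\union\term[3],\ \term[2]\compo\term[3],\ \fml[3]?\compo\term[2],\ \term[2]\compo\fml[3]?}$ it mirrors the corresponding case of the $\Longleftarrow$ direction, using \Cref{proposition: atoms cond}, the inductive hypothesis, and the closure conditions of \Cref{definition: FL-closure} (which put the needed subformulas in $\cl(\fml_0)$ and the needed subterms in $\tcl(\fml_0)$). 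The hard case is $\term = \term[2]^{+}$, with $\bo{\term[2]^{+}}\fml[2] \notin \atom$: let $Z$ be the set of \kl{atoms} $\atom''$ such that $\fml[2] \in \atom'''$ for every $\atom'''$ with $\tuple{\atom'',\atom'''} \in \semifreePDL{\term[2]^{+}}{\canonicalmodel^{\fml_0}}$, and set $\chi \defeq \bigvee_{\atom'' \in Z}\atomtof{\atom''}$ --- note that every \kl{atom} $\atom'$ satisfies $\vdashifreePDL \atomtof{\atom'} \to \chi$ or $\vdashifreePDL \atomtof{\atom'} \to \lnot\chi$, according to whether $\atom' \in Z$. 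Using the inductive hypothesis (item \ref{lemma: prunning' 2}) applied to $\term[2]$ --- available since $\bo{\term[2]}\bo{\term[2]^{+}}\fml[2] \in \cl(\fml_0)$, so $\term[2] \in \tcl(\fml_0)$ --- together with \Cref{proposition: atoms consistent saturates}, one shows $\vdashifreePDL \chi \to \bo{\term[2]}(\fml[2] \land \chi)$; closing this up with the $\bl^{+}$‑induction axiom \eqref{equation: ind} and routine $\PDL$ reasoning inside $\HilbertstyleifreePDL$ (the analogue of deriving $\chi \to \bo{\term[2]^{+}}\chi$ and then $\chi \to \bo{\term[2]^{+}}\fml[2]$) yields $\vdashifreePDL \chi \to \bo{\term[2]^{+}}\fml[2]$. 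Since each \kl{atom} is \kl{consistent}, no $\atom$ with $\bo{\term[2]^{+}}\fml[2] \notin \atom$ can satisfy $\vdashifreePDL \atomtof{\atom} \to \chi$, that is, $\atom \notin Z$, which by the definition of $Z$ is precisely the existence of the required witness $\atom'$. Thus the only real work beyond \Cref{section: PDL- completeness} is this filtration‑style derivation of $\chi \to \bo{\term[2]}(\fml[2]\land\chi)$ and its closure under \eqref{equation: ind}; everything else is inherited unchanged, and \eqref{equation: Lob} is exactly what is no longer available --- nor needed --- because $\canonicalmodel^{\fml_0} \in \REL{}$, unlike a structure in $\GRELsfinlin{}$, is not well‑founded.
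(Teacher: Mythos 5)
Your proposal is correct, and its scaffolding --- redefining consistency and atoms relative to $\vdashifreePDL$, observing that \Cref{proposition: atoms consistent saturates,proposition: atoms cond,proposition: atoms consistent} survive because their proofs never invoke \eqref{equation: Lob}, dropping \Cref{section: ordering atoms}, and taking the canonical model with the total relation $\strucuniv^{\canonicalmodelwithoutLob^{\fml_0}} \defeq \at(\fml_0)^2$ --- coincides with the paper's own argument in \Cref{section: theorem: PDL- completeness without Lob}. Where you genuinely diverge is on the one step you rightly single out, the direction $\Longrightarrow$ of the truth lemma's third item, which you re-prove by induction on $\term$, reconstructing the classical Kozen-style filtration argument for the $\term[2]^{+}$ case (the set $Z$, the formula $\chi$, and closure under \eqref{equation: ind}). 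That argument is sound, but the paper's \Cref{lemma: prunning' without Lob} handles this direction uniformly in $\term$ and without any case split: since the accessibility relation of the canonical model is now total, item \ref{lemma: prunning' 2 without Lob} applies to \emph{every} pair of atoms, so from $\bo{\term}\fml[2] \notin \atom$ one gets that $\atomtof{\atom} \land \dia{\term}\lnot\fml[2]$ is consistent, saturates via \Cref{proposition: atoms consistent saturates}.\ref{proposition: atoms consistent saturate dia} to an atom $\atom'$ with $\atomtof{\atom} \land \dia{\term}\atomtof{\atom}'$ consistent and $\fml[2] \notin \atom'$, and concludes immediately by item \ref{lemma: prunning' 2 without Lob}. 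The only reason this shortcut was unavailable in \Cref{theorem: PDL- completeness} is that the second item of \Cref{lemma: prunning'} is there restricted to pairs $\tuple{\atom_i, \atom_j}$ with $i < j$, which is exactly what forced the detour through \Cref{lemma: ordering atoms} and \eqref{equation: Lob}. So your extra filtration work for $\bl^{+}$ --- which essentially replays how \eqref{rule: TC} is already exploited inside \Cref{proposition: atoms consistent}.\ref{proposition: atoms consistent *} --- is correct but redundant once the universal relation is total; what it buys is a self-contained proof that tracks the textbook \PDL{} completeness argument rather than leaning on the saturation-plus-item-2 shortcut.
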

\begin{proof}
    \proofcase{Soundness ($\Longleftarrow$)}
    Easy.
    \proofcase{Completeness ($\Longrightarrow$)}
    Similar to \Cref{section: canonical model} where we forget linear ordering ("ie", we 
    set the full relation $\strucuniv^{\canonicalmodel^{\fml}} \defeq
    \set{\tuple{\atom_i, \atom_j} \mid i, j \in \rangeone{n}}$),
    we can construct a "canonical model" (without \eqref{equation: Lob}).
\end{proof}

\end{scope}

\section{Completeness for the (Match-)Language Equivalence}\label{section: completeness match-language equivalence}
In this section,
we moreover show the completeness for the standard "match-language equivalence" and "language equivalence".

\AP
We write $\intro*\vdashstPDLREwLAp \fml$ if the "formula@@PDLREwLAp" $\fml$ is derivable in the system of \Cref{figure: PDLREwLAp axioms}
with the additional axioms in \Cref{figure: additional axioms}.
\begin{figure}[t]
\begin{tcolorbox}[colback=black!3, top = .6ex, bottom = .6ex, left = .3em, right = .3em]
\begin{center}
\begin{minipage}[t]{.23\textwidth}
\vspace{-3ex}
\begin{align*}
& \dia{\aterm^{\capid}} \truec \leftrightarrow \falsec \tag{$\capid \aterm$} \label{equation: capid x}
\end{align*}
\end{minipage}
\hfill
\begin{minipage}[t]{.27\textwidth}
\vspace{-3ex}
\begin{align*}
& \dia{\aterm} \fml \rightarrow \bo{\aterm} \fml \tag{Det-1} \label{equation: Det 1}
\end{align*}
\end{minipage}
\hfill
\begin{minipage}[t]{.39\textwidth}
\vspace{-3ex}
\begin{align*}
& \dia{\aterm} \fml \rightarrow \bo{\aterm[2]} \fml[2] \; \text{ for $\aterm \neq \aterm[2]$}\tag{Det-2} \label{equation: Det 2}
\end{align*}
\end{minipage}
\end{center}
\end{tcolorbox}
\caption{Additional axioms for $\HilbertstylestPDLREwLAp$. Here, $\aterm[1], \aterm[2]$ are \kl{term variables}.}
\label{figure: additional axioms}
\end{figure}
The axiom \eqref{equation: capid x} states that the identity-part of $\aterm$ is empty.
The axiom \eqref{equation: Det 1} states that
the binary relation denoted by $\aterm$ is \emph{deterministic} ("ie", the number of outgoing edges labelled by $\aterm$ is at most one), "cf" \AP""deterministic PDL"" \cite[\S 6.1 (8)]{ben-ariDeterministicPropositionalDynamic1982}.
The axiom \eqref{equation: Det 2} states that the number of outgoing labels is at most one.
Note that these additional axioms are \emph{not \kl{substitution-closed}} ("wrt" \kl{term variables}).

\AP
We also write $\intro*\vdashstifreePDL \fml \phantomintro\nvdashstifreePDL$ if $\fml$ is derivable in the system of \Cref{figure: PDL- axioms} with \eqref{equation: Det 1} and \eqref{equation: Det 2}.
Let $\intro*\GRELstsfinlin{}$ denote the class of all $\struc \in \GRELstfinlin{}$ in which $\strucuniv^{\struc}$ has been replaced with $\strucuniv^{\struc} \setminus \diagonal_{\univ{\struc}}$.
We then have:
\begin{theorem}["Cf" \Cref{theorem: PDL- completeness}\ifthenelse{\boolean{conference}}{}{ (\Cref{section: theorem: PDL- completeness match-language equivalence})}]\label{theorem: PDL- completeness match-language equivalence}%
    \gdef\theoremPDLcompletenessmatchlanguageequivalence{%
    For every "\ifreePDL" "formula@@PDL-" $\fml$,
    \[\GRELstsfinlin{} \modelsfml \fml \quad\iff\quad {} \vdashstifreePDL \fml.\]
    }
    \theoremPDLcompletenessmatchlanguageequivalence
\end{theorem}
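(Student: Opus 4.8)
The plan is to follow the completeness proof of \Cref{theorem: PDL- completeness} (\Cref{section: PDL- completeness}), using the two new axioms \eqref{equation: Det 1} and \eqref{equation: Det 2} to pin the canonical model down to a string structure. Soundness ($\Longleftarrow$) is immediate: every $\struc \in \GRELstsfinlin{}$ is deterministic — each vertex has at most one outgoing $\aterm$-edge and at most one outgoing label — so \eqref{equation: Det 1} and \eqref{equation: Det 2} hold on $\GRELstsfinlin{}$, and the remaining axioms are sound exactly as in \Cref{theorem: PDL- completeness}. For completeness ($\Longrightarrow$) I argue the contrapositive: let $\nvdashstifreePDL \fml_0$, so $\lnot\fml_0$ is consistent for the stronger system. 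The first step is to \emph{enlarge} the FL-closure (\Cref{definition: FL-closure}): let $\cl^{+}(\fml_0)$ be the least FL-closed set containing $\cl(\fml_0)$ together with all $\dia{\aterm}\truec$ and all $\bo{\aterm}\fml$ for $\aterm \in \exprvsig(\fml_0)$ and $\fml \in \cl(\fml_0) \cup \set{\falsec}$; FL-closing $\bo{\aterm}\fml$ only adds (sub)formulas already in $\cl(\fml_0)$, so $\cl^{+}(\fml_0)$ is still finite. Now \Cref{proposition: atoms consistent saturates,proposition: atoms cond,proposition: atoms consistent,lemma: add root,lemma: ordering atoms} and the canonical-model construction of \Cref{section: canonical model} carry over verbatim with $\cl^{+}(\fml_0)$ in place of $\cl(\fml_0)$ (they use only FL-closedness, finiteness and \eqref{equation: Lob}), giving a list $\atom_1 < \dots < \atom_n$ of the atoms of $\cl^{+}(\fml_0)$ as in \Cref{lemma: ordering atoms}.

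The enlargement is arranged so that every atom \emph{decides}, for each $\aterm \in \exprvsig(\fml_0)$, whether it has an outgoing $\aterm$-edge and decides $\bo{\aterm}\fml$ for each $\fml \in \cl(\fml_0)$. From this: (i) by \eqref{equation: Det 2}, each atom contains $\dia{\aterm}\truec$ for \emph{at most one} $\aterm$, its \emph{label}; and (ii) by the two substitution-instances $\dia{\aterm}\fml \to \bo{\aterm}\fml$ and $\dia{\aterm}\lnot\fml \to \bo{\aterm}\lnot\fml$ of \eqref{equation: Det 1}, whenever $\atomtof{\atom}\land\dia{\aterm}\atomtof{\atom'}$ and $\atomtof{\atom}\land\dia{\aterm}\atomtof{\atom''}$ are both consistent, $\atom'$ and $\atom''$ agree on all of $\cl(\fml_0)$. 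Next I would \emph{extract a path}: fix an atom $v_0$ with $\lnot\fml_0 \in v_0$; given $v_k$ with label $\aterm_k$, let $v_{k+1}$ be any later $\atom_j$ with $\atomtof{v_k}\land\dia{\aterm_k}\atomtof{\atom_j}$ consistent — such a $j$ exists by \Cref{lemma: ordering atoms}, since $\atomtof{v_k}\land\dia{\aterm_k}\truec$ is consistent — and stop at the first label-less vertex, obtaining $v_0 < \dots < v_m$, pairwise distinct. Let $\struc \in \GRELstsfinlin{}$ be the strict-order string structure of the word $\aterm_0 \dots \aterm_{m-1}$ with $\afml^{\struc} \defeq \set{v_k \mid \afml \in v_k}$; by (i) each non-maximal $v_k$ has exactly one label and exactly one outgoing edge $(v_k, v_{k+1})$, so $\struc$ is indeed (isomorphic to) a member of $\GRELstsfinlin{}$.

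Finally I would prove a truth lemma for $\struc$: for all $\fml \in \cl^{+}(\fml_0)$ and all $v_k$, $\fml \in v_k$ iff $v_k \in \semifreePDL{\fml}{\struc}$, plus the auxiliary clause that for all terms $\term$ with $\bo{\term}\fml \in \cl^{+}(\fml_0)$, $\bo{\term}\fml \notin v_k$ iff some $v_l$ with $(v_k, v_l) \in \semifreePDL{\term}{\struc}$ has $\fml \notin v_l$. This is proved by induction on $\fml$ and, for the auxiliary clause, by structural induction on $\term$ (with an inner induction on the position for $\bl^{+}$). The base case $\term = \aterm$ is where determinism is used: from $\bo{\aterm}\fml \notin v_k$ one gets $\dia{\aterm}\truec \in v_k$, hence $\aterm = \aterm_k$, and by \Cref{proposition: atoms consistent saturates} some $\aterm$-successor of $v_k$ omits $\fml$; by (ii) and $\fml \in \cl(\fml_0)$, so does $v_{k+1}$ — the converse is symmetric. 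The compound cases ($\compo$, $\union$, $\bl^{+}$, tests) run as in \Cref{lemma: prunning'} via \Cref{proposition: atoms cond}, and since every term relation of a string structure strictly advances through the $v_k$'s, no intermediate atom outside the path is ever needed — which is precisely the role clause \ref{lemma: prunning' 2} of \Cref{lemma: prunning'} played in the $\GRELsfinlin{}$ case. Applying the truth lemma to $\fml_0$ at $v_0$ yields $v_0 \notin \semifreePDL{\fml_0}{\struc}$, so $\GRELstsfinlin{} \not\modelsfml \fml_0$. I expect the main obstacle to be this truth lemma: one has to check that the freedom in choosing each $v_{k+1}$ is harmless (this is the content of (ii), hence of \eqref{equation: Det 1}) and that the structural induction on $\term$ closes without the ``push a witness forward'' device of \Cref{lemma: ordering atoms} — it does, because in a string structure term relations strictly advance, so the positional sub-induction terminates; and \eqref{equation: Lob} remains indispensable, now for the finiteness of the atom order and hence of the extracted path.
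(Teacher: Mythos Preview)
Your proposal is correct and follows essentially the same route as the paper: order the atoms via \eqref{equation: Lob}, use \eqref{equation: Det 1}--\eqref{equation: Det 2} to extract a single deterministic path, and prove the truth lemma by structural induction on $\term$ with an inner positional induction for $\bl^{+}$. The one technical difference is that you enlarge the closure so that each atom explicitly decides its outgoing label and all $\bo{\aterm}\fml$, whereas the paper keeps the original $\cl(\fml_0)$ and instead proves a strengthened ordering lemma (\Cref{lemma: st ordering atoms strong}) showing that a \emph{single} successor $f(i)$ simultaneously witnesses every $\bo{\aterm}\fml \in \cl(\fml_0)\setminus\atom_i$; the path is then just the $f$-orbit of the starting atom. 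Your claim~(ii) plays the same role as that lemma. A minor wrinkle: your assertion ``$\fml\in\cl(\fml_0)$'' in the base case is not literally true for the few formulas produced by FL-closing $\dia{\aterm}\truec$ (e.g.\ $\lnot\truec$), but these are equivalent to $\falsec$ and hence vacuous, and in any event (ii) actually yields agreement on every $\fml$ with $\bo{\aterm}\fml\in\cl^{+}(\fml_0)$, which is all you need.
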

\begin{proof}[Proof Sketch]
\proofcase{Soundness ($\Longleftarrow$)}
Easy.
\proofcase{Completeness ($\Longrightarrow$)}
By the construction as in \Cref{section: canonical model}
with pruning unnecessary "edges" and "vertices" using \eqref{equation: Det 1} and \eqref{equation: Det 2},
we can construct a "canonical model" "isomorphic" to $\wordstruc^{\word}$ for some $\word$.
\end{proof}
From this, we have the following completeness theorem.
\begin{theorem}["Cf" \Cref{theorem: PDL REwLA completeness}]\label{theorem: completeness match-language equivalence}
    For every "\PDLREwLAp" "formula@@PDLREwLAp" $\fml$,
    we have:
    \[\GRELstfinlin{} \modelsfml \fml \quad\iff\quad {} \vdashstPDLREwLAp \fml.\]
\end{theorem}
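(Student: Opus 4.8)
The plan is to replay the proof of \Cref{theorem: PDL REwLA completeness} given in \Cref{section: reduction to identity-free}, uniformly replacing $\GRELfinlin{}$, $\GRELsfinlin{}$, $\vdashPDLREwLAp$, $\vdashifreePDLsfinlin$ by $\GRELstfinlin{}$, $\GRELstsfinlin{}$, $\vdashstPDLREwLAp$, $\vdashstifreePDL$ and appealing to \Cref{theorem: PDL- completeness match-language equivalence} in place of \Cref{theorem: PDL- completeness}. Soundness ($\Longleftarrow$) is immediate: since $\GRELstfinlin{} \subseteq \GRELfinlin{}$, all of $\HilbertstylePDLREwLAp$ is sound on $\GRELstfinlin{}$ by the soundness part of \Cref{theorem: PDL REwLA completeness}; and on a "string structure" $\wordstruc^{\word}$ every term variable $\aterm$ denotes $\set{\tuple{i, i+1} \mid \word_{i+1} = \aterm}$, a subrelation of the successor relation, so it is identity-free (validating \eqref{equation: capid x}), functional (validating \eqref{equation: Det 1}), and has domain disjoint from that of any other label (validating \eqref{equation: Det 2}).

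For completeness ($\Longrightarrow$), the only structural change is that on $\GRELstfinlin{}$ the identity part of every term variable is empty, so the auxiliary formula variables $\psig'$ used in \Cref{section: reduction to identity-free} become unnecessary. Besides \Cref{proposition: normal form} (which holds for $\vdashstPDLREwLAp$ as well, since it uses only axioms of $\HilbertstylePDLREwLAp$), we have the $\vdashstPDLREwLAp$-derivable equivalences
\[\tonormone{\aterm} = \dia{\aterm^{\capid}}\truec \,\leftrightarrow\, \falsec, \qquad \bo{\aterm}\fml[2] \,\leftrightarrow\, \bo{\aterm^{\capcomid}}\fml[2], \qquad \dia{\aterm}\fml[2] \,\leftrightarrow\, \dia{\aterm^{\capcomid}}\fml[2]\]
for every term variable $\aterm$: the first is \eqref{equation: capid x}, and the other two follow from it using \eqref{equation: capcomid-union-capid}, \eqref{equation: T capid}, \eqref{rule: cong}, and \text{\nameref{equation: PDL}} (note $\bo{\aterm^{\capid}}\fml[2] \leftrightarrow \bo{\dia{\aterm^{\capid}}\truec?}\fml[2] \leftrightarrow \bo{\falsec?}\fml[2] \leftrightarrow \truec$). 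As $\bl^{\capid}$ and $\bl^{\capcomid}$ occur in $\tonorm{\fml}$ only on term variables, replacing every occurrence of $\dia{\aterm^{\capid}}\truec$ in $\tonorm{\fml}$ by $\falsec$ and every $\aterm^{\capcomid}$ by $\pi_2^{-1}(\aterm) \in \vsig'$ yields an \ifreePDL{} formula $\fml[2] \in \fmlclassifreePDL{\vsig', \psig}$ such that $\fromifreePDL{\fml[2]}$ equals $\tonorm{\fml}$ after rewriting $\dia{\aterm^{\capid}}\truec$ to $\falsec$; hence, by the equivalences above and \Cref{proposition: normal form}, ${}\vdashstPDLREwLAp \fml \leftrightarrow \fromifreePDL{\fml[2]}$.

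The reduction now proceeds as before, re-proving two lemmas with $\GRELstsfinlin{\vsig',\psig}$ in place of $\GRELsfinlin{\vsig',\psig\dcup\psig'}$. (i) $\GRELstfinlin{\vsig,\psig} \modelsfml \fromifreePDL{\fml[2]} \Longrightarrow \GRELstsfinlin{\vsig',\psig} \modelsfml \fml[2]$: given $\struc \in \GRELstsfinlin{\vsig',\psig}$, add the diagonal back to $\strucuniv^{\struc}$ and put $\aterm^{\struc[2]} \defeq (\pi_2^{-1}(\aterm))^{\struc}$ for $\aterm \in \vsig$; this relation is still inside the successor relation and the distinct labels still describe a single word over $\vsig$, so $\struc[2] \in \GRELstfinlin{\vsig,\psig}$, and since every term of $\fromifreePDL{\fml[2]}$ is identity-free one obtains $\semifreePDL{\fml[2]}{\struc} = \semPDLREwLAp{\fromifreePDL{\fml[2]}}{\struc[2]}$, as in the remark relating \ifreePDL{} to \PDL. (ii) $\vdashstifreePDL \fml[2] \Longrightarrow \vdashstPDLREwLAp \fromifreePDL{\fml[2]}$. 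Chaining (i), \Cref{theorem: PDL- completeness match-language equivalence}, (ii), then ${}\vdashstPDLREwLAp \fml \leftrightarrow \fromifreePDL{\fml[2]}$ and soundness closes the cycle:
\begin{align*}
\GRELstfinlin{} \modelsfml \fml
&\;\Longrightarrow\; \GRELstfinlin{} \modelsfml \fromifreePDL{\fml[2]}
\;\Longrightarrow\; \GRELstsfinlin{} \modelsfml \fml[2]
\;\Longrightarrow\; {}\vdashstifreePDL \fml[2]\\
&\;\Longrightarrow\; {}\vdashstPDLREwLAp \fromifreePDL{\fml[2]}
\;\Longrightarrow\; {}\vdashstPDLREwLAp \fml
\;\Longrightarrow\; \GRELstfinlin{} \modelsfml \fml.
\end{align*}

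The main obstacle is step (ii) for the two non-substitution-closed axioms \eqref{equation: Det 1} and \eqref{equation: Det 2}: the substitution $\Theta_0$ sends an instance $\dia{\aterm}\fml[3] \to \bo{\aterm}\fml[3]$ of \eqref{equation: Det 1} (with $\aterm \in \vsig'$) to $\dia{\pi_2(\aterm)^{\capcomid}}\fromifreePDL{\fml[3]} \to \bo{\pi_2(\aterm)^{\capcomid}}\fromifreePDL{\fml[3]}$, which is not itself an axiom of $\HilbertstylestPDLREwLAp$ and must be derived from the term-variable instance of \eqref{equation: Det 1} in $\HilbertstylestPDLREwLAp$ together with the equivalences $\bo{\aterm}(\cdot) \leftrightarrow \bo{\aterm^{\capcomid}}(\cdot)$ and $\dia{\aterm}(\cdot) \leftrightarrow \dia{\aterm^{\capcomid}}(\cdot)$ above (and similarly for \eqref{equation: Det 2}, using injectivity of $\pi_2$). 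The \eqref{equation: Lob} case is handled exactly as in \Cref{lemma: identity-free PDL to PDLREwLA} (normalize via $\tonorm{\bl}$ and reduce to \text{\nameref{equation: Lob'}}); every remaining axiom of $\HilbertstyleifreePDLsfinlin$ is mapped by $\Theta_0$ to a "substitution-instance" of a valid \PDL{} formula on $\REL{}$, hence is covered by \text{\nameref{equation: PDL}}; and \eqref{rule: MP}, \eqref{rule: NEC} are preserved. A secondary point to verify carefully is that discarding $\psig'$ via \eqref{equation: capid x} in the previous paragraph is legitimate, for this is precisely what lets the reduction target $\GRELstsfinlin{\vsig',\psig}$: over $\GRELstsfinlin{\vsig',\psig\dcup\psig'}$ the analogue of (i) would fail, since there the $\psig'$-variables are unconstrained whereas in the image of $\Theta_0$ they are forced to denote the empty set.
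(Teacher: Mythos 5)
Your proposal is correct and takes essentially the same route as the paper: the paper's own argument is a two-line sketch that replays the reduction of \Cref{section: reduction to identity-free} with $\tonormone{\aterm}$ redefined to $\falsec$ via \eqref{equation: capid x} and then appeals to \Cref{theorem: PDL- completeness match-language equivalence}, which is precisely your plan. Your extra care in step (ii) about pushing the non-"substitution-closed" axioms \eqref{equation: Det 1} and \eqref{equation: Det 2} through $\Theta_0$ via the derivable equivalences $\bo{\aterm}\fml \leftrightarrow \bo{\aterm^{\capcomid}}\fml$ and $\dia{\aterm}\fml \leftrightarrow \dia{\aterm^{\capcomid}}\fml$, and your observation that $\psig'$ must be discarded for the semantic transfer to $\GRELstsfinlin{}$ to go through, correctly fill in details the paper leaves implicit.
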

\begin{proof}
\proofcase{Soundness ($\Longleftarrow$)}
Easy.
\proofcase{Completeness ($\Longrightarrow$)}
We can give a reduction from \Cref{theorem: PDL- completeness match-language equivalence}
by the same argument as in \Cref{section: reduction to identity-free},
where the translation $\tonorm{\bl}$ (\Cref{definition: normal form}) is redefined by $\tonormone{\aterm} \defeq \falsec$, using \eqref{equation: capid x}.
\end{proof}

\section{On the Complexity}\label{section: complexity}
In this section, we consider the complexity of the ("match@match-language equivalence"-)"language@language equivalence" equivalences.
The proof of \Cref{theorem: PDL REwLA completeness} only gives the "2NEXPTIME" upper bound;
observe that the "formula@@PDLREwLAp" transformation in \Cref{section: reduction to identity-free} makes an exponential blowup "wrt" the size of the "formula@@PDLREwLAp".
Below, we show that extending "\REwLA" with the two operators $\bl^{\capid}$ and $\bl^{\capcomid}$ (thus reaching "\REwLAp") does not increase the complexity (\Cref{corollary: complexity REwLA standard,corollary: complexity REwLA substitution-closed}).
The following theorems are shown by an analog of the standard automata construction for modal logics.

\ifthenelse{\boolean{conference}}{\begin{theorem}}{%
\begin{theorem}[\Cref{section: theorem: complexity PDLREwLAp substitution-closed}]}
\label{theorem: complexity PDLREwLAp substitution-closed}%
\gdef\theoremcomplexityPDLREwLApsubstitutionclosed{%
For "\PDLREwLAp" ("resp", "\PDL"), the "theory" is "EXPTIME"-complete on $\GRELfinlin{}$.
}
\theoremcomplexityPDLREwLApsubstitutionclosed
\end{theorem}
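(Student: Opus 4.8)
The plan is to establish the two bounds separately; "EXPTIME"-completeness then follows at once, and the ``resp.\ \PDL'' clause is obtained by specialising the same arguments.

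\textbf{Upper bound.} I would follow the standard translation of modal/\PDL\ formulas into alternating automata, adapted to $\GRELfinlin{}$. A model there is, as far as the evaluation is concerned, a finite digraph that is acyclic except for self-loops, with the $\aterm$'s labelling edges and the $\afml$'s labelling vertices. Given $\fml$, I would build an alternating automaton $\automaton$ for $\lnot\fml$ whose state set consists of the subexpressions of $\fml$ equipped with a flag in $\{\capid,\capcomid,\text{--}\}$ --- hence of polynomial size --- and whose transition function is described succinctly: $\bo{\term[1]^{\adom}}\fml[1]$ is handled by the local rewriting to $\bo{\bo{\term[1]}\falsec?}\fml[1]$, and $\bo{\term[1]^{\capid}}\fml[1]$, $\bo{\term[1]^{\capcomid}}\fml[1]$ are handled by the distribution clauses underlying the $\capid$- and $\capcomid$-axioms of \Cref{figure: PDLREwLAp axioms} (so that, e.g., the state $\bo{(\term[1]\term[2])^{\capcomid}}\fml[1]$ branches alternately according to \eqref{equation: capcomid-compo}, and $\bo{(\term[1]\term[2])^{\capid}}\fml[1]$ according to \eqref{equation: capid-compo}). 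A B\"uchi-style acceptance condition derived from the $\bl^{+}$-regeneration pattern of the FL-closure (\Cref{definition: FL-closure}) forbids postponing a diamond forever, and an extra structural condition forbids non-self-loop cycles among the $\aterm$-edges, so that the accepted graphs genuinely underlie finite linear orders. Then $\automaton$ is non-empty iff $\GRELfinlin{}\not\modelsfml\fml$, and its emptiness is decidable in "EXPTIME" by the usual techniques; equivalently, one may route the argument through the polynomial reduction of \Cref{proposition: PDLREwLAp to REwLAp} and build the automaton directly for terms of \REwLAp. The reason for handling $\adom$, $\capid$, $\capcomid$ inside the transition function --- instead of first applying $\tonorm{\bl}$ --- is precisely to keep the automaton singly exponential. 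Dropping the $\capid/\capcomid/\adom$ machinery yields the bound for \PDL.

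\textbf{Lower bound.} I would adapt the classical "EXPTIME"-hardness proof of \PDL-satisfiability, reducing from acceptance of an alternating polynomial-space Turing machine $M$ on input $x$ (recall $\mathrm{APSPACE}=\mathrm{EXPTIME}$). Encode configurations of $M$ on $x$ by polynomially many propositional variables, the one-step relation by a single term variable $\mathsf{trans}$, and write $\mathsf{E},\mathsf{U}$ for the markers of existential resp.\ universal configurations. Let $\fml_{M,x}$ assert, globally: propositional well-formedness of configurations; correctness of all $\mathsf{trans}$-edges and the presence of every legal successor; and, for a fresh propositional variable $\mathsf{Acc}$, the biconditional $\mathsf{Acc}\leftrightarrow\big(\text{(accepting halting config)}\lor(\mathsf{E}\land\dia{\mathsf{trans}}\mathsf{Acc})\lor(\mathsf{U}\land\bo{\mathsf{trans}}\mathsf{Acc})\big)$; finally $\fml_{M,x}$ puts the initial configuration together with $\mathsf{Acc}$ at a designated point. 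Assuming (w.l.o.g., by adding a step-parity bit to the state) that $M$ never transitions from a configuration to itself, in any $\GRELfinlin{}$-model of the transition axioms the relation $\mathsf{trans}^{\struc}$ is a subrelation of a finite linear order \emph{and} self-loop-free, hence converse-well-founded; on such a relation the monotone operator defining $\mathsf{Acc}$ has a \emph{unique} fixpoint, namely the true alternating-acceptance set, so no explicit fixpoint operator is needed. Hence $\fml_{M,x}$ is satisfiable over $\GRELfinlin{}$ iff $M$ accepts $x$, which is "EXPTIME"-hard; since $\fml_{M,x}$ uses only ordinary \PDL-operators, the hardness holds already for the ``resp.\ \PDL'' clause, a fortiori for \PDLREwLAp.

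\textbf{Main obstacle.} I expect the real work to lie in the upper bound: the naive closure under the $\capcomid$-distribution clauses is exponential, so to keep the automaton singly exponential one must present its transition function succinctly and then verify its correctness --- which rests on the ``tameness'' of loops and the identity over linear orders, i.e. exactly the property that makes \eqref{equation: capid-compo}, \eqref{equation: capcomid-compo} and the remaining $\capid/\capcomid$-axioms sound on $\GRELfinlin{}$ although they fail on $\REL{}$. The lower-bound adaptation, while needing some care about well-foundedness, is otherwise routine.
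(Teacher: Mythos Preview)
Your lower bound is essentially the paper's: both invoke the Fischer--Ladner reduction from alternating polynomial-space Turing machines, and your explicit argument that converse-well-foundedness of $\mathsf{trans}$ in any $\GRELfinlin{}$-model forces the biconditional for $\mathsf{Acc}$ to have a unique solution is exactly the right way to see why the reduction survives the move from $\REL{}$ to $\GRELfinlin{}$. (One small caveat: a step-parity bit kills self-loops but not longer cycles; what actually kills cycles is that every $\aterm^{\struc}$ is already a subrelation of a finite linear order, so the parity bit is only needed to rule out the reflexive part.)

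For the upper bound, your core idea---handle the $\capid/\capcomid$ decomposition \emph{inside} the automaton's transition function rather than first expanding via $\tonorm{\bl}$---is exactly what the paper does, and you correctly identify this as the place where the naive \textsc{2NExpTime} bound is beaten down to \textsc{ExpTime}. But your surrounding framework is under-specified and partly off. You describe an alternating automaton that accepts ``graphs'' with an ``extra structural condition'' enforcing linearity and a ``B\"uchi-style'' acceptance condition; there is no standard alternating-automata model on arbitrary DAGs with an off-the-shelf \textsc{ExpTime} emptiness test, and a B\"uchi condition is out of place since every model in $\GRELfinlin{}$ is \emph{finite}---no diamond can be ``postponed forever''. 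The paper resolves this cleanly: it first observes, via a bisimulation argument, that validity on $\GRELfinlin{}$ coincides with validity on finite rooted trees (tree unwinding), then normalises to finite \emph{binary} trees over a fixed alphabet, and only then builds a polynomial-size alternating automaton on \emph{finite trees} whose transition function implements the $\capid/\capcomid$ distribution rules locally; emptiness for such automata is classically \textsc{ExpTime}. Your sketch would converge to something similar once the input structures are pinned down; the missing ingredient is precisely the tree-unwinding step that lets you invoke the standard finite-tree-automata toolbox instead of a bespoke ``automata on linearly ordered graphs'' framework.
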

\begin{proof}[Proof Sketch]
\proofcase{Lower bound}
"\PDL" on $\GRELfinlin{}$ is "EXPTIME"-hard,
by the same reduction as in \cite[Section 4]{fischerPropositionalDynamicLogic1979}, which encodes the non-membership problem of alternating polynomial space Turing machines (see also \cite[p.~18]{spaanComplexityModalLogics1993}\cite[Theorem 2]{afanasievPDLOrderedTrees2005}).
\proofcase{Upper bound}
By a variant of the ``"tree unwinding"'' argument,
we can give a polynomial-time reduction to the "theory" of $\PDLREwLAp$ on finite trees.
Thus by an analog of "eg", \cite{vardiAutomatatheoreticTechniquesModal1986,calvaneseAutomataTheoreticApproachRegular2009,degiacomoLinearTemporalLogic2013},
we can give a polynomial-time reduction to the "emptiness problem" of alternating finite tree automata,
which is in "EXPTIME" \cite{comonTreeAutomataTechniques2007}.
\end{proof}

\ifthenelse{\boolean{conference}}{\begin{theorem}}{%
\begin{theorem}[\Cref{section: theorem: complexity PDLREwLAp standard}]}
\label{theorem: complexity PDLREwLAp standard}%
\gdef\theoremcomplexityPDLREwLApstandard{%
For "\PDLREwLAp" ("resp", "\PDL"), the "theory" is "PSPACE"-complete on $\GRELstfinlin{}$.
}
\theoremcomplexityPDLREwLApstandard
\end{theorem}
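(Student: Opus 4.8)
The plan is to prove matching "PSPACE" lower and upper bounds, in the same spirit as the "EXPTIME" case of \Cref{theorem: complexity PDLREwLAp substitution-closed} but with alternating \emph{word} automata in place of alternating tree automata. Since "\PDL" is a syntactic fragment of "\PDLREwLAp" on which the two semantics agree, it suffices to prove (i) the "theory" of "\PDL" on $\GRELstfinlin{}$ is "PSPACE"-hard, and (ii) the "theory" of "\PDLREwLAp" on $\GRELstfinlin{}$ is in "PSPACE"; both statements of the theorem then follow. For (i) I would reduce from the universality problem for "regular expressions", which is "PSPACE"-complete \cite{stockmeyerWordProblemsRequiring1973}: given a "regular expression" $\term$ (which is, up to the usual abbreviations, a "\PDL" "term@@PDL"), put $\fml[2] \defeq \bigwedge_{\aterm \in \exprvsig(\term)} \lnot \dia{\aterm}\truec$; on a "string structure" over $\exprvsig(\term)$ this "formula@@PDL" holds exactly at the last vertex, so $\dia{\term \compo \fml[2]?}\truec$ holds at a vertex $i$ of $\wordstruc^{\word}$ iff the suffix of $\word$ from $i$ lies in $\wlang(\term)$, whence $\GRELstfinlin{} \modelsfml \dia{\term \compo \fml[2]?}\truec$ iff $\wlang(\term) = \exprvsig(\term)^{*}$. (This uses the routine fact that a "\PDL" "formula@@PDL" $\fml$ is "valid" on $\GRELstfinlin{}$ iff it is "valid" on all "string structures" over $\exprvsig(\fml)$ — as "\PDL" never inspects $\strucuniv^{\struc}$, a forward-generated-submodel/truncation argument at the first vertex whose outgoing edge carries a label outside $\exprvsig(\fml)$ gives this. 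For "\PDLREwLAp" one may alternatively just compose \Cref{proposition: language equivalence} and \Cref{proposition: REwLAp to PDLREwLAp} with the "PSPACE"-hardness of the "language equivalence" of "\REwLA".)

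For (ii), observe that, up to "isomorphism", the structures in $\GRELstfinlin{}$ are exactly the "string structures" (with "formula variables" interpreted freely), so — and here, in contrast with \Cref{theorem: complexity PDLREwLAp substitution-closed}, no ``"tree unwinding"'' step is needed — deciding $\GRELstfinlin{} \modelsfml \fml$ amounts to deciding whether $\lnot\fml$ is "satisfiable" at some vertex of some "string structure". From $\fml$ I would build an alternating finite automaton $\automaton$ over finite words whose alphabet encodes, at each vertex, the label of its outgoing edge (or a dedicated symbol $\mathdollar$ if the vertex is last) together with the set of "formula variables" true there, and whose $O(|\fml|)$ many states run the Fischer--Ladner-style "closure" of $\fml$ (\Cref{definition: FL-closure} together with clauses for the "\PDLREwLAp"-specific operators) as usual, in the style of \cite{fischerPropositionalDynamicLogic1979,kozenElementaryProofCompleteness1981,vardiAutomatatheoreticTechniquesModal1986,calvaneseAutomataTheoreticApproachRegular2009,degiacomoLinearTemporalLogic2013}: a state $\bo{\term}\fml[2]$ at a vertex $i$ spawns, through the Thompson-style automaton of $\term$, universal branches checking $\fml[2]$ at every endpoint of a $\term$-path from $i$. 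On a "string structure" every $\term$-path only moves forward or stays in place, so $\automaton$ is one-way, and empty loops of $\bl^{+}$ are broken as usual. One then has $\GRELstfinlin{} \modelsfml \fml$ iff $L(\automaton') = \emptyset$, where $\automaton'$ additionally guesses the starting vertex; since $\automaton'$ has polynomially many states and is computable in polynomial time and the "emptiness problem" for alternating finite word automata is in "PSPACE", this decides the "theory" in "PSPACE". For "\PDL" this is the classical construction.

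The step I expect to be the main obstacle is keeping $\automaton$ of polynomial size while handling $\bl^{\adom}$, $\bl^{\capid}$, and $\bl^{\capcomid}$: the eager normal form $\tonorm{\bl}$ of \Cref{definition: normal form} incurs an exponential blowup — for instance the $\union$-branching of \eqref{equation: capcomid-compo} triples the term at every composition — so it cannot be used here. Instead I would treat these operators lazily, carrying alongside each subterm in the "closure" a single bit recording whether the current run is being restricted to stay at the same vertex (a $\bl^{\capid}$-restriction) or to make at least one strictly forward step (a $\bl^{\capcomid}$-restriction); this bit is threaded through $\compo$, $\union$, and $\bl^{+}$ exactly like the ``at least one letter'' bit distinguishing $\bl^{+}$ from $\bl^{*}$ in the Thompson construction, a $\bl^{\capid}$-restricted run is forced to use only tests (a "term variable" aborts it, as it moves forward on a "string structure"), and $\bl^{\adom}$ is handled via the equivalence $\bo{\term^{\adom}}\fml[2] \leftrightarrow (\bo{\term}\falsec \to \fml[2])$ of the "rule" \text{\nameref{equation: adom}}, i.e. by a dualized co-reachability check that no $\term$-path leaves $i$. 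With this lazy treatment the "closure" stays polynomial, and the rest of the argument is routine.
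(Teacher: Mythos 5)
Your proposal is correct and takes essentially the same approach as the paper: the lower bound is Stockmeyer's \textsc{PSpace}-hardness of regular-expression equivalence transported via \Cref{proposition: language equivalence,proposition: REwLAp to PDLREwLAp}, and the upper bound is a polynomial-time reduction to emptiness of alternating finite string automata whose states come from a Fischer--Ladner-style closure that carries the $\capid$/$\capcomid$ restriction as an annotation instead of applying the exponentially blowing-up normal form $\tonorm{\bl}$ --- exactly the paper's closure $\clexex$ and the automaton of \Cref{definition: to ATA} specialized to strings. The only cosmetic difference is that the paper first normalizes to a single term variable and a single formula variable (\Cref{lemma: reduction to one term variable,lemma: reduction to one formula variable}) before building the automaton, whereas you encode the outgoing edge label and the set of true formula variables directly into the automaton's alphabet.
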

\begin{proof}[Proof Sketch]
\proofcase{Lower bound}
The "language equivalence" is "PSPACE"-hard for "\RE" \cite{stockmeyerWordProblemsRequiring1973}.
By the reduction of \Cref{proposition: REwLAp to PDLREwLAp}, the "theory" of "\PDL" on $\GRELstfinlin{}$ is "PSPACE"-hard.
\proofcase{Upper bound}
We can give a reduction to the "emptiness problem" of alternating finite \emph{string} automata, similar to \Cref{theorem: complexity PDLREwLAp substitution-closed},
which is in "PSPACE" \cite{jiangNoteSpaceComplexity1991}.
\end{proof}
We thus have the following complexity results.

\begin{corollary}\label{corollary: complexity REwLA substitution-closed}
The "substitution-closed@substitution-closed equivalence" $($"match-language@substitution-closed match-language equivalence" $|$ "language@substitution-closed language equivalence"$)$ equivalences
for $($"\REwLAp" $|$ "\REwLA"$)$ are all "EXPTIME"-complete.
\end{corollary}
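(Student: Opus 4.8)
The plan is to sandwich all four equivalence problems between a common \textsc{ExpTime} upper bound, obtained by collapsing each of them onto the theory of $\PDLREwLAp$ on $\GRELfinlin{}$, and a common \textsc{ExpTime} lower bound, obtained by pushing the hardness of $\PDL$ on $\GRELfinlin{}$ down into the plain $\REwLA$ fragment. For the \emph{upper bound}, I would argue that each of the four problems reduces in polynomial time to the theory of $\PDLREwLAp$ on $\GRELfinlin{}$, which lies in \textsc{ExpTime} by \Cref{theorem: complexity PDLREwLAp substitution-closed}. For the substitution-closed match-language equivalence, \Cref{proposition: largest substitution-closed} rephrases $\smlang(\term[1]) = \smlang(\term[2])$ as $\GRELfinlin{} \modelsfml \term[1] = \term[2]$, and \Cref{proposition: REwLAp to PDLREwLAp} rephrases this in turn as validity on $\GRELfinlin{}$ of the $\PDLREwLAp$ formula $\bo{\term[1]}\afml \leftrightarrow \bo{\term[2]}\afml$ for a fresh formula variable $\afml$; both steps are plainly polynomial. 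For the substitution-closed language equivalence, one first prepends the end-marker $\termendmarkertwo$ using \Cref{proposition: largest substitution-closed language equivalence} and then proceeds as before. Since every $\REwLA$ term is a $\REwLAp$ term, the two $\REwLA$ problems are special cases of the two $\REwLAp$ problems, so all four are in \textsc{ExpTime}.

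For the \emph{lower bound}, it suffices to prove \textsc{ExpTime}-hardness of the two $\REwLA$ problems; hardness of the two $\REwLAp$ problems then follows from the inclusion $\REwLA \subseteq \REwLAp$. By the proof of \Cref{theorem: complexity PDLREwLAp substitution-closed}, the theory of $\PDL$ on $\GRELfinlin{}$ is already \textsc{ExpTime}-hard, and the hard instances produced there use neither $\bl^{\capid}$ nor $\bl^{\capcomid}$. Composing the polynomial reduction underlying \Cref{proposition: PDLREwLAp to REwLAp} — which, restricted to this fragment, outputs plain $\REwLA$ terms, since the standard encoding of $\PDL$ into regular expressions uses only $\bl^{\adom}$ to express the box and negation and represents a formula variable $\afml$ by the domain $\aterm_{\afml}^{\dom}$ of a fresh term variable $\aterm_{\afml}$ — with \Cref{proposition: largest substitution-closed}, one obtains that the substitution-closed match-language equivalence of $\REwLA$ is \textsc{ExpTime}-hard. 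For the substitution-closed language equivalence, I would precompose with an end-marker reduction turning the match-query into a language-query, of the same shape as \Cref{proposition: language equivalence} but over $\GRELfinlin{}$ rather than $\GRELstfinlin{}$, so that the end-marker remains $\REwLA$-expressible. Together with the upper bound, all four equivalences are \textsc{ExpTime}-complete.

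The main obstacle is the lower bound: one must verify that the \textsc{ExpTime}-hardness witnessed by $\PDL$ on $\GRELfinlin{}$ can be transferred into the \emph{plain} $\REwLA$ fragment rather than merely $\REwLAp$ — i.e., that the $\PDL$-to-regular-expression translation, applied to $\bl^{\capid},\bl^{\capcomid}$-free formulas, returns $\bl^{\capid},\bl^{\capcomid}$-free terms, and that the end-marker used to turn the match-query into a language-query is simultaneously $\REwLA$-expressible and faithful over $\GRELfinlin{}$, where, unlike over $\GRELstfinlin{}$, valuations are unrestricted. Everything else is a routine chaining of the polynomial-time reductions already established in the preceding sections.
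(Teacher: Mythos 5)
Your upper bound coincides with the paper's: chain \Cref{proposition: largest substitution-closed} (resp.\ \Cref{proposition: largest substitution-closed language equivalence}) with \Cref{proposition: REwLAp to PDLREwLAp} and invoke \Cref{theorem: complexity PDLREwLAp substitution-closed}. Your lower bound for the two \emph{match}-language problems also goes through and is a reasonable explicit chaining: the \textsc{ExpTime}-hard \PDL{} instances on $\GRELfinlin{}$ contain neither $\bl^{\capid}$ nor $\bl^{\capcomid}$, the translation $\trPDLREwLAptoREwLAp$ of \Cref{definition: PDLREwLAp to REwLAp} applied to such formulas produces plain \REwLA{} terms (it introduces only $\adom$, $\dom$, and fresh term variables), and \Cref{lemma: PDLREwLAp to REwLAp 2} together with \Cref{proposition: largest substitution-closed} turns validity on $\GRELfinlin{}$ into $\smlang$-equivalence.

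The genuine gap is in the lower bound for the two \emph{language}-equivalence problems, and it is exactly the obstacle you flag without resolving. First, the only end-marker that is faithful over $\GRELfinlin{}$ is $\termendmarkertwo = \bigl(\sum_{\aterm} \aterm^{\capcomid}\bigr)^{\adom}$, which is not an \REwLA{} term; the \REwLA{}-expressible marker $\bigl(\sum_{\aterm} \aterm\bigr)^{\adom}$ is shown to be unfaithful over $\GRELfinlin{}$ in \Cref{remark: proposition: largest substitution-closed language equivalence needs b}, precisely because a substitution may interpret $\aterm$ reflexively, so $\aterm^{\adom}$ no longer singles out the endpoint. Second---and this affects your \REwLAp{} case as well---appending even the faithful marker only yields $\GRELfinlin{} \modelsfml \term[1] = \term[2] \Longrightarrow \GRELfinlin{} \modelsfml \term[1]\termendmarkertwo = \term[2]\termendmarkertwo$ and not the converse, since the marked equation constrains only pairs ending at a maximal point; indeed \Cref{figure: four equivalences} records that $\swlang$-equivalence is strictly coarser than $\smlang$-equivalence (e.g.\ $(\aterm\aterm^{\adom})^{\dom} = \emp$), so appending a marker is not a many-one reduction from the hard equational problem to $\swlang$-equivalence. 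The paper sidesteps this entirely: it does not factor the language cases through the match-language cases, but obtains hardness of each of the four problems directly as a corollary of the Fischer--Ladner encoding of alternating polynomial-space Turing machines, re-targeting that construction at each equivalence. To repair your argument you would either have to do the same, or argue that for the specific terms output by that encoding the marked and unmarked equations coincide.
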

\begin{proof}
\proofcase{Lower bound}
Similar to \Cref{theorem: complexity PDLREwLAp substitution-closed},
as a corollary of the reduction of \cite[Section 4]{fischerPropositionalDynamicLogic1979},
they are also shown to be "EXPTIME"-hard.
\proofcase{Upper bound}
By \Cref{theorem: complexity PDLREwLAp substitution-closed} with \Cref{proposition: REwLAp to PDLREwLAp,proposition: largest substitution-closed,proposition: largest substitution-closed language equivalence}.
\end{proof}

\begin{corollary}\label{corollary: complexity REwLA standard}
The $($"match-language@match-language equivalence" $|$ "language@language equivalence"$)$ equivalences
for $($"\REwLAp" $|$ "\REwLA"$)$ are all "PSPACE"-complete.\footnote{
For the "language equivalence" of "\REwLA",
the "PSPACE" upper bound can also be derived form the polynomial-time reduction to the "language equivalence" of alternating string automata (which is in "PSPACE" \cite{jiangNoteSpaceComplexity1991}) presented by Morihata \cite{morihataTranslationRegularExpression2012}.}
\end{corollary}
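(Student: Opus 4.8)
The plan is to follow the pattern of the proof of \Cref{corollary: complexity REwLA substitution-closed}: I would separate the lower and the upper bound, and observe that all four problems in the statement reduce, in polynomial time, to the "theory" of "\PDLREwLAp" on $\GRELstfinlin{}$, which is "PSPACE"-complete by \Cref{theorem: complexity PDLREwLAp standard}.

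For the lower bound, I would start from the "PSPACE"-hardness of the "language equivalence" problem for classical "\RE" \cite{stockmeyerWordProblemsRequiring1973}. Since every "\RE" term is an "\REwLA" term (hence an "\REwLAp" term), and the "\REwLA" semantics restricted to "\RE" terms coincides with the classical one for the standard "language@language equivalence" $\wlang(\bl)$, the "language equivalence" problem is "PSPACE"-hard for both "\REwLA" and "\REwLAp". For the "match-language equivalence" variants, I would compose this with \Cref{proposition: language equivalence}, which gives $\wlang(\term[1]) = \wlang(\term[2])$ "iff" $\GRELstfinlin{} \modelsfml \term[1] \termendmarker = \term[2] \termendmarker$, where $\termendmarker$ has size linear in $\term[1], \term[2]$ and uses only the "antidomain", so the same encoding is available for "\REwLAp". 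As $\GRELstfinlin{} \modelsfml \term[1]\termendmarker = \term[2]\termendmarker$ is exactly $\mlang(\term[1]\termendmarker) = \mlang(\term[2]\termendmarker)$, this is a polynomial-time reduction from "language equivalence" to "match-language equivalence", so the latter is "PSPACE"-hard for both classes as well.

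For the upper bound, I would give, for each of the four problems, a polynomial-time chain of reductions into the "theory" of "\PDLREwLAp" on $\GRELstfinlin{}$. First, \Cref{proposition: language equivalence} reduces "language equivalence" to "match-language equivalence" (append $\termendmarker$). Second, for a pair $\term[1], \term[2]$ of "\REwLAp" terms, $\mlang(\term[1]) = \mlang(\term[2])$ coincides with the validity $\GRELstfinlin{} \modelsfml \term[1] = \term[2]$; this holds for "\REwLA" because "\REwLA" terms contain no "formula variables", and it extends verbatim to "\REwLAp" for the same reason. Third, by \Cref{proposition: REwLAp to PDLREwLAp}, $\GRELstfinlin{} \modelsfml \term[1] = \term[2]$ "iff" $\GRELstfinlin{} \modelsfml \bo{\term[1]}\afml \leftrightarrow \bo{\term[2]}\afml$ for a "fresh" "formula variable" $\afml$, and $\bo{\term[1]}\afml \leftrightarrow \bo{\term[2]}\afml$ is a "\PDLREwLAp" "formula@@PDLREwLAp" of size linear in $\term[1], \term[2]$. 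Finally, \Cref{theorem: complexity PDLREwLAp standard} places the "theory" of "\PDLREwLAp" on $\GRELstfinlin{}$ in "PSPACE". Composing these polynomial-time steps keeps all four problems in "PSPACE".

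I do not expect a substantial obstacle here: the real content is already packaged in \Cref{theorem: complexity PDLREwLAp standard} (the automata-theoretic "PSPACE" upper bound on $\GRELstfinlin{}$), together with \Cref{proposition: REwLAp to PDLREwLAp} and \Cref{proposition: language equivalence}. The only points requiring care are bookkeeping: checking that appending $\termendmarker$ and the equation-to-formula encoding are genuinely polynomial-time, that all four cases reduce to the single "PSPACE"-complete problem of \Cref{theorem: complexity PDLREwLAp standard}, and --- for the "\REwLAp" cases --- confirming that "match-language equivalence" still coincides with validity of $\term[1] = \term[2]$ over $\GRELstfinlin{}$, which holds because "\REwLAp" terms, like "\REwLA" ones, contain no "formula variables".
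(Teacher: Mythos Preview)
Your proposal is correct and follows essentially the same approach as the paper's proof: the lower bound via \cite{stockmeyerWordProblemsRequiring1973} and \Cref{proposition: language equivalence}, and the upper bound via \Cref{theorem: complexity PDLREwLAp standard} together with \Cref{proposition: REwLAp to PDLREwLAp} and \Cref{proposition: language equivalence}. Your version is simply more explicit about the intermediate steps (in particular, spelling out that $\mlang(\term[1]) = \mlang(\term[2])$ coincides with $\GRELstfinlin{} \modelsfml \term[1] = \term[2]$ also for "\REwLAp" because such terms contain no "formula variables"), but the structure is identical.
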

\begin{proof}
\proofcase{Lower bound}
The "language equivalence" is "PSPACE"-hard already for "\RE" \cite{stockmeyerWordProblemsRequiring1973}.
By \Cref{proposition: language equivalence}, the "match-language equivalence" is also "PSPACE"-hard.
\proofcase{Upper bound}
By \Cref{theorem: complexity PDLREwLAp standard} with \Cref{proposition: REwLAp to PDLREwLAp,proposition: language equivalence}.
\end{proof}

\begin{remark}\label{remark: REL and GRELfinlin}
For the decidability results in this section, it is crucial that we consider $\GRELfinlin{}$ ("resp", $\GRELstfinlin{}$), not $\REL{}$.
On $\REL{}$, the "equational theory" of $\REwLAp$ (and also "\RE" with $\bl^{\capcomid}$) is $\mathrm{\Pi}^{0}_{1}$-hard \cite{nakamuraUndecidabilityPositiveCalculus2024,nakamuraUndecidabilityEmptinessProblem2025} (more precisely, it is shown from the reduction of \cite[Corollary 7.2]{nakamuraUndecidabilityEmptinessProblem2025},
by replacing each hypothesis of the form $\term \le \id$ with $\term^{\capcomid} = \emp$),
"cf", decidable for "\RE" with $\cap$ \cite{nakamuraPartialDerivativesGraphs2017,nakamuraDerivativesGraphsPositive2025} and for "\RE" with $\com{\id}$ \cite{nakamuraExistentialCalculiRelations2023}.
\end{remark}
\section{Conclusion and Future Work}\label{section: conclusion}
We have introduced "\PDLREwLAp" and presented a sound and complete Hilbert-style finite axiomatization (\Cref{theorem: PDL REwLA completeness}), which characterizes the "substitution-closed" equivalence for both "match-language" and "language@@string".
Moreover, we have presented a sound and complete axiomatization (\Cref{theorem: completeness match-language equivalence}),
characterizing the "match-language equivalence" and "language equivalence".
Additionally, "\PDLREwLAp" is "EXPTIME"-complete for the "substitution-closed equivalences" (\Cref{theorem: complexity PDLREwLAp substitution-closed})
and is "PSPACE"-complete for the standard equivalences (\Cref{theorem: complexity PDLREwLAp standard}).

A near future work is to present a direct (quasi-)equational axiomatization for $\REwLAp$,
in the style of test algebra \cite{hollenbergEquationalAxiomsTest1998} %
or Kleene algebra with (anti)domain \cite{desharnaisKleeneAlgebraDomain2006,desharnaisInternalAxiomsDomain2011,sedlarComplexityKleeneAlgebra2023}.
This paper's work in the paper would possibly be a first step towards this direction,
as the boolean sort of test algebras is heavily based on $\PDL$ \cite{hollenbergEquationalAxiomsTest1998}
and the completeness of Kleene algebras with domain shown in \cite{sedlarComplexityKleeneAlgebra2023} is based on the completeness of test algebras.
Extending the syntax with \AP""lookbehind"" \cite{mamourasEfficientMatchingRegular2024} (and with the ""leftmost anchor"" and ""rightmost anchor"") would also be interesting.
For $\REwLA$ with \AP""backreferences"", we have no recursive axiomatization, because the standard \phantomintro{language emptiness}"language emptiness" %
is undecidable \cite[Theorem 4]{chidaLookaheadsRegularExpressions2023}\cite[Theorem 4]{uezatoRegularExpressionsBackreferences2024} (precisely, $\mathrm{\Pi}^{0}_{1}$-hard).
An open question is whether we can extend the completeness and decidability for "\REwLAp" with full intersection $\cap$ on $\GRELfinlin{}$, "cf" decidable and "EXPSPACE"-complete for the "equational theory" of "\RE" with $\cap$ (and mirror image) \cite{brunetReversibleKleeneLattices2017} on algebras of languages, which is equivalent to the "equational theory" on ``$\mathsf{RSUB}$'' a variant of $\GRELfinlin{}$ \cite{nakamuraFiniteRelationalSemantics2025} (if mirror image is omitted).
Additionally, it would also be interesting to investigate GKAT (guarded Kleene algebra with tests \cite{smolkaGuardedKleeneAlgebra2019})-like fragments of our deterministic PDLs on $\GRELstfinlin{}$, "cf" \cite{benevidesDeterminismPDLRelations2025}.

In this paper, using the two operators $\bl^{\capid}$ and $\bl^{\capcomid}$,
we have shown the completeness for "\PDLREwLAp", via the completeness of "identity-free \PDL".
Another open question is whether we can give a direct axiomatization for pure "\PDL" on $\GRELfinlin{}$ (without $\bl^{\capid}$ or $\bl^{\capcomid}$).
For instance,
the following "formula@@PDL" derived from modal Grzegorczyk logic \cite[p. 96]{segerbergEssayClassicalModal1971},
is valid on $\GRELfinlin{}$.
\begin{align*}
\bo{\term^{*}}(\bo{\term^*}(\fml \to \bo{\term^*} \fml) \to \fml) \to \fml \tag{Grz-$\bl^{*}$} \label{equation: Grz}
\end{align*}

\begin{credits}
\subsubsection{\ackname}
We are grateful to Paul Brunet for introducing his technique \cite{brunetCompleteAxiomatisationFragment2020} for eliminating the identity in algebras of languages.
We also thank anonymous reviewers for their helpful and insightful comments.
This work was supported by JSPS KAKENHI Grant Number JP25K14985.
\end{credits}

\bibliographystyle{splncs04}
\bibliography{main}
\setcounter{secnumdepth}{3}
\appendix
\clearpage

\clearpage

\clearpage
\section{Appendix to \Cref{section: preliminaries}~``\nameref*{section: preliminaries}''}

\begin{scope}\knowledgeimport{REwLA}

\subsection{Proof of \Cref{proposition: language equivalence}} \label{section: proposition: language equivalence}
For $\struc \in \GRELfinlin{}$ and $\vertex[1] \in \univ{\struc}$,
the \AP""generated substructure"" $\intro*\generatedsubstruc{\struc}{\vertex[1]}$ of $\struc$ by $\vertex[1]$
is defined as follows:
\begin{align*}
    \univ{\generatedsubstruc{\struc}{\vertex[1]}} &\defeq
    \set{\vertex[3] \in \univ{\struc} \mid \tuple{\vertex[1], \vertex[3]} \in (\bigcup_{\aterm \in \vsig} \aterm^{\struc})^*}, &
    \strucuniv^{\generatedsubstruc{\struc}{\vertex[1]}} &\defeq \strucuniv^{\struc} \cap \univ{\generatedsubstruc{\struc}{\vertex[1]}}^2, \\
    \aterm^{\generatedsubstruc{\struc}{\vertex[1]}} &\defeq \aterm^{\struc} \cap \univ{\generatedsubstruc{\struc}{\vertex[1]}}^2, &
    \afml^{\generatedsubstruc{\struc}{\vertex[1]}} &\defeq \afml^{\struc} \cap \univ{\generatedsubstruc{\struc}{\vertex[1]}}.
\end{align*}

\begin{lemma}\label{lemma: upward substructure}
    Let $\struc \in \GRELfinlin{}$ and $\vertex_0 \in \univ{\struc}$.
    For all "\REwLA" $\term$, we have:
    $\semREwLA{\term}{\generatedsubstruc{\struc}{\vertex_0}} = \semREwLA{\term}{\struc} \cap \strucuniv^{\generatedsubstruc{\struc}{\vertex_0}}$.
\end{lemma}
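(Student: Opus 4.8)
The plan is to prove the identity $\semREwLA{\term}{\generatedsubstruc{\struc}{\vertex_0}} = \semREwLA{\term}{\struc} \cap \strucuniv^{\generatedsubstruc{\struc}{\vertex_0}}$ by structural induction on the "\REwLA" term $\term$. The crucial observation, which I would establish first as a standing remark, is that the "generated substructure" $\generatedsubstruc{\struc}{\vertex_0}$ is \emph{upward closed} in $\struc$ with respect to the union of the letter relations: if $\vertex[1] \in \univ{\generatedsubstruc{\struc}{\vertex_0}}$ and $\tuple{\vertex[1], \vertex[2]} \in \aterm^{\struc}$ for some $\aterm \in \vsig$, then $\vertex[2] \in \univ{\generatedsubstruc{\struc}{\vertex_0}}$ as well (since $\tuple{\vertex_0,\vertex[2]} \in (\bigcup_\aterm \aterm^\struc)^*$). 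Consequently, for any letter $\aterm$, $\aterm^{\generatedsubstruc{\struc}{\vertex_0}} = \aterm^{\struc} \cap (\univ{\generatedsubstruc{\struc}{\vertex_0}} \times \univ{\struc})$, i.e. the restriction only needs to constrain the \emph{source} vertex.

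For the inductive cases: the base case $\term = \aterm$ is exactly the remark above combined with the fact that on $\GRELfinlin{}$ every edge of $\struc$ from a vertex in the substructure goes to a vertex in the substructure and lies in $\strucuniv^{\struc}$. The cases $\union$ and $\compo$ are routine: for $\compo$ one uses that if $\tuple{\vertex[1],\vertex[2]} \in \semREwLA{\term[1]}{\struc}$ with $\vertex[1]$ in the substructure, then $\vertex[2]$ is reachable from $\vertex_0$ (since $\semREwLA{\term[1]}{\struc} \subseteq (\bigcup_\aterm\aterm^\struc)^*$ on word-like structures — more precisely $\semREwLA{\term[1]}{\struc}\subseteq \strucuniv^{\struc}$ and every non-diagonal $\strucuniv^\struc$-step decomposes through letters, so the intermediate point stays inside), hence the witness for composition stays in the substructure and IH applies to both factors. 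The case $\term[1]^{+}$ is similar, iterating the reachability argument along the chain of witnesses. The only genuinely delicate case is the "antidomain" $\term[1]^{\adom}$: here $\tuple{\vertex[1],\vertex[1]} \in \semREwLA{\term[1]^{\adom}}{\struc}$ asserts that there is \emph{no} $\vertex[2]$ in $\struc$ with $\tuple{\vertex[1],\vertex[2]} \in \semREwLA{\term[1]}{\struc}$, whereas in the substructure we only forbid such $\vertex[2]$ within $\univ{\generatedsubstruc{\struc}{\vertex_0}}$. The point is that if $\vertex[1] \in \univ{\generatedsubstruc{\struc}{\vertex_0}}$ and $\tuple{\vertex[1],\vertex[2]} \in \semREwLA{\term[1]}{\struc}$, then, using $\semREwLA{\term[1]}{\struc} \subseteq (\bigcup_\aterm \aterm^\struc)^*$ (with the convention that $\term[1]$ contains letters or the empty string), $\vertex[2]$ is reachable from $\vertex_0$, so $\vertex[2] \in \univ{\generatedsubstruc{\struc}{\vertex_0}}$; by IH, $\tuple{\vertex[1],\vertex[2]} \in \semREwLA{\term[1]}{\generatedsubstruc{\struc}{\vertex_0}}$, contradicting that $\vertex[1]$ is in the antidomain of $\semREwLA{\term[1]}{\generatedsubstruc{\struc}{\vertex_0}}$; the converse direction is immediate since any obstructing $\vertex[2]$ inside the substructure is also an obstructing $\vertex[2]$ in $\struc$.

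The main obstacle is therefore the $\adom$ case, and it hinges on the auxiliary fact that $\semREwLA{\term}{\struc} \subseteq (\bigcup_{\aterm \in \vsig}\aterm^\struc)^*$ for every "\REwLA" $\term$ on structures with a "finite linear order" (so that witnesses never leave the generated substructure). I would prove this auxiliary containment by a separate easy structural induction before the main argument, or fold it into a strengthened induction hypothesis. With that in hand, all remaining cases are bookkeeping. Finally, I would note that well-definedness of both sides is guaranteed since $\generatedsubstruc{\struc}{\vertex_0}$ inherits the "preorder" (in fact "finite linear order") structure from $\struc$, so $\semREwLA{\bl}{\generatedsubstruc{\struc}{\vertex_0}}$ is defined by the remark following the definition of the "\REwLA" semantics.
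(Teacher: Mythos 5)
Your proposal is correct and follows essentially the same route as the paper's own proof: structural induction on $\term$, with the auxiliary containment $\semREwLA{\term}{\struc} \subseteq (\bigcup_{\aterm \in \vsig} \aterm^{\struc})^*$ used to show that the witnesses arising in the $\compo$, $\bl^{+}$, and $\adom$ cases never leave the generated substructure (the paper invokes this containment inline rather than as a separate lemma, and likewise treats $\adom$ by contraposition). One small caveat: your parenthetical claim that every non-diagonal $\strucuniv^{\struc}$-step decomposes through letters is not true for arbitrary members of $\GRELfinlin{}$ and is not needed --- the correctly stated containment, established by its own easy induction, is all that is required.
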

\begin{proof}
    Let $\struc[2]$ denote $\generatedsubstruc{\struc}{\vertex_0}$.
    We show that, for all $\tuple{\vertex[1], \vertex[2]} \in \strucuniv^{\struc[2]}$, $\tuple{\vertex[1], \vertex[2]} \in \semREwLA{\term}{\struc[2]}$ "iff" $\tuple{\vertex[1], \vertex[2]} \in \semREwLA{\term}{\struc[1]}$.
    \proofcase{Cases $\term = \aterm, \emp, \id, \term[2] \union \term[3]$}
    By unfolding the definitions (with IH).
    \proofcase{Case $\term = \term[2] \compo \term[3]$}
    \proofcase{($\Longrightarrow$)}
    By unfolding the definitions with IH.
    \proofcase{($\Longleftarrow$)}
    Let $\vertex[3]$ be such that $\tuple{\vertex[1], \vertex[3]} \in \semREwLA{\term[2]}{\struc}$ and $\tuple{\vertex[3], \vertex[2]} \in \semREwLA{\term[3]}{\struc}$.
    By $\vertex[1] \in \univ{\struc[2]}$ and $\tuple{\vertex[1], \vertex[3]} \in \semREwLA{\term[2]}{\struc[1]} \subseteq (\bigcup_{\aterm \in \vsig} \aterm^{\struc})^*$, we have $\vertex[3] \in \univ{\struc[2]}$.
    By IH, we have both $\tuple{\vertex[1], \vertex[3]} \in \semREwLA{\term[2]}{\struc[2]}$ and $\tuple{\vertex[3], \vertex[2]} \in \semREwLA{\term[3]}{\struc[2]}$.
    Hence, $\tuple{\vertex[1], \vertex[2]} \in \semREwLA{\term}{\struc[2]}$.
    \proofcase{Case $\term = \term[2]^{+}$}
    Similar to Case $\term = \term[2] \compo \term[3]$.
    \proofcase{Case $\term = \term[2]^{\adom}$}
    \proofcase{($\Longrightarrow$)}
    We show the contrapositive.
    Suppose $\tuple{\vertex[1], \vertex[1]} \not\in \semREwLA{\term[2]^{\adom}}{\struc}$.
    Let $\vertex[3]$ be such that $\tuple{\vertex[1], \vertex[3]} \in \semREwLA{\term[2]}{\struc}$.
    By $\vertex[1] \in \univ{\struc[2]}$, we have $\vertex[3] \in \univ{\struc[2]}$.
    By IH, $\tuple{\vertex[1], \vertex[3]} \in \semREwLA{\term[2]}{\struc[2]}$.
    Hence, $\tuple{\vertex[1], \vertex[1]} \not\in \semREwLA{\term[2]^{\adom}}{\struc[2]}$.
    \proofcase{($\Longleftarrow$)}
    By unfolding the definitions with IH.
\end{proof}

\begin{proposition*}[Restatement of \Cref{proposition: language equivalence}]
\propositionlanguageequivalence
\end{proposition*}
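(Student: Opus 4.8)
The plan is to reduce to the match-language characterization recorded in \Cref{subsection: REwLA}: since $\GRELstfinlin{} \modelsfml \term[1]\termendmarker = \term[2]\termendmarker$ is equivalent to $\mlang(\term[1]\termendmarker) = \mlang(\term[2]\termendmarker)$, it suffices to prove
\[
  \wlang(\term[1]) = \wlang(\term[2]) \quad\iff\quad \mlang(\term[1]\termendmarker) = \mlang(\term[2]\termendmarker).
\]
Writing $S := \exprvsig(\term[1]\term[2])$ and $\word = \aterm[1]_1 \dots \aterm[1]_n$, I would first unfold $\termendmarker = \left(\sum_{\aterm \in S}\aterm\right)^{\adom}$: on $\wordstruc^{\word}$ its semantics is the set of diagonal pairs $\tuple{k,k}$ with no $S$-character at position $k$, i.e. $k = n$ or $\aterm[1]_{k+1} \notin S$. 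As $\termendmarker$ is contained in the diagonal, this gives, for every $\REwLA$ term $\term$,
\[
  \tuple{\word, i, j} \in \mlang(\term\termendmarker) \quad\iff\quad \tuple{i,j} \in \semREwLA{\term}{\wordstruc^{\word}} \text{ and } \bigl(j = n \text{ or } \aterm[1]_{j+1} \notin S\bigr).
\]
(The degenerate case $S = \emptyset$, where $\termendmarker$ is the full identity, is covered uniformly.)

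The implication "$\wlang(\term[1]) \neq \wlang(\term[2]) \Rightarrow \mlang(\term[1]\termendmarker) \neq \mlang(\term[2]\termendmarker)$" is then immediate: if $\word \in \wlang(\term[1]) \setminus \wlang(\term[2])$, then $\tuple{0, \len{\word}} \in \semREwLA{\term[1]}{\wordstruc^{\word}} \setminus \semREwLA{\term[2]}{\wordstruc^{\word}}$ and the end-marker side condition holds trivially at $j = \len{\word}$, so $\tuple{\word, 0, \len{\word}}$ separates the two match-languages. For the converse, take a separating triple $\tuple{\word, i, j}$, say in $\mlang(\term[1]\termendmarker) \setminus \mlang(\term[2]\termendmarker)$; by the formula, $\tuple{i,j} \in \semREwLA{\term[1]}{\wordstruc^{\word}}$, $\tuple{i,j} \notin \semREwLA{\term[2]}{\wordstruc^{\word}}$, and $j = n$ or $\aterm[1]_{j+1} \notin S$. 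The goal is to produce $\word' := \aterm[1]_{i+1} \dots \aterm[1]_j$ with $\word' \in \wlang(\term[1]) \setminus \wlang(\term[2])$. Deleting the prefix $\aterm[1]_1 \dots \aterm[1]_i$ is exactly handled by \Cref{lemma: upward substructure}: every position $\ge i$ is reachable from $i$ in $\wordstruc^{\word}$, so $\generatedsubstruc{\wordstruc^{\word}}{i}$ is isomorphic to $\wordstruc^{\aterm[1]_{i+1}\dots\aterm[1]_n}$, and that lemma yields $\semREwLA{\term}{\generatedsubstruc{\wordstruc^{\word}}{i}} = \semREwLA{\term}{\wordstruc^{\word}} \cap \strucuniv^{\generatedsubstruc{\wordstruc^{\word}}{i}}$ for every $\REwLA$ $\term$; hence over $\wordstruc^{\aterm[1]_{i+1}\dots\aterm[1]_n}$ the pair $\tuple{0, j-i}$ is in $\semREwLA{\term[1]}{\cdot}$ but not in $\semREwLA{\term[2]}{\cdot}$.

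It remains to delete the suffix, for which I would isolate a dedicated sublemma: if $\word$ has length $n$, $0 \le k \le n$, $k = n$ or $\aterm[1]_{k+1} \notin S$, and $\word'$ is the length-$k$ prefix of $\word$, then for every $\REwLA$ term $\term$ with $\exprvsig(\term) \subseteq S$ and all $0 \le p, q \le k$, $\tuple{p,q} \in \semREwLA{\term}{\wordstruc^{\word}}$ iff $\tuple{p,q} \in \semREwLA{\term}{\wordstruc^{\word'}}$. Its engine is an auxiliary induction on $\term$ showing that no such $\term$ can cross position $k$: if $\tuple{p,q} \in \semREwLA{\term}{\wordstruc^{\word}}$ with $p \le k$, then $q \le k$ (crossing $k$ would need an $\aterm[1]_{k+1}$-labelled step, which either does not exist, when $k = n$, or is unavailable to an $S$-term, when $\aterm[1]_{k+1} \notin S$). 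Granting this, the sublemma is a routine structural induction, using that all $\REwLA$ operators produce only non-decreasing pairs, so that the intermediate vertices in the $\compo$ and $\bl^{+}$ cases stay in $[0,k]$. The delicate case — and the main obstacle — is $\term = \term[2]^{\adom}$: membership of $\tuple{p,p}$ is the universal statement $\forall r,\ \tuple{p,r} \notin \semREwLA{\term[2]}{\wordstruc^{\word}}$, whose quantifier a priori ranges over positions beyond $k$; the no-crossing fact confines the relevant $r$ to $[0,k]$, where the induction hypothesis for $\term[2]$ applies, and $[0,k]$ is precisely the universe of $\wordstruc^{\word'}$, so the two universal statements coincide. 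Applying this sublemma with $k = j-i$ over $\wordstruc^{\aterm[1]_{i+1}\dots\aterm[1]_n}$ to $\term[1]$ and $\term[2]$ shows that $\tuple{0, \len{\word'}}$ is matched by $\term[1]$ but not by $\term[2]$ on $\wordstruc^{\word'}$, i.e. $\word' \in \wlang(\term[1]) \setminus \wlang(\term[2])$, completing the argument.
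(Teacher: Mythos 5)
Your proof is correct and follows the same overall route as the paper's: pass to the relational semantics on word structures, use that $\semREwLA{\termendmarker}{\wordstruc^{\word}}$ is contained in the diagonal so that the forward direction is immediate at $j=\len{\word}$, and for the converse normalize the left endpoint of a separating pair to $0$ via the generated substructure (\Cref{lemma: upward substructure}). Where you genuinely diverge is the suffix. The paper asserts $\semREwLA{\termendmarker}{\wordstruc^{\word}} = \set{\tuple{\len{\word},\len{\word}}}$ and concludes that the witness word itself separates the languages; as you correctly compute, that identity only holds when every character of $\word$ lies in $\exprvsig(\term[1]\,\term[2])$, since in general the end-marker also holds at any position whose successor character is outside that set, so the witness endpoint $j$ need not be $\len{\word}$ (for instance $\term[1]=\aterm$, $\term[2]=\emp$, $\word=\aterm\aterm[2]$ yields the separating pair $\tuple{0,1}$ even though $\word\notin\wlang(\aterm)$). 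Your suffix-deletion sublemma --- driven by the no-crossing observation that a term over $\exprvsig(\term[1]\,\term[2])$ cannot step past a position whose next character lies outside that alphabet, with the $\adom$ case confined by the same observation --- is exactly what is needed to truncate $\word$ to $\aterm[1]_{i+1}\dots\aterm[1]_j$ and produce a word in the symmetric difference of the two languages. So your argument is not merely an alternative: it supplies a step the paper's proof elides, at the cost of one extra structural induction.
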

\begin{proof}
    \proofcase{($\Longleftarrow$)}
    Let $\word$ be any "string".
    By $\semREwLA{\termendmarker}{\wordstruc^{\word}} = \set{\tuple{\len{\word}, \len{\word}}}$ with assumption, we have
    that $\tuple{0, \len{\word}} \in \semREwLA{\term[1]}{\wordstruc^{\word}}$ "iff"
    $\tuple{0, \len{\word}} \in \semREwLA{\term[2]}{\wordstruc^{\word}}$.
    \proofcase{($\Longrightarrow$)}
    We show the contrapositive.
    Suppose $\wordstruc^{\word}$ and $\tuple{\vertex[1]_0, \vertex[2]_0} \in \semREwLA{\term[1] \termendmarker}{\wordstruc^{\word}} \setminus \semREwLA{\term[2] \termendmarker}{\wordstruc^{\word}}$ for some $\word = a_1 \dots a_m$.
    By taking the "generated substructure" by $\vertex[1]_0$,
    "wlog", we can assume $\vertex[1]_0 = 0$.
    By $\semREwLA{\termendmarker}{\wordstruc^{\word}} = \set{\tuple{m, m}}$, $\vertex[2]_0 = m$.
    Thus, $\word \in \wlang(\term[1]) \setminus \wlang(\term[2])$ for some $\word$.
\end{proof}

\subsection{Proof of \Cref{proposition: largest substitution-closed}} \label{section: largest substitution-closed}

\begin{proposition*}[Restatement of \Cref{proposition: largest substitution-closed}]
\largestsubstitutionclosed
\end{proposition*}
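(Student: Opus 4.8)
The plan is to reduce the claimed equivalence, on both sides, to a family of equalities of denotations indexed by substitutions and word structures, and then to exploit the observation recorded just above the statement --- that on a word structure every binary relation contained in its universal relation is the denotation of some $\REwLA$ term. Two elementary facts will be used throughout. First, a routine substitution lemma: for $\struc \in \GRELpreorder{}$, a term $\term$ and a substitution $\Theta$, one has $\semREwLA{\term\termsubst{\Theta}}{\struc} = \semREwLA{\term}{\struc'}$, where $\struc'$ is the generalized structure on the same frame as $\struc$ with $\aterm^{\struc'} \defeq \semREwLA{\Theta(\aterm)}{\struc}$ for every term variable $\aterm$ (well-defined on preorders, where all the relevant $\REwLA$ denotations are, and proved by induction on $\term$, the $\REwLA$ operators being interpreted identically on the common frame). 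Second, $\semREwLA{\term}{\struc}$ is invariant under frame isomorphisms and depends only on the interpretations of the variables occurring in $\term$. Finally, since the substitution is recorded as the first coordinate of $\smlang(\bl)$, the identity $\smlang(\term[1]) = \smlang(\term[2])$ amounts to $\mlang(\term[1]\termsubst{\Theta}) = \mlang(\term[2]\termsubst{\Theta})$ for all $\Theta$, which by the definition of $\mlang$ amounts to $\semREwLA{\term[1]\termsubst{\Theta}}{\wordstruc^{\word}} = \semREwLA{\term[2]\termsubst{\Theta}}{\wordstruc^{\word}}$ for all $\Theta$ and all strings $\word$.

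For the direction $(\Longleftarrow)$, assume $\GRELfinlin{} \modelsfml \term[1] = \term[2]$ and fix $\Theta$ and $\word$. The structure $\struc'$ obtained from $\wordstruc^{\word}$ by reinterpreting each term variable $\aterm$ as $\semREwLA{\Theta(\aterm)}{\wordstruc^{\word}}$ has a finite linear order as its frame, so $\struc' \in \GRELfinlin{}$ and hence $\semREwLA{\term[1]}{\struc'} = \semREwLA{\term[2]}{\struc'}$. By the substitution lemma this is exactly $\semREwLA{\term[1]\termsubst{\Theta}}{\wordstruc^{\word}} = \semREwLA{\term[2]\termsubst{\Theta}}{\wordstruc^{\word}}$. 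Since $\Theta$ and $\word$ were arbitrary, the reformulation above yields $\smlang(\term[1]) = \smlang(\term[2])$.

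For the direction $(\Longrightarrow)$, assume $\smlang(\term[1]) = \smlang(\term[2])$ and fix an arbitrary $\struc \in \GRELfinlin{}$, say with $\card(\univ{\struc}) = n+1$. Since $\strucuniv^{\struc}$ is a finite linear order there is an order isomorphism $\iota$ from $(\range{0}{n},{\le})$ onto $(\univ{\struc},\strucuniv^{\struc})$, and for any string $\word$ of length $n$ the same $\iota$ is a frame isomorphism from $\wordstruc^{\word}$ onto the frame of $\struc$. Define $\struc'$ on the frame of $\wordstruc^{\word}$ by $\aterm^{\struc'} \defeq \set{\tuple{i,j} \mid \tuple{\iota(i),\iota(j)} \in \aterm^{\struc}}$ for $\aterm \in \exprvsig(\term[1]\,\term[2])$ and $\aterm^{\struc'} \defeq \emptyset$ otherwise; then $\iota$ is an isomorphism between $\struc$ and $\struc'$ as far as the (finitely many) variables of $\term[1]$ and $\term[2]$ are concerned, so $\semREwLA{\term[1]}{\struc}$ and $\semREwLA{\term[2]}{\struc}$ are the $\iota$-images of $\semREwLA{\term[1]}{\struc'}$ and $\semREwLA{\term[2]}{\struc'}$. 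Now, using the observation above, for each $\aterm \in \exprvsig(\term[1]\,\term[2])$ pick $u_{\aterm} \in \termclassREwLA{\vsig}$ with $\semREwLA{u_{\aterm}}{\wordstruc^{\word}} = \aterm^{\struc'}$ (legitimate since $\aterm^{\struc'} \subseteq \strucuniv^{\wordstruc^{\word}}$, as $\iota$ is a frame isomorphism), set $u_{\aterm} \defeq \emp$ for the remaining variables, and let $\Theta(\aterm) \defeq u_{\aterm}$ for all $\aterm$. Then the substitution lemma gives $\semREwLA{\term[1]\termsubst{\Theta}}{\wordstruc^{\word}} = \semREwLA{\term[1]}{\struc'}$ and $\semREwLA{\term[2]\termsubst{\Theta}}{\wordstruc^{\word}} = \semREwLA{\term[2]}{\struc'}$, because the auxiliary structure it produces interprets each $\aterm$ as $\semREwLA{\Theta(\aterm)}{\wordstruc^{\word}} = \aterm^{\struc'}$. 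On the other hand, specializing the hypothesis $\mlang(\term[1]\termsubst{\Theta}) = \mlang(\term[2]\termsubst{\Theta})$ to the string $\word$ yields $\semREwLA{\term[1]\termsubst{\Theta}}{\wordstruc^{\word}} = \semREwLA{\term[2]\termsubst{\Theta}}{\wordstruc^{\word}}$. Combining, $\semREwLA{\term[1]}{\struc'} = \semREwLA{\term[2]}{\struc'}$, and applying $\iota$ gives $\semREwLA{\term[1]}{\struc} = \semREwLA{\term[2]}{\struc}$; as $\struc$ was arbitrary, $\GRELfinlin{} \modelsfml \term[1] = \term[2]$.

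Essentially all of the content sits in the observation preceding the statement --- that on a word structure every relation is $\REwLA$-definable (each position $i$ of $\wordstruc^{\word}$ being singled out by a lookahead-style term asserting that precisely the suffix read from $i$ leads to a dead end, whence every singleton, and then every relation, is definable) --- so the argument above is really the packaging of that fact together with the substitution lemma and isomorphism invariance. The only step requiring a little care is aligning the two directions of the translation $\struc \leftrightarrow (\word,\Theta)$ so that the denotations match on the nose; I expect that to be the main (though modest) obstacle.
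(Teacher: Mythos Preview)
Your proof is correct and follows essentially the same approach as the paper's: both directions hinge on the substitution lemma together with the observation (stated just before the proposition) that every relation on a word structure is $\REwLA$-definable. The only differences are cosmetic --- the paper works ``up to isomorphism'' and gives an explicit defining term $(\const{c}^{m-i})^{\dom}(\const{c}^{m-i+1})^{\adom}\const{c}^{j-i}$ for each singleton pair $\tuple{i,j}$ on $\wordstruc^{\const{c}^{m}}$, whereas you spell out the isomorphism and simply invoke the observation; the underlying argument is the same.
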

\begin{proof}
    \proofcase{($\supseteq$)}
    Since each $\wordstruc^{\word}$ in which each $\aterm^{\wordstruc^{\word}}$ has been replaced with some subset of $\strucuniv^{\wordstruc^{\word}}$ is in the class $\GRELfinlin{}$.
    \proofcase{($\subseteq$)}
    Let $\struc[2] \in \GRELfinlin{}$,
    where $\strucuniv^{\struc[2]} = \set{\tuple{i, j} \in \range{0}{m}^2 \mid i \le j}$ for some $m \ge 0$, up to "isomorphism".
    Let $\word \defeq \const{c}^{m}$ where $\const{c}$ is a fixed "term variable"
    and let 
    \begin{align*}
    \term[3]_{\tuple{i, j}}
    &~\defeq~ (\const{c}^{m-i})^{\dom} (\const{c}^{m-i+1})^{\adom} \const{c}^{j-i} \tag*{where $0 \le i \le j \le m$.}
    \end{align*}
    Observe that $\semREwLA{\term[3]_{\tuple{i, j}}}{\wordstruc^{\word}} = \set{\tuple{i, j}}$.
    For a subset $\rel = \set{\tuple{i_1, j_1}, \dots, \tuple{i_n, j_n}} \subseteq \strucuniv^{\struc}$,
    we define $\term[3]_{\rel} \defeq \sum_{i = 1}^{n} \term[3]_{\tuple{i_i, j_i}}$.
    We then have $\semREwLA{\term[3]_{\rel}}{\wordstruc^{\word}} = \rel$.%
    \footnote{%
    We can take such $\term[3]_{\rel}$ even if $\word$ is over a finite set $A$ (not the singleton $\set{\const{c}}$),
    by replacing $\const{c}$ with $\sum_{\aterm \in A} \aterm$ in the definition $\term[3]_{\tuple{i, j}}$.}
    Let $\Theta \colon \vsig \to \termclassREwLA{\vsig}$ be
    the \kl{substitution} given by $\aterm \mapsto \term[3]_{(\aterm^{\struc[2]})}$.
    We then have $\semREwLA{\term[3][\Theta]}{\wordstruc^{\word}} = \semREwLA{\term[3]}{\struc[2]}$ for every $\term[3]$.
    Hence, by assumption, this completes the proof.
\end{proof}
\end{scope}

\begin{scope}\knowledgeimport{REwLA}
\subsection{Proof of \Cref{proposition: largest substitution-closed language equivalence}} \label{section: largest substitution-closed language equivalence}

\begin{proposition*}[Restatement of \Cref{proposition: largest substitution-closed language equivalence}]
    \largestsubstitutionclosedlanguageequivalence
\end{proposition*}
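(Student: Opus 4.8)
The plan is to follow the proof of \Cref{proposition: language equivalence}, adapted to the fact that $\GRELfinlin{}$-structures (unlike string structures) may carry loops — which is precisely why $\termendmarkertwo$ must use $\bl^{\capcomid}$ rather than bare term variables. Write $V \defeq \exprvsig(\term[1]\,\term[2])$. First I would isolate two facts. \emph{(a)} For any \REwLAp\ term $\expr$ and any $\struc \in \GRELfinlin{}$, $\semREwLAp{\expr\,\termendmarkertwo}{\struc} = \set{\tuple{v,w} \in \semREwLAp{\expr}{\struc} \mid w \text{ is a }V\text{-sink of }\struc}$, where ``$w$ is a $V$-sink'' means $\tuple{w,w} \in \semREwLAp{\termendmarkertwo}{\struc}$, i.e. for every $\aterm \in V$ there is no $\vertex \neq w$ with $\tuple{w,\vertex} \in \aterm^{\struc}$ (this is just unfolding $\bl^{\capcomid}$ and $\bl^{\adom}$, using $\semREwLAp{\termendmarkertwo}{\struc} \subseteq \diagonal_{\univ{\struc}}$). \emph{(b)} In any $\struc \in \GRELfinlin{}$ the maximum vertex is a $V$-sink, since a linear order has no edge from the top to a strictly larger vertex.

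For $(\Longleftarrow)$, given a word $\word$ and a substitution $\Theta$, I would take the structure $\struc \in \GRELfinlin{}$ obtained from $\wordstruc^{\word}$ by reinterpreting each $\aterm$ as $\semREwLA{\Theta(\aterm)}{\wordstruc^{\word}}$; this is well-defined (these relations lie inside $\strucuniv^{\wordstruc^{\word}}$) and satisfies $\semREwLAp{\term[3]}{\struc} = \semREwLA{\term[3]\termsubst{\Theta}}{\wordstruc^{\word}}$ for every \REwLA\ term $\term[3]$, so in particular $\word \in \wlang(\term[i]\termsubst{\Theta}) \iff \tuple{0,\len{\word}} \in \semREwLAp{\term[i]}{\struc}$. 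Since $\len{\word}$ is the maximum of $\struc$, it is a $V$-sink by (b), so by (a) $\tuple{0,\len{\word}} \in \semREwLAp{\term[i]}{\struc} \iff \tuple{0,\len{\word}} \in \semREwLAp{\term[i]\,\termendmarkertwo}{\struc}$; the hypothesis $\GRELfinlin{} \modelsfml \term[1]\,\termendmarkertwo = \term[2]\,\termendmarkertwo$ equates the two right-hand sides for $i=1,2$, whence $\word \in \wlang(\term[1]\termsubst{\Theta}) \iff \word \in \wlang(\term[2]\termsubst{\Theta})$. As $\word,\Theta$ are arbitrary, $\swlang(\term[1]) = \swlang(\term[2])$.

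For $(\Longrightarrow)$ I would prove the contrapositive. Suppose $\tuple{v,w} \in \semREwLAp{\term[1]\,\termendmarkertwo}{\struc} \setminus \semREwLAp{\term[2]\,\termendmarkertwo}{\struc}$ for some $\struc \in \GRELfinlin{}$; by (a) this says $\tuple{v,w} \in \semREwLAp{\term[1]}{\struc} \setminus \semREwLAp{\term[2]}{\struc}$ with $w$ a $V$-sink. I may assume $\aterm^{\struc} = \emptyset$ for $\aterm \notin V$ (harmless for $\semREwLAp{\term[i]}{\bl}$). Then:
\begin{enumerate}
  \item Replace $\struc$ by $\generatedsubstruc{\struc}{v}$. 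By \Cref{lemma: upward substructure}, $\semREwLAp{\term[i]}{\bl}$ is unchanged on the retained pairs; $w$ is retained (it is reachable from $v$ since $\semREwLAp{\term[1]}{\struc} \subseteq (\bigcup_{\aterm\in V}\aterm^{\struc})^{*}$), $w$ is still a $V$-sink (edge relations only shrank), and now $v$ is the \emph{minimum}.
  \item Re-order the linear order so that $w$ becomes the \emph{maximum}: list the vertices $< w$ first (keeping their order), then the vertices $> w$ (keeping their order), then $w$ on top. This is a legitimate $\GRELfinlin{}$-structure because $w$ has no non-loop $V$-edge (so no edge out of $w$ is broken) and every other edge $\tuple{a,b}$ with $a \le b$ still has $a$ below $b$ in the new order. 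Crucially, the semantics $\semREwLAp{\bl}{\bl}$ of \REwLAp\ never refers to $\strucuniv^{\struc}$ (only to $\univ{\struc}$ and the $\aterm^{\struc}$), so re-ordering changes nothing: we still have $\tuple{v,w} \in \semREwLAp{\term[1]}{\struc} \setminus \semREwLAp{\term[2]}{\struc}$, now with $v$ the minimum and $w$ the maximum.
\end{enumerate}
Finally, by the representation construction in the proof of \Cref{proposition: largest substitution-closed}, this $\struc$ is isomorphic to $\wordstruc^{\word'}$ with its variables reinterpreted via some substitution $\Theta'$, the isomorphism sending the minimum to $0$ and the maximum to $\len{\word'}$; hence $\tuple{v,w}$ corresponds to $\tuple{0,\len{\word'}}$, giving $\word' \in \wlang(\term[1]\termsubst{\Theta'}) \setminus \wlang(\term[2]\termsubst{\Theta'})$, i.e. $\swlang(\term[1]) \neq \swlang(\term[2])$.

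The routine parts are facts (a), (b) and the bookkeeping in $(\Longleftarrow)$. The step I expect to be the crux is step 2 of $(\Longrightarrow)$: the whole point of the operator $\bl^{\capcomid}$ inside $\termendmarkertwo$ is that ``$w$ is a $V$-sink'' is exactly the condition under which $w$ can be pushed to the top of the linear order without creating an order-violating edge, which — combined with the fact that \REwLAp\ semantics ignores $\strucuniv^{\struc}$ — lets us realise the distinguishing pair $\tuple{v,w}$ as a genuine $\tuple{0,\len{\word'}}$ pair of a word-with-substitution. (On $\GRELstfinlin{}$ this subtlety vanishes, since there are no loops and the maximum is the \emph{only} sink, which is why \Cref{proposition: language equivalence} could use the simpler $\termendmarker$.)
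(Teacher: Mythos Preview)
Your proof is correct and follows essentially the same approach as the paper's: both directions hinge on the observation that the \REwLAp\ semantics is independent of $\strucuniv^{\struc}$, so once $w$ is a $V$-sink (which is precisely what $\termendmarkertwo$ asserts via $\bl^{\capcomid}$) one may reorder the linear order to make $w$ the maximum without breaking any edge, and then invoke the representation from \Cref{proposition: largest substitution-closed}. The only cosmetic difference is that you take the generated substructure \emph{before} reordering whereas the paper does it after; your explicit permutation and verification that the reordered structure stays in $\GRELfinlin{}$ are more detailed than the paper's terse ``replace the ordering so that $\vertex[2]_0$ becomes the maximum,'' but the underlying argument is the same.
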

\begin{proof}
    \proofcase{($\Longleftarrow$)}
    Let $\word$ be a "string" and let $\Theta$ be a "substitution".
    Let $\wordstruc^{\word}\reintro*\strucsubst{\Theta} \in \GRELfinlin{}$ denote the $\wordstruc^{\word}$ in which
    $\aterm^{\wordstruc^{\word}\strucsubst{\Theta}} \defeq \semREwLA{\Theta(\aterm)}{\wordstruc^{\word}}$.
    By easy induction on $\term[3]$,
    we have $\semREwLA{\term[3]}{\wordstruc^{\word}\strucsubst{\Theta}} = \semREwLA{\term[3]\termsubst{\Theta}}{\wordstruc^{\word}}$ for every $\term[3]$.
    By construction, we have $\tuple{m, m} \in \semREwLA{\termendmarkertwo}{\wordstruc^{\word}\strucsubst{\Theta}}$.
    Thus, by assumption,
    $\tuple{0, \len{\word}} \in \semREwLA{\term[1]}{\wordstruc^{\word}\strucsubst{\Theta}}$ "iff" $\tuple{0, \len{\word}} \in \semREwLA{\term[2]}{\wordstruc^{\word}\strucsubst{\Theta}}$.
    Hence, $\word \in \wlang(\term[1]\termsubst{\Theta})$ "iff" $\word \in \wlang(\term[2]\termsubst{\Theta})$.
    
    \proofcase{($\Longrightarrow$)}
    Suppose $\struc \in \GRELfinlin{}$ and $\tuple{\vertex[1]_0, \vertex[2]_0} \in \semREwLA{\term[1] \termendmarkertwo}{\struc} \setminus \semREwLA{\term[2] \termendmarkertwo}{\struc}$,
    where, "wlog", we assume that
    $\aterm^{\struc[1]} = \emptyset$ for each $\aterm \not\in \exprvsig(\term[1] \term[2])$
    and $\strucuniv^{\struc} = \set{\tuple{i, j} \in \range{0}{m}^2 \mid i \le j}$ for some $m \ge 0$.
    By $\tuple{\vertex[1]_0, \vertex[2]_0} \in \semREwLA{\term[1] \termendmarkertwo}{\struc}$ and $\semREwLA{\termendmarkertwo}{\struc} \subseteq \triangle_{\univ{\struc}}$,
    we have $\tuple{\vertex[2]_0, \vertex[2]_0} \in \semREwLA{\termendmarkertwo}{\struc}$,
    and hence there is no outgoing edge from $\vertex[2]_0$.
    Let $\struc[2]$ be the $\struc$ in which the ordering $\strucuniv^{\struc[2]}$ has been replaced so that $\vertex[2]_0$ becomes the maximum element.
    Since $\strucuniv$ does not affect the evaluation, $\semREwLA{\term[2]}{\struc} = \semREwLA{\term[2]}{\struc[2]}$ for all $\term[2]$.
    Thus, "wlog", we can assume $\vertex[2]_0 = m$.
    By taking the "generated substructure" by $\vertex[1]_0$ (\Cref{lemma: upward substructure}),
    "wlog", we can also assume $\vertex[1]_0 = 0$.
    By the same argument as in the proof of \Cref{proposition: largest substitution-closed},
    we can take a "string" $\word$ and a \kl{substitution} $\Theta$ such that $\tuple{0, m} \in \semREwLA{\term[1]\termsubst{\Theta}}{\wordstruc^{\word}} \setminus \semREwLA{\term[2]\termsubst{\Theta}}{\wordstruc^{\word}}$.
\end{proof}

\begin{remark}\label{remark: proposition: largest substitution-closed language equivalence needs b}
    \Cref{proposition: largest substitution-closed language equivalence} fails if we use $\termendmarker = (\sum_{\aterm \in \vsig(\term[1]\, \term[2])} \aterm)^{\adom}$ ("cf", \Cref{proposition: language equivalence}).
    For instance, for the "equation" $\aterm^{\dom} = \emp$,
    we have $\GRELfinlin{} \modelsfml \aterm^{\dom} \aterm^{\adom} = \emp= \emp \aterm^{\adom}$
    but $\swlang(\aterm^{\dom}) \neq \swlang(\emp)$ (by letting $\aterm = \id$).
\end{remark}

\end{scope}

\clearpage
\section{Supplement of \Cref{footnote: PDL completeness}: PDL with Kleene plus}\label{section: footnote: PDL completeness}
In this section, we explicitly present a complete Hilbert system for PDL with "Kleene plus" (instead of "Kleene star")
and give several basic facts.
They are analogs of the results in PDL with "Kleene star".

We write $\reintro*\vdashPDL \fml$ if $\fml$ is derivable in the system of \Cref{figure: PDL axioms}.
\begin{figure}[h]
    \begin{tcolorbox}[colback=black!3, top = .3ex, bottom = .3ex, left = .3em, right = .3em]
    \begin{minipage}[t]{0.15\columnwidth}
    \textbf{Rules}:
    \end{minipage}
    \begin{minipage}[t]{0.25\columnwidth}
        \vspace{-4ex}
        \begin{align*}
        \begin{prooftree}
            \hypo{\fml}
            \hypo{\fml \to \fml[2]}
            \infer2{\fml[2]}
        \end{prooftree}
        \tag*{\eqref{rule: MP}}
        \end{align*}
    \end{minipage}
    \hspace{2em}
    \begin{minipage}[t]{0.15\columnwidth}
        \vspace{-4ex}
        \begin{align*}
        \begin{prooftree}
            \hypo{\fml}
            \infer1{\bo{\term}\fml}
        \end{prooftree} 
        \tag*{\eqref{rule: NEC}}
        \end{align*}
    \end{minipage}

    \tcbline%
    \textbf{Axioms}:

    \noindent
    \proofcase{Prop. axioms} \hfill All \kl{substitution-instances} of valid propositional formulas \hfill \labeltext{(Prop)}{equation: PDL PROP}

    \noindent
    \proofcase{Normal modal logic axiom and Segerberg's axioms}

    \begin{minipage}[t]{0.42\columnwidth}
    \vspace{-3ex}
    \begin{align*}
        &\bo{\term}(\fml[1] \to \fml[2]) \to (\bo{\term}\fml[1] \to \bo{\term} \fml[2])
        \tag{K} \label{equation: PDL K}\\
        &\bo{\term[1] \union \term[2]} \fml[1] \;\leftrightarrow\; \bo{\term[1]}\fml[1] \land \bo{\term[2]}\fml[2]
        \tag{$\union$} \label{equation: PDL union}\\
        &\bo{\term[1] \term[2]} \fml[1] \;\leftrightarrow\; \bo{\term[1]}\bo{\term[2]}\fml[1]
        \tag{$\compo$} \label{equation: PDL compo}
    \end{align*}
    \end{minipage}
    \hfill
    \begin{minipage}[t]{0.56\columnwidth}
    \vspace{-3ex}
    \begin{align*}
        &\bo{\fml[2]?} \fml[1] \;\leftrightarrow\; (\fml[2] \to \fml[1]) \tag{Test} \label{equation: PDL test}\\
        &\bo{\term[1]^+}\fml[1] \;\leftrightarrow\; (\bo{\term}\fml[1] \land \bo{\term}\bo{\term[1]^+} \fml[1]) \tag{$\bl^{+}$} \label{equation: PDL *}\\
        &(\bo{\term[1]}\fml[1] \land \bo{\term[1]^+}(\fml[1] \to \bo{\term[1]}\fml[1])) \to \bo{\term[1]^+}\fml[1] \tag{$\bl^{+}$-Ind} \label{equation: PDL ind}
    \end{align*}
    \end{minipage}
    \end{tcolorbox}
    \caption{$\HilbertstylePDL$: Rules and axioms for \kl{\PDL}.}
    \label{figure: PDL axioms}
\end{figure}

We first prepare the following "derivable" "rules":
\begin{align*}
    &\begin{prooftree}
        \hypo{\fml_1}
        \hypo{\dots}
        \hypo{\fml_n}
        \infer3{\fml[2]}
    \end{prooftree} \qquad
    &\text{\footnotesize where}\quad 
    \begin{aligned}
    &\text{\footnotesize $\fml_1 \to \fml_2 \to \dots \to \fml_n \to \fml[2]$}\\
    &\text{\footnotesize is obtained from \text{\nameref{equation: PDL PROP}}}
    \end{aligned}
    \tag{$\mathrm{MP}^*\mathrm{Prop}$}\label{rule: MP PROP}\\
    &\begin{prooftree}
        \hypo{\fml_1 \to \dots \to \fml_n \to \fml[2]}
        \infer1{\bo{\term} \fml_1 \to \dots  \to \bo{\term}\fml_n \to \bo{\term} \fml[2]}
    \end{prooftree}
    \tag{$\mathrm{Mon}$}\label{rule: Mon}
\end{align*}
\eqref{rule: MP PROP} can be derived from \eqref{rule: MP}\text{\nameref{equation: PDL PROP}},
and \eqref{rule: Mon} can be derived from \eqref{rule: MP PROP}\eqref{equation: K}\eqref{rule: NEC},
respectively, by standard arguments in normal modal logic.

The following congruence rules are also "derivable", by a standard argument in normal modal logic (see, "eg", \cite[Example 3.49 \& 3.50, \& p.~89]{chagrovModalLogic1997}):
\begin{gather*}
    \begin{prooftree}[small]
        \hypo{\fml[1]_1 \leftrightarrow \fml[2]_1}
        \hypo{\fml[1]_2 \leftrightarrow \fml[2]_2}
        \infer2{(\fml[1]_1 \to \fml[1]_2) \leftrightarrow (\fml[2]_1 \to \fml[2]_2)}
    \end{prooftree}
    \qquad\qquad
    \begin{prooftree}[small]
        \hypo{\fml[1] \leftrightarrow \fml[2]}
        \infer1{\bo{\term} \fml[1] \leftrightarrow \bo{\term} \fml[2]}
    \end{prooftree}
    \tag{Cong-f} \label{rule: PDL cong-f}\\
    \bo{\term}(\fml[1] \land \fml[2]) \leftrightarrow (\bo{\term}\fml[1] \land \bo{\term} \fml[2]) \tag{K$\land$} \label{equation: PDL K'}
\end{gather*}

\subsection{Completeness}
Here, we give a proof using the completeness result of \cite{harelDynamicLogic2000} (PDL with "Kleene star" and "rich tests");
we denote by \AP""\HKTPDL"" for distinction.
The syntax is presented as follows ("ie", $\bl^{*}$ is primitive instead of $\bl^{+}$).
\AP
\phantomintro(HKTPDL){formulas}%
\phantomintro(HKTPDL){terms}%
\begin{align*}
    \fml[1], \fml[2], \fml[3] &\;\Coloneqq\; \afml \mid \fml[1] \to \fml[2] \mid \falsec \mid \bo{\term[1]} \fml \tag{\reintro*\kl(HKTPDL){formulas}}\\
    \term[1], \term[2], \term[3]  &\;\Coloneqq\; \aterm
    \mid \term[2] \compo \term[3]
    \mid \term[2] \union \term[3]
    \mid \term[2]^{*} \mid \fml[1]?
    \tag{\reintro*\kl(HKTPDL){terms}}
\end{align*}
We use the same abbreviations:
$\lnot \fml \defeq \fml \to \falsec$,
$\fml[1] \lor \fml[2] \defeq \lnot \fml[1] \to \fml[2]$,
$\fml[1] \land \fml[2] \defeq \lnot ((\lnot \fml[1]) \lor (\lnot \fml[2]))$,
$(\fml[1] \leftrightarrow \fml[2]) \defeq (\fml[1] \to \fml[2]) \land (\fml[2] \to \fml[1])$,
$\truec \defeq \lnot \falsec$,
$\dia{\term[1]}\fml[1] \defeq \lnot \bo{\term[1]} \lnot \fml[1]$.
Additionally, we let $\term^{+} \defeq \term \compo \term^{*}$.

\AP
We write $\intro*\vdashHKTPDL \fml$ if $\fml$ is derivable in the system of \cite[Section 5.5]{harelDynamicLogic2000} (\Cref{figure: HKTPDL axioms}).
\begin{remark}
    Precisely, in \Cref{figure: HKTPDL axioms}, compared to \cite[Section 5.5]{harelDynamicLogic2000},
    we omit the following axiom from the system in \cite[Section 5.5]{harelDynamicLogic2000}:
    $\bo{\term}(\fml[1] \land \fml[2]) \leftrightarrow (\bo{\term}\fml[1] \land \bo{\term} \fml[2])$.
    This is "derivable" from \eqref{equation:' K}\eqref{rule: NEC},
    by a standard argument in normal modal logic (see, "eg", \cite[Example 3.50]{chagrovModalLogic1997}).
    Also, we introduce \nameref{equation:' PDL PROP} instead of the following three "axioms" in propositional logic:
    \begin{gather}
        (\fml[1] \to \fml[2] \to \fml[3]) \to (\fml[1] \to \fml[2]) \to \fml[1] \to \fml[3]
        \tag{S-com} \label{equation: S combinator}\\
        \fml[1] \to \fml[2] \to \fml[1]
        \tag{K-com} \label{equation: distribution combinator}\\
        \lnot \lnot \fml[1] \to \fml[1]
        \tag{DN} \label{equation: DN}
    \end{gather}
    based on the (well-known) completeness theorem of propositional logic.
\end{remark}

\begin{figure}
    \begin{tcolorbox}[colback=black!3, top = .3ex, bottom = .3ex, left = .3em, right = .3em]
    \textbf{Rules}:
    \eqref{rule: MP} and \eqref{rule: NEC}

    \tcbline%
    \textbf{Axioms}:
    
    \proofcase{Prop. axioms} \hfill All \kl{substitution-instances} of valid propositional formulas \hfill \labeltext{(Prop)}{equation:' PDL PROP}

    \proofcase{Normal modal logic axiom and Segerberg's axioms}

    \vspace{-2ex}
    \hspace{-2em}
    \begin{minipage}[b]{0.45\columnwidth}
        \begin{gather}
            \bo{\term}(\fml[1] \to \fml[2]) \to (\bo{\term}\fml[1] \to \bo{\term} \fml[2])
            \tag{K} \label{equation:' K}\\
            \bo{\term[1] \union \term[2]} \fml[1] ~\leftrightarrow~ \bo{\term[1]}\fml[1] \land \bo{\term[2]}\fml[2]
            \tag{$\union$} \label{equation:' union}\\
            \bo{\term[1] \term[2]} \fml[1] ~\leftrightarrow~ \bo{\term[1]}\bo{\term[2]}\fml[1]
            \tag{$\compo$} \label{equation:' compo}
        \end{gather}
        \end{minipage}
        \hfill
        \begin{minipage}[b]{0.55\columnwidth}
        \begin{gather}
            \bo{\fml[2]?} \fml[1] ~\leftrightarrow~ (\fml[2] \to \fml[1]) \tag{Test} \label{equation:' test}\\
            \bo{\term[1]^*}\fml[1] ~\leftrightarrow~ (\fml[1] \land \bo{\term}\bo{\term[1]^*} \fml[1]) \tag{$\bl^{*}$} \label{equation:' *}\\
            (\fml[1] \land \bo{\term[1]^*}(\fml[1] \to \bo{\term[1]}\fml[1])) \to \bo{\term[1]^*}\fml[1] \tag{$\bl^{*}$-Ind} \label{equation:' ind}
        \end{gather}
    \end{minipage}
    \end{tcolorbox}
    \caption{$\HilbertstyleHKTPDL$: Rules and axioms for \kl{\HKTPDL} (from \cite[Section 5.5]{harelDynamicLogic2000}).}
    \label{figure: HKTPDL axioms}
\end{figure}

We then have the following completeness theorem.
\begin{proposition}[Completeness theorem for $\vdashPDL$]\label{proposition: PDL and HKTPDL}
    For all $\fml$, we have:
    \[\vdashPDL \fml \quad\iff\quad \modelsfml \fml.\]
\end{proposition}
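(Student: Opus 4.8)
The statement has two directions; soundness is routine and the content lies in completeness. For soundness ($\vdashPDL \fml \Rightarrow \modelsfml \fml$) I would verify that each axiom of \Cref{figure: PDL axioms} is valid on $\REL{}$ — propositional for \eqref{equation: PDL PROP}, the usual computations for \eqref{equation: PDL K}, \eqref{equation: PDL union}, \eqref{equation: PDL compo}, \eqref{equation: PDL test}, and an unfolding of the transitive closure for \eqref{equation: PDL *} and \eqref{equation: PDL ind} — and that \eqref{rule: MP} and \eqref{rule: NEC} preserve validity. For completeness ($\modelsfml \fml \Rightarrow \vdashPDL \fml$) the plan is a syntactic reduction to the already-established completeness of $\vdashHKTPDL$ for $\HKTPDL$ (\Cref{figure: HKTPDL axioms}, from \cite{harelDynamicLogic2000}), which has $\bl^{*}$ as a primitive, by transporting formulas between the two syntaxes.

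I would introduce two homomorphic translations. Let $(\cdot)^{\sharp}$ map a Kleene-plus $\PDL$ expression to an $\HKTPDL$ expression, acting as the identity on every constructor except $(\term^{+})^{\sharp} \defeq \term^{\sharp}\compo(\term^{\sharp})^{*}$; dually let $(\cdot)^{\flat}$ map an $\HKTPDL$ expression to the Kleene-plus syntax with $(\term^{*})^{\flat} \defeq \truec?\union(\term^{\flat})^{+}$ (recalling $\id = \truec?$). A straightforward induction shows that both translations preserve the relational semantics, so in particular $\modelsfml \fml$ if and only if $\modelsfml \fml^{\sharp}$, and symmetrically on the $\HKTPDL$ side. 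I would also record that $(\fml^{\sharp})^{\flat}$ is provably equivalent to $\fml$, that is $\vdashPDL \fml \leftrightarrow (\fml^{\sharp})^{\flat}$, by induction on $\fml$ using the derivable congruence rules \eqref{rule: PDL cong-f}, \eqref{rule: Mon}, \eqref{equation: PDL K'} (and the box-congruence for terms); the only non-routine point is $\vdashPDL \bo{\term^{+}}\fml \leftrightarrow \bo{\term\compo(\truec?\union\term^{+})}\fml$, which follows by unfolding the right-hand side via \eqref{equation: PDL compo}, \eqref{equation: PDL union}, \eqref{equation: PDL test} to $\bo{\term}(\fml\land\bo{\term^{+}}\fml)$ and comparing it with the left-hand side via \eqref{equation: PDL *} and \eqref{equation: PDL K'}.

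The heart of the argument is the lemma: for every $\HKTPDL$ formula $\fml[2]$, if $\vdashHKTPDL \fml[2]$ then $\vdashPDL \fml[2]^{\flat}$. I would prove it by induction on the $\vdashHKTPDL$-derivation. The rules \eqref{rule: MP} and \eqref{rule: NEC} are common to both systems and $(\cdot)^{\flat}$ commutes with them, so those cases follow from the induction hypothesis; applying $(\cdot)^{\flat}$ to an instance of \eqref{equation:' PDL PROP}, \eqref{equation:' K}, \eqref{equation:' union}, \eqref{equation:' compo}, or \eqref{equation:' test} yields a substitution-instance of the corresponding Kleene-plus axiom \eqref{equation: PDL PROP}, \eqref{equation: PDL K}, \eqref{equation: PDL union}, \eqref{equation: PDL compo}, \eqref{equation: PDL test}, so those cases are immediate as well. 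The step I expect to be the main obstacle is deriving the $\flat$-images of the two Segerberg axioms $\bl^{*}$ and $\bl^{*}$-Ind: writing $\term[2]$ for $\term^{\flat}$, one first derives $\vdashPDL \bo{\truec?\union\term[2]^{+}}\fml \leftrightarrow (\fml \land \bo{\term[2]^{+}}\fml)$ from \eqref{equation: PDL union} and \eqref{equation: PDL test}, and then combines this with the Kleene-plus unfolding axiom \eqref{equation: PDL *}, the induction axiom \eqref{equation: PDL ind}, the congruence rules, and propositional reasoning (packaged via \eqref{rule: MP PROP}) to obtain both translated axioms. These are routine normal-modal-logic calculations, but they carry essentially all the non-bookkeeping content of the reduction and I would carry them out with care.

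Finally I would assemble the pieces. Given $\modelsfml \fml$, semantics preservation gives $\modelsfml \fml^{\sharp}$; the completeness of $\vdashHKTPDL$ gives $\vdashHKTPDL \fml^{\sharp}$; the lemma gives $\vdashPDL (\fml^{\sharp})^{\flat}$; and the provable equivalence $\vdashPDL \fml \leftrightarrow (\fml^{\sharp})^{\flat}$ with \eqref{rule: MP} gives $\vdashPDL \fml$. Together with soundness this establishes the proposition.
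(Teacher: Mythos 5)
Your proposal is correct and follows essentially the same route as the paper: the paper also reduces to the completeness of $\vdashHKTPDL$ by induction on $\HKTPDL$-derivations, observing that only the two Segerberg axioms \eqref{equation:' *} and \eqref{equation:' ind} need work, and derives them from \eqref{equation: PDL *}, \eqref{equation: PDL ind}, and the key equivalence $\bo{\truec?\union\term^{+}}\fml \leftrightarrow (\fml\land\bo{\term^{+}}\fml)$ exactly as you do. The only difference is presentational: the paper treats $\bl^{*}$ and $\bl^{+}$ as mutual abbreviations rather than introducing explicit translations $(\cdot)^{\sharp}$ and $(\cdot)^{\flat}$, so your explicit round-trip equivalence $\vdashPDL \fml\leftrightarrow(\fml^{\sharp})^{\flat}$ makes precise a bookkeeping step the paper leaves implicit.
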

\begin{proof}
    \proofcase{($\Longrightarrow$)} Easy.
    \proofcase{($\Longleftarrow$)}
    Relying on the completeness theorem of "\HKTPDL" \cite[Theorem 7.6]{harelDynamicLogic2000},
    we prove ${} \vdashHKTPDL \fml \Longrightarrow {} \vdashPDL \fml$ by induction on derivation trees.
    Because \eqref{rule: MP}\nameref{equation:' PDL PROP}\eqref{equation:' K}\eqref{equation:' union}\eqref{equation:' compo}\eqref{equation:' test}
    are also equipped in $\HilbertstylePDL$,
    the crucial cases are for \eqref{equation:' *} and \eqref{equation:' ind}.
    
    \proofcase{Case \eqref{equation:' *}}
    Using \eqref{rule: PDL cong-f}, first we have:
    \begin{align}
        \bo{\term^*}\fml
        \;=\; \bo{\truec? \union \term^+}\fml
        \;\leftrightarrow_{\eqref{equation: PDL union}\eqref{equation: PDL test}\text{\nameref{equation: PROP}}}\;
        \fml \land \bo{\term^+}\fml. 
        \tag{$\bl^{*}\shortminus\bl^{+}$}
        \label{equation: PDL * to +} 
    \end{align}
    Using \eqref{rule: PDL cong-f}, we thus have:
    \begin{align*}
        \bo{\term^*}\fml
        &\;\leftrightarrow_{\eqref{equation: PDL * to +}}\; \fml \land \bo{\term^+}\fml
        \;\leftrightarrow_{\eqref{equation: PDL *}}\; \fml \land (\bo{\term}\fml \land \bo{\term}\bo{\term^+}\fml) \\
        &\;\leftrightarrow_{\text{\eqref{equation: PDL K'}}}\; \fml \land \bo{\term}(\fml \land \bo{\term^+}\fml)
        \;\leftrightarrow_{\eqref{equation: PDL * to +}}\; \fml \land \bo{\term}\bo{\term^*}\fml.
    \end{align*}
    \proofcase{Case \eqref{equation:' ind}}
    By \eqref{equation: PDL ind},
    we have $\vdashPDL (\bo{\term}\fml \land \bo{\term^+}(\fml \to \bo{\term}\fml)) \to \bo{\term^+}\fml$,
    By \nameref{equation: PDL PROP},
    we have
    $\vdashPDL (\fml \land ((\fml \to \bo{\term}\fml) \land \bo{\term^+}(\fml \to \bo{\term}\fml))) \to
    (\fml \land \bo{\term^+}\fml)$.
    By \eqref{rule: PDL cong-f}\eqref{equation: PDL * to +},
    we have
    $\vdashPDL (\fml \land \bo{\term^*}(\fml \to \bo{\term}\fml)) \to \bo{\term^*}\fml$.
\end{proof}

\subsection{Equivalence of \eqref{equation: ind}, \eqref{rule: LI}, and \eqref{rule: TC}}\label{section: equivalence of ind, LI, and TC}
Additionally, 
we also introduce the loop invariance rule \eqref{rule: LI} and the transitive closure rule \eqref{rule: TC}, "cf" {\cite[Theorem 5.18]{harelDynamicLogic2000}}, where we use $\bl^{+}$ instead of $\bl^{*}$.
\begin{center}\vspace{-4ex}
\begin{minipage}[t]{.22\textwidth}
\begin{align*}
\begin{prooftree}
    \hypo{\fml \to \bo{\term}\fml}
    \infer1{\fml \to \bo{\term[1]^+}\fml}
\end{prooftree}  
\tag{\textcolor{black}{LI}}\label{rule: LI}  
\end{align*}
\end{minipage}
\hspace{3em}
\begin{minipage}[t]{.37\textwidth}
\begin{align*}
\begin{prooftree}
    \hypo{\fml[1] \to \bo{\term}\fml[1]}
    \hypo{\fml[1] \to \bo{\term}\fml[2]}
    \infer2{\fml[1] \to \bo{\term[1]^+}\fml[2]}
\end{prooftree}  
\tag{\textcolor{black}{TC}}\label{rule: TC}  
\end{align*}
\end{minipage}
\end{center}
We then have that \eqref{equation: ind}, \eqref{rule: LI}, and \eqref{rule: TC} are equivalent, as follows.
\begin{proposition}["Cf", {\cite[Theorem 5.18]{harelDynamicLogic2000}}]\label{proposition: LI and TC}
In $\HilbertstylePDL$ without \eqref{equation: ind},
the following "axioms" and "rules" are inter"derivable":
\eqref{equation: ind}, \eqref{rule: LI}, and \eqref{rule: TC}.
\end{proposition}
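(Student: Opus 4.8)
The plan is to close the triangle of implications $\eqref{equation: ind}\Rightarrow\eqref{rule: LI}\Rightarrow\eqref{rule: TC}\Rightarrow\eqref{equation: ind}$, all carried out inside the system $\HilbertstylePDL$ \emph{minus} \eqref{equation: ind}. Throughout I will freely use the derivable rules \eqref{rule: MP PROP} and \eqref{rule: Mon} and the derivable schema \eqref{equation: PDL K'}, together with the Segerberg unfolding axiom \eqref{equation: PDL *}; none of these needs \eqref{equation: ind}.

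For $\eqref{equation: ind}\Rightarrow\eqref{rule: LI}$: from the premise $\vdash\fml\to\bo{\term}\fml$, \eqref{rule: NEC} gives $\vdash\bo{\term^{+}}(\fml\to\bo{\term}\fml)$, and then the instance $(\bo{\term}\fml\land\bo{\term^{+}}(\fml\to\bo{\term}\fml))\to\bo{\term^{+}}\fml$ of \eqref{equation: ind} together with $\vdash\fml\to\bo{\term}\fml$ and \eqref{rule: MP PROP} yields $\vdash\fml\to\bo{\term^{+}}\fml$. For $\eqref{rule: LI}\Rightarrow\eqref{rule: TC}$: given $\vdash\fml[1]\to\bo{\term}\fml[1]$ and $\vdash\fml[1]\to\bo{\term}\fml[2]$, set $\psi\defeq\fml[1]\land\fml[2]$. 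Using \eqref{equation: PDL K'} and the two premises one checks $\vdash\psi\to\bo{\term}\psi$ and $\vdash\fml[1]\to\bo{\term}\psi$; applying \eqref{rule: LI} to the first gives $\vdash\psi\to\bo{\term^{+}}\psi$, so by \eqref{rule: Mon} (using $\psi\to\fml[2]$) we obtain $\vdash\fml[1]\to\bo{\term}\bo{\term^{+}}\fml[2]$, and combining this with $\vdash\fml[1]\to\bo{\term}\fml[2]$ through the instance $\bo{\term^{+}}\fml[2]\leftrightarrow\bo{\term}\fml[2]\land\bo{\term}\bo{\term^{+}}\fml[2]$ of \eqref{equation: PDL *} delivers $\vdash\fml[1]\to\bo{\term^{+}}\fml[2]$.

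For $\eqref{rule: TC}\Rightarrow\eqref{equation: ind}$: fix $\fml,\term$ and instantiate \eqref{rule: TC} by replacing $\fml[1]$ with $\chi\defeq\bo{\term}\fml\land\bo{\term^{+}}(\fml\to\bo{\term}\fml)$ and $\fml[2]$ with $\fml$. The premise $\vdash\chi\to\bo{\term}\fml$ is immediate; for $\vdash\chi\to\bo{\term}\chi$, use \eqref{equation: PDL K'} to reduce it to $\vdash\chi\to\bo{\term}\bo{\term}\fml$ and $\vdash\chi\to\bo{\term}\bo{\term^{+}}(\fml\to\bo{\term}\fml)$, where the latter follows directly from \eqref{equation: PDL *}, and the former from $\bo{\term}\fml$ together with the instance $\bo{\term^{+}}(\fml\to\bo{\term}\fml)\to\bo{\term}(\fml\to\bo{\term}\fml)$ of \eqref{equation: PDL *} (combined via \eqref{equation: PDL K'} and \eqref{rule: Mon}). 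Then \eqref{rule: TC} produces $\vdash\chi\to\bo{\term^{+}}\fml$, which is exactly the required instance of \eqref{equation: ind}.

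The one point that demands care — and the only real obstacle — is that the Segerberg presentation in \Cref{figure: PDL axioms} supplies only the \emph{left} unfolding \eqref{equation: PDL *}, not the ``right'' unfolding $\bo{\term^{+}}\fml\leftrightarrow\bo{\term}\fml\land\bo{\term^{+}}\bo{\term}\fml$. Each argument must therefore be arranged so that $\bo{\term^{+}}$ is only ever opened from the left (as in the two uses of \eqref{equation: PDL *} above); this is precisely why in $\eqref{rule: LI}\Rightarrow\eqref{rule: TC}$ I carry the invariant $\psi=\fml[1]\land\fml[2]$ and finish with \eqref{equation: PDL *} in the form displayed, rather than decomposing $\term^{+}$ as ``one $\term$-step followed by $\term^{+}$''. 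Everything beyond this is routine normal-modal-logic bookkeeping.
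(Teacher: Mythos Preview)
Your proof is correct and follows essentially the same three-step cycle as the paper: the paper also carries the invariant $\fml[1]\land\fml[2]$ for \eqref{rule: LI}$\Rightarrow$\eqref{rule: TC} and the invariant $\bo{\term}\fml\land\bo{\term^{+}}(\fml\to\bo{\term}\fml)$ for \eqref{rule: TC}$\Rightarrow$\eqref{equation: ind}, finishing each direction with the same appeal to the left unfolding \eqref{equation: PDL *}. The only cosmetic difference is that in the last step the paper derives $\chi\to\bo{\term}\chi$ in one shot via \eqref{rule: Mon} applied to a three-premise propositional tautology, whereas you split it into the two conjuncts $\bo{\term}\bo{\term}\fml$ and $\bo{\term}\bo{\term^{+}}(\fml\to\bo{\term}\fml)$ separately.
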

\begin{proof}
\proofcase{\eqref{rule: LI} from \eqref{equation: ind}}
We have:
\begin{center}\scalebox{1}{
\begin{prooftree}[small]
    \hypo{\fml \to \bo{\term}\fml}
    \hypo{\mathstrut}
    \infer1[\eqref{equation: ind}]{(\bo{\term[1]}\fml[1] \land \bo{\term[1]^+}(\fml[1] \to \bo{\term[1]}\fml[1])) \to \bo{\term[1]^+}\fml[1]}
    \hypo{\fml \to \bo{\term}\fml}
    \infer1[\eqref{rule: NEC}]{\bo{\term^+}(\fml \to \bo{\term}\fml)}
    \infer3[\eqref{rule: MP PROP}]{\fml[1] \to \bo{\term^+} \fml}
\end{prooftree}}
\end{center}

\proofcase{\eqref{rule: TC} from \eqref{rule: LI}}
We have:
\begin{center}\scalebox{.95}{
\begin{prooftree}[small]
    \hypo{\fml[1] \to \bo{\term}\fml[1]}
    \hypo{\fml[1] \to \bo{\term}\fml[2]}
    \hypo{\fml[1] \to \bo{\term}\fml[1]}
    \hypo{\fml[1] \to \bo{\term}\fml[2]}
    \hypo{\mathstrut}
    \infer1[\text{\nameref{equation: PROP}}]{\fml[1] \to \fml[2] \to (\fml[1] \land \fml[2])}
    \infer1[\eqref{rule: Mon}]{\bo{\term}\fml[1] \to \bo{\term}\fml[2] \to \bo{\term}(\fml[1] \land \fml[2])}
    \infer3[\eqref{rule: MP PROP}]{(\fml[1] \land \fml[2]) \to \bo{\term} (\fml[1] \land \fml[2])}
    \infer1[\eqref{rule: LI}]{(\fml[1] \land \fml[2]) \to \bo{\term^{+}}(\fml[1] \land \fml[2])}
    \infer1[\eqref{rule: MP PROP}]{\fml[1] \to \fml[2] \to \bo{\term^{+}}(\fml[1] \land \fml[2])}
    \infer1[\eqref{rule: Mon}]{\bo{\term}\fml[1] \to \bo{\term}\fml[2] \to \bo{\term}\bo{\term^{+}}(\fml[1] \land \fml[2])}
    \infer3[\eqref{rule: MP PROP}]{\fml[1] \to \bo{\term}\bo{\term^{+}}(\fml[1] \land \fml[2])}
\end{prooftree}}
\end{center}
By \text{\nameref{equation: PROP}} with \eqref{rule: Mon},
we also have
$\bo{\term}\bo{\term^{+}}(\fml[1] \land \fml[2]) \to \bo{\term}\bo{\term^{+}}\fml[2]$.
Combining these two "formulas@@PDL", $\fml[1] \to \bo{\term}\fml[2]$ (by assumption),
and $\bo{\term^{+}}\fml[2] \leftrightarrow \bo{\term}\bo{\term^{+}}\fml[2] \land \bo{\term}\fml[2]$ (by \eqref{equation: *}) using \eqref{rule: MP PROP} yields $\fml[1] \to \bo{\term^{+}} \fml[2]$.

\proofcase{\eqref{equation: ind} from \eqref{rule: TC}}
We first have:
\begin{center}\scalebox{.85}{
\begin{prooftree}[small]
    \hypo{\eqref{equation: *}}
    \hypo{\mathstrut}
    \infer1[\text{\nameref{equation: PROP}}]{\fml[1] \to
    (\fml[1] \to \bo{\term}\fml[1]) \to
    \bo{\term[1]^{+}}(\fml[1] \to \bo{\term}\fml[1]) \to
    (\bo{\term[1]}\fml[1] \land \bo{\term[1]^+}(\fml[1] \to \bo{\term[1]}\fml[1]))}
    \infer1[\eqref{rule: Mon}]{\bo{\term[1]}\fml[1] \to
    \bo{\term[1]}(\fml[1] \to \bo{\term}\fml[1]) \to
    \bo{\term[1]}\bo{\term[1]^{+}}(\fml[1] \to \bo{\term}\fml[1]) \to
    \bo{\term}(\bo{\term[1]}\fml[1] \land \bo{\term[1]^+}(\fml[1] \to \bo{\term[1]}\fml[1]))}
    \infer2[\eqref{rule: MP PROP}]{(\bo{\term[1]}\fml[1] \land \bo{\term[1]^+}(\fml[1] \to \bo{\term[1]}\fml[1])) \to \bo{\term}(\bo{\term[1]}\fml[1] \land \bo{\term[1]^+}(\fml[1] \to \bo{\term[1]}\fml[1]))}
\end{prooftree}
}
\end{center}
We thus have:
\begin{center}\scalebox{.85}{
\begin{prooftree}[small]
    \hypo{(\bo{\term[1]}\fml[1] \land \bo{\term[1]^+}(\fml[1] \to \bo{\term[1]}\fml[1])) \to \bo{\term}(\bo{\term[1]}\fml[1] \land \bo{\term[1]^+}(\fml[1] \to \bo{\term[1]}\fml[1]))}
    \hypo{\mathstrut}
    \infer1[\text{\nameref{equation: PROP}}]{(\bo{\term[1]}\fml[1] \land \bo{\term[1]^+}(\fml[1] \to \bo{\term[1]}\fml[1]))\to \bo{\term[1]}\fml[1]}
    \infer2[\eqref{rule: TC}]{(\bo{\term[1]}\fml[1] \land \bo{\term[1]^+}(\fml[1] \to \bo{\term[1]}\fml[1])) \to \bo{\term[1]^+}\fml[1]}
\end{prooftree}
}
\end{center}
Hence, this completes the proof.
\end{proof}

In particular, the "rules" \eqref{rule: LI} and \eqref{rule: TC} are "derivable" in $\HilbertstylePDL$.

\subsection{Congruence rules}\label{section: congruence}
Beyond \eqref{rule: PDL cong-f},
the following congruence "rules" are also "derivable" in $\HilbertstylePDL$, where $\fml[2]$ in $\set{\bl}_{\fml[2]}$ ranges over any "formulas@@PDL":\footnote{We do not need infinitely many hypotheses, but just for fitting to the congruence rules.}
\begin{align*}
    \begin{aligned}
    &
    \begin{prooftree}[small]
        \hypo{\fml[1]_1 \;\leftrightarrow\; \fml[2]_1}
        \hypo{\fml[1]_2 \;\leftrightarrow\; \fml[2]_2}
        \infer2{\fml[1]_1 \to \fml[1]_2 \;\leftrightarrow\; \fml[2]_1 \to \fml[2]_2}
    \end{prooftree}
    \hspace{2.em}
    \begin{prooftree}[small]
        \hypo{\fml[1] \leftrightarrow \fml[2]}
        \infer1{\bo{\term} \fml[1] \leftrightarrow \bo{\term} \fml[2]}
    \end{prooftree}
    \hspace{2.em}
    \begin{prooftree}[small]
        \hypo{\fml[2] \leftrightarrow \fml[3]}
        \infer1{\bo{\fml[2]?}\fml \leftrightarrow \bo{\fml[3]?}\fml}
    \end{prooftree}    
    \\
    &
    \begin{prooftree}[small]
        \hypo{\set{\bo{\term[1]_1}\fml[2] \leftrightarrow \bo{\term[2]_1}\fml[2]}_{\fml[2]}}
        \hypo{\set{\bo{\term[1]_2}\fml[2] \leftrightarrow \bo{\term[2]_2}\fml[2]}_{\fml[2]}}
        \infer2{\bo{\term[1]_1 \star \term[1]_2}\fml \leftrightarrow \bo{\term[2]_1 \star \term[2]_2}\fml}
    \end{prooftree}
    \mbox{ $\star \in \set{\compo, \union}$}
    \hspace{2.em}
    \begin{prooftree}[small]
        \hypo{\set{\bo{\term[1]}\fml[2] \leftrightarrow \bo{\term[2]}\fml[2]}_{\fml[2]}}
        \infer1{\bo{\term[1]^{+}}\fml \leftrightarrow \bo{\term[2]^{+}}\fml}
    \end{prooftree}
    \end{aligned}
    \tag*{\eqref{rule: cong}}
\end{align*}

\begin{proposition}\label{proposition: PDL cong}
    \eqref{rule: cong} are "derivable" in $\HilbertstylePDL$.
\end{proposition}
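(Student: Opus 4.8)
The plan is to dispatch the five congruence schemes of \eqref{rule: cong} one at a time, reusing the normal-modal-logic machinery already assembled in this section. The first two schemes are literally \eqref{rule: PDL cong-f}, so nothing is left to prove there. For the test scheme $\bo{\fml[2]?}\fml \leftrightarrow \bo{\fml[3]?}\fml$, I would rewrite both sides by \eqref{equation: PDL test} as $\fml[2] \to \fml$ and $\fml[3] \to \fml$; the hypothesis $\fml[2] \leftrightarrow \fml[3]$ gives $(\fml[2] \to \fml) \leftrightarrow (\fml[3] \to \fml)$ by \eqref{rule: MP PROP}, and chaining the three biconditionals by transitivity (again \eqref{rule: MP PROP} applied to a propositional tautology) yields the claim.

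For the fourth scheme with $\star = \compo$, i.e.\ $\bo{\term[1]_1 \compo \term[1]_2}\fml \leftrightarrow \bo{\term[2]_1 \compo \term[2]_2}\fml$, I would first rewrite both sides by \eqref{equation: PDL compo} as $\bo{\term[1]_1}\bo{\term[1]_2}\fml$ and $\bo{\term[2]_1}\bo{\term[2]_2}\fml$. The hypothesis instance $\bo{\term[1]_2}\fml \leftrightarrow \bo{\term[2]_2}\fml$ together with the box-congruence part of \eqref{rule: PDL cong-f} gives $\bo{\term[1]_1}\bo{\term[1]_2}\fml \leftrightarrow \bo{\term[1]_1}\bo{\term[2]_2}\fml$, and the hypothesis instance with auxiliary formula $\bo{\term[2]_2}\fml$ gives $\bo{\term[1]_1}\bo{\term[2]_2}\fml \leftrightarrow \bo{\term[2]_1}\bo{\term[2]_2}\fml$; transitivity of $\leftrightarrow$ through \eqref{rule: MP PROP} then closes the case. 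The case $\star = \union$ is analogous but simpler: \eqref{equation: PDL union} rewrites both sides into the conjunctions $\bo{\term[1]_1}\fml \land \bo{\term[1]_2}\fml$ and $\bo{\term[2]_1}\fml \land \bo{\term[2]_2}\fml$, and the two hypothesis instances $\bo{\term[1]_1}\fml \leftrightarrow \bo{\term[2]_1}\fml$ and $\bo{\term[1]_2}\fml \leftrightarrow \bo{\term[2]_2}\fml$ combine conjunctively via \eqref{rule: MP PROP}.

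The only scheme that needs more than local rewriting is the last one, $\bo{\term[1]^{+}}\fml \leftrightarrow \bo{\term[2]^{+}}\fml$ from the family $\bo{\term[1]}\fml[2] \leftrightarrow \bo{\term[2]}\fml[2]$ (ranging over all formulas $\fml[2]$). By symmetry in $\term[1]$ and $\term[2]$, it suffices to derive $\bo{\term[1]^{+}}\fml \to \bo{\term[2]^{+}}\fml$. For this I would invoke the transitive-closure rule \eqref{rule: TC}, which is derivable in $\HilbertstylePDL$ by \Cref{proposition: LI and TC}, instantiated with program $\term[2]$, invariant $\bo{\term[1]^{+}}\fml$, and target $\fml$; its two premises are $\bo{\term[1]^{+}}\fml \to \bo{\term[2]}\bo{\term[1]^{+}}\fml$ and $\bo{\term[1]^{+}}\fml \to \bo{\term[2]}\fml$. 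The second premise follows from $\bo{\term[1]^{+}}\fml \to \bo{\term[1]}\fml$, a propositional consequence of \eqref{equation: PDL *}, composed with the hypothesis instance $\bo{\term[1]}\fml \leftrightarrow \bo{\term[2]}\fml$; the first premise follows from $\bo{\term[1]^{+}}\fml \to \bo{\term[1]}\bo{\term[1]^{+}}\fml$, again a consequence of \eqref{equation: PDL *}, composed with the hypothesis instance $\bo{\term[1]}\bo{\term[1]^{+}}\fml \leftrightarrow \bo{\term[2]}\bo{\term[1]^{+}}\fml$. Then \eqref{rule: TC} yields $\bo{\term[1]^{+}}\fml \to \bo{\term[2]^{+}}\fml$, the symmetric derivation gives the converse, and \eqref{rule: MP PROP} assembles the biconditional.

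I expect the Kleene-plus scheme to be the only genuine obstacle, since it is the one place where one cannot argue by equivalence-preserving substitution inside the modality and must instead bring in the fixed-point rule; once \eqref{rule: TC} is available (equivalently \eqref{rule: LI} or \eqref{equation: PDL ind}, by \Cref{proposition: LI and TC}), the argument reduces to checking that $\bo{\term[1]^{+}}\fml$ is a $\term[2]$-invariant, which is immediate from \eqref{equation: PDL *} and the hypotheses. All the remaining steps are routine bookkeeping in normal modal logic.
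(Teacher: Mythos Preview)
Your proposal is correct and follows essentially the same approach as the paper: the propositional, box, test, $\compo$, and $\union$ cases are handled by rewriting with the corresponding Segerberg axiom and appealing to \eqref{rule: PDL cong-f} and \eqref{rule: MP PROP}, while the $\bl^{+}$ case is reduced to two applications of \eqref{rule: TC} (one per direction) with invariant $\bo{\term[1]^{+}}\fml$, the premises coming from \eqref{equation: PDL *} composed with the hypothesis instances. The paper's proof is identical in structure and in the choice of rules invoked.
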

\begin{proof}
    \proofcase{Case $\to$}
    By \eqref{rule: MP PROP}.
    \proofcase{Case $\bo{\term}$}
    By \eqref{rule: Mon}\eqref{rule: MP PROP}.
    \proofcase{Case $?$}
    By \eqref{equation: PDL test}\eqref{rule: MP PROP}.
    \proofcase{Case $\compo$}
    We have:
    \begin{center}
    \begin{prooftree}[small]
        \hypo{\bo{\term[1]_2} \fml \leftrightarrow \bo{\term[2]_2} \fml}
        \infer1[\eqref{rule: cong} for $\bo{\term[1]_1}$]{\bo{\term[1]_1}\bo{\term[1]_2} \fml \leftrightarrow \bo{\term[1]_1}\bo{\term[2]_2} \fml}
        \hypo{\bo{\term[1]_1}\bo{\term[2]_2} \fml \leftrightarrow \bo{\term[2]_1}\bo{\term[2]_2} \fml}
        \infer2[\eqref{rule: MP PROP}]{\bo{\term[1]_1}\bo{\term[1]_2} \fml \leftrightarrow \bo{\term[2]_1}\bo{\term[2]_2} \fml}
    \end{prooftree}
    \end{center}
    Thus by \eqref{equation: compo}\eqref{rule: MP PROP}, this case has been shown.
    
    \proofcase{Case $\union$}
    By \eqref{equation: union}\eqref{rule: MP PROP}, this case has been shown.

    \proofcase{Case $\bl^{+}$}
    We have:
    \begin{center}
    \begin{prooftree}[small]
        \hypo{\eqref{equation: *}}
        \hypo{\bo{\term[1]}\bo{\term[1]^{+}} \fml[1] \leftrightarrow \bo{\term[2]} \bo{\term[1]^{+}} \fml[1]}
        \infer2[\eqref{rule: MP PROP}]{\bo{\term[1]^{+}} \fml[1] \to \bo{\term[2]} \bo{\term[1]^{+}} \fml[1]}
        \hypo{\eqref{equation: *}}
        \hypo{\bo{\term[1]}\fml[1] \leftrightarrow \bo{\term[2]}\fml[1]}
        \infer2[\eqref{rule: MP PROP}]{\bo{\term[1]^+}\fml[1] \to \bo{\term[2]}\fml[1]}
        \infer2[\eqref{rule: TC}]{\bo{\term[1]^{+}} \fml[1] \to \bo{\term[2]^{+}} \fml[1]}
    \end{prooftree}
    \end{center}
    Similarly, we have $\bo{\term[2]^{+}} \fml[1] \to \bo{\term[1]^{+}} \fml[1]$.
    Hence, this completes the proof.
\end{proof}
  
\clearpage
\begin{scope}\knowledgeimport{PDLREwLAp}
\section{Appendix to \Cref{section: PDLREwLAp}~``\nameref*{section: PDLREwLAp}''}

\subsection{Proof of \Cref{proposition: REwLAp to PDLREwLAp}}\label{section: proposition: REwLAp to PDLREwLAp}
The following is an analog of \cite[\S 5]{fischerPropositionalDynamicLogic1979}.
The proof is done in a standard way, by ranging the valuation of the "fresh" "formula variable" $\afml$ arbitrary.
\begin{proposition*}[Restatement of \Cref{proposition: REwLAp to PDLREwLAp}]
\REwLAptoPDLREwLAp
\end{proposition*}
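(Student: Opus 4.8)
The plan is to prove both directions semantically; the key observation is that a formula variable not occurring in $\term[1]$ or $\term[2]$ can serve as a ``probe'' that detects any discrepancy between the relations $\semPDLREwLAp{\term[1]}{\struc}$ and $\semPDLREwLAp{\term[2]}{\struc}$. As a preliminary remark I would record two facts used freely below: (i) since $\term[1], \term[2] \in \termclassREwLAp{\vsig}$ contain no formula variables, $\afml$ occurs in neither, so $\semPDLREwLAp{\term[1]}{\struc}$ and $\semPDLREwLAp{\term[2]}{\struc}$ do not depend on $\afml^{\struc}$; and (ii) both $\GRELfinlin{}$ and $\GRELstfinlin{}$ are closed under replacing $\afml^{\struc}$ by an arbitrary subset of $\univ{\struc}$, because their defining conditions refer only to the underlying frame and, for $\GRELstfinlin{}$, to the structure obtained by erasing all $\afml^{\struc}$.

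For the forward direction, assume $\algclass \modelsfml \term[1] = \term[2]$ and fix any $\struc \in \algclass$. Then $\semPDLREwLAp{\term[1]}{\struc} = \semPDLREwLAp{\term[2]}{\struc}$, and since $\semPDLREwLAp{\bo{\term}\afml}{\struc}$ is obtained from $\semPDLREwLAp{\term}{\struc}$ and $\afml^{\struc}$ by the uniform semantic clause for $\bo{\term}$, we conclude $\semPDLREwLAp{\bo{\term[1]}\afml}{\struc} = \semPDLREwLAp{\bo{\term[2]}\afml}{\struc}$, hence $\semPDLREwLAp{\bo{\term[1]}\afml \leftrightarrow \bo{\term[2]}\afml}{\struc} = \univ{\struc}$. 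As $\struc \in \algclass$ was arbitrary, $\algclass \modelsfml \bo{\term[1]}\afml \leftrightarrow \bo{\term[2]}\afml$.

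For the converse I would argue by contraposition. Suppose $\algclass \not\modelsfml \term[1] = \term[2]$; then there is $\struc \in \algclass$ with $\semPDLREwLAp{\term[1]}{\struc} \neq \semPDLREwLAp{\term[2]}{\struc}$, and since the whole statement is symmetric in $\term[1], \term[2]$ I may pick $\tuple{u, v} \in \semPDLREwLAp{\term[1]}{\struc} \setminus \semPDLREwLAp{\term[2]}{\struc}$. Let $\struc'$ be obtained from $\struc$ by setting $\afml^{\struc'} \defeq \univ{\struc} \setminus \set{v}$ and leaving all other data unchanged. By remark (ii), $\struc' \in \algclass$, and by remark (i), $\semPDLREwLAp{\term[1]}{\struc'} = \semPDLREwLAp{\term[1]}{\struc}$ and $\semPDLREwLAp{\term[2]}{\struc'} = \semPDLREwLAp{\term[2]}{\struc}$. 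Now $u \notin \semPDLREwLAp{\bo{\term[1]}\afml}{\struc'}$, witnessed by $\tuple{u, v} \in \semPDLREwLAp{\term[1]}{\struc'}$ together with $v \notin \afml^{\struc'}$; on the other hand $u \in \semPDLREwLAp{\bo{\term[2]}\afml}{\struc'}$, because every $w$ with $\tuple{u, w} \in \semPDLREwLAp{\term[2]}{\struc'}$ satisfies $w \neq v$ (as $\tuple{u,v} \notin \semPDLREwLAp{\term[2]}{\struc'}$) and hence $w \in \afml^{\struc'}$. Thus $u \notin \semPDLREwLAp{\bo{\term[1]}\afml \leftrightarrow \bo{\term[2]}\afml}{\struc'}$, so $\algclass \not\modelsfml \bo{\term[1]}\afml \leftrightarrow \bo{\term[2]}\afml$, which is the desired contrapositive.

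The argument is routine, so I do not anticipate a real obstacle; the only point deserving explicit care is remark (ii), i.e.\ verifying that neither $\GRELfinlin{}$ nor $\GRELstfinlin{}$ constrains the interpretation of formula variables, so that the probe structure $\struc'$ really stays in the class. This fresh-variable device is the standard reduction of program equivalence to formula equivalence in \PDL (cf.\ \cite[\S 5]{fischerPropositionalDynamicLogic1979}), and nothing about the extra operators $\bl^{\capid}$, $\bl^{\capcomid}$, or $\bl^{\adom}$ interferes with it.
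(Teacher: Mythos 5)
Your proposal is correct and follows essentially the same route as the paper: the forward direction is the same one-line semantic observation, and the converse uses the identical ``probe'' valuation $\afml^{\struc'} \defeq \univ{\struc} \setminus \set{v}$ (the paper phrases it directly for arbitrary $\tuple{\vertex[1],\vertex[2]}$ rather than by contraposition, which is an immaterial difference). Your explicit care about closure of $\GRELfinlin{}$ and $\GRELstfinlin{}$ under re-valuation of formula variables matches the paper's parenthetical remark on the same point.
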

\begin{proof}
    \proofcase{($\Longrightarrow$)}
    Since $\semREwLAp{\term[1]}{\struc} = \semPDLREwLAp{\term[2]}{\struc}$ implies
    $\semPDLREwLAp{\bo{\term[1]}\afml}{\struc} = \semPDLREwLAp{\bo{\term[2]}\afml}{\struc}$.
    \proofcase{($\Longleftarrow$)}
    Suppose $\struc \in \algclass$ and $\vertex[1], \vertex[2] \in \univ{\struc}$.
    Let $\struc[2]$ be the $\struc$ in which
    $\afml[1]^{\struc[2]} \defeq \univ{\struc[1]} \setminus \set{\vertex[2]}$
    (observe that $\struc[2] \in \algclass$ holds in either cases when $\algclass$ is $\GRELfinlin{}$ or $\GRELstfinlin{}$).
    By assumption with the construction of $\struc[2]$,
    $\tuple{\vertex[1], \vertex[2]} \in \semPDLREwLAp{\term[1]}{\struc[2]}$
    "iff"
    $\tuple{\vertex[1], \vertex[2]} \in \semPDLREwLAp{\term[2]}{\struc[2]}$.
    Since $\afml[1]$ does not appear in $\term[1]$ and $\term[2]$,
    $\tuple{\vertex[1], \vertex[2]} \in \semPDLREwLAp{\term[1]}{\struc[1]}$
    "iff"
    $\tuple{\vertex[1], \vertex[2]} \in \semPDLREwLAp{\term[2]}{\struc[1]}$.
    Hence, this completes the proof.
\end{proof}
 
\subsection{Proof of \Cref{proposition: PDLREwLAp to REwLAp}}\label{section: proposition: PDLREwLAp to REwLAp}
Let $\vsig'$ be a set disjoint from $\vsig$ and $\psig$ having the same "cardinality" as $\psig$ and 
let $f$ be a bijection from $\psig$ to $\vsig'$.
\begin{definition}\label{definition: PDLREwLAp to REwLAp}
    The translation function $\intro*\trPDLREwLAptoREwLAp \colon \exprclassPDLREwLAp{\psig, \vsig} \to \termclassREwLAp{\vsig \dcup \vsig'}$ is inductively defined as follows:

    \vspace{-3ex}
    \begin{align*}
        \trPDLREwLAptoREwLAp(\afml) &\defeq f(\afml)^{\dom}, &
        \trPDLREwLAptoREwLAp(\fml[2] \to \fml[3]) &\defeq \trPDLREwLAptoREwLAp(\fml[2])^{\adom} \union \trPDLREwLAptoREwLAp(\fml[3]), \\
        \trPDLREwLAptoREwLAp(\falsec) &\defeq \emp, &
        \trPDLREwLAptoREwLAp(\bo{\term[2]} \fml[2]) &\defeq (\trPDLREwLAptoREwLAp(\term[2]) \compo \trPDLREwLAptoREwLAp(\fml[2])^{\adom})^{\adom},
    \end{align*}

    \vspace{-6ex}
    \begin{minipage}[t]{0.45\textwidth}\vspace{0pt}
    \begin{align*}
        \trPDLREwLAptoREwLAp(\aterm) &\defeq \aterm, \\
        \trPDLREwLAptoREwLAp(\term[2] \compo \term[3]) &\defeq \trPDLREwLAptoREwLAp(\term[2]) \compo \trPDLREwLAptoREwLAp(\term[3]),\\
        \trPDLREwLAptoREwLAp(\term[2] \union \term[3]) &\defeq \trPDLREwLAptoREwLAp(\term[2]) \union \trPDLREwLAptoREwLAp(\term[3]), \\
        \trPDLREwLAptoREwLAp(\term[2]^{+}) &\defeq \trPDLREwLAptoREwLAp(\term[2])^{+}, 
    \end{align*}
    \end{minipage}
    \begin{minipage}[t]{0.35\textwidth}\vspace{0pt}
    \begin{align*}
        \trPDLREwLAptoREwLAp(\term[2]^{\adom}) &\defeq \trPDLREwLAptoREwLAp(\term[2])^{\adom}, \\
        \trPDLREwLAptoREwLAp(\fml[2]?) &\defeq \trPDLREwLAptoREwLAp(\fml[2]), \\
        \trPDLREwLAptoREwLAp(\term[2]^{\capid}) &\defeq \trPDLREwLAptoREwLAp(\term[2])^{\capid}, \\
        \trPDLREwLAptoREwLAp(\term[2]^{\capcomid}) &\defeq \trPDLREwLAptoREwLAp(\term[2])^{\capcomid}.
    \end{align*}
    \end{minipage}

    \lipicsEnd
\end{definition}

\begin{lemma}\label{lemma: PDLREwLA to REwLA}
    For all "expressions@@PDLREwLAp" $\expr \in \exprclassPDLREwLAp{\psig, \vsig}$
    and all $\struc \in \GRELpreorder{}$ "st" 
    $\struc \modelsfml f(\afml)^{\dom} = \afml?$ for all $\afml \in \psig$,
    we have:
    \begin{itemize}
        \item if $\expr = \fml$ is a "formula", then
        $\struc \modelsfml \trPDLREwLAptoREwLAp(\fml) = \fml?$,
        \item if $\expr = \term$ is a "term", then
        $\struc \modelsfml \trPDLREwLAptoREwLAp(\term) = \term$.
    \end{itemize}
\end{lemma}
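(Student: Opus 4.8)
The plan is to prove the lemma by a mutual structural induction on $\expr$, treating uniformly the case where $\expr$ is a "formula@@PDLREwLAp" and the case where $\expr$ is a "term@@PDLREwLAp". Since $\struc \in \GRELpreorder{}$, both $\semREwLAp{\bl}{\struc}$ and $\semPDLREwLAp{\bl}{\struc}$ are well-defined on $\struc$, so every denotation below is meaningful. For the base cases: if $\expr = \afml$, then $\trPDLREwLAptoREwLAp(\afml) = f(\afml)^{\dom}$ and the desired equality $\struc \modelsfml f(\afml)^{\dom} = \afml?$ is precisely the standing hypothesis on $\struc$; if $\expr = \aterm$, then $\trPDLREwLAptoREwLAp(\aterm) = \aterm$ and the claim $\struc \modelsfml \aterm = \aterm$ is immediate; and if $\expr = \falsec$, then $\semREwLAp{\trPDLREwLAptoREwLAp(\falsec)}{\struc} = \semREwLAp{\emp}{\struc} = \emptyset = \semPDLREwLAp{\falsec?}{\struc}$.

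For the inductive cases $\term[2] \compo \term[3]$, $\term[2] \union \term[3]$, $\term[2]^{+}$, $\term[2]^{\adom}$, $\term[2]^{\capid}$, $\term[2]^{\capcomid}$, and $\fml[2]?$, the claim follows just by unfolding the definitions and applying the induction hypothesis: $\trPDLREwLAptoREwLAp$ commutes with each of the first six operators, each such operator is interpreted by the very same relational operation in "\REwLAp" and in "\PDLREwLAp", and for the test case one has $\trPDLREwLAptoREwLAp(\fml[2]?) = \trPDLREwLAptoREwLAp(\fml[2])$, so that $\semREwLAp{\trPDLREwLAptoREwLAp(\fml[2]?)}{\struc} = \semREwLAp{\trPDLREwLAptoREwLAp(\fml[2])}{\struc} = \semPDLREwLAp{\fml[2]?}{\struc}$ by the induction hypothesis on $\fml[2]$.

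The two cases that require a genuine computation are $\fml[2] \to \fml[3]$ and $\bo{\term[2]}\fml[2]$, and both rest on the observation that, on any $\struc \in \GRELpreorder{}$, the antidomain of a test relation is again a test relation, namely $\semREwLAp{(\fml[2]?)^{\adom}}{\struc} = \semPDLREwLAp{(\lnot \fml[2])?}{\struc}$, because every element of $\semPDLREwLAp{\fml[2]?}{\struc}$ has the form $\tuple{\vertex, \vertex}$ with $\vertex \in \semPDLREwLAp{\fml[2]}{\struc}$. Using this, together with the induction hypotheses $\semREwLAp{\trPDLREwLAptoREwLAp(\fml[2])}{\struc} = \semPDLREwLAp{\fml[2]?}{\struc}$, $\semREwLAp{\trPDLREwLAptoREwLAp(\fml[3])}{\struc} = \semPDLREwLAp{\fml[3]?}{\struc}$ and $\semREwLAp{\trPDLREwLAptoREwLAp(\term[2])}{\struc} = \semPDLREwLAp{\term[2]}{\struc}$, one computes, for the implication, $\semREwLAp{\trPDLREwLAptoREwLAp(\fml[2])^{\adom} \union \trPDLREwLAptoREwLAp(\fml[3])}{\struc} = \semPDLREwLAp{(\lnot \fml[2])?}{\struc} \cup \semPDLREwLAp{\fml[3]?}{\struc} = \semPDLREwLAp{(\fml[2] \to \fml[3])?}{\struc}$, and for the box, $\semREwLAp{\trPDLREwLAptoREwLAp(\term[2]) \compo \trPDLREwLAptoREwLAp(\fml[2])^{\adom}}{\struc}$ is the relation $\set{\tuple{\vertex, \vertex[2]} \mid \tuple{\vertex, \vertex[2]} \in \semPDLREwLAp{\term[2]}{\struc} \text{ and } \vertex[2] \notin \semPDLREwLAp{\fml[2]}{\struc}}$, whose antidomain is exactly $\set{\tuple{\vertex, \vertex} \mid \forall \vertex[2],\ \tuple{\vertex, \vertex[2]} \in \semPDLREwLAp{\term[2]}{\struc} \text{ implies } \vertex[2] \in \semPDLREwLAp{\fml[2]}{\struc}}$, which equals $\semPDLREwLAp{(\bo{\term[2]}\fml[2])?}{\struc}$.

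I expect no conceptual obstacle; the only delicate point is the bookkeeping of the two-sorted mutual induction (a "formula@@PDLREwLAp" occurs inside a "term@@PDLREwLAp" via $\fml[2]?$, and a "term@@PDLREwLAp" occurs inside a "formula@@PDLREwLAp" via $\bo{\term[2]}\fml[2]$), so the induction hypothesis must be carried simultaneously for both sorts, and the antidomain-of-a-test identity above should be checked carefully against the definition of $\bl^{\adom}$ on $\pset(\strucuniv^{\struc})$.
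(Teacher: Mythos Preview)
Your proposal is correct and follows essentially the same approach as the paper: a mutual structural induction on $\expr$, with the only nontrivial cases being $\fml[2] \to \fml[3]$ and $\bo{\term[2]}\fml[2]$, handled via the identity $\semREwLAp{(\fml[2]?)^{\adom}}{\struc} = \semPDLREwLAp{(\lnot\fml[2])?}{\struc}$. You supply more detail than the paper's terse ``by easy induction'', but the argument is the same.
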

\begin{proof}
    By easy induction on $\expr$.
    We only write for the following selected rules; the other cases are immediately shown by IH.

    \proofcase{Case $\fml = \fml[2] \to \fml[3]$}
    $\struc \modelsfml~
        \trPDLREwLAptoREwLAp(\fml[2] \to \fml[3])
        = \trPDLREwLAptoREwLAp(\fml[2])^{\adom} \union \trPDLREwLAptoREwLAp(\fml[3])
        =_{\text{IH}} (\fml[2]?)^{\adom} \union \fml[3]?
        = (\fml[2] \to \fml[3])?$.

    \proofcase{Case $\fml = \bo{\term[2]} \fml[2]$}
    $\struc \modelsfml~
        \trPDLREwLAptoREwLAp(\bo{\term[2]} \fml[2])
        = (\trPDLREwLAptoREwLAp(\term[2]) \compo \trPDLREwLAptoREwLAp(\fml[2])^{\adom})^{\adom}
        =_{\text{IH}} (\term[2] \compo \fml[2]?^{\adom})^{\adom}
        = \bo{\term[2]}\fml[2]?$. 
\end{proof}

\begin{lemma}\label{lemma: PDLREwLAp to REwLAp 2}
    For all $\fml \in \fmlclassPDLREwLAp{\psig, \vsig}$, we have:
    \[\GRELfinlin{} \modelsfml \trPDLREwLAptoREwLAp(\fml) = \id
    \quad\iff\quad \GRELfinlin{} \modelsfml \fml.\]
\end{lemma}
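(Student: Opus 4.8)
The plan is to obtain the lemma as an immediate corollary of \Cref{lemma: PDLREwLA to REwLA}: that lemma guarantees that, whenever a generalized structure $\struc$ satisfies $f(\afml)^{\dom} = \afml?$ for every $\afml \in \psig$, the translation sends each $\PDLREwLAp$ formula $\fml$ to a $\REwLAp$ term $\trPDLREwLAptoREwLAp(\fml)$ whose denotation in $\struc$ is exactly the test $\semPDLREwLAp{\fml?}{\struc}$ of $\fml$. So the strategy on each side of the claimed equivalence is the same: take an arbitrary structure, expand it without touching its frame to one meeting that hypothesis, invoke \Cref{lemma: PDLREwLA to REwLA}, and read off the result using that a term denoting $\semPDLREwLAp{\fml?}{\struc}$ equals $\diagonal_{\univ{\struc}}$ exactly when $\semPDLREwLAp{\fml}{\struc} = \univ{\struc}$. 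I would record upfront the two trivialities this relies on: expanding the signature keeps a structure in $\GRELfinlin{}$ (the frame is unchanged), hence in $\GRELpreorder{}$ as \Cref{lemma: PDLREwLA to REwLA} requires; and the denotation of an expression is unaffected by the interpretation of variables not occurring in it.

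For ($\Longleftarrow$): assume $\GRELfinlin{} \modelsfml \fml$ and fix an arbitrary $\struc \in \GRELfinlin{\vsig \dcup \vsig'}$. Expand it to $\struc[2]$ over $\psig$, $\vsig$, $\vsig'$ by keeping the frame and the interpretations of $\vsig \dcup \vsig'$ and setting $\afml^{\struc[2]} \defeq \fdom(f(\afml)^{\struc})$ for each $\afml \in \psig$. Then $\semREwLAp{f(\afml)^{\dom}}{\struc[2]} = \semPDLREwLAp{\afml?}{\struc[2]}$ by the definition of $\bl^{\dom}$, so \Cref{lemma: PDLREwLA to REwLA} gives $\struc[2] \modelsfml \trPDLREwLAptoREwLAp(\fml) = \fml?$. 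Since $\semPDLREwLAp{\fml}{\struc[2]} = \univ{\struc[2]}$ by assumption, $\semPDLREwLAp{\fml?}{\struc[2]} = \diagonal_{\univ{\struc[2]}} = \id$, hence $\semREwLAp{\trPDLREwLAptoREwLAp(\fml)}{\struc} = \semREwLAp{\trPDLREwLAptoREwLAp(\fml)}{\struc[2]} = \id$ (the first equality because $\struc$ and $\struc[2]$ agree on all term variables of $\trPDLREwLAptoREwLAp(\fml)$, which all lie in $\vsig \dcup \vsig'$). As $\struc$ was arbitrary, $\GRELfinlin{} \modelsfml \trPDLREwLAptoREwLAp(\fml) = \id$.

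For ($\Longrightarrow$), the argument is symmetric: assume $\GRELfinlin{} \modelsfml \trPDLREwLAptoREwLAp(\fml) = \id$ and fix $\struc \in \GRELfinlin{\psig, \vsig}$. Expand it to $\struc[2]$ over $\psig$, $\vsig$, $\vsig'$ by keeping the frame and the interpretations of $\psig$ and $\vsig$ and setting $f(\afml)^{\struc[2]} \defeq \semPDLREwLAp{\afml?}{\struc}$; this is a legitimate term-variable interpretation because $\semPDLREwLAp{\afml?}{\struc} \subseteq \diagonal_{\univ{\struc}} \subseteq \strucuniv^{\struc}$, using reflexivity of the finite linear order. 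Again $\semREwLAp{f(\afml)^{\dom}}{\struc[2]} = \semPDLREwLAp{\afml?}{\struc[2]}$, so \Cref{lemma: PDLREwLA to REwLA} yields $\struc[2] \modelsfml \trPDLREwLAptoREwLAp(\fml) = \fml?$; by assumption $\semREwLAp{\trPDLREwLAptoREwLAp(\fml)}{\struc[2]} = \diagonal_{\univ{\struc[2]}}$, so $\semPDLREwLAp{\fml?}{\struc[2]} = \diagonal_{\univ{\struc[2]}}$, i.e.\ $\semPDLREwLAp{\fml}{\struc[2]} = \univ{\struc[2]}$, hence $\semPDLREwLAp{\fml}{\struc} = \univ{\struc}$ since $\fml$ uses no variable from $\vsig'$. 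As $\struc$ was arbitrary, $\GRELfinlin{} \modelsfml \fml$.

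I do not expect a real obstacle here: all the content sits in \Cref{lemma: PDLREwLA to REwLA}, and the rest is bookkeeping about signature expansions. The two spots that deserve a sentence of care are the check that $\semPDLREwLAp{\afml?}{\struc}$ is an admissible interpretation of a term variable inside $\GRELfinlin{}$ (which is exactly where reflexivity of the order enters), and being explicit that an expansion stays in $\GRELfinlin{}$ and leaves unchanged the denotations of expressions not mentioning the newly interpreted variables.
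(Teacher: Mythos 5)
Your proposal is correct and takes essentially the same route as the paper: both reduce the claim to \Cref{lemma: PDLREwLA to REwLA} by restricting attention to structures satisfying $f(\afml)^{\dom} = \afml?$, and both justify that restriction by observing that one can freely reinterpret $\afml$ (resp.\ $f(\afml)$) since it does not occur in $\trPDLREwLAptoREwLAp(\fml)$ (resp.\ in $\fml$). The only cosmetic difference is that you spell out the two expansions explicitly where the paper compresses the argument into a chain of equivalences with parenthetical remarks.
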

\begin{proof}
    We have:
    \begin{align*}
        &\GRELfinlin{} \modelsfml \trPDLREwLAptoREwLAp(\fml) = \id\\
        &\iff \struc \modelsfml \trPDLREwLAptoREwLAp(\fml) = \id
        \mbox{ for all $\struc \in \GRELfinlin{}$ "st" $\struc \modelsfml f(\afml)^{\dom} = \afml?$ for $\afml \in \psig$}\\
        \tag*{\footnotesize (For $\Longleftarrow$, since each $\afml \in \psig$ does not occur in $\trPDLREwLAptoREwLAp(\fml)$,}\\
        \tag*{\footnotesize we can reset $\afml^{\struc}$ so that $\struc \modelsfml f(\afml)^{\dom} = \afml?$ holds.)}\\
        &\iff \struc \modelsfml \fml? = \id
        \mbox{ for all $\struc \in \GRELfinlin{}$ "st" $\struc \modelsfml f(\afml)^{\dom} = \afml?$ for $\afml \in \psig$}
        \tag{\Cref{lemma: PDLREwLA to REwLA}} \\
        &\iff \GRELfinlin{} \modelsfml \fml? = \id
        \tag*{\footnotesize (For $\Longrightarrow$, since each $f(\afml) \in \vsig'$ do not occur in $\fml$,}\\
        \tag*{\footnotesize we can reset $f(\afml)^{\struc}$ so that $\struc \modelsfml f(\afml)^{\dom} = \afml?$ holds.)}\\
        &\iff \GRELfinlin{} \modelsfml \fml. \tag*{\qedhere}
    \end{align*}
\end{proof}

\begin{proposition*}[Restatement of \Cref{proposition: PDLREwLAp to REwLAp}]%
\PDLREwLAptoREwLAp
\end{proposition*}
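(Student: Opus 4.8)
The plan is to use the reduction that sends a \PDLREwLAp{} formula $\fml$ to the \REwLAp{} equation $\trPDLREwLAptoREwLAp(\fml) = \id$, where $\trPDLREwLAptoREwLAp$ is the translation of \Cref{definition: PDLREwLAp to REwLAp}. This map is computable in polynomial (indeed linear) time: $\trPDLREwLAptoREwLAp$ is a single bottom-up traversal of the syntax tree in which each node contributes only a bounded number of output symbols, and it introduces only polynomially many auxiliary term variables from $\vsig'$.

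For $\GRELfinlin{}$ there is nothing new to do, since the correctness of this reduction is exactly \Cref{lemma: PDLREwLAp to REwLAp 2}: $\GRELfinlin{} \modelsfml \trPDLREwLAptoREwLAp(\fml) = \id$ iff $\GRELfinlin{} \modelsfml \fml$. (That lemma is built from \Cref{lemma: PDLREwLA to REwLA} --- the translation preserves semantics on every structure interpreting each $f(\afml)^{\dom}$ as $\afml?$ --- together with the observation that on $\GRELfinlin{}$ one may freely re-interpret the $\vsig'$-variables as sub-identities, respectively the $\psig$-variables as the domains of the $\vsig'$-relations, so as to enforce that condition.)

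The case of $\GRELstfinlin{}$ is where the actual work lies, because this naive reduction is \emph{not} sound there: the members of $\GRELstfinlin{}$ are string structures $\wordstruc^{\word}$, in which every non-final vertex has exactly one outgoing edge carrying exactly one letter and there are no self-loops. Hence $f(\afml)^{\dom}$ can only denote a set of non-final vertices, and two distinct encoding variables $f(\afml), f(\afml[2])$ (for distinct formula variables $\afml \ne \afml[2]$) can never both be active at the same vertex; so the re-interpretation step used for $\GRELfinlin{}$ breaks as soon as $\fml$ contains two formula variables, or already fails at the last vertex of some string structure. The fix I would use is a \emph{block-gadget} encoding over $\vsig$ enlarged by two fresh term variables (``flag'' letters): a string structure $\struc$ over $\vsig$ together with a valuation of the $k$ formula variables of $\fml$ is encoded by the string structure carrying, for each vertex $\vertex[1]$ of $\struc$, a block of $k$ flag edges (the $j$-th flag recording whether $\fml$'s $j$-th formula variable holds at $\vertex[1]$), these blocks interleaved with copies of the original letter edges, the final block carrying no trailing letter edge. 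Each vertex of $\struc$ then corresponds to a ``block-start'' vertex of the encoding; one rewrites $\fml$ structurally, turning its $j$-th formula variable into the test that reads off the $j$-th upcoming flag, turning each term variable $\aterm$ into ``skip a flag block, then take an $\aterm$-edge'', and keeping the remaining connectives. This rewriting has polynomial size.

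The main obstacle is to wrap this into a single formula-variable-free \PDLREwLAp{} formula $\fml^{\sharp}$ over the enlarged alphabet whose validity on $\GRELstfinlin{}$ is equivalent to validity of $\fml$ on $\GRELstfinlin{}$; once $\fml^{\sharp}$ has no formula variables, feeding it through the $\GRELstfinlin{}$-analogue of \Cref{lemma: PDLREwLAp to REwLAp 2} is immediate, because $\trPDLREwLAptoREwLAp$ then introduces no new term variables and \Cref{lemma: PDLREwLA to REwLA} applies with vacuous hypothesis. The difficulty is that $\GRELstfinlin{}$ ranges over \emph{all} string structures over the enlarged alphabet and over \emph{all} vertices, whereas we only wish to quantify over well-shaped (block-gadget) structures and over block-start vertices. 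I would express ``the suffix from here is block-shaped'' and ``here is a block-start vertex'' by \PDLREwLAp{} formulas --- using the Kleene-plus operator to capture the fixpoint over block-steps, an end-of-word test $\bo{\sum_{\aterm} \aterm}\falsec$, and tests checking that each block has exactly $k$ flag edges followed by a $\vsig$-edge --- and set $\fml^{\sharp}$ to be ``(block-start of a block-shaped suffix) $\to$ (rewrite of $\fml$)''. On a genuinely block-shaped structure this forces the rewrite of $\fml$ at every block-start vertex, i.e.\ $\fml$ at every vertex of the encoded $\struc$; on any other string structure the antecedent holds nowhere, so $\fml^{\sharp}$ is vacuously valid there. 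The bulk of the argument is then to verify this equivalence by induction on $\fml$ --- a bijection between (structure, valuation) pairs and block-shaped structures, compatibility of the rewrite with \PDLREwLAp{} semantics along block-start vertices, and the vacuity claim --- after which composing the block-gadget step with the naive reduction gives the desired polynomial-time reduction for $\GRELstfinlin{}$.
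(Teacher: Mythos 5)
Your $\GRELfinlin{}$ half is exactly the paper's argument: the appendix establishes precisely the chain of equivalences you describe (\Cref{lemma: PDLREwLAp to REwLAp 2}, obtained from \Cref{lemma: PDLREwLA to REwLA} by resetting $\afml^{\struc}$, respectively $f(\afml)^{\struc}$, so that $f(\afml)^{\dom} = \afml?$ holds), and the proposition is derived from it in one line. For $\GRELstfinlin{}$ you take a genuinely different route, and you are right to: the paper's proof of \Cref{proposition: PDLREwLAp to REwLAp} also just cites \Cref{lemma: PDLREwLAp to REwLAp 2}, which is stated and proved only for $\GRELfinlin{}$, and the resetting step is unavailable on word structures for exactly the reasons you give. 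In fact the naive reduction is unsound there, not merely unproved: for distinct $\afml \in \psig$ and $\aterm \in \vsig$, the formula $\fml = \afml \to \bo{\aterm}\falsec$ is falsified at position $0$ of $\wordstruc^{\aterm}$ with $\afml$ true there, yet $\trPDLREwLAptoREwLAp(\fml)$ denotes the same relation as $f(\afml)^{\adom} \union \aterm^{\adom}$, which equals $\id$ on every word structure over $\vsig \dcup \vsig'$ because no position carries both an $f(\afml)$-labelled edge and an $\aterm$-labelled edge; so the target equation is valid while the source formula is not. Your block-gadget repair (flag blocks encoding the $\psig$-valuation, a rewriting of $\fml$ that reads the flags and skips the blocks, relativization to block-shaped suffixes) is essentially the same device the paper itself deploys in the complexity appendix to eliminate formula variables and collapse the alphabet, so there is no obstacle of principle; what remains is genuine work---the relativizing formula, the decoding correspondence, and the compatibility induction, in particular for $\bl^{\adom}$, $\bl^{\capid}$ and $\bl^{\capcomid}$ across block boundaries and at the final, letter-less block. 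In short: on $\GRELfinlin{}$ you match the paper; on $\GRELstfinlin{}$ your detour is not optional but necessary, and your sketch is a plausible way to close what is a real gap in the paper's own one-line proof.
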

\begin{proof}
    By \Cref{lemma: PDLREwLAp to REwLAp 2}.
\end{proof}

\subsection{Soundness of \Cref{theorem: PDL REwLA completeness}}\label{section: soundness}
In this section, we show the soundness of \Cref{theorem: PDL REwLA completeness}.

\begin{proof}[Proof of the soundness of \Cref{theorem: PDL REwLA completeness}]
We show that each "axiom" in \Cref{figure: PDLREwLAp axioms} is "valid" on $\GRELfinlin{}$.

\paragraph{Axioms "valid" on $\GRELpreorder{}$}
Below are "valid" on $\GRELpreorder{}$ ($\supseteq \GRELfinlin{}$).

\proofcase{For \text{\nameref{equation: PDL}}}
Every "valid" "\PDL" "formula@@PDL" $\fml$ satisfies $\REL{} \modelsfml \fml$.
As $\strucuniv$ does not affect to the evaluation of $\fml$, we have $\GRELpreorder{} \modelsfml \fml$.

As $\GRELpreorder{}$ corresponds a class of algebras of binary relations,
we observe that the following "substitution" "rule" holds on $\GRELpreorder{}$:
\begin{gather}
    \begin{prooftree}
    \hypo{\fml[1]}
    \infer1{\fml[1]\exprsubst{\Theta}}
    \end{prooftree}    
    \tag{Subst} \label{rule: subst}
\end{gather}
\AP
Here, for a "formula@@general" $\fml[1]$ and a "substitution" $\Theta$ mapping
each "term variable" to a "term@@general" and each "formula variable" to a "formula@@general",
we write $\fml[1]\intro*\exprsubst{\Theta}$ for
the $\fml[1]$ in which each "term variable" $\aterm$ has been replaced with $\Theta(\aterm)$
and each "formula variable" $\afml$ has been replaced with $\Theta(\afml)$.

\subproofcase{Proof of \eqref{rule: subst}}
For an $\struc \in \GRELpreorder{}$,
let $\struc\intro*\strucsubst{\Theta}$ denote the $\struc$ in which
$\aterm^{\struc\strucsubst{\Theta}} \defeq \semPDL{\Theta(\aterm)}{\struc}$ and 
$\afml^{\struc\strucsubst{\Theta}} \defeq \semPDL{\Theta(\afml)}{\struc}$.
By easy induction on $\expr$,
we have $\semPDL{\expr[1]\strucsubst{\Theta}}{\struc} = \semPDL{\expr[1]}{\struc\strucsubst{\Theta}}$.
Since $\struc\strucsubst{\Theta} \in \GRELpreorder{}$,
we have $\GRELpreorder{} \modelsfml \fml[1]\strucsubst{\Theta}$.
Hence, \text{\nameref{equation: PDL}} is "valid" on $\GRELpreorder{}$.

\proofcase{For \text{\nameref{equation: adom}}}
Because $\semPDLREwLAp{\term^{\adom}}{\struc} = \semPDLREwLAp{\bo{\term}\falsec?}{\struc}$ holds for all $\struc \in \GRELpreorder{}$.

\proofcase{For \eqref{equation: T capid}}
For all $\struc \in \GRELpreorder{}$,
we have:
$\tuple{x, x} \in \semPDLREwLAp{\term^{\capid}}{\struc}$ "iff"
$\tuple{x, x} \in \semPDLREwLAp{\dia{\term^{\capid}}\truec?}{\struc}$.
Thus, $\semPDLREwLAp{\term^{\capid}}{\struc} = \semPDLREwLAp{\dia{\term^{\capid}}\truec?}{\struc}$.

\proofcase{For \eqref{equation: capcomid-union-capid}}
By $\semPDLREwLAp{\term}{\struc} = \semPDLREwLAp{\term^{\capid} \union \term^{\capcomid}}{\struc}$ for all $\struc \in \GRELpreorder{}$.

\proofcase{For \eqref{equation: capid-union}}
By $\semPDLREwLAp{(\term[1] \union \term[2])^{\capid}}{\struc} = \semPDLREwLAp{\term[1]^{\capid} \union \term[2]^{\capid}}{\struc}$ for all $\struc \in \GRELpreorder{}$.

\proofcase{For \eqref{equation: capcomid-union}}
By $\semPDLREwLAp{(\term[1] \union \term[2])^{\capcomid}}{\struc} = \semPDLREwLAp{\term[1]^{\capcomid} \union \term[2]^{\capcomid}}{\struc}$ for all $\struc \in \GRELpreorder{}$.

\proofcase{For \eqref{equation: capid-adom}}
By $\semPDLREwLAp{(\term[1]^{\adom})^{\capid}}{\struc} = \semPDLREwLAp{\term[1]^{\adom}}{\struc}$ for all $\struc \in \GRELpreorder{}$.

\proofcase{For \eqref{equation: capcomid-adom}}
We have $\semPDLREwLAp{(\term[1]^{\adom})^{\capcomid}}{\struc} = \semPDLREwLAp{\falsec?}{\struc}$ for all $\struc \in \GRELpreorder{}$.
We thus have $\semPDLREwLAp{\bo{(\fml[2]?)^{\capcomid}}\fml}{\struc} = \semPDLREwLAp{\truec}{\struc}$ for all $\struc \in \GRELpreorder{}$.

\proofcase{For \eqref{equation: capid-test}}
By $\semPDLREwLAp{(\fml[2]?)^{\capid}}{\struc} = \semPDLREwLAp{\fml[2]?}{\struc}$ for all $\struc \in \GRELpreorder{}$.

\proofcase{For \eqref{equation: capcomid-test}}
We have $\semPDLREwLAp{(\fml[2]?)^{\capcomid}}{\struc} = \semPDLREwLAp{\falsec?}{\struc}$ for all $\struc \in \GRELpreorder{}$.
We thus have $\semPDLREwLAp{\bo{(\fml[2]?)^{\capcomid}}\fml}{\struc} = \semPDLREwLAp{\truec}{\struc}$ for all $\struc \in \GRELpreorder{}$.

\proofcase{For \eqref{equation: capid-capid}}
By $\semPDLREwLAp{(\term[1]^{\capid})^{\capid}}{\struc} = \semPDLREwLAp{\term[1]^{\capid}}{\struc}$ for all $\struc \in \GRELpreorder{}$.

\proofcase{For \eqref{equation: capcomid-capid}}
We have $\semPDLREwLAp{(\term[1]^{\capcomid})^{\capid}}{\struc} = \semPDLREwLAp{\falsec?}{\struc}$ for all $\struc \in \GRELpreorder{}$.
We thus have $\semPDLREwLAp{\bo{(\term[1]^{\capcomid})^{\capid}}\fml}{\struc} = \semPDLREwLAp{\truec}{\struc}$ for all $\struc \in \GRELpreorder{}$.

\proofcase{For \eqref{equation: capid-capcomid}}
We have $\semPDLREwLAp{(\term[1]^{\capid})^{\capcomid}}{\struc} = \semPDLREwLAp{\falsec?}{\struc}$ for all $\struc \in \GRELpreorder{}$.
We thus have $\semPDLREwLAp{\bo{(\term[1]^{\capid})^{\capcomid}}\fml}{\struc} = \semPDLREwLAp{\truec}{\struc}$ for all $\struc \in \GRELpreorder{}$.

\proofcase{For \eqref{equation: capcomid-capcomid}}
By $\semPDLREwLAp{(\term[1]^{\capcomid})^{\capcomid}}{\struc} = \semPDLREwLAp{\term[1]^{\capcomid}}{\struc}$ for all $\struc \in \GRELpreorder{}$.

\paragraph{Axioms "valid" on $\GRELpartialorder{}$}
We write $\AP\intro*\GRELpartialorder{}$ for the class of all $\struc \in \GREL {}$ "st" $\strucuniv^{\struc}$ is a "partial order".
Below "axioms" are "valid" on $\GRELpartialorder{}$ ($\supseteq \GRELfinlin{}$).
We show them using antisymmetry:
$\tuple{x, y}, \tuple{y, x} \in \strucuniv^{\struc} \Longrightarrow x = y$.

\proofcase{For \eqref{equation: capid-compo}}
We prove $\semPDLREwLAp{(\term[1] \term[2])^{\capid}}{\struc} = \semPDLREwLAp{\term[1]^{\capid} \term[2]^{\capid}}{\struc}$.

\subproofcase{($\supseteq$)}
Suppose $\tuple{x, x} \in \semPDLREwLAp{\term[1]^{\capid} \term[2]^{\capid}}{\struc}$.
We then have $\tuple{x, x} \in \semPDLREwLAp{\term[1]}{\struc}$ and
$\tuple{x, x} \in \semPDLREwLAp{\term[2]}{\struc}$.
Hence, we have $\tuple{x, x} \in \semPDLREwLAp{(\term[1] \term[2])^{\capid}}{\struc}$.

\subproofcase{($\subseteq$)}
Suppose $\tuple{x, x} \in \semPDLREwLAp{(\term[1] \term[2])^{\capid}}{\struc}$.
Then there is some $y$ such that
$\tuple{x, y} \in \semPDLREwLAp{\term[1]}{\struc}$ and
$\tuple{y, x} \in \semPDLREwLAp{\term[2]}{\struc}$.
As $\struc \in \GRELpartialorder{}$, we have $x = y$. 
Thus, $\tuple{x, x} \in \semPDLREwLAp{\term[1]^{\capid} \term[2]^{\capid}}{\struc}$.

\proofcase{For \eqref{equation: capcomid-compo}}
We prove $\semPDLREwLAp{(\term[1] \term[2])^{\capcomid}}{\struc} = \semPDLREwLAp{\term[1]^{\capcomid} \term[2]^{\capid}
\union \term[1]^{\capid} \term[2]^{\capcomid}
\union \term[1]^{\capcomid} \term[2]^{\capcomid}}{\struc}$.

\subproofcase{($\supseteq$)}
Suppose $\tuple{x, y} \in \semPDLREwLAp{\term[1]^{\capcomid} \term[2]^{\capid}}{\struc}$.
We then have $\tuple{x, y} \in \semPDLREwLAp{\term[1]}{\struc}$,
$\tuple{y, y} \in \semPDLREwLAp{\term[2]}{\struc}$, and $x \neq y$.
Hence, we have $\tuple{x, y} \in \semPDLREwLAp{(\term[1] \term[2])^{\capcomid}}{\struc}$.
Similarly for the case when $\tuple{x, y} \in \semPDLREwLAp{\term[1]^{\capid} \term[2]^{\capcomid}}{\struc}$.
Suppose $\tuple{x, y} \in \semPDLREwLAp{\term[1]^{\capcomid} \term[2]^{\capcomid}}{\struc}$.
Then there is some $z$ such that
$\tuple{x, z} \in \semPDLREwLAp{\term[1]}{\struc}$,
$\tuple{z, y} \in \semPDLREwLAp{\term[2]}{\struc}$,
$x \neq z$, and $z \neq y$.
We then have $x \neq y$ (if not, as $\struc \in \GRELpartialorder{}$, $x = z = y$, thus reaching a contradiction).
Hence, we have $\tuple{x, y} \in \semPDLREwLAp{(\term[1] \term[2])^{\capcomid}}{\struc}$.

\subproofcase{($\subseteq$)}
Suppose $\tuple{x, y} \in \semPDLREwLAp{(\term[1] \term[2])^{\capcomid}}{\struc}$.
Then $x \neq y$ and there is some $z$ such that
$\tuple{x, z} \in \semPDLREwLAp{\term[1]}{\struc}$ and
$\tuple{z, y} \in \semPDLREwLAp{\term[2]}{\struc}$.
As $\struc \in \GRELpartialorder{}$, 
we have either (i) $x = z$ and $z \neq y$, (ii) $x \neq z$ and $z = y$, or (iii) $x \neq z$ and $z \neq y$. 
Hence, we have $\tuple{x, y} \in \semPDLREwLAp{\term[1]^{\capcomid} \term[2]^{\capid}
\union \term[1]^{\capid} \term[2]^{\capcomid}
\union \term[1]^{\capcomid} \term[2]^{\capcomid}}{\struc}$.

\proofcase{For \eqref{equation: capid-*}}
We prove $\semPDLREwLAp{(\term[1]^{+})^{\capid}}{\struc} = \semPDLREwLAp{\term[1]^{\capid}}{\struc}$.

\subproofcase{($\supseteq$)}
Suppose $\tuple{x, x} \in \semPDLREwLAp{\term[1]^{\capid}}{\struc}$.
Then $\tuple{x, x} \in \semPDLREwLAp{(\term[1]^{+})^{\capid}}{\struc}$ clearly holds.

\subproofcase{($\subseteq$)}
Suppose $\tuple{x, x} \in \semPDLREwLAp{(\term[1]^{+})^{\capid}}{\struc}$.
Then there are $n \ge 0$ and $x_0, \dots, x_{n}$ with $x_0 = x_n = x$ such that
$\tuple{x_{i-1}, x_i} \in \semPDLREwLAp{\term[1]}{\struc}$ for all $i \in \rangeone{n}$.
As $\struc \in \GRELpartialorder{}$, we have $x = x_i$ for all $i \in \range{0}{n}$.
Hence, we have $\tuple{x, x} \in \semPDLREwLAp{\term[1]^{\capid}}{\struc}$.

\proofcase{For \eqref{equation: capcomid-*}}
We prove $\semPDLREwLAp{(\term[1]^{+})^{\capcomid}}{\struc} = \semPDLREwLAp{(\term[1]^{\capcomid})^{+}}{\struc}$.

\subproofcase{($\supseteq$)}
Suppose $\tuple{x, y} \in \semPDLREwLAp{(\term[1]^{\capcomid})^{+}}{\struc}$.
Then there are $n \ge 1$ and $x_0, \dots, x_{n}$ with $x_0 = x$ and $x_n = y$ such that
$\tuple{x_{i-1}, x_i} \in \semPDLREwLAp{\term[1]}{\struc}$ and $x_{i-1} \neq x_{i}$ for all $i \in \rangeone{n}$.
As $\struc \in \GRELpartialorder{}$, we have $x \neq y$.
Hence, we have $\tuple{x, y} \in \semPDLREwLAp{(\term[1]^{+})^{\capcomid}}{\struc}$.

\subproofcase{($\subseteq$)}
Suppose $\tuple{x, y} \in \semPDLREwLAp{(\term[1]^{+})^{\capcomid}}{\struc}$.
Then $x \neq y$ and there are $n \ge 1$ and $x_0, \dots, x_{n}$ with $x_0 = x$ and $x_n = y$ such that
$\tuple{x_{i-1}, x_i} \in \semPDLREwLAp{\term[1]}{\struc}$ for all $i \in \rangeone{n}$.
By removing each $i$ such that $x_{i-1} = x_i$ from the sequence $x_0, \dots, x_{n}$,
"wlog", we can assume that $x_{i-1} \neq x_{i}$ for all $i \in \rangeone{n}$ (then, we still have $n \ge 1$ by $x \neq y$).
Hence, we have $\tuple{x, y} \in \semPDLREwLAp{(\term[1]^{\capcomid})^{+}}{\struc}$.

\paragraph{Axioms "valid" on $\GRELfinpartialorder{}$}
We write $\AP\intro*\GRELfinpartialorder{}$ for the class of all $\struc \in \GREL {}$ "st" $\strucuniv^{\struc}$ is a finite "partial order".
Below "axiom" is the "valid" on $\GRELfinpartialorder{}$ ($\supseteq \GRELfinlin{}$).
We show it using well-foundedness:
there is no infinite sequence $x_0, x_1, \dots$ such that
for all $i \ge 0$, $\tuple{x_i, x_{i+1}} \in \strucuniv^{\struc} \setminus \diagonal_{\univ{\struc}}$.

\proofcase{For \text{\nameref{equation: Lob'}}}
It suffices to show the following contrapositive:
$\GRELfinlin{} \modelsfml \dia{(\term^{\capcomid})^{+}}\fml \to \dia{(\term^{\capcomid})^{+}}(\fml \land \lnot \dia{(\term^{\capcomid})^{+}}\fml)$.
Suppose $x \in \semPDLREwLAp{\dia{(\term^{\capcomid})^{+}}\fml}{\struc}$.
Then there is some $y$ such that
$\tuple{x, y} \in \semPDLREwLAp{(\term^{\capcomid})^{+}}{\struc}$ and $y \in \semPDLREwLAp{\fml}{\struc}$.
As $\struc \in \GRELfinpartialorder{}$,
there is some maximal element $y_M$ on the "partial order" $\strucuniv^{\struc}$
among elements $y$ satisfying the above.
We then have $y_M \in \semPDLREwLAp{\lnot \dia{(\term^{\capcomid})^{+}} \fml}{\struc}$.
Hence, $x \in \semPDLREwLAp{\dia{(\term^{\capcomid})^{+}}(\fml \land \lnot \dia{(\term^{\capcomid})^{+}}\fml)}{\struc}$.
\end{proof}

\begin{remark}
    For "\PDLREwLAp",
    we can observe that 
    the "theory" of $\GRELfinlin{}$ coincides with "that@equational theory" of $\GRELfinpartialorder{}$,
    via an argument like "tree unwinding" ("cf", \Cref{section: theorem: complexity PDLREwLAp substitution-closed: removing identity,section: theorem: complexity PDLREwLAp substitution-closed: to tree}, later).
    Hence, we do not need extra axioms for $\GRELfinlin{}$.
    \lipicsEnd
\end{remark}

\end{scope}
 
\clearpage
\begin{scope}\knowledgeimport{PDL-}
\section{Appendix to \Cref{section: PDL-}~``\nameref*{section: PDL-}''}
\begin{scope}\knowledgeimport{PDL-}
\subsection{Supplement of \Cref{footnote: PDL- and PDL}} \label{section: footnote: PDL- and PDL}
(This section is technically separated for the main body.)

We define the following transformation.
Intuitively, we decompose terms into the identity-part and the non-identity part in the level of ``traces''.
\begin{definition}\label{definition: normal form to PDL-}
    The function $\AP\intro*\PDLtonorm{\bl} \colon (\fmlclassPDL{} \to \fmlclassifreePDL{}) \dcup (\termclassPDL{} \to (\fmlclassifreePDL{} \times \termclassifreePDL{}))$,
    where $\tuple{\intro*\PDLtonormone{\term}, \intro*\PDLtonormtwo{\term}} \defeq \PDLtonorm{\term}$ for $\term \in \termclassPDL{}$,
    is inductively defined as follows:
    \begin{align*}
        \PDLtonorm{\afml} &\defeq \afml, &
        \PDLtonorm{(\fml[2] \to \fml[3])} &\defeq \PDLtonorm{\fml[2]} \to \PDLtonorm{\fml[3]}, \\
        \PDLtonorm{\falsec} &\defeq \falsec, &
        \PDLtonorm{(\bo{\term} \fml[2])} &\defeq (\PDLtonormone{\term} \to \PDLtonorm{\fml[2]}) \land \bo{\PDLtonormtwo{\term}} \PDLtonorm{\fml[2]},\\[.5ex]
        \PDLtonormone{\aterm} &\defeq \falsec, &
        \PDLtonormtwo{\aterm} &\defeq \aterm, \\
        \PDLtonormone{(\term[2] \compo \term[3])} &\defeq \PDLtonormone{\term[2]} \land \PDLtonormone{\term[3]}, &
        \PDLtonormtwo{(\term[2] \compo \term[3])} &\defeq
        \PDLtonormtwo{\term[2]} \PDLtonormone{\term[3]}? \union
        \PDLtonormone{\term[2]}? \PDLtonormtwo{\term[3]} \union
        \PDLtonormtwo{\term[2]} \PDLtonormtwo{\term[3]}, \\
        \PDLtonormone{(\term[2] \union \term[3])} &\defeq \PDLtonormone{\term[2]} \lor \PDLtonormone{\term[3]}, &
        \PDLtonormtwo{(\term[2] \union \term[3])} &\defeq \PDLtonormtwo{\term[2]} \union \PDLtonormtwo{\term[3]}, \\
        \PDLtonormone{(\term[2]^{+})} &\defeq \PDLtonormone{\term[2]}, &
        \PDLtonormtwo{(\term[2]^{+})} &\defeq (\PDLtonormtwo{\term[2]})^{+}, \\
        \PDLtonormone{(\fml[1]?)} &\defeq \PDLtonorm{\fml[1]}, &
        \PDLtonormtwo{(\fml[1]?)} &\defeq \emp. \tag*{\lipicsEnd}
    \end{align*}
\end{definition}

\begin{proposition}\label{proposition: normal form to PDL-}
    For every "expression" $\expr \in \exprclassPDL{}$,
    we have the following:
    \begin{enumerate}
        \item \label{proposition: normal form to PDL- 1} If $\expr = \fml$ is a "formula",
        $\GRELpreorder{} \modelsfml \PDLtonorm{\fml} \leftrightarrow \fml$.
        \item \label{proposition: normal form to PDL- 2} If $\expr = \term$ is a \kl{term},
        $\GRELpreorder{} \modelsfml \bo{\PDLtonormone{\term}? \union \PDLtonormtwo{\term}} \fml[3] \leftrightarrow \bo{\term} \fml[3]$, where $\fml[3]$ is any "formula".
    \end{enumerate}
\end{proposition}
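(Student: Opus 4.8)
The plan is to prove parts \ref{proposition: normal form to PDL- 1} and \ref{proposition: normal form to PDL- 2} simultaneously by induction on the structure of $\expr$, where for part \ref{proposition: normal form to PDL- 2} it is cleaner (and slightly stronger) to establish the relational identity
\[
\semPDL{\PDLtonormone{\term}? \union \PDLtonormtwo{\term}}{\struc} \;=\; \semPDL{\term}{\struc}
\qquad \text{for every } \struc \in \GRELpreorder{};
\]
part \ref{proposition: normal form to PDL- 2} is then immediate, since $\semPDL{\bo{\term[2]}\fml[3]}{\struc}$ depends on $\term[2]$ only through the binary relation it denotes. Note that $\GRELpreorder{}$ consists of transitive frames, so every $\semifreePDL{\bl}{\struc}$ below is well-defined, and since \ifreePDL{} is a syntactic fragment of \PDL{} with the same operator interpretations, $\semifreePDL{\bl}{\struc}$ and $\semPDL{\bl}{\struc}$ agree on the relevant expressions; I will write $\semPDL{\bl}{\struc}$ throughout. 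The induction is mutual: the formula case $\bo{\term}\fml[2]$ invokes the term statement for $\term$, and the term cases for tests invoke the formula statement for subformulas.

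For the formula cases, $\afml$ and $\falsec$ are trivial, and $\fml[2]\to\fml[3]$ follows from the induction hypothesis because $\to$ is evaluated pointwise. The key case is $\bo{\term}\fml[2]$. Writing $P \defeq \semPDL{\PDLtonormone{\term}}{\struc}$ and $R \defeq \semPDL{\PDLtonormtwo{\term}}{\struc}$, one checks the elementary set-theoretic identity that a point $\vertex[1]$ lies in $\semPDL{(\PDLtonormone{\term} \to \fml) \land \bo{\PDLtonormtwo{\term}}\fml}{\struc}$ iff ($\vertex[1]\in P$ implies $\vertex[1]\in\semPDL{\fml}{\struc}$) and every $R$-successor of $\vertex[1]$ is in $\semPDL{\fml}{\struc}$, i.e.\ iff $\vertex[1] \in \semPDL{\bo{\PDLtonormone{\term}?\,\union\,\PDLtonormtwo{\term}}\fml}{\struc}$, because $\PDLtonormone{\term}?$ denotes $\diagonal_{P}$ and $\diagonal_{P}\cup R$ relates $\vertex[1]$ exactly to $\vertex[1]$ (when $\vertex[1]\in P$) and to its $R$-successors. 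Taking $\fml = \PDLtonorm{\fml[2]}$, then using the term induction hypothesis to replace $\PDLtonormone{\term}?\,\union\,\PDLtonormtwo{\term}$ by $\term$, and finally the formula induction hypothesis on $\fml[2]$, gives $\semPDL{\PDLtonorm{(\bo{\term}\fml[2])}}{\struc} = \semPDL{\bo{\term}\fml[2]}{\struc}$.

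For the term cases, I maintain the invariant $\diagonal_{P}\cup R = \semPDL{\term}{\struc}$ with $P,R$ as above. For $\aterm$ it is $\emptyset \cup \aterm^{\struc} = \aterm^{\struc}$. For $\term[2]\union\term[3]$ use $\diagonal_{P_2\cup P_3} = \diagonal_{P_2}\cup\diagonal_{P_3}$ together with the induction hypothesis and rearranging unions. For $\term[2]\compo\term[3]$ expand $(\diagonal_{P_2}\cup R_2)\compo(\diagonal_{P_3}\cup R_3)$ into the four pieces $\diagonal_{P_2\cap P_3}$, $\diagonal_{P_2}\compo R_3$, $R_2\compo\diagonal_{P_3}$, $R_2\compo R_3$ and match them against $(\PDLtonormone{\term[2]}\land\PDLtonormone{\term[3]})?$ and the three disjuncts of $\PDLtonormtwo{(\term[2]\compo\term[3])}$ (using that $\diagonal_{P_2}\compo R_3$ is $R_3$ restricted to sources in $P_2$, etc.). For $\term[2]^{+}$ one needs $(\diagonal_{P}\cup R)^{+} = \diagonal_{P}\cup R^{+}$: ``$\supseteq$'' is clear, and ``$\subseteq$'' holds because any composition of length $\ge 1$ of copies of $\diagonal_{P}$ and $R$ is either $\diagonal_{P}$ (when it has no $R$-factor) or is contained in $R^{k}$ for $k\ge 1$ the number of $R$-factors, the flanking $\diagonal_{P}$'s only restricting source/target. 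For $\fml[1]?$ the invariant reads $\diagonal_{\semPDL{\PDLtonorm{\fml[1]}}{\struc}}\cup\emptyset = \diagonal_{\semPDL{\fml[1]}{\struc}}$, which is the formula induction hypothesis on $\fml[1]$.

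I do not expect a deep obstacle here: the proof is a structural induction with routine relation-algebraic bookkeeping, and the only steps requiring a moment's care are the four-way split in the composition case and the Kleene-plus identity $(\diagonal_{P}\cup R)^{+} = \diagonal_{P}\cup R^{+}$. If one additionally wants $\PDLtonorm{\bl}$ to land in the literal \ifreePDL{} grammar (rather than merely in \PDL), one absorbs any stray standalone test into a neighbouring term, using $\emp? \equiv \emp$ and $\fml[1]?\compo\fml[2]? \equiv (\fml[1]\land\fml[2])?$; this changes no semantics and hence does not affect the present statement.
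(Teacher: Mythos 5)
Your proposal is correct and takes essentially the same route as the paper: both are structural inductions on $\expr$ with the same case analysis, the only difference being that the paper phrases the term cases as semantic equivalences of box formulas $\bo{\term}\afml \leftrightarrow \bo{\PDLtonormone{\term}?\union\PDLtonormtwo{\term}}\afml$, whereas you prove the (interchangeable) relational identity $\semPDL{\PDLtonormone{\term}?\union\PDLtonormtwo{\term}}{\struc}=\semPDL{\term}{\struc}$ directly. The two delicate points you flag — the four-way split for $\compo$ and the identity $(\diagonal_{P}\cup R)^{+}=\diagonal_{P}\cup R^{+}$ — are exactly the content of the paper's selected cases, and your arguments for them are sound.
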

\begin{proof}
    By induction on $\expr$.
    \proofcase{For \ref{proposition: normal form to PDL- 1}}
    Each case is immediate from IH.
    \proofcase{For \ref{proposition: normal form to PDL- 2}}
    We distinguish the following cases.
    Below, we only write for selected cases.
    The other cases are immediate from IH.
    
    \subproofcase{Case $\term = \term[2] \compo \term[3]$}
    We have:
    \begin{align*}
        \bo{\term[2] \compo \term[3]}\afml
        &~\leftrightarrow~ \bo{\term[2]} \bo{\term[3]}\afml
        ~\leftrightarrow_{\text{IH}}~ \bo{\term[2]} \bo{\PDLtonormone{\term[3]}? \union \PDLtonormtwo{\term[3]}} \afml
        ~\leftrightarrow_{\text{IH}}~ \bo{\PDLtonormone{\term[2]}? \union \PDLtonormtwo{\term[2]}} \bo{\PDLtonormone{\term[3]}? \union \PDLtonormtwo{\term[3]}} \afml\\
        &~\leftrightarrow~ \bo{\PDLtonormone{\term[2]}? \PDLtonormone{\term[3]}? \union \PDLtonormtwo{\term[2]}\PDLtonormone{\term[3]}? \union \PDLtonormone{\term[2]}?\PDLtonormtwo{\term[3]} \union \PDLtonormtwo{\term[2]} \PDLtonormtwo{\term[3]}} \afml\\
        &~\leftrightarrow~ \bo{\PDLtonormone{(\term[2] \compo \term[3])}? \union \PDLtonormtwo{(\term[2] \compo \term[3])}} \afml.
    \end{align*}

    \subproofcase{Case $\term = \term[2] \union \term[3]$}
    We have:
    \begin{align*}
        \bo{\term[2] \union \term[3]}\afml
        &~\leftrightarrow~ \bo{\term[2]}\afml \land \bo{\term[3]}\afml
        ~\leftrightarrow_{\text{IH}}~ \bo{\PDLtonormone{\term[2]}? \union \PDLtonormtwo{\term[2]}} \afml \land \bo{\PDLtonormone{\term[3]}? \union \PDLtonormtwo{\term[3]}} \afml\\
        &~\leftrightarrow~ \bo{\PDLtonormone{\term[2]}? \union \PDLtonormone{\term[3]}? \union \PDLtonormtwo{\term[2]} \union \PDLtonormtwo{\term[3]}} \afml
        ~\leftrightarrow~ \bo{\PDLtonormone{(\term[2] \union \term[3])}? \union \PDLtonormtwo{(\term[2] \union \term[3])}} \afml.
    \end{align*}
    Hence, this completes the proof.
\end{proof}

\begin{proposition}[{\Cref{footnote: PDL- and PDL}}]\label{proposition: PDL- and PDL}
    "\ifreePDL" and "\PDL" have the same \AP""expressive power"" on $\GRELpreorder{}$, "ie",
    for every "\PDL" "formula@@PDL" $\fml[1]$, there is a "\ifreePDL" "formula@@PDL-" $\fml[2]$ such that
    $\GRELpreorder{} \modelsfml \fml[1] \leftrightarrow \fml[2]$, and vice versa.
\end{proposition}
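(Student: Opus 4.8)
The plan is to prove the two inclusions separately: the passage from $\PDL$ to $\ifreePDL$ is carried by the decomposition $\PDLtonorm{\bl}$ of \Cref{definition: normal form to PDL-}, while the converse is immediate because $\ifreePDL$ is a syntactic fragment of $\PDL$.

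\textbf{From $\PDL$ to $\ifreePDL$.} Given a $\PDL$ formula $\fml[1]$, I would set $\fml[2] \defeq \PDLtonorm{\fml[1]}$. The first task is a routine induction on $\PDL$ expressions following the clauses of \Cref{definition: normal form to PDL-}, proving simultaneously that $\PDLtonorm{\fml} \in \fmlclassifreePDL{}$ for every $\PDL$ formula $\fml$ and $\tuple{\PDLtonormone{\term}, \PDLtonormtwo{\term}} \in \fmlclassifreePDL{} \times \termclassifreePDL{}$ for every $\PDL$ term $\term$. The only clause that is not completely immediate is $\PDLtonormtwo{(\term[2] \compo \term[3])}$: the tests $\PDLtonormone{\term[2]}?$ and $\PDLtonormone{\term[3]}?$ are introduced there, but each of them occurs adjacent to a $\ifreePDL$ term, hence matches one of the admissible $\ifreePDL$ shapes $\fml? \compo \term$ or $\term \compo \fml?$; the only caveat is that $\PDLtonormtwo{\term[2]}$ or $\PDLtonormtwo{\term[3]}$ may degenerate to $\emp$ (namely when $\term[2]$ or $\term[3]$ is a test), which I would handle by admitting $\emp$ as a $\ifreePDL$ term --- either as a primitive interpreted by $\emptyset$, or, if the grammar is to be kept exactly as stated, by replacing it with $\falsec? \compo \aterm$ for a fixed $\aterm \in \vsig$ --- neither choice affecting the semantic identities of \Cref{proposition: normal form to PDL-}. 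Once $\fml[2]$ is known to be a genuine $\ifreePDL$ formula, \Cref{proposition: normal form to PDL-}.\ref{proposition: normal form to PDL- 1} yields $\GRELpreorder{} \modelsfml \fml[1] \leftrightarrow \fml[2]$, as required.

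\textbf{From $\ifreePDL$ to $\PDL$.} This direction needs no work: the grammar of $\ifreePDL$ formulas coincides verbatim with that of $\PDL$ formulas, and every $\ifreePDL$ term is already a $\PDL$ term, since the productions $\fml? \compo \term$ and $\term \compo \fml?$ are special cases of $\PDL$ composition with a test. Hence a given $\ifreePDL$ formula $\fml[2]$ is literally a $\PDL$ formula, and one may take $\fml[1] \defeq \fml[2]$. The only genuine effort in the whole argument, and thus the step I expect to be the main obstacle, is the well-formedness check in the first direction --- verifying that $\PDLtonorm{\bl}$ never places a test outside a composition and disposing cleanly of the degenerate $\emp$ subterms; everything else reduces to a direct appeal to \Cref{proposition: normal form to PDL-} together with the inclusion $\ifreePDL \subseteq \PDL$ at the level of syntax.
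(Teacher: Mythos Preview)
Your proposal is correct and follows essentially the same approach as the paper: the paper's proof simply cites \Cref{proposition: normal form to PDL-} for the $\PDL\to\ifreePDL$ direction and the syntactic inclusion for the converse. Your additional care about whether $\PDLtonorm{\bl}$ actually lands in $\termclassifreePDL{}$---in particular the handling of the degenerate $\emp$ subterms---is a technical point the paper leaves implicit (it declares the codomain of $\PDLtonorm{\bl}$ to be $\fmlclassifreePDL{}\times\termclassifreePDL{}$ without further comment), so your discussion there is a genuine refinement rather than a deviation.
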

\begin{proof}
    \proofcase{{\ifreePDL} to {\PDL}}
    This is because {\ifreePDL} is a syntactic fragment of {\PDL}.
    \proofcase{{\PDL} to {\ifreePDL}}
    By \Cref{proposition: normal form to PDL-}.
\end{proof}

\end{scope}

\end{scope}
 
\clearpage
\section{Appendix to \Cref{section: reduction to identity-free}~``\nameref*{section: reduction to identity-free}''}

\subsection{Proof of \Cref{proposition: normal form}}\label{section: proposition: normal form}
\begin{proposition*}[Restatement of \Cref{proposition: normal form}]
    \propositionnormalform
\end{proposition*}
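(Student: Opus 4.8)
The plan is to establish all three items simultaneously by structural induction on the expression $\expr \in \exprclassPDLREwLAp{}$: item~1 handles the formula case, and items~2 and~3 form a mutually recursive pair handling the term case. All the routine propositional and modal bookkeeping will be absorbed into the axiom block \nameref{equation: PDL} of all substitution-instances of valid \PDL formulas --- which in particular supplies the equivalences $\bo{\fml[2]?}\fml[3] \leftrightarrow (\fml[2] \to \fml[3])$, $\bo{\term[2] \union \term[3]}\fml[3] \leftrightarrow \bo{\term[2]}\fml[3] \land \bo{\term[3]}\fml[3]$, and $\bo{\term[2]\compo\term[3]}\fml[3] \leftrightarrow \bo{\term[2]}\bo{\term[3]}\fml[3]$, together with all propositional tautologies --- and into the congruence rules \eqref{rule: cong}, which let us replace a subexpression by a provably equivalent one inside $\bo{\cdot}$, $\cdot?$, $\compo$, $\union$, and $\bl^{+}$.

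For item~1, the cases $\expr = \afml$, $\fml[2]\to\fml[3]$, $\falsec$ follow at once from the induction hypothesis, \nameref{equation: PDL}, and \eqref{rule: cong}. The substantive case is $\expr = \bo{\term}\fml[2]$, where by \Cref{definition: normal form} $\tonorm{(\bo{\term}\fml[2])} = (\tonormone{\term} \to \tonorm{\fml[2]}) \land \bo{\tonormtwo{\term}}\tonorm{\fml[2]}$. Here I would first rewrite $\bo{\term}\fml[2]$ as $\bo{\term^{\capid}}\fml[2] \land \bo{\term^{\capcomid}}\fml[2]$ using \eqref{equation: capcomid-union-capid} and the union axiom; then item~2 of the hypothesis on $\term$ (instantiated at the formula $\tonorm{\fml[2]}$) together with the test axiom turns $\bo{\term^{\capid}}\fml[2]$ into $(\tonormone{\term} \to \tonorm{\fml[2]})$ --- using item~1 of the hypothesis on $\fml[2]$, which gives $\vdashPDLREwLAp \tonorm{\fml[2]} \leftrightarrow \fml[2]$ --- and item~3 of the hypothesis on $\term$ turns $\bo{\term^{\capcomid}}\fml[2]$ into $\bo{\tonormtwo{\term}}\tonorm{\fml[2]}$, which matches the definition after a final application of \eqref{rule: cong}.

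For items~2 and~3 I would run through each term constructor. In every case the relevant $\bl^{\capid}$- or $\bl^{\capcomid}$-axiom of \Cref{figure: PDLREwLAp axioms} pushes $\capid$ (resp.\ $\capcomid$) inward onto proper subterms, the induction hypothesis rewrites those, and \nameref{equation: PDL} with \eqref{rule: cong} closes the case: e.g.\ $\aterm$ is exactly \eqref{equation: T capid} for item~2 and trivial for item~3; $\term[2]\compo\term[3]$ uses \eqref{equation: capid-compo} / \eqref{equation: capcomid-compo} with the composition and test axioms; $\term[2]\union\term[3]$ uses \eqref{equation: capid-union} / \eqref{equation: capcomid-union}; and $\term[2]^{\capid}$, $\term[2]^{\capcomid}$, $\fml[1]?$ use the correspondingly named axioms (with item~1 of the hypothesis in the $\fml[1]?$ subcase of item~2). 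Two subcases warrant care. For $\term = \term[2]^{+}$ in item~3, after \eqref{equation: capcomid-*} one needs $\bo{(\term[2]^{\capcomid})^{+}}\fml[3] \leftrightarrow \bo{(\tonormtwo{\term[2]})^{+}}\fml[3]$, which is precisely the $\bl^{+}$-congruence rule of \eqref{rule: cong} fed with the family $\{\bo{\term[2]^{\capcomid}}\fml[2] \leftrightarrow \bo{\tonormtwo{\term[2]}}\fml[2]\}_{\fml[2]}$ supplied by item~3 of the hypothesis on $\term[2]$ --- this is exactly why items~2 and~3 are stated for an arbitrary target formula. For $\term = \term[2]^{\adom}$ in item~2, one cannot simply invoke item~1 on $\bo{\term[2]}\falsec$ (it is not a subexpression of $\term[2]^{\adom}$); instead I would re-derive $\vdashPDLREwLAp \bo{\term[2]}\falsec \leftrightarrow (\lnot\tonormone{\term[2]} \land \bo{\tonormtwo{\term[2]}}\falsec)$ directly by the same $\capid/\capcomid$-splitting used in the $\bo{\term}\fml[2]$ case above (appealing to items~2 and~3 of the hypothesis on $\term[2]$ at the target $\falsec$), and then combine it with \nameref{equation: adom} and the test axiom, since \eqref{equation: capid-adom} reduces $\bo{(\term[2]^{\adom})^{\capid}}\fml[3]$ to $\bo{\term[2]^{\adom}}\fml[3]$.

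The argument is entirely mechanical once this template is fixed, so I do not anticipate a genuine mathematical obstacle --- only bookkeeping. The two points to watch are keeping the simultaneous induction well-founded, so that every appeal to the hypothesis is to a strict subexpression of $\expr$ (including the nested-box rewrites, as in $\term[2]\compo\term[3]$), and retaining items~2 and~3 in their ``for every formula $\fml[3]$'' form, without which the $\bl^{+}$ and composition cases would not go through.
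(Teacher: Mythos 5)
Your proposal is correct and follows essentially the same route as the paper: a simultaneous induction on the expression, using the $\capid$/$\capcomid$ axioms together with \eqref{rule: cong} and \nameref{equation: PDL}, with the $\bo{\term}\fml[2]$ case handled via \eqref{equation: capcomid-union-capid} and the $\bl^{\adom}$ case via the $\capid/\capcomid$-splitting at target $\falsec$ combined with \nameref{equation: adom}. Your observations about why items~2 and~3 must be stated for an arbitrary target formula (for the $\bl^{+}$ and composition cases) and about the well-foundedness of the nested IH applications are exactly the points the paper's appendix proof implicitly relies on.
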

\begin{proof}
    (Below, we use \eqref{rule: cong} in derivations using the notations of $\leftrightarrow_{(\bl)}$.)

    \proofcase{For \ref{proposition: normal form 1}}
    Each case is straightforward.
    For the case of $\bo{\term} \fml[2]$, we have:
    \begin{align*}
        \tonorm{(\bo{\term} \fml[2])}
        &\;=\; (\tonormone{\term} \to \tonorm{\fml[2]}) \land \bo{\tonormtwo{\term}} \tonorm{\fml[2]}
        \;\leftrightarrow_{\text{\nameref{equation: PDL}}}\;
        \bo{\tonormone{\term}?} \tonorm{\fml[2]} \land \bo{\tonormtwo{\term}} \tonorm{\fml[2]}\\
        &\;\leftrightarrow_{\text{IH}}\; \bo{\term^{\capid}}
        \fml[2] \land \bo{\term^{\capcomid}} \fml[2]
        \;\leftrightarrow_{\text{\nameref{equation: PDL}}}\; \bo{\term^{\capid} \union \term^{\capcomid}} \fml[2]
        \;\leftrightarrow_{\eqref{equation: capcomid-union-capid}}\; \bo{\term} \fml[2].
    \end{align*}

    \proofcase{For \ref{proposition: normal form 2}}
    Each case is shown by the rules of \Cref{figure: PDLREwLAp axioms}.
    Below, we prove some selected cases.

    \subproofcase{Case $\term = \aterm$}
    We have:
    \begin{align*}
    \bo{\tonormone{\aterm}?}\fml[3]
    &~=~ \bo{\dia{\aterm^{\capid}}\truec?} \fml[3]
    ~\leftrightarrow_{\eqref{equation: T capid}}~ \bo{\aterm^{\capid}}\fml[3]
    \end{align*}

    \subproofcase{Case $\term = \term[2] \compo \term[3]$}
    We have:
    \begin{align*}
    &\bo{\tonormone{(\term[2] \compo \term[3])}?}\fml[3]
    ~=~ \bo{(\tonormone{\term[2]} \land \tonormone{\term[3]})?} \fml[3]
    ~\leftrightarrow_{\text{\nameref{equation: PDL}}}~ (\bo{\tonormone{\term[2]}?}\bo{\tonormone{\term[3]}?}\fml[3])\\
    &~\leftrightarrow_{\text{IH}}~
    (\bo{\term[2]^{\capid}}\bo{\term[3]^{\capid}}\fml[3])
    ~\leftrightarrow_{\text{\nameref{equation: PDL}}}~
    \bo{\term[2]^{\capid} \compo \term[3]^{\capid}}\fml[3]
    ~\leftrightarrow_{\eqref{equation: capid-compo}}~
    \bo{(\term[2] \compo \term[3])^{\capid}} \fml[3]. 
    \end{align*}

    \subproofcase{Case $\term = \term[2]^{\adom}$}
    We have:
    \begin{align*}
    &\bo{\tonormone{(\term[2]^{\adom})}?}\fml[3]
    ~=~ \bo{(\lnot \tonormone{\term[2]} \land \bo{\tonormtwo{\term[2]}}\falsec)?}\fml[3]
    ~\leftrightarrow_{\text{\nameref{equation: PDL}}}~
    (\bo{\tonormone{\term[2]}?}\falsec \land \bo{\tonormtwo{\term[2]}}\falsec) \to \fml[3]\\
    &~\leftrightarrow_{\text{IH}}~
    (\bo{\term[2]^{\capid}}\falsec \land \bo{\term[2]^{\capcomid}}\falsec) \to \fml[3] 
    ~\leftrightarrow_{\text{\nameref{equation: PDL}}}~ \bo{\term[2]^{\capid} \union \term[2]^{\capcomid}} \falsec \to \fml[3]\\
    &~\leftrightarrow_{\eqref{equation: capcomid-union-capid}}~ \bo{\term[2]} \falsec \to \fml[3]
    ~\leftrightarrow_{\text{\nameref{equation: PDL}}}~ \bo{\bo{\term[2]}\falsec?} \fml[3]
    ~\leftrightarrow_{\text{\nameref{equation: adom}}}~ \bo{\term[2]^{\adom}} \fml[3]
    ~\leftrightarrow_{\text{\eqref{equation: capid-adom}}}~ \bo{(\term[2]^{\adom})^{\capcomid}} \fml[3]
    \end{align*}
   
    \proofcase{For \ref{proposition: normal form 3}}
    Each case is shown by the rules of \Cref{figure: PDLREwLAp axioms}.
    We prove some selected cases below.
    
    \subproofcase{Case $\tonormtwo{(\term[2] \compo \term[3])}$}
    We have:
    \begin{align*}
        \bo{\tonormtwo{(\term[2] \compo \term[3])}} \fml[3]
        &~=~
        \bo{\tonormtwo{\term[2]} \compo \tonormone{\term[3]}?} \fml[3] \land
        \bo{\tonormone{\term[2]}? \compo \tonormtwo{\term[3]}} \fml[3] \land
        \bo{\tonormtwo{\term[2]} \compo \tonormtwo{\term[3]}} \fml[3]\\
        &~\leftrightarrow_{\text{IH with \eqref{rule: cong}}}~
        \bo{\term[2]^{\capcomid}\term[3]^{\capid} \union
        \term[2]^{\capid} \term[3]^{\capcomid} \union
        \term[2]^{\capcomid} \term[3]^{\capcomid}} \fml[3]
        ~\leftrightarrow_{\eqref{equation: capcomid-compo}}~ \bo{(\term[2] \compo \term[3])^{\capcomid}} \fml[3].
    \end{align*}

    \subproofcase{Case $\tonormtwo{(\term[2]^{+})}$}
    We have:
    \begin{align*}
        \bo{\tonormtwo{(\term[2]^{+})}} \fml[3]
        &~=~ \bo{(\tonormtwo{\term[2]})^{+}} \fml[3]
        ~\leftrightarrow_{\text{IH with \eqref{rule: cong}}}~ \bo{(\tonormtwo{(\term[2]^{\capcomid})})^{+}} \fml[3]
        ~\leftrightarrow_{\eqref{equation: capcomid-compo}}~ \bo{(\term[2]^{\capcomid})^{+}} \fml[3]. \tag*{\qedhere}
    \end{align*}
\end{proof}
 
\clearpage
\begin {scope}\knowledgeimport {PDL-}
\section{Appendix to \Cref{section: PDL- completeness}~``\nameref*{section: PDL- completeness}''}
We first introduce the following "derivable" "rules" in $\vdashifreePDLsfinlin$ ("cf", \Cref{section: footnote: PDL completeness}):
\begin{gather*}
    \begin{prooftree}
        \hypo{\fml_1}
        \hypo{\dots}
        \hypo{\fml_n}
        \infer3{\fml[2]}
    \end{prooftree} \qquad
    \text{\footnotesize where}\quad 
    \begin{aligned}
    &\text{\footnotesize $\fml_1 \to \fml_2 \to \dots \to \fml_n \to \fml[2]$}\\
    &\text{\footnotesize is obtained from \text{\nameref{equation: PROP}}}
    \end{aligned}
    \tag{MP*PROP} \label{rule: PDL- MP PROP}\\
    \begin{prooftree}
        \hypo{\fml_1 \to \dots \to \fml_n \to \fml[2]}
        \infer1{\bo{\term} \fml_1 \to \dots  \to \bo{\term}\fml_n \to \bo{\term} \fml[2]}
    \end{prooftree}
    \tag{Mon} \label{rule: PDL- Mon}\\
    \begin{prooftree}[small]
        \hypo{\fml[1]_1 \leftrightarrow \fml[2]_1}
        \hypo{\fml[1]_2 \leftrightarrow \fml[2]_2}
        \infer2{(\fml[1]_1 \to \fml[1]_2) \leftrightarrow (\fml[2]_1 \to \fml[2]_2)}
    \end{prooftree}
    \qquad\qquad
    \begin{prooftree}[small]
        \hypo{\fml[1] \leftrightarrow \fml[2]}
        \infer1{\bo{\term} \fml[1] \leftrightarrow \bo{\term} \fml[2]}
    \end{prooftree}
    \tag{Cong-f} \label{rule: PDL- cong-f}
\end{gather*}

\subsection{On upper bound of $\cl(\fml)$ (\Cref{definition: FL-closure})}\label{section: definition: FL-closure}
(This section follows from the standard arguments in FL-closure; see, "eg", \cite{harelDynamicLogic2000}.
Below, we explicitly write the proof,
for the case $\bo{\term[1]^{+}} \fml[2]$ instead of $\bo{\term[1]^{*}} \fml[2]$
and for the cases $\bo{\fml[3]? \compo \term[1]} \fml[2]$ and $\bo{\term[1] \compo \fml[3]?} \fml[2]$.)

\AP
We define $\intro*\clone(\fml)$ as the smallest set closed under the following rules:
\begin{align*}
    \clone(\fml) &~\ni~ \fml,&
    \clone(\fml[2] \to \fml[3]) &~\supseteq~ \clone(\fml[2]) \cup \clone(\fml[3]),\\
    \clone(\bo{\term[1]} \fml[2]) &~\supseteq~ \clone(\fml[2]),&
    \clone(\bo{\term[1] \union \term[2]} \fml[2]) &~\supseteq~ \clone(\bo{\term[1]}\fml[2]) \cup \clone(\bo{\term[2]}\fml[2]),\\
    \clone(\bo{\term[1]^{+}} \fml[2]) &~\supseteq~ \clone(\bo{\term[1]}\bo{\term[1]^{+}}\fml[2]),&
    \clone(\bo{\term[1] \compo \term[2]} \fml[2]) &~\supseteq~ \clone(\bo{\term[1]}\bo{\term[2]}\fml[2]),\\
    \clone(\bo{\fml[3]? \compo \term[1]} \fml[2]) &~\supseteq~ \clone(\fml[3] \to \bo{\term[1]} \fml[2]),&
    \clone(\bo{\term[1] \compo \fml[3]?} \fml[2]) &~\supseteq~ \clone(\bo{\term[1]} (\fml[3] \to \fml[2])).
\end{align*}

\begin{proposition}\label{proposition: clone and cl}
For every "\ifreePDL" "formula@@PDL-" $\fml$, we have $\clone(\fml) = \cl(\fml)$.
\end{proposition}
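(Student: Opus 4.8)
The plan is to recognise that both $\cl$ and $\clone$ are two presentations of the same Fischer--Ladner generation process, and to identify each of them with the set of formulas reachable from $\fml$ by the one-step generation rules. To that end I would first introduce a one-step relation $\rightsquigarrow$ on "\ifreePDL" "formulas@@PDL-", reading each clause of \Cref{definition: FL-closure} as a rewrite: $\fml[2] \to \fml[3] \rightsquigarrow \fml[2]$ and $\fml[2] \to \fml[3] \rightsquigarrow \fml[3]$; $\bo{\term[1]}\fml[2] \rightsquigarrow \fml[2]$; $\bo{\term[1] \union \term[2]}\fml[2] \rightsquigarrow \bo{\term[1]}\fml[2]$ and $\bo{\term[1] \union \term[2]}\fml[2] \rightsquigarrow \bo{\term[2]}\fml[2]$; $\bo{\term[1]^{+}}\fml[2] \rightsquigarrow \bo{\term[1]}\bo{\term[1]^{+}}\fml[2]$; $\bo{\term[1] \compo \term[2]}\fml[2] \rightsquigarrow \bo{\term[1]}\bo{\term[2]}\fml[2]$; $\bo{\fml[3]? \compo \term[1]}\fml[2] \rightsquigarrow \fml[3] \to \bo{\term[1]}\fml[2]$; and $\bo{\term[1] \compo \fml[3]?}\fml[2] \rightsquigarrow \bo{\term[1]}(\fml[3] \to \fml[2])$. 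Writing $\rightsquigarrow^{*}$ for its reflexive--transitive closure, the set $\set{\fml[2] \mid \fml \rightsquigarrow^{*} \fml[2]}$ contains $\fml$, is closed under all eight rules of \Cref{definition: FL-closure}, and is contained in every set with those two properties; hence $\cl(\fml) = \set{\fml[2] \mid \fml \rightsquigarrow^{*} \fml[2]}$, by the standard description of a least set closed under a one-step rule. So it suffices to prove $\clone(\fml) = \set{\fml[2] \mid \fml \rightsquigarrow^{*} \fml[2]}$ for every "formula@@PDL-" $\fml$.

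For $\clone(\fml) \subseteq \set{\fml[2] \mid \fml \rightsquigarrow^{*} \fml[2]}$ I would check that the family $\sigma \mapsto \set{\fml[2] \mid \sigma \rightsquigarrow^{*} \fml[2]}$ satisfies every defining inclusion of $\clone$; since $\clone$ is by construction the least family of sets satisfying those inclusions (the class of such families is nonempty and closed under pointwise intersection, so a least one exists), the desired inclusion follows immediately. Each of the eight checks is a one-liner: for instance, from $\bo{\term[1]^{+}}\fml[2] \rightsquigarrow \bo{\term[1]}\bo{\term[1]^{+}}\fml[2]$ one gets that $\bo{\term[1]}\bo{\term[1]^{+}}\fml[2] \rightsquigarrow^{*} \fml[2]'$ entails $\bo{\term[1]^{+}}\fml[2] \rightsquigarrow^{*} \fml[2]'$, which is exactly the inclusion $\set{\fml[2]' \mid \bo{\term[1]^{+}}\fml[2] \rightsquigarrow^{*} \fml[2]'} \supseteq \set{\fml[2]' \mid \bo{\term[1]}\bo{\term[1]^{+}}\fml[2] \rightsquigarrow^{*} \fml[2]'}$ required of $\clone(\bo{\term[1]^{+}}\fml[2])$, and the remaining clauses are analogous.

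For the converse inclusion I would prove by induction on the length of a $\rightsquigarrow^{*}$-chain that $\sigma \rightsquigarrow^{*} \fml[2]$ implies $\fml[2] \in \clone(\sigma)$. The base case is $\sigma \in \clone(\sigma)$, which is the first clause defining $\clone$. For the inductive step, write the chain as $\sigma \rightsquigarrow \sigma' \rightsquigarrow^{*} \fml[2]$; the induction hypothesis gives $\fml[2] \in \clone(\sigma')$, and the clause of the definition of $\clone$ that witnesses $\sigma \rightsquigarrow \sigma'$ is precisely an inclusion $\clone(\sigma) \supseteq \clone(\sigma')$ (possibly as one summand of a union), so $\fml[2] \in \clone(\sigma)$. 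Combining the two inclusions yields $\clone(\fml) = \set{\fml[2] \mid \fml \rightsquigarrow^{*} \fml[2]} = \cl(\fml)$.

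The statement is essentially a bookkeeping lemma and I do not expect a genuine difficulty; the one point to keep in view is that the defining clauses for $\clone$ refer to $\clone$ of \emph{syntactically larger} formulas (as in $\clone(\bo{\term[1]^{+}}\fml[2]) \supseteq \clone(\bo{\term[1]}\bo{\term[1]^{+}}\fml[2])$ and in the composition clauses), so $\clone$ is not a structural recursion and one must not argue by induction on formula size; the reachability relation $\rightsquigarrow^{*}$ is exactly the device that avoids this, in the same way it underlies the finiteness bound for $\cl(\fml)$. Beyond that, the whole argument is a uniform case analysis over the eight clauses of \Cref{definition: FL-closure}.
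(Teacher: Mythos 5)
Your proof is correct and follows essentially the same route as the paper: the paper's ``derivation paths'' are exactly your $\rightsquigarrow$-chains, its ($\subseteq$) direction likewise verifies that the family $\fml \mapsto \cl(\fml)$ satisfies the defining inclusions of $\clone$, and its ($\supseteq$) direction walks down a derivation path just as your induction on chain length does. Making the reachability relation explicit and identifying both closures with it is a slightly cleaner packaging of the same argument, not a different one.
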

\begin{proof}
\proofcase{($\subseteq$)}
We show that $\cl(\fml)$ (\Cref{definition: FL-closure}) satisfies the rule set above.
We observe that if $\fml[2]_0 \in \cl(\fml[2]) \Longrightarrow \dots \Longrightarrow \fml[2]_n \in \cl(\fml[2])$ is a ""derivation path"" ("ie", a sequence that each $\fml[2]_{i} \in \cl(\fml[2]) \Longrightarrow \fml[2]_{i+1} \in \cl(\fml[2])$ matches some rule in \Cref{definition: FL-closure}),
then $\fml[2]_0 \in \cl(\fml[1]) \Longrightarrow \dots \Longrightarrow \fml[2]_n \in \cl(\fml[1])$ is also a "derivation path",
because both $\cl(\fml[1])$ and $\cl(\fml[2])$ have the same rule set (except the initial "rules" $\fml[1] \in \cl(\fml[1])$ and $\fml[2] \in \cl(\fml[2])$).
Thus, if $\fml[2] \in \cl(\fml[1])$, then $\fml[3] \in \cl(\fml[2]) \Longrightarrow \fml[3] \in \cl(\fml[1])$ holds by the same "derivation path" for all $\fml[3]$.
We thus have:
\[\fml[2] \in \cl(\fml[1]) ~\Longrightarrow~ \cl(\fml[2]) \subseteq \cl(\fml[1]).\]
For instance, by letting $\fml[1] = \bo{\term} \fml[2]$, we have the "rule" $\cl(\fml[2]) \subseteq \cl(\bo{\term} \fml[2])$.
For the other "rules", the proofs are similar.

\proofcase{($\supseteq$)}
If $\clone(\fml[1]) \ni \fml[2]$,
then there are $\fml[2]_n, \fml[2]_{n-1}, \dots, \fml[2]_0$ with $\fml[2]_n = \fml[1]$ and $\fml[2]_0 = \fml[2]$
such that $\clone(\fml[2]_n) \supseteq \dots \supseteq \clone(\fml[2]_0) \ni \fml[2]_0$ is a "derivation path" ("ie", each $\clone(\fml_{i+1}) \supseteq \clone(\fml_{i})$ matches some "rule" for $\clone$).
Thus, $\clone(\fml[1]) \ni \fml[2] \Longrightarrow \clone(\fml[1]) \supseteq \clone(\fml[2])$ (by the "derivation path").
For instance, by letting $\fml[2]$ be $\bo{\term} \fml[2]$
with $\clone(\bo{\term} \fml[2]) \supseteq \clone(\fml[2]) \ni \fml[2]$,
we have $\fml[2] \in \clone(\fml)$.
Hence, we have the "rule" $\bo{\term} \fml[2] \in \clone(\fml) \Longrightarrow \fml[2] \in \clone(\fml)$.
\end{proof}

\AP
We now inductively define $\intro*\cltwo(\fml)$ and $\intro*\clbo(\term, \fml)$ as follows:
\begin{align*}
    \cltwo(\afml) &\defeq \set{\afml},&
    \cltwo(\fml[2] \to \fml[3]) &\defeq \set{\fml[2] \to \fml[3]} \cup \cltwo(\fml[2]) \cup \cltwo(\fml[3]),\\
    \cltwo(\falsec) &\defeq \set{\falsec},&
    \cltwo(\bo{\term[1]} \fml[2]) &\defeq \cltwo(\fml[2]) \cup \clbo(\term[1], \fml[2]),\\
    \clbo(\aterm[1], \fml[2]) &\defeq \set{\bo{\aterm[1]} \fml[2]}, \span\span\\
    \clbo(\term[1] \union \term[2], \fml[2]) &\defeq \set{\bo{\term[1] \union \term[2]} \fml[2]} \cup \clbo(\term[1], \fml[2]) \cup \clbo(\term[2], \fml[2]), \span\span\\
    \clbo(\term[1]^{+}, \fml[2]) &\defeq \set{\bo{\term[1]^{+}} \fml[2]} \cup \clbo(\term[1], \bo{\term[1]^+} \fml[2]), \span\span\\
    \clbo(\term[1] \compo \term[2], \fml[2]) &\defeq \set{\bo{\term[1] \compo \term[2]} \fml[2] } \cup \clbo(\term[1], \bo{\term[2]} \fml[2]) \cup \clbo(\term[2], \fml[2]), \span\span\\
    \clbo(\fml[3]? \compo \term[1], \fml[2]) &\defeq \set{\bo{\fml[3]? \compo \term[1]}, \fml[3] \to \bo{\term[1]}\fml[2]} \cup \cltwo(\fml[3]) \cup \clbo(\term[1], \fml[2]), \span\span\\
    \clbo(\term[1] \compo \fml[3]?, \fml[2]) &\defeq \set{\bo{\term[1] \compo \fml[3]?}, \fml[3] \to \fml[2]} \cup \clbo(\term[1], \fml[3] \to \fml[2]) \cup \cltwo(\fml[3]). \span\span
\end{align*}

\begin{proposition}\label{proposition: clbo}
    For every "\ifreePDL" "expression@@PDL-" $\expr$,
    we have the following:
    \begin{itemize}
        \item If $\expr = \term$ is a "term@@PDL-", then $\clone(\bo{\term} \fml[2]) = \clone(\fml[2]) \cup \clbo(\term, \fml[2])$ for any $\fml[2]$.
        \item If $\expr = \fml$ is a "formula@@PDL-", then $\cltwo(\fml) = \clone(\fml)$.
    \end{itemize}
\end{proposition}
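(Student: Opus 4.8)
The plan is to prove the two assertions simultaneously, exploiting the least-fixed-point character of $\clone$ rather than trying to unfold it by structural recursion --- which fails, since the rule $\clone(\bo{\term[1]^{+}}\fml[2]) \supseteq \clone(\bo{\term[1]}\bo{\term[1]^{+}}\fml[2])$ defines the closure of a formula in terms of the closure of a \emph{larger} formula. First I would show, by simultaneous structural induction on the expression $\expr$, that $\cltwo(\fml) \subseteq \clone(\fml)$ for formulas and $\clbo(\term, \fml[2]) \subseteq \clone(\bo{\term}\fml[2])$ for terms and arbitrary $\fml[2]$. Each clause in the definitions of $\cltwo$ and $\clbo$ mirrors exactly one generating rule of $\clone$ (the box rule, the $\union$-rule, the $\bl^{+}$-rule, the $\compo$-rule, the implication rule, and the two test rules), so --- using that $\bo{\term}\fml[2] \in \clbo(\term, \fml[2])$ and $\fml \in \clone(\fml)$ always, and that $\clone$ is closed under its rules --- each induction step is a one-line computation. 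The one clause needing care is $\clbo(\term[1]^{+}, \fml[2]) = \set{\bo{\term[1]^{+}}\fml[2]} \cup \clbo(\term[1], \bo{\term[1]^{+}}\fml[2])$, where the induction hypothesis for $\term[1]$ is invoked with the larger formula $\bo{\term[1]^{+}}\fml[2]$; this is legitimate because the induction is run on the size of $\expr$ (the term $\term$, respectively the formula $\fml$), which strictly decreases in every recursive call of $\cltwo$ and $\clbo$, and not on the size of $\bo{\term}\fml[2]$.

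For the reverse inclusion I would argue by minimality: the family $\fml[2] \mapsto \cltwo(\fml[2])$, read as a function on all \ifreePDL{} formulas, is itself closed under every generating rule of $\clone$. This is a routine clause-by-clause verification from the definitions; e.g.\ $\cltwo(\bo{\term[1] \compo \term[2]}\fml[2]) = \cltwo(\bo{\term[1]}\bo{\term[2]}\fml[2]) \cup \set{\bo{\term[1]\compo\term[2]}\fml[2]}$ gives closure under the $\compo$-rule, and the analogous identity $\cltwo(\bo{\term[1]^{+}}\fml[2]) = \cltwo(\bo{\term[1]}\bo{\term[1]^{+}}\fml[2]) \cup \set{\bo{\term[1]^{+}}\fml[2]}$ gives closure under the $\bl^{+}$-rule; the $\union$-rule, the implication rule, and the two test rules are similar, and $\fml[2] \in \cltwo(\fml[2])$ holds because every clause of $\clbo$ contributes $\bo{\term}\fml[2]$ to $\clbo(\term, \fml[2])$. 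Since $\clone$ is by definition the pointwise-smallest such family, this yields $\clone(\fml[2]) \subseteq \cltwo(\fml[2])$ for every formula $\fml[2]$.

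Combining the two inclusions gives $\clone(\fml) = \cltwo(\fml)$ for every formula $\fml$, which is the second assertion; the first is then immediate, since $\clone(\bo{\term}\fml[2]) = \cltwo(\bo{\term}\fml[2]) = \cltwo(\fml[2]) \cup \clbo(\term, \fml[2]) = \clone(\fml[2]) \cup \clbo(\term, \fml[2])$, the middle equality being the definition of $\cltwo$. The step I expect to be the main obstacle is the bookkeeping around the $\bl^{+}$-rule: since that rule is not structurally recursive, one must both pick the induction measure in the first half so that the growing formula parameter of $\clbo(\term[1]^{+}, \cdot)$ does not destroy well-foundedness, and route the reverse inclusion through the minimality of $\clone$, where the essential input is precisely the one-line $\bl^{+}$ identity for $\cltwo$ recorded above.
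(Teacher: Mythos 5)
Your proof is correct; the clause-by-clause computations are exactly those underlying the paper's argument, but you organize them differently. The paper runs one simultaneous induction on $\expr$ and, in each term case, computes $\clone(\bo{\term}\fml[2])$ in place: it writes down the local constraint that this component of the least fixed point satisfies, substitutes the induction hypothesis, and reads off the equality (for $\term = \term[2]^{+}$, from the constraint $X \supseteq \set{\bo{\term[2]^{+}}\fml[2]} \cup \clone(\fml[2]) \cup X \cup \clbo(\term[2], \bo{\term[2]^{+}}\fml[2])$). That delivers both bullets in a single pass, but the step from $X \supseteq A \cup X$ to $X = A$ silently re-invokes minimality in every such case. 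You instead split the two inclusions: $\cltwo(\fml) \subseteq \clone(\fml)$ together with $\clbo(\term,\fml[2]) \subseteq \clone(\bo{\term}\fml[2])$ by induction on the size of the first argument (correctly quantifying the induction hypothesis over all $\fml[2]$, so that the growing formula parameter in $\clbo(\term[1]^{+},\cdot)$ is harmless), and $\clone(\fml) \subseteq \cltwo(\fml)$ by a single global appeal to minimality after verifying that the family $\fml \mapsto \cltwo(\fml)$ is closed under every generating rule of $\clone$. Your organization confines the only non-structural rule --- the $\bl^{+}$ rule, whose target $\bo{\term[1]}\bo{\term[1]^{+}}\fml[2]$ is larger than its source --- to the minimality step, where well-foundedness plays no role, at the cost of recovering the first bullet only at the very end from $\clone = \cltwo$ and the defining equation of $\cltwo(\bo{\term}\fml[2])$. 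Both arguments are sound and hinge on the same two pressure points you identify; yours is the more explicit about where the least-fixed-point property of $\clone$ is actually consumed.
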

\begin{proof}
By induction on $\expr$.
Below, we write only some interesting cases.

\proofcase{Case $\term = \term[2]^{+}$}
$\clone(\bo{\term[2]^{+}}\fml[2])$ is the smallest set satisfying the following:
\begin{align*}
\clone(\bo{\term[2]^{+}}\fml[2])
&\supseteq \set{\bo{\term[2]^{+}}\fml[2]} \cup \clone(\fml[2]) \cup \uline{\clone(\bo{\term[2]}\bo{\term[2]^+}\fml[2])} \tag{By the rules of $\clone(\bo{\term[2]^{+}}\fml[2])$}\\
&= \set{\bo{\term[2]^{+}}\fml[2]} \cup \clone(\fml[2]) \cup \clone(\bo{\term[2]^+}\fml[2]) \cup \clbo(\term[2], \bo{\term[2]^+}\fml[2]). \tag{By IH}
\end{align*}
Thus,
$\clone(\bo{\term[2]^{+}}\fml[2]) = \set{\bo{\term[2]^{+}}\fml[2]} \cup \clone(\fml[2]) \cup \clbo(\term[2], \bo{\term[2]^+}\fml[2])$.
By $\clbo(\term[2]^{+}, \fml[2]) = \set{\bo{\term[2]^{+}}\fml[2]} \cup \clbo(\term[2], \bo{\term[2]^+}\fml[2])$ (by the rules of $\clbo(\term[2]^{+}, \fml[2])$),
this case holds.

\proofcase{Case $\term = \term[2] \compo \fml[3]?$}
We have:
\begin{align*}
\clone(\bo{\term[2] \compo \fml[3]?}\fml[2])
&= \set{\bo{\term[2] \compo \fml[3]?}\fml[2]} \cup \uline{\clone(\bo{\term[2]}(\fml[3] \to \fml[2]))} \cup \clone(\fml[2]) \tag{By the rules of $\clone(\bo{\term[2] \compo \fml[3]?}\fml[2])$}\\
&= \set{\bo{\term[2] \compo \fml[3]?}\fml[2]} \cup \uline{\clone(\fml[3] \to \fml[2])} \cup \clbo(\term[2], \fml[3] \to \fml[2]) \cup \clone(\fml[2]) \tag{By IH}\\
&= \set{\bo{\term[2] \compo \fml[3]?}\fml[2]} \cup \set{\fml[3] \to \fml[2]} \cup \uline{\clone(\fml[3])} \cup \clbo(\term[2], \fml[3] \to \fml[2]) \cup \clone(\fml[2]) \tag{By the rules of $\clone(\fml[3] \to \fml[2])$}\\
&= \set{\bo{\term[2] \compo \fml[3]?}\fml[2]} \cup \set{\fml[3] \to \fml[2]} \cup \cltwo(\fml[3]) \cup \clbo(\term[2], \fml[3] \to \fml[2]) \cup \clone(\fml[2]) \tag{By IH}\\
&= \clbo(\term[2] \compo \fml[3]?, \fml[2]) \cup \clone(\fml[2]). \tag*{(By the rules of $\clbo(\term[2] \compo \fml[3]?, \fml[2])$)\qedhere}
\end{align*}
\end{proof}
Thus, $\cl(\fml) = \cltwo(\fml)$.
Using this, it is easy to see $\card{\cl(\fml)} \le 2 \len{\fml}$.
\begin{proposition}\label{proposition: size of cl}
    For every "\ifreePDL" "expression@@PDL-" $\expr$,
    we have the following:
    \begin{itemize}
        \item If $\expr = \term$ is a "term@@PDL-", then $\card{\clbo(\term, \fml[2])} \leq 2 \cdot \len{\term}$ for any $\fml[2]$.
        \item If $\expr = \fml$ is a "formula@@PDL-", then $\card{\cltwo(\fml)} \leq 2 \cdot \len{\fml}$.
    \end{itemize}
\end{proposition}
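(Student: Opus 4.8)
The plan is to prove both inequalities simultaneously by structural induction on $\expr$, following exactly the mutual recursion that defines $\cltwo$ and $\clbo$. The induction is well-founded: in each defining clause the recursive invocations are made on proper subterms or subformulas of $\expr$; the second argument of $\clbo$ may grow in size (it becomes $\bo{\term[1]^{+}}\fml[2]$ in the $\bl^{+}$-clause, or $\fml[3] \to \fml[2]$ in a test-clause), but since the bound $\card{\clbo(\term,\fml[2])} \le 2\len{\term}$ is uniform in $\fml[2]$, this never obstructs the argument. In every case I would bound the cardinality of the set (defined as a finite union) by the sum of the cardinalities of its pieces, apply the induction hypotheses to the recursive summands, and close with a one-line arithmetic check that rests on the size of a compound expression being at least one more than the sum of the sizes of its immediate constituents.

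For the formula cases this is immediate: $\cltwo(\afml)$ and $\cltwo(\falsec)$ are singletons, so the bound holds with room to spare; for $\cltwo(\fml[2] \to \fml[3])$ the hypotheses give $\card{\cltwo(\fml[2] \to \fml[3])} \le 1 + 2\len{\fml[2]} + 2\len{\fml[3]} \le 2\len{\fml[2] \to \fml[3]}$; and $\cltwo(\bo{\term[1]}\fml[2]) = \cltwo(\fml[2]) \cup \clbo(\term[1],\fml[2])$ is bounded by $2\len{\fml[2]} + 2\len{\term[1]} \le 2\len{\bo{\term[1]}\fml[2]}$. For the term cases: $\clbo(\aterm[1],\fml[2])$ is a singleton; in $\clbo(\term[1] \union \term[2],\fml[2])$, $\clbo(\term[1] \compo \term[2],\fml[2])$, and $\clbo(\term[1]^{+},\fml[2])$ one new formula is added and at most two recursive $\clbo$-calls are made, whose combined bound is $2\len{\term[1]}$ or $2\len{\term[1]} + 2\len{\term[2]}$, so the total is at most $1 + 2\len{\term[1]}\,({+}\,2\len{\term[2]}) \le 2\len{\cdot}$ of the whole term; and in the two test-clauses $\clbo(\fml[3]? \compo \term[1],\fml[2])$ and $\clbo(\term[1] \compo \fml[3]?,\fml[2])$ at most two new formulas are added together with one $\clbo$-call on $\term[1]$ and one $\cltwo$-call on the strictly smaller subformula $\fml[3]$, giving at most $2 + 2\len{\term[1]} + 2\len{\fml[3]} \le 2\len{\fml[3]? \compo \term[1]}$.

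The only point requiring attention is to fix a concrete convention for $\len{\cdot}$ on \ifreePDL expressions — I would take it to be the number of nodes in the syntax tree — and to record that it is superadditive over each production, i.e. the size of any compound expression is at least one plus the sum of the sizes of its immediate constituents. That superadditivity is precisely what lets the constant $2$ absorb the one or two freshly added formulas in each clause, and it is what keeps the bound linear. I do not expect any genuine obstacle here: this is a bookkeeping lemma, and combined with the identity $\cl(\fml) = \cltwo(\fml)$ (\Cref{proposition: clbo}) it certifies $\card{\cl(\fml)} \le 2\len{\fml}$, which is what the atom-based canonical-model construction of \Cref{section: PDL- completeness} relies on for the complexity bounds.
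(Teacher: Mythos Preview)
Your proposal is correct and is exactly the approach the paper takes: the paper's own proof is the single line ``By easy induction on $\expr$,'' and you have simply spelled out that induction case by case, including the observation that the bound on $\clbo(\term,\fml[2])$ is uniform in $\fml[2]$ so that the growing second argument in the $\bl^{+}$ and test clauses is harmless. Your remark about fixing a size convention for $\len{\cdot}$ is apt, since the paper only formally defines $\len{\cdot}$ for strings, but any reasonable choice (node count or symbol count) satisfies the needed superadditivity.
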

\begin{proof}
    By easy induction on $\expr$.
\end{proof}

\subsection{Proof of \Cref{proposition: atoms consistent saturates}}\label{section: proposition: atoms consistent saturates}
(This section follows from the standard argument of normal modal logic.)

\begin{proposition*}[Restatement of \Cref{proposition: atoms consistent saturates}]
\propositionatomsaturates
\end{proposition*}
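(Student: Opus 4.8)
\end{proposition*}
\begin{proof}
The plan is to establish both items by contraposition, using nothing beyond propositional reasoning --- instances of \nameref{equation: PROP} closed under \eqref{rule: MP} --- together with the normality of the modalities available through \eqref{equation: Ksub}; in particular, neither induction nor \eqref{equation: Lob} will be needed.

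For \ref{proposition: atoms consistent saturate}, I would assume towards a contradiction that $\fml[1]$ is "consistent" but that neither $\fml[1] \land \fml[2]$ nor $\fml[1] \land \lnot \fml[2]$ is "consistent", so that ${} \vdashifreePDLsfinlin \lnot(\fml[1] \land \fml[2])$ and ${} \vdashifreePDLsfinlin \lnot(\fml[1] \land \lnot \fml[2])$. Since $\lnot(\fml[1] \land \fml[2]) \to \lnot(\fml[1] \land \lnot \fml[2]) \to \lnot \fml[1]$ is a "substitution-instance" of a valid propositional formula, hence an instance of \nameref{equation: PROP}, two applications of \eqref{rule: MP} would yield ${} \vdashifreePDLsfinlin \lnot \fml[1]$, contradicting the "consistency" of $\fml[1]$.

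For \ref{proposition: atoms consistent saturate dia}, I would argue analogously: assume $\fml[1] \land \dia{\term} \fml[2]$ is "consistent" but neither $\fml[1] \land \dia{\term}(\fml[2] \land \fml[3])$ nor $\fml[1] \land \dia{\term}(\fml[2] \land \lnot \fml[3])$ is "consistent". Rewriting $\lnot \dia{\term} \fml \leftrightarrow \bo{\term} \lnot \fml$ propositionally, this gives ${} \vdashifreePDLsfinlin \fml[1] \to \bo{\term}\lnot(\fml[2] \land \fml[3])$ and ${} \vdashifreePDLsfinlin \fml[1] \to \bo{\term}\lnot(\fml[2] \land \lnot \fml[3])$. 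The only step with content is to derive
\[\bo{\term}\lnot(\fml[2] \land \fml[3]) \to \bo{\term}\lnot(\fml[2] \land \lnot \fml[3]) \to \bo{\term}\lnot \fml[2],\]
which I would obtain by observing that $\bo{\aterm[1]}\lnot(\afml[1] \land \afml[2]) \to \bo{\aterm[1]}\lnot(\afml[1] \land \lnot \afml[2]) \to \bo{\aterm[1]}\lnot \afml[1]$ is "valid" on $\REL{}$ and uses only the single modality $\bo{\aterm[1]}$, so that its "substitution-instance" under $\aterm[1] \mapsto \term$, $\afml[1] \mapsto \fml[2]$, $\afml[2] \mapsto \fml[3]$ is "derivable" by \eqref{equation: Ksub}. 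Chaining this with the two formulas above via \eqref{rule: MP} and \nameref{equation: PROP} would yield ${} \vdashifreePDLsfinlin \fml[1] \to \bo{\term}\lnot \fml[2]$, "ie" ${} \vdashifreePDLsfinlin \lnot(\fml[1] \land \dia{\term}\fml[2])$, contradicting the "consistency" of $\fml[1] \land \dia{\term}\fml[2]$.

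The main (and only mildly delicate) point is to phrase the normality fact used for \ref{proposition: atoms consistent saturate dia} with a fixed modality \emph{before} substitution, so that it genuinely falls under \eqref{equation: Ksub} rather than requiring a separate congruence argument; since the displayed implication over the single modality $\bo{\aterm[1]}$ is valid in every Kripke frame, this is immediate, and everything else is routine propositional bookkeeping.
\end{proof}
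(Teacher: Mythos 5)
Your proof is correct and matches the paper's own argument, which likewise establishes both items by contraposition using \eqref{rule: MP} with \text{\nameref{equation: PROP}} and \eqref{equation: Ksub}; your care in phrasing the normality fact over a single fixed modality before substitution is exactly what makes the appeal to \eqref{equation: Ksub} legitimate. No gaps.
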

\begin{proof}
For each of them, its contrapositive is shown by \eqref{rule: MP}\eqref{equation: Ksub}.
\end{proof}

\subsection{Proof of \Cref{proposition: atoms cond}}\label{section: proposition: atoms cond}
(The most part of this section follows from the proof of the completeness, "cf" \cite[Lemma 2]{kozenElementaryProofCompleteness1981}.
Below, we explicitly write the proof,
for the case $\bo{\term[2]^{+}} \fml[2]$ instead of $\bo{\term[2]^{*}} \fml[2]$
and for the cases $\bo{\fml[3]? \compo \term[2]} \fml[2]$ and $\bo{\term[2] \compo \fml[3]?} \fml[2]$.)

\begin{proposition*}[Restatement of \Cref{proposition: atoms cond}]
    \propositionatomscondprop
\end{proposition*}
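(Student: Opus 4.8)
The plan is to run the standard Lindenbaum-style atom argument, in the spirit of \cite[Lemma 2]{kozenElementaryProofCompleteness1981}: all seven items say that a single atom $\atom \in \at(\fml_0)$ ``reasons correctly'' about the formulas of $\cl(\fml_0)$, and they all reduce to two elementary facts about $\atom$. First I would record \emph{completeness of $\atom$ on $\cl(\fml_0)$}: for every $\fml \in \cl(\fml_0)$ exactly one of $\fml$ and $\lnot \fml$ lies in $\atom$ --- at least one by the definition of an atom, and not both since $\atomtof{\atom}$ is "consistent" while $\vdashifreePDLsfinlin \lnot(\fml \land \lnot \fml)$ by \text{\nameref{equation: PROP}}. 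Second, \emph{deductive closure of $\atom$ on $\cl(\fml_0)$}: if $\fml \in \cl(\fml_0)$ and $\vdashifreePDLsfinlin \atomtof{\atom} \to \fml$, then $\fml \in \atom$, for otherwise $\lnot \fml \in \atom$ and $\vdashifreePDLsfinlin \lnot \atomtof{\atom}$, contradicting "consistency". Combining the two: whenever $\fml, \fml' \in \cl(\fml_0)$ with $\vdashifreePDLsfinlin \fml \leftrightarrow \fml'$ we obtain $\fml \in \atom \iff \fml' \in \atom$, and whenever $\vdashifreePDLsfinlin \fml \leftrightarrow (\fml_1 \land \dots \land \fml_k)$ with all $\fml_j \in \cl(\fml_0)$ we obtain that $\fml \in \atom$ iff $\fml_j \in \atom$ for every $j$.

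With this in place, each item is a one-line instantiation. Item \ref{proposition: atoms cond to} uses the propositional equivalence $(\fml[2] \to \fml[3]) \leftrightarrow (\lnot \fml[2] \lor \fml[3])$ (available via \text{\nameref{equation: PROP}}) and completeness of $\atom$ on $\fml[2]$ and $\fml[3]$, which belong to $\cl(\fml_0)$ by the rules of \Cref{definition: FL-closure}. Item \ref{proposition: atoms cond false} is immediate from deductive closure and $\vdashifreePDLsfinlin \lnot \falsec$. For items \ref{proposition: atoms cond union}, \ref{proposition: atoms cond compo}, \ref{proposition: atoms cond *}, \ref{proposition: atoms cond test L} and \ref{proposition: atoms cond test R} I would quote, respectively, the axioms \eqref{equation: union}, \eqref{equation: compo}, \eqref{equation: *}, \eqref{equation: test L} and \eqref{equation: test R}: each is precisely the provable equivalence between the left-hand $\bo{\bl}$-formula and the corresponding right-hand formula(s) (a conjunction in the cases of \eqref{equation: union} and \eqref{equation: *}), whence the ``iff'' follows from the combined fact above, splitting the conjunction into its conjuncts using completeness of $\atom$.

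The only part that takes any attention is the syntactic bookkeeping, i.e.\ checking that \Cref{definition: FL-closure} keeps every auxiliary formula occurring on a right-hand side inside $\cl(\fml_0)$ as soon as the left-hand formula is in it (for items \ref{proposition: atoms cond union}--\ref{proposition: atoms cond test R} this is exactly what the matching closure rule provides, organized as the $\clbo$-decomposition of \Cref{section: definition: FL-closure}), so that each displayed equivalence is a genuine statement about membership on both sides. This is a routine structural induction on the term involved, so I do not anticipate a real obstacle; the mathematical content is entirely the propositional manipulation of $\atomtof{\atom}$ described in the first paragraph, which rests only on \text{\nameref{equation: PROP}}, \eqref{rule: MP}, and the axioms of \Cref{figure: PDL- axioms}.
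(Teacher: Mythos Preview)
Your plan is correct and follows essentially the same approach as the paper: both invoke the matching axiom of \Cref{figure: PDL- axioms} for each item to obtain a provable equivalence, then use consistency and completeness of $\atomtof{\atom}$ on $\cl(\fml_0)$ to translate it into the stated membership ``iff''. The paper's proof is just a terser version of yours, writing each item as a one-line derivation $(\atomtof{\atom} \land \mathrm{LHS}) \leftrightarrow (\atomtof{\atom} \land \mathrm{RHS})$ without first isolating your two general facts about atoms.
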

\begin{proof}
(Below, we use \eqref{rule: PDL- cong-f} in derivations using the notations of $\leftrightarrow_{(\bl)}$.)

\proofcase{For \ref{proposition: atoms cond to}}
We have:
$(\atomtof{\atom} \land (\fml[2] \to \fml[3]))
\leftrightarrow_{\text{\nameref{equation: PROP}}}
(\atomtof{\atom} \land \lnot \fml[2]) \lor (\atomtof{\atom} \land \fml[3])$.
Thus by $\fml[2] \to \fml[3] \in \cl(\fml_0)$, 
$\fml[2] \to \fml[3] \in \atom$ "iff"
$\fml[2] \not\in \atom$ or $\fml[3] \in \atom$.

\proofcase{For \ref{proposition: atoms cond false}}
If $\falsec \in \atom$, then $\atom$ is "inconsistent", reaching a contradiction.

\proofcase{For \ref{proposition: atoms cond union}}
We have:
$(\atomtof{\atom} \land \bo{\term[2] \union \term[3]} \fml[2])
\leftrightarrow_{\eqref{equation: union}}
(\atomtof{\atom} \land \bo{\term[2]} \fml[2] \land \bo{\term[3]} \fml[2])$.
Thus, $\bo{\term[2] \union \term[3]}\fml[2] \in \atom$ "iff"
$\bo{\term[2]}\fml[2] \in \atom$ and $\bo{\term[3]}\fml[2] \in \atom$.

\proofcase{For \ref{proposition: atoms cond compo}}
We have:
$(\atomtof{\atom} \land \bo{\term[2] \compo \term[3]} \fml[2])
\leftrightarrow_{\eqref{equation: compo}}
(\atomtof{\atom} \land \bo{\term[2]} \bo{\term[3]} \fml[2])$.
Thus, $\bo{\term[2] \term[3]}\fml[2] \in \atom$ "iff"
$\bo{\term[2]} \bo{\term[3]}\fml[2] \in \atom$.

\proofcase{For \ref{proposition: atoms cond *}}
We have:
$(\atomtof{\atom} \land \bo{\term[2]^{+}} \fml[2])
\leftrightarrow_{\eqref{equation: *}}
(\atomtof{\atom} \land \bo{\term[2]} \fml[2] \land \bo{\term[2]}\bo{\term[2]^{+}} \fml[2])$.
Thus, $\bo{\term[2]^{+}}\fml[2] \in \atom$ "iff"
$\bo{\term[2]}\fml[2] \in \atom$ and $\bo{\term[2]} \bo{\term[2]^{+}}\fml[2] \in \atom$.

\proofcase{For \ref{proposition: atoms cond test L}}
We have:
$(\atomtof{\atom} \land \bo{\fml[3]? \compo \term[2]} \fml[2])
\leftrightarrow_{\eqref{equation: test L}}
(\atomtof{\atom} \land (\fml[3] \to \bo{\term[2]} \fml[2]))$.
Thus, $\bo{\fml[3]? \compo \term[2]}\fml[2] \in \atom$ "iff"
$\fml[3] \to \bo{\term[2]} \fml[2] \in \atom$.

\proofcase{For \ref{proposition: atoms cond test R}}
We have:
$(\atomtof{\atom} \land \bo{\term[2] \compo \fml[3]?} \fml[2])
\leftrightarrow_{\eqref{equation: test R}}
(\atomtof{\atom} \land \bo{\term[2]}(\fml[3] \to \fml[2]))$.
Thus, $\bo{\term[2] \compo \fml[3]?} \fml[2] \in \atom$ "iff"
$\bo{\term[2]} (\fml[3] \to \fml[2]) \in \atom$.
\end{proof}

\subsection{Proof of \Cref{proposition: atoms consistent}}\label{section: proposition: atoms consistent}
The most part of this section follows from the proof of the completeness, "cf" \cite[Lemma 1]{kozenElementaryProofCompleteness1981}.
Below, we explicitly write the proof,
for the case $\bo{\term[2]^{+}} \fml[2]$ instead of $\bo{\term[2]^{*}} \fml[2]$
and for the cases $\bo{\fml[3]? \compo \term[2]} \fml[2]$ and $\bo{\term[2] \compo \fml[3]?} \fml[2]$.

\begin{proposition*}[Restatement of \Cref{proposition: atoms consistent}]
    \propositionatoms
\end{proposition*}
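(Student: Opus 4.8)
The plan is to prove the five items exactly as in \cite[Lemma~1]{kozenElementaryProofCompleteness1981}, handling each program constructor by the matching Segerberg-style axiom of \Cref{figure: PDL- axioms} read in its dual ``diamond'' form, and manufacturing any required intermediate "atom" by the saturation argument of \Cref{proposition: atoms consistent saturates}. The standing facts I would invoke throughout are: a $\vdashifreePDLsfinlin$-provable equivalence transports "consistency"; if $\fml[2] \land \fml[3]$ is "consistent" then so is each conjunct; if $\dia{\term[2]}\fml[3]$ is "consistent" then so is $\fml[3]$ (by \eqref{rule: NEC}); a formula of the form $\fml[2] \land (\fml[3] \lor \dots)$ that is "consistent" has a "consistent" disjunct; $\vdashifreePDLsfinlin \bigvee \at(\fml_0)$ while distinct "atoms" of $\cl(\fml_0)$ are provably mutually exclusive, so that for $\fml[3] \in \cl(\fml_0)$ the formula $\atomtof{\atom} \land \fml[3]$ is "consistent" iff $\fml[3] \in \atom$; and, since every $\bo{\term[2]}$ satisfies \eqref{rule: NEC} and \eqref{equation: K}, each $\bo{\term[2]}$ is a normal modality, giving the derived monotonicity rule \eqref{rule: PDL- Mon} and distribution of $\bo{\term[2]}$ and $\dia{\term[2]}$ over finite conjunctions and disjunctions (routine normal-modal-logic reasoning, cf.\ \Cref{section: footnote: PDL completeness}).

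First I would dispatch items~1, 4 and 5, which introduce no new "atom". Dualizing \eqref{equation: union} gives $\vdashifreePDLsfinlin \dia{\term[2] \union \term[3]}\fml[3] \leftrightarrow \dia{\term[2]}\fml[3] \lor \dia{\term[3]}\fml[3]$, so "consistency" of $\atomtof{\atom} \land \dia{\term[2] \union \term[3]}\atomtof{\atom}'$ forces one of $\atomtof{\atom} \land \dia{\term[2]}\atomtof{\atom}'$, $\atomtof{\atom} \land \dia{\term[3]}\atomtof{\atom}'$ to be "consistent". Dualizing \eqref{equation: test L} and \eqref{equation: test R} gives $\vdashifreePDLsfinlin \dia{\fml[3]? \compo \term[2]}\fml[2] \leftrightarrow \fml[3] \land \dia{\term[2]}\fml[2]$ and $\vdashifreePDLsfinlin \dia{\term[2] \compo \fml[3]?}\fml[2] \leftrightarrow \dia{\term[2]}(\fml[3] \land \fml[2])$; in each case "consistency" of the hypothesis splits off "consistency" of $\fml[3]$ (hence $\fml[3] \in \atom$, resp.\ $\fml[3] \in \atom'$, by the decidedness fact, using $\fml[3] \in \cl(\fml_0)$) together with "consistency" of $\atomtof{\atom} \land \dia{\term[2]}\atomtof{\atom}'$.

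For item~2 I would use \eqref{equation: compo} in the form $\vdashifreePDLsfinlin \dia{\term[2] \compo \term[3]}\fml[2] \leftrightarrow \dia{\term[2]}\dia{\term[3]}\fml[2]$, obtaining that $\atomtof{\atom} \land \dia{\term[2]}(\dia{\term[3]}\atomtof{\atom}')$ is "consistent", and then saturate the ``middle'': applying \Cref{proposition: atoms consistent saturates} once for each $\fml[3] \in \cl(\fml_0)$ to this formula yields a maximal choice of literals over $\cl(\fml_0)$ whose conjunction $\atomtof{\atom}''$ is "consistent" --- so $\atom'' \in \at(\fml_0)$ --- and satisfies that $\atomtof{\atom} \land \dia{\term[2]}(\dia{\term[3]}\atomtof{\atom}' \land \atomtof{\atom}'')$ is "consistent". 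Weakening under $\dia{\term[2]}$ gives $\atomtof{\atom} \land \dia{\term[2]}\atomtof{\atom}''$ "consistent", and stripping the outer $\atomtof{\atom} \land \dia{\term[2]}(\cdot)$ gives $\dia{\term[3]}\atomtof{\atom}' \land \atomtof{\atom}''$, hence $\atomtof{\atom}'' \land \dia{\term[3]}\atomtof{\atom}'$, "consistent".

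The main obstacle is item~3, the $\bl^{+}$ case, which is the only place \eqref{equation: ind} (equivalently the derived loop-invariance rule) enters. Writing $\pi$ for $\term[2]$, I would let $S$ be the set of $\atom[1] \in \at(\fml_0)$ reachable from $\atom$ by a \emph{nonempty} $\dia{\pi}$-chain of "consistent" steps --- so a nonempty cycle at $\atom$ itself is allowed, which handles the case $\atom = \atom'$ uniformly --- and prove $\atom' \in S$, which unfolds to the asserted chain $\atom = \atom_0'', \dots, \atom_n'' = \atom'$ with $n \ge 1$ and each $\atomtof{\atom}_i'' \land \dia{\pi}\atomtof{\atom}_{i+1}''$ "consistent". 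Arguing by contradiction, set $\fml[3]_S \defeq \bigvee_{\atom[1] \in S}\atomtof{\atom[1]}$. For any $\atom[1] \notin S$, "consistency" of $\atomtof{\atom} \land \dia{\pi}\atomtof{\atom[1]}$ would put $\atom[1]$ into $S$, so $\vdashifreePDLsfinlin \atomtof{\atom} \to \bo{\pi}\lnot\atomtof{\atom[1]}$; conjoining over $\atom[1] \notin S$ and using $\vdashifreePDLsfinlin \bigvee \at(\fml_0)$ with mutual exclusivity of "atoms" gives $\vdashifreePDLsfinlin \atomtof{\atom} \to \bo{\pi}\fml[3]_S$, and the identical computation with an arbitrary $\atom[1] \in S$ in place of $\atom$ (chains extend one more step) gives $\vdashifreePDLsfinlin \fml[3]_S \to \bo{\pi}\fml[3]_S$. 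From the latter, \eqref{rule: NEC} yields $\vdashifreePDLsfinlin \bo{\pi^{+}}(\fml[3]_S \to \bo{\pi}\fml[3]_S)$, and feeding this together with $\atomtof{\atom} \to \bo{\pi}\fml[3]_S$ into \eqref{equation: ind} yields $\vdashifreePDLsfinlin \atomtof{\atom} \to \bo{\pi^{+}}\fml[3]_S$. Since $\atom' \notin S$ we have $\vdashifreePDLsfinlin \fml[3]_S \to \lnot\atomtof{\atom}'$, so by \eqref{rule: PDL- Mon} $\vdashifreePDLsfinlin \atomtof{\atom} \to \bo{\pi^{+}}\lnot\atomtof{\atom}'$, i.e.\ $\vdashifreePDLsfinlin \lnot(\atomtof{\atom} \land \dia{\pi^{+}}\atomtof{\atom}')$, contradicting the hypothesis; hence $\atom' \in S$. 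I expect the only genuinely fiddly points to be checking that all the finite disjunction/conjunction manipulations with "atoms" are $\vdashifreePDLsfinlin$-derivable (standard) and the bookkeeping that the definition of $S$ really does subsume $\atom = \atom'$; everything else is mechanical dualization.
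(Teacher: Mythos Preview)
Your proposal is correct and follows essentially the same approach as the paper's proof: items~1, 4, 5 by dualizing the respective axioms, item~2 by \eqref{equation: compo} plus saturation via \Cref{proposition: atoms consistent saturates}, and item~3 by the reachability-set argument with a loop-invariance step. The only cosmetic difference is that the paper packages the induction step for item~3 via the derived rule \eqref{rule: TC} (from $\atomtof{\atom} \to \bo{\term[2]}\bigvee\atomset$ and $\bigvee\atomset \to \bo{\term[2]}\bigvee\atomset$), whereas you unfold this into a direct application of \eqref{equation: ind} after \eqref{rule: NEC}; these are interderivable (\Cref{proposition: LI and TC}), so the arguments are the same.
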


\begin{proof}
    Below are shown almost in the same manner as \cite[Lemma 1]{kozenElementaryProofCompleteness1981}.
    
    \proofcase{For \ref{proposition: atoms consistent union}}
    The contrapositive is shown by
    $(\lnot (\atomtof{\atom} \land \dia{\term[2]} \atomtof{\atom}')
    \land \lnot (\atomtof{\atom} \land \dia{\term[3]} \atomtof{\atom}'))
    \leftrightarrow_{\text{\nameref{equation: PROP}}}
    (\lnot \atomtof{\atom} \lor (\bo{\term[2]} \lnot \atomtof{\atom}' \land \bo{\term[3]} \lnot \atomtof{\atom}'))
    \leftrightarrow_{\eqref{equation: union}}
    (\lnot \atomtof{\atom} \lor \bo{\term[2] \union \term[3]} \lnot \atomtof{\atom}')
    \leftrightarrow_{\text{\nameref{equation: PROP}}}
    \lnot (\atomtof{\atom} \land \dia{\term[2] \union \term[3]} \atomtof{\atom}')
    $.
    
    \proofcase{For \ref{proposition: atoms consistent compo}}
    By \eqref{equation: compo},
    the "formula" $\atomtof{\atom} \land \dia{\term[2]}\dia{\term[3]} \atomtof{\atom}'$ is "consistent".
    By applying \ref{proposition: atoms consistent saturates}.\ref{proposition: atoms consistent saturate dia} iteratively, there is some "atom" $\atom''$ such that
    the "formula" $\atomtof{\atom} \land \dia{\term[2]}(\atomtof{\atom}'' \land \dia{\term[3]} \atomtof{\atom}')$ is "consistent".
    Hence, both $\atomtof{\atom} \land \dia{\term[2]} \atomtof{\atom}''$ and $\atomtof{\atom}'' \land \dia{\term[3]} \atomtof{\atom}'$ are "consistent" by \eqref{rule: NEC}\eqref{equation: Ksub}.

    \proofcase{For \ref{proposition: atoms consistent *}}
    Let $\atomset$ be the smallest set of "atoms" satisfying the following:
    if $\atom[2] \in \atomset \cup \set{\atom}$ and $\atomtof{\atom[2]} \land \dia{\term[2]}\atomtof{\atom[2]}'$ is "consistent",
    then $\atom[2]' \in \atomset$.
    By construction of $\atomset$, for each $\atom[2] \in \atomset \cup \set{\atom}$,
    the "formula" $\atomtof{\atom[2]} \land \dia{\term[2]} \lnot \bigvee \atomset$ is not "consistent".
    Thus,
    we have both $\vdashifreePDLsfinlin \atomtof{\atom} \to \bo{\term[2]} \bigvee \atomset$ and $\vdashifreePDLsfinlin \bigvee \atomset \to \bo{\term[2]} \bigvee \atomset$.
    By \eqref{rule: TC},
    we have
    $\vdashifreePDLsfinlin \atomtof{\atom} \to \bo{\term[2]^{+}} \bigvee \atomset$.
    If $\atom' \not\in \atomset$, then $\vdashifreePDLsfinlin \atomtof{\atom} \to \bo{\term[2]^{+}} \lnot \atomtof{\atom}'$ 
    (as $\vdashifreePDLsfinlin \bigvee \atomset \to \lnot \atomtof{\atom}'$ holds by \nameref{equation: PROP}),
    but contradicts to that $\atomtof{\atom} \land \dia{\term[2]^+} \atomtof{\atom}'$ is "consistent".
    Thus, $\atom' \in \atomset$.
    By taking the sequence of "atoms" from $\atom$ to $\atom'$ according to the definition of $\atomset$,
    this case has been shown.

    \proofcase{For \ref{proposition: atoms consistent test L}}
    By \eqref{equation: test L},
    the "formula" $\atomtof{\atom} \land \fml[3] \land \dia{\term[2]} \atomtof{\atom}'$ is "consistent".
    As $\fml[3] \in \cl(\fml_0)$, we have $\fml[3] \in \atom$.

    \proofcase{For \ref{proposition: atoms consistent test R}}
    By \eqref{equation: test R},
    the "formula" $\atomtof{\atom} \land \dia{\term[2]} (\fml[3] \land \atomtof{\atom}')$ is "consistent".
    As $\fml[3] \in \cl(\fml_0)$, we have $\fml[3] \in \atom'$.
\end{proof}

\subsection{Supplement of "L{\"o}b's rule" (in \Cref{lemma: add root})} \label{section: proposition: Lob}
(Below is well-known in the context of provability logic; see, "eg", \cite[p.~59]{boolosLogicProvability1994}.)

\noindent
The "L{\"o}b's rule"
\begin{center}
    \begin{prooftree}[small]
        \hypo{\bo{\term^{+}} \fml \to \fml}
        \infer1{\fml}
    \end{prooftree} 
\end{center}
is "derivable" from the "axiom" \eqref{equation: Lob}, as follows:
\begin{center}
\begin{prooftree}[small]
    \hypo{\bo{\term^{+}} \fml \to \fml}
    \hypo{\mathstrut}
    \infer1[\eqref{equation: Lob}]{\bo{\term^{+}}(\bo{\term^{+}} \fml \to \fml) \to \bo{\term^{+}} \fml}
    \hypo{\bo{\term^{+}} \fml \to \fml}
    \infer1[\eqref{rule: NEC}]{\bo{\term^{+}}(\bo{\term^{+}} \fml \to \fml)}
    \infer2[\eqref{rule: MP}]{\bo{\term^{+}} \fml}
    \infer2[\eqref{rule: MP}]{\fml}
\end{prooftree}        
\end{center}

\subsection{Supplement of \Cref{lemma: ordering atoms}} \label{section: proposition: all trace}
For a finite set $A = \set{\aterm_1, \dots, \aterm_n}$,
we use $A$ to denote the "term@@PDL-" $\sum_{\aterm \in A} \aterm$, for short.
In this section, we show the following:
\begin{proposition}\label{proposition: all trace}%
\gdef\propositionalltrace{%
For every "\ifreePDL" "term@@PDL-" $\term$ and finite set $A \supseteq \exprvsig(\term)$, we have
$\vdashifreePDLsfinlin \bo{A^{+}} \fml \to \bo{\term} \fml$.
}
\propositionalltrace
\end{proposition}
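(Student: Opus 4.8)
The plan is to prove, by structural induction on the identity-free PDL term $\term$ and uniformly in the formula $\fml$, that $\vdashifreePDLsfinlin \bo{A^{+}}\fml \to \bo{\term}\fml$ for every finite $A \supseteq \exprvsig(\term)$; so the induction hypothesis will be that this holds for every proper subterm of $\term$, every formula, and every such $A$. Note first that every identity-free term contains at least one term variable (the only atomic term-former is $\aterm$), so $A$ is always nonempty, and under $\bo{\bl}$ the term $\sum_{\aterm\in A}\aterm$ behaves like $\aterm_1 \union \dots \union \aterm_n$; I will use this tacitly. The derivation will use only the propositional and modal bookkeeping rules \eqref{rule: PDL- MP PROP}, \eqref{rule: PDL- Mon}, \eqref{rule: PDL- cong-f} together with \nameref{equation: PROP} and the Segerberg-style axioms \eqref{equation: union}, \eqref{equation: compo}, \eqref{equation: *}, \eqref{equation: test L}, \eqref{equation: test R}.

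The first step is an auxiliary transitivity principle: $\vdashifreePDLsfinlin \bo{A^{+}}\fml \to \bo{A^{+}}\bo{A^{+}}\fml$. Indeed, \eqref{equation: *} gives $\vdashifreePDLsfinlin \bo{A^{+}}\fml \to \bo{A}\bo{A^{+}}\fml$, and applying the rule \eqref{rule: TC} (which is derivable in $\vdashifreePDLsfinlin$ by the same argument as \Cref{proposition: LI and TC}, and is already used in the proof of \Cref{proposition: atoms consistent}) with $\fml[1] := \bo{A^{+}}\fml$, $\term := A$, $\fml[2] := \bo{A^{+}}\fml$ and this formula supplied as both premises yields $\vdashifreePDLsfinlin \bo{A^{+}}\fml \to \bo{A^{+}}\bo{A^{+}}\fml$. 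I will combine this with the observation that, by \eqref{rule: PDL- Mon}, any provable implication $\bo{A^{+}}\psi \to \chi$ lifts to $\bo{A^{+}}\bo{A^{+}}\psi \to \bo{A^{+}}\chi$, and that propositional weakenings such as $\fml \to (\fml[3]\to\fml)$ propagate under $\bo{A^{+}}$.

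Now the induction. For $\term = \aterm \in A$: $\bo{A^{+}}\fml \to \bo{A}\fml$ by \eqref{equation: *} and $\bo{A}\fml \to \bo{\aterm}\fml$ by iterating \eqref{equation: union} and projecting with \nameref{equation: PROP}. For $\term = \term[2]\union\term[3]$, rewrite $\bo{\term}\fml \leftrightarrow \bo{\term[2]}\fml \land \bo{\term[3]}\fml$ by \eqref{equation: union} and apply the IH to $\term[2]$ and to $\term[3]$. For $\term = \fml[3]?\compo\term[2]$, rewrite $\bo{\term}\fml \leftrightarrow (\fml[3]\to\bo{\term[2]}\fml)$ by \eqref{equation: test L} and use the IH on $\term[2]$ plus propositional weakening. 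For $\term = \term[2]\compo\fml[3]?$, rewrite $\bo{\term}\fml \leftrightarrow \bo{\term[2]}(\fml[3]\to\fml)$ by \eqref{equation: test R}, pass from $\bo{A^{+}}\fml$ to $\bo{A^{+}}(\fml[3]\to\fml)$ by \eqref{rule: PDL- Mon}, and apply the IH to $\term[2]$ with the formula $\fml[3]\to\fml$. For $\term = \term[2]\compo\term[3]$, rewrite $\bo{\term}\fml \leftrightarrow \bo{\term[2]}\bo{\term[3]}\fml$ by \eqref{equation: compo} and chain $\bo{A^{+}}\fml \to \bo{A^{+}}\bo{A^{+}}\fml$ (transitivity) $\to \bo{A^{+}}\bo{\term[3]}\fml$ (IH on $\term[3]$, lifted by \eqref{rule: PDL- Mon}) $\to \bo{\term[2]}\bo{\term[3]}\fml$ (IH on $\term[2]$ with the formula $\bo{\term[3]}\fml$). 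Finally, for $\term = \term[2]^{+}$, apply \eqref{rule: TC} with $\fml[1] := \bo{A^{+}}\fml$, $\term := \term[2]$, $\fml[2] := \fml$: the premise $\bo{A^{+}}\fml \to \bo{\term[2]}\fml$ is the IH on $\term[2]$, and the premise $\bo{A^{+}}\fml \to \bo{\term[2]}\bo{A^{+}}\fml$ is transitivity followed by the IH on $\term[2]$ applied to the formula $\bo{A^{+}}\fml$; the conclusion is exactly $\bo{A^{+}}\fml \to \bo{\term[2]^{+}}\fml$.

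The only non-routine part is the transitivity principle — and, underneath it, the fact that \eqref{rule: TC} (equivalently the loop-invariance rule \eqref{rule: LI}) is available in $\vdashifreePDLsfinlin$; this causes no real difficulty, since the derivation of \eqref{rule: TC} from \eqref{equation: ind} in \Cref{proposition: LI and TC} uses only axioms and rules present, under the same names, in $\HilbertstyleifreePDLsfinlin$. All remaining steps are mechanical applications of the Segerberg axioms and modal monotonicity.
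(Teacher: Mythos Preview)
Your proposal is correct and follows essentially the same approach as the paper: the paper also isolates the transitivity principle $\bo{A^{+}}\fml \to \bo{A^{+}}\bo{A^{+}}\fml$ as a preliminary lemma (using \eqref{rule: LI} where you use \eqref{rule: TC}, which is immaterial since they are inter-derivable), and then runs the same structural induction, handling each case exactly as you describe.
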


\begin{lemma}\label{lemma: all trace}
    $\vdashifreePDLsfinlin \bo{\term^{+}} \fml \to \bo{\term^{+}}\bo{\term^{+}} \fml$.
\end{lemma}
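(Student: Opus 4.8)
The plan is to obtain the lemma by a single use of the induction axiom \eqref{equation: ind} with the ``loop invariant'' $\bo{\term^{+}}\fml$. First I would record the auxiliary theorem
\[
\vdashifreePDLsfinlin \bo{\term^{+}}\fml \to \bo{\term}\bo{\term^{+}}\fml ,
\]
which follows immediately from the unfolding axiom \eqref{equation: *} --- namely $\bo{\term^{+}}\fml \leftrightarrow (\bo{\term}\fml \land \bo{\term}\bo{\term^{+}}\fml)$ --- together with propositional reasoning \eqref{rule: PDL- MP PROP}. Applying \eqref{rule: NEC} to this theorem then yields
\[
\vdashifreePDLsfinlin \bo{\term^{+}}(\bo{\term^{+}}\fml \to \bo{\term}\bo{\term^{+}}\fml) .
\]

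Next I would instantiate \eqref{equation: ind} with $\fml[1] \defeq \bo{\term^{+}}\fml$ and $\term[1] \defeq \term$, giving
\[
\vdashifreePDLsfinlin \bigl(\bo{\term}\bo{\term^{+}}\fml \land \bo{\term^{+}}(\bo{\term^{+}}\fml \to \bo{\term}\bo{\term^{+}}\fml)\bigr) \to \bo{\term^{+}}\bo{\term^{+}}\fml .
\]
Combining this instance with the two theorems displayed above by propositional logic \eqref{rule: PDL- MP PROP} produces $\vdashifreePDLsfinlin \bo{\term^{+}}\fml \to \bo{\term^{+}}\bo{\term^{+}}\fml$: under the assumption $\bo{\term^{+}}\fml$, the first theorem supplies the left conjunct $\bo{\term}\bo{\term^{+}}\fml$ of the antecedent of the \eqref{equation: ind}-instance, the second theorem supplies the right conjunct unconditionally, and hence the consequent $\bo{\term^{+}}\bo{\term^{+}}\fml$ follows.

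I expect no real obstacle here: this is the standard loop-invariance argument for $\bl^{+}$, and all the ingredients (\eqref{equation: *}, \eqref{equation: ind}, \eqref{rule: NEC}, and the derived propositional rule \eqref{rule: PDL- MP PROP}) live in $\HilbertstyleifreePDLsfinlin$. The only point requiring a little attention is to use the $\bl^{+}$-variant of Segerberg's axioms correctly --- \eqref{equation: *} unfolds $\bo{\term^{+}}$ as $\bo{\term}\fml \land \bo{\term}\bo{\term^{+}}\fml$ (with no reflexive ``test'' summand), so the left conjunct of the antecedent is genuinely $\bo{\term}\bo{\term^{+}}\fml$ and matches the shape demanded by \eqref{equation: ind}. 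In particular, \eqref{equation: Lob} is not needed for this lemma.
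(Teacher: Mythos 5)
Your proof is correct and takes essentially the same route as the paper: the paper derives $\bo{\term^{+}}\fml \to \bo{\term}\bo{\term^{+}}\fml$ from \eqref{equation: *} via \eqref{rule: PDL- MP PROP} and then applies the derived loop-invariance rule \eqref{rule: LI}, which is itself obtained from \eqref{equation: ind} by precisely the \eqref{rule: NEC}-plus-propositional-logic step you carry out inline with the invariant $\bo{\term^{+}}\fml$. The only difference is that you unfold the application of \eqref{rule: LI} rather than invoking it as a derived rule.
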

\begin{proof}
    We have:
    \begin{center}\scalebox{.9}{
    \begin{prooftree}[small]
        \hypo{\eqref{equation: *}}
        \infer1[\eqref{rule: PDL- MP PROP}]{\bo{\term^{+}} \fml \to \bo{\term}\bo{\term^{+}} \fml}
        \infer1[\eqref{rule: LI}]{\bo{\term^{+}} \fml \to \bo{\term^{+}}\bo{\term^{+}} \fml}
    \end{prooftree}}
    \end{center}
    Hence, this completes the proof.
\end{proof}

\begin{proof}[Proof of \Cref{proposition: all trace}]
    By induction on $\term$.
    We distinguish the following cases:

    \proofcase{Case $\term = \aterm$}
    By \eqref{equation: *}\eqref{equation: union} with \nameref{equation: PROP}.

    \proofcase{Case $\term = \term[2] \compo \term[3]$}
    We have:
    \begin{center}\scalebox{.9}{
    \begin{prooftree}[small]
        \hypo{}
        \infer1[IH]{\bo{A^{+}}\fml \to \bo{\term[3]}\fml}
        \infer1[\eqref{rule: PDL- Mon}]{\bo{A^{+}}\bo{A^{+}}\fml \to \bo{A^{+}}\bo{\term[3]}\fml}
        \hypo{}
        \infer1[IH]{\bo{A^{+}}\bo{\term[3]} \fml \rightarrow \bo{\term[2]}\bo{\term[3]} \fml}
        \hypo{\mathstrut}
        \infer1[\eqref{equation: compo}]{\bo{\term[2] \compo \term[3]} \fml \leftrightarrow \bo{\term[2]}\bo{\term[3]} \fml}
        \infer3[\eqref{rule: PDL- MP PROP}]{\bo{A^{+}}\bo{A^{+}}\fml \to \bo{\term[2] \compo \term[3]} \fml}
    \end{prooftree}}
    \end{center}
    Combining with \Cref{lemma: all trace} using \eqref{rule: PDL- MP PROP} yields ${} \vdashifreePDLsfinlin \bo{A^{+}} \fml \to \bo{\term[2] \compo \term[3]} \fml$.

    \proofcase{Case $\term = \term[2] \union \term[3]$}
    By \eqref{equation: union} with IH.

    \proofcase{Case $\term = \term[2]^{+}$}
    We have:
    \begin{center}\scalebox{.8}{
    \begin{prooftree}[small]
        \hypo{\mathstrut}
        \infer1[\Cref{lemma: all trace}]{\bo{A^{+}} \fml \to \bo{A^{+}}\bo{A^{+}} \fml}
        \hypo{\mathstrut}
        \infer1[IH]{\bo{A^{+}}\bo{A^{+}} \fml \to \bo{\term[2]} \bo{A^{+}} \fml}
        \infer2[\eqref{rule: PDL- MP PROP}]{\bo{A^{+}} \fml \to \bo{\term[2]} \bo{A^{+}} \fml}
        \hypo{}
        \infer1[IH]{\bo{A^{+}} \fml \to \bo{\term[2]} \fml}
        \infer2[\eqref{rule: TC}]{\bo{A^{+}} \fml \to \bo{\term[2]^{+}} \fml}
    \end{prooftree}}
    \end{center}

    \proofcase{Case $\term = \fml[3]? \compo \term[2]$}
    We have:
    \begin{center}\scalebox{.9}{
    \begin{prooftree}[small]
        \hypo{\mathstrut}
        \infer1[IH]{\bo{A^{+}}\fml \to \bo{\term[2]}\fml}
        \hypo{\mathstrut}
        \infer1[\eqref{equation: test L}]{\bo{\fml[3]? \compo \term[2]} \fml \leftrightarrow
        (\fml[3] \to \bo{\term[2]} \fml)}
        \infer2[\eqref{rule: PDL- MP PROP}]{\bo{A^{+}} \fml \to \bo{\fml[3]? \compo \term[2]} \fml}
    \end{prooftree}}
    \end{center}

    \proofcase{Case $\term = \term[2] \compo \fml[3]?$}
    We have:
    \begin{center}\scalebox{.8}{
    \begin{prooftree}[small]
        \hypo{\mathstrut}
        \infer1[\text{\nameref{equation: PROP}}]{\fml \to (\fml[3] \to \fml)}
        \infer1[\eqref{rule: PDL- Mon}]{\bo{A^{+}} \fml \to \bo{A^{+}}(\fml[3] \to \fml)}
        \hypo{}
        \infer1[IH]{\bo{A^{+}}(\fml[3] \to \fml) \to \bo{\term[2]}(\fml[3] \to \fml)}
        \hypo{\mathstrut}
        \infer1[\eqref{equation: test R}]{\bo{\term[2] \compo \fml[3]?} \fml \leftrightarrow
        \bo{\term[2]} (\fml[3] \to \fml)}
        \infer3[\eqref{rule: PDL- MP PROP}]{\bo{A^{+}} \fml \to \bo{\term[2] \compo \fml[3]?} \fml}
    \end{prooftree}}
    \end{center}

    Hence, this completes the proof.
\end{proof}
\end{scope}

\begin{scope}\knowledgeimport{PDL-}\knowledgeimport{PDLwithoutLob}
\subsection{Proof of \Cref{theorem: PDL- completeness without Lob}}\label{section: theorem: PDL- completeness without Lob}
In this section, we prove:
\begin{corollary*}[Restatement of \Cref{theorem: PDL- completeness without Lob}]
    \theoremPDLcompletenesswithoutLob
\end{corollary*}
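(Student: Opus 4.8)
The plan is to mirror, almost verbatim, the completeness argument of \Cref{section: canonical model}, changing only that the strict linear order on atoms is replaced by the full relation; this is what the sketch following \Cref{theorem: PDL- completeness} already signals. Soundness ($\Longleftarrow$) will be immediate: every axiom of \Cref{figure: PDL- axioms} other than \eqref{equation: Lob} is valid on $\REL{}$ --- indeed on $\GRELpreorder{}$ --- and \eqref{rule: MP}, \eqref{rule: NEC} preserve validity, so $\vdashifreePDL \fml$ implies $\REL{} \modelsfml \fml$. For completeness ($\Longrightarrow$) I will argue the contrapositive; so suppose $\nvdashifreePDL \fml$, that is, $\lnot\fml$ is $\vdashifreePDL$-consistent (in what follows, consistency, atoms, etc.\ are understood relative to $\vdashifreePDL$).

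First I would note that the whole apparatus of \Cref{section: PDL- completeness} carries over when $\vdashifreePDLsfinlin$ is replaced by $\vdashifreePDL$ throughout: the FL-closure $\cl(\fml)$ and its finiteness (\Cref{definition: FL-closure}) are untouched, and the proofs of \Cref{proposition: atoms consistent saturates}, \Cref{proposition: atoms cond}, \Cref{proposition: atoms consistent}, as well as of \eqref{equation: Ksub}, the congruence rules, and the derivability of \eqref{rule: LI} and \eqref{rule: TC}, invoke only axioms and rules present in $\HilbertstyleifreePDL$ and never \eqref{equation: Lob}. Then I define the canonical model $\canonicalmodel^{\fml}$ exactly as in \Cref{section: canonical model}, but with $\univ{\canonicalmodel^{\fml}} \defeq \at(\fml)$ and $\strucuniv^{\canonicalmodel^{\fml}} \defeq \at(\fml) \times \at(\fml)$, retaining $\aterm^{\canonicalmodel^{\fml}} \defeq \set{\tuple{\atom, \atom'} \mid \atomtof{\atom} \land \dia{\aterm}\atomtof{\atom}' \text{ consistent}}$ and $\afml^{\canonicalmodel^{\fml}} \defeq \set{\atom \mid \afml \in \atom}$. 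Since $\strucuniv^{\canonicalmodel^{\fml}}$ is now the full relation, $\canonicalmodel^{\fml} \in \REL{}$ and $\semifreePDL{\bl}{\canonicalmodel^{\fml}}$ is well-defined.

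The core step is the truth lemma, stated exactly as \Cref{lemma: prunning'} with $\canonicalmodel^{\fml}$ in place of $\canonicalmodel^{\fml_0}$, proved by induction on the expression. Every case goes through word-for-word as in the proof of \Cref{lemma: prunning'}, using \Cref{proposition: atoms cond} and \Cref{proposition: atoms consistent}, except at the two spots where the linear order entered. In part~\ref{lemma: prunning' 2} with $\term = \aterm$, the hypothesis $\tuple{\atom[2], \atom[2]'} \in \strucuniv^{\canonicalmodel^{\fml}}$ is now vacuous, so $\tuple{\atom[2], \atom[2]'} \in \aterm^{\canonicalmodel^{\fml}}$ follows directly from the definition of $\aterm^{\canonicalmodel^{\fml}}$. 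In the ($\Longrightarrow$) direction of part~\ref{lemma: prunning' 3}, \Cref{lemma: ordering atoms} --- which used \eqref{equation: Lob} and is no longer available --- is replaced by a direct saturation: if $\bo{\term}\fml[2] \notin \atom[2]$, then (using $\bo{\term}\fml[2] \in \cl(\fml)$, hence $\fml[2] \in \cl(\fml)$, and that $\lnot\bo{\term}\fml[2]$ and $\dia{\term}\lnot\fml[2]$ are $\vdashifreePDL$-interderivable via \eqref{equation: Ksub}) the formula $\atomtof{\atom[2]} \land \dia{\term}\lnot\fml[2]$ is consistent; iterating the $\dia$-clause of \Cref{proposition: atoms consistent saturates} once for each formula of the finite set $\cl(\fml)$ extracts an atom $\atom[2]'$ with $\lnot\fml[2] \in \atom[2]'$ (forced) and $\atomtof{\atom[2]} \land \dia{\term}\atomtof{\atom[2]}'$ consistent; then $\fml[2] \notin \atom[2]'$, and part~\ref{lemma: prunning' 2} (the induction hypothesis) gives $\tuple{\atom[2], \atom[2]'} \in \semifreePDL{\term}{\canonicalmodel^{\fml}}$. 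Finally, from $\lnot\fml$ consistent I obtain, by the first clause of \Cref{proposition: atoms consistent saturates}, an atom $\atom \in \at(\fml)$ with $\lnot\fml \in \atom$, hence $\fml \notin \atom$, so part~\ref{lemma: prunning' 1} yields $\atom \notin \semifreePDL{\fml}{\canonicalmodel^{\fml}}$; since $\canonicalmodel^{\fml} \in \REL{}$, this shows $\REL{} \not\modelsfml \fml$.

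The one place that needs genuine thought is the disappearance of \Cref{lemma: ordering atoms}: in the linearly ordered setting it carried the full weight of the ($\Longrightarrow$) direction of part~\ref{lemma: prunning' 3} and was exactly where \eqref{equation: Lob} was indispensable. But with no ordering constraint on $\strucuniv^{\canonicalmodel^{\fml}}$ there is nothing left to enforce, so a plain saturation over $\cl(\fml)$ suffices and the argument becomes strictly easier than the original; I expect the remaining (routine) work to be the bookkeeping of verifying that \Cref{proposition: atoms consistent saturates}, \Cref{proposition: atoms cond}, \Cref{proposition: atoms consistent}, and the induction in the truth lemma genuinely never use \eqref{equation: Lob}.
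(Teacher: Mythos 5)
Your proposal is correct and follows essentially the same route as the paper's own proof (given in the appendix): the same canonical model with the full universal relation, the same observation that the atom/consistency propositions never invoke L\"ob, and the same replacement of the ordering lemma by a direct saturation via the $\dia{\term}$-clause in the ($\Longrightarrow$) direction of part~3 of the truth lemma. No gaps.
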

\AP%
\phantomintro(PDLwithoutLob){consistent}%
\phantomintro(PDLwithoutLob){atom}%
We redefine ``"consistent"'' and its related notations,
by replacing $\vdashifreePDLsfinlin$ with $\vdashifreePDL$.
We recall \Cref{proposition: atoms cond,proposition: atoms consistent saturates,proposition: atoms consistent}.
They hold also for $\vdashifreePDL$, by the same proof as in \Cref{section: PDL- completeness} (since \nameref{equation: Lob'} is not used in them).
\begin{proposition}\label{proposition: atoms without Lob consistent saturates}
All the conditions in \Cref{proposition: atoms consistent saturates} hold for $\vdashifreePDL$.
\end{proposition}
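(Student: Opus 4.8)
The plan is to show that the saturation arguments behind \Cref{proposition: atoms consistent saturates} are insensitive to the presence of \eqref{equation: Lob}, so that the statement for $\vdashifreePDL$ follows simply by rereading the proof given in \Cref{section: proposition: atoms consistent saturates}. First I would recall that that proof proceeds by establishing the contrapositive of each of the two items, and that the only inference principles it invokes are \eqref{rule: MP} and the normal-modal-logic fact \eqref{equation: Ksub}. Hence the task reduces to checking that both of these remain available once \eqref{equation: Lob} is deleted from \Cref{figure: PDL- axioms}.

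For \eqref{rule: MP} this is immediate, since it is still a rule of the system defining $\vdashifreePDL$. For \eqref{equation: Ksub}, I would note that its justification consumes only the completeness of the minimal normal modal logic $\mathbf{K}$ --- i.e.\ \eqref{rule: NEC}, \eqref{equation: K}, and \text{\nameref{equation: PROP}} --- all of which are retained in $\HilbertstyleifreePDL$; so the evident analogue of \eqref{equation: Ksub} with $\vdashifreePDL$ in place of $\vdashifreePDLsfinlin$ holds. Granting these two points, the two items go through exactly as before. For item~1, from the inconsistency of $\fml[1] \land \fml[2]$ and of $\fml[1] \land \lnot \fml[2]$ one derives $\vdashifreePDL \lnot \fml[1]$ by \text{\nameref{equation: PROP}} and \eqref{rule: MP}, so $\fml[1]$ is inconsistent. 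For item~2, from the inconsistency of $\fml[1] \land \dia{\term}(\fml[2] \land \fml[3])$ and of $\fml[1] \land \dia{\term}(\fml[2] \land \lnot \fml[3])$ one obtains $\vdashifreePDL \fml[1] \to \bo{\term}\lnot(\fml[2] \land \fml[3])$ and $\vdashifreePDL \fml[1] \to \bo{\term}\lnot(\fml[2] \land \lnot \fml[3])$; combining these with the \eqref{equation: Ksub}-instance $\vdashifreePDL (\bo{\term}\lnot(\fml[2] \land \fml[3]) \land \bo{\term}\lnot(\fml[2] \land \lnot \fml[3])) \to \bo{\term}\lnot \fml[2]$ via \eqref{rule: MP} yields $\vdashifreePDL \fml[1] \to \bo{\term}\lnot \fml[2]$, i.e.\ $\fml[1] \land \dia{\term}\fml[2]$ is inconsistent. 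These are precisely the contrapositives claimed.

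I do not expect a genuine obstacle: the whole content of the proposition is the negative observation that neither \eqref{equation: Lob} nor the linear-order-specific \Cref{lemma: add root} and \Cref{lemma: ordering atoms} ever enter the saturation argument. The single point that deserves an explicit remark is the independence of \eqref{equation: Ksub} from \eqref{equation: Lob}, which is immediate from how \eqref{equation: Ksub} is obtained, and this is the step I would be most careful to state. Accordingly the proof itself can be written in essentially one line --- ``by the same argument as for \Cref{proposition: atoms consistent saturates}, which uses only \eqref{rule: MP} and \eqref{equation: Ksub}, neither of which depends on \eqref{equation: Lob}'' --- with the rest amounting to bookkeeping: confirming line by line that the appendix derivation introduces no use of the L{\"o}b axiom.
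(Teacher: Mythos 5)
Your proposal is correct and takes essentially the same route as the paper: the paper simply observes that \Cref{proposition: atoms consistent saturates} is proved using only \eqref{rule: MP} and \eqref{equation: Ksub}, neither of which depends on \eqref{equation: Lob}, so the identical argument applies to $\vdashifreePDL$. Your additional remark that \eqref{equation: Ksub} survives because it rests only on \eqref{rule: NEC}, \eqref{equation: K}, and \text{\nameref{equation: PROP}} is exactly the one point worth making explicit.
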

\begin{proposition}\label{proposition: atoms without Lob cond}
All the conditions in \Cref{proposition: atoms cond} hold for $\vdashifreePDL$.
\end{proposition}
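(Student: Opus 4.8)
The plan is to observe that \Cref{proposition: atoms without Lob cond} needs no fresh argument: it follows by rereading, line for line, the proof of \Cref{proposition: atoms cond} given in \Cref{section: proposition: atoms cond} and checking that the axiom \eqref{equation: Lob} is never used there. Concretely, that proof handles the seven items of \Cref{proposition: atoms cond} one at a time, and in each case it rewrites the formula $\atomtof{\atom} \land (\dotsc)$ by a single application of one of the biconditional axioms \eqref{equation: union}, \eqref{equation: compo}, \eqref{equation: *}, \eqref{equation: test L}, \eqref{equation: test R}, together with propositional reasoning \nameref{equation: PROP} and the congruence rule \eqref{rule: PDL- cong-f}. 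All of these axioms and rules are present in the system of \Cref{figure: PDL- axioms} after deleting the axiom \eqref{equation: Lob}, i.e.\ in $\HilbertstyleifreePDL$; hence the very same derivations establish the seven claims with the provability relation $\vdashifreePDLsfinlin$ replaced uniformly by $\vdashifreePDL$ (and with ``"consistent"'' reinterpreted through $\vdashifreePDL$, as fixed in \Cref{section: theorem: PDL- completeness without Lob}).

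The only bookkeeping point to settle is that the derived rules used in that proof---\eqref{rule: PDL- MP PROP}, \eqref{rule: PDL- Mon}, and \eqref{rule: PDL- cong-f}---are themselves derivable from $\HilbertstyleifreePDL$ alone, without \eqref{equation: Lob}. This is immediate, since their derivations are the standard normal-modal-logic arguments and rely only on \eqref{rule: MP}, \eqref{rule: NEC}, \eqref{equation: K}, and \nameref{equation: PROP}. So I expect no genuine obstacle: the proof is essentially a verbatim transcription of the cited argument under the substitution of provability relations, and the only thing to be careful about is confirming, by inspection, this absence of \eqref{equation: Lob} throughout. The companion fact \Cref{proposition: atoms without Lob consistent saturates} is dealt with in exactly the same way, as the proof of \Cref{proposition: atoms consistent saturates} likewise invokes only \eqref{rule: MP} and \eqref{equation: Ksub}, neither of which depends on \eqref{equation: Lob}.
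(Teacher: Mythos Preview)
The proposal is correct and takes essentially the same approach as the paper: the paper's own justification is simply that the proof of \Cref{proposition: atoms cond} goes through verbatim because \eqref{equation: Lob} is never invoked there. Your write-up is in fact more detailed than the paper's, explicitly listing the axioms and derived rules used and confirming each is available in $\HilbertstyleifreePDL$.
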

\begin{proposition}\label{proposition: atoms without Lob consistent}
All the conditions in \Cref{proposition: atoms consistent} hold for $\vdashifreePDL$.
\end{proposition}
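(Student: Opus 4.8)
The plan is to recognize that \Cref{proposition: atoms consistent} was proved (in \Cref{section: proposition: atoms consistent}) using only ingredients that survive the passage from $\HilbertstyleifreePDLsfinlin$ to the weaker system $\vdashifreePDL$, so that the identical five-case argument establishes the analogue here. The system $\vdashifreePDL$ is obtained from $\HilbertstyleifreePDLsfinlin$ by deleting the single axiom \nameref{equation: Lob}; all other rules and axioms — \eqref{rule: MP}, \eqref{rule: NEC}, \nameref{equation: PROP}, \eqref{equation: K}, \eqref{equation: union}, \eqref{equation: compo}, \eqref{equation: test L}, \eqref{equation: test R}, and the induction axiom \eqref{equation: ind} — remain. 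First I would confirm that each tool used in the original proof is still available once the notion of consistency is read in the redefined sense of \Cref{section: theorem: PDL- completeness without Lob}.

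Two derived ingredients deserve explicit attention. The admissibility \eqref{equation: Ksub} of substitution-instances of single-modality $\mathbf{K}$-theorems follows from \eqref{rule: NEC} and \eqref{equation: K} alone, hence holds for $\vdashifreePDL$. The transitive-closure rule \eqref{rule: TC}, on which the Kleene-plus case relies, is derivable from the induction axiom: the inter-derivability of \eqref{equation: ind}, \eqref{rule: LI}, and \eqref{rule: TC} shown in \Cref{section: equivalence of ind, LI, and TC} uses only \eqref{equation: *}, \nameref{equation: PROP}, \eqref{rule: NEC}, \eqref{equation: K}, and the propositional/monotonicity meta-rules, never \nameref{equation: Lob}. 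Thus \eqref{rule: TC} is derivable in $\vdashifreePDL$ as well.

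With these preliminaries I would rerun the five cases of \Cref{proposition: atoms consistent} verbatim, replacing $\vdashifreePDLsfinlin$ by $\vdashifreePDL$ throughout: the union case uses \eqref{equation: union} with \nameref{equation: PROP}; the composition case combines \eqref{equation: compo} with the saturation step, available here as \Cref{proposition: atoms without Lob consistent saturates}; the Kleene-plus case uses \eqref{rule: TC}, \nameref{equation: PROP}, and \eqref{equation: Ksub}; and the two test cases invoke \eqref{equation: test L} and \eqref{equation: test R}. Since no step touches \nameref{equation: Lob}, every conclusion transfers unchanged. The sole point needing genuine (if routine) checking is the Löb-free derivability of \eqref{rule: TC}, which the argument of \Cref{section: equivalence of ind, LI, and TC} supplies; everything else is a mechanical substitution of one turnstile for another.
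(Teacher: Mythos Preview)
Your proposal is correct and takes essentially the same approach as the paper: the paper simply remarks that the proof of \Cref{proposition: atoms consistent} goes through for $\vdashifreePDL$ because \nameref{equation: Lob} is never invoked. Your write-up is a more explicit version of that same observation, carefully verifying that each ingredient (in particular \eqref{rule: TC}) is available without \nameref{equation: Lob}.
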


\subsubsection{Canonical Model} \label{section: canonical model without Lob}
Let $\fml_0$ be a "formula".
The ""canonical model@@PDLwithoutLob"" $\intro*\canonicalmodelwithoutLob^{\fml_0}$ is the "generalized structure" defined as follows:
\begin{align*}
    \univ{\canonicalmodelwithoutLob^{\fml_0}} &\defeq \at(\fml_0), &
    \hspace{-1em}\strucuniv^{\canonicalmodelwithoutLob^{\fml_0}} &\defeq \univ{\canonicalmodelwithoutLob^{\fml_0}}^2,\\
    \aterm^{\canonicalmodelwithoutLob^{\fml_0}} &\defeq \set{\tuple{\atom, \atom'} \in \strucuniv^{\canonicalmodelwithoutLob^{\fml_0}} \mid \atomtof{\atom} \land \dia{\aterm}\atomtof{\atom}' \mbox{ is \kl{consistent}}},& 
    \hspace{-1em}\afml^{\canonicalmodelwithoutLob^{\fml_0}} &\defeq \set{\atom \in \univ{\canonicalmodelwithoutLob^{\fml_0}} \mid \afml \in \atom}.
\end{align*}
By definition, we have $\canonicalmodelwithoutLob^{\fml_0} \in \REL{}$.
We have the following truth lemma.

\begin{lemma}[Truth lemma]\label{lemma: prunning' without Lob}
    Let $\fml_0$ be a "formula".
    For every "expression" $\expr$,
    we have the following:
    \begin{enumerate}
        \item \label{lemma: prunning' 1 without Lob}
        If $\expr = \fml \in \cl(\fml_0)$ is a "formula",
        then for all $\atom[2] \in \at(\fml_0)$,
        $\fml \in \atom[2]$ "iff"
        $\atom[2] \in \semifreePDL{\fml}{\canonicalmodelwithoutLob^{\fml_0}}$.

        \item \label{lemma: prunning' 2 without Lob}
        If $\expr = \term \in \tcl(\fml_0)$ is a "term", then
        for all $\tuple{\atom[2], \atom[2]'} \in \strucuniv^{\canonicalmodelwithoutLob^{\fml_0}}$,
        if $\atomtof{\atom[2]} \land \dia{\term}\atomtof{\atom[2]}'$ is "consistent", then $\tuple{\atom[2], \atom[2]'} \in \semifreePDL{\term}{\canonicalmodelwithoutLob^{\fml_0}}$.

        \item \label{lemma: prunning' 3 without Lob}
        If $\expr = \term$ is a "term"
        and $\fml[2]$ is a "formula" "st" $\bo{\term}\fml[2] \in \cl(\fml_0)$,
        then for all $\atom[2] \in \at(\fml_0)$,
        $\bo{\term}\fml[2] \not\in \atom[2]$ "iff"
        there is some $\atom[2]'$ such that
        $\tuple{\atom[2], \atom[2]'} \in \semifreePDL{\term}{\canonicalmodelwithoutLob^{\fml_0}}$ and $\fml[2] \not\in \atom[2]'$.
    \end{enumerate}
\end{lemma}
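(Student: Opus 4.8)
The plan is to prove the three statements simultaneously by induction on the structure of $\expr$, following the proof of \Cref{lemma: prunning'} essentially verbatim, but deleting every appeal to the linear ordering of atoms. The point is that $\canonicalmodelwithoutLob^{\fml_0}$ carries the \emph{full} relation $\strucuniv^{\canonicalmodelwithoutLob^{\fml_0}} = \at(\fml_0)^2$, so no step requires L\"ob's axiom; all the combinatorics one needs is already packaged in \Cref{proposition: atoms without Lob cond,proposition: atoms without Lob consistent saturates,proposition: atoms without Lob consistent}, which are the $\vdashifreePDL$-versions of the corresponding propositions from \Cref{section: PDL- completeness}.

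For part~\ref{lemma: prunning' 1 without Lob}, the cases $\fml = \afml$ and $\fml = \falsec$ are immediate from the definition of $\afml^{\canonicalmodelwithoutLob^{\fml_0}}$ and from \Cref{proposition: atoms without Lob cond}; the case $\fml = \fml[2] \to \fml[3]$ follows from \Cref{proposition: atoms without Lob cond} with the induction hypothesis; and the case $\fml = \bo{\term}\fml[2]$ is obtained by combining the induction hypothesis for $\term$ (part~\ref{lemma: prunning' 3 without Lob}) with that for $\fml[2]$ (part~\ref{lemma: prunning' 1 without Lob}), noting $\fml[2] \in \cl(\fml_0)$ by the FL-closure rules. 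For part~\ref{lemma: prunning' 2 without Lob}, I would induct on $\term \in \tcl(\fml_0)$: the base case $\term = \aterm$ is the definition of $\aterm^{\canonicalmodelwithoutLob^{\fml_0}}$, and the cases $\term = \term[2] \union \term[3]$, $\term[2] \compo \term[3]$, $\term[2]^{+}$, $\fml[3]? \compo \term[2]$, $\term[2] \compo \fml[3]?$ follow from the respective clauses of \Cref{proposition: atoms without Lob consistent} together with the induction hypothesis applied to the subterms, whose membership in the relevant $\tcl$-set is guaranteed by \Cref{definition: FL-closure} exactly as in \Cref{lemma: prunning'}. For the $\Longleftarrow$ direction of part~\ref{lemma: prunning' 3 without Lob}, the same case analysis on $\term$ as in the proof of \Cref{lemma: prunning'} transfers without change, since it never used the ordering; one pushes $\bo{\term}\fml[2] \notin \atom[2]$ through each term constructor using \Cref{proposition: atoms without Lob cond}.

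The only place where the $\GRELsfinlin{}$ argument relied on \Cref{lemma: ordering atoms} — and hence on L\"ob's axiom — is the $\Longrightarrow$ direction of part~\ref{lemma: prunning' 3 without Lob}, and here the argument is in fact simpler. Suppose $\bo{\term}\fml[2] \notin \atom[2]$, equivalently $\atomtof{\atom[2]} \land \dia{\term}\lnot\fml[2]$ is "consistent". Iterating the diamond-saturation clause of \Cref{proposition: atoms without Lob consistent saturates} over the finitely many formulas of $\cl(\fml_0)$ yields an "atom" $\atom[2]' \in \at(\fml_0)$ with $\lnot\fml[2] \in \atom[2]'$ such that $\atomtof{\atom[2]} \land \dia{\term}\atomtof{\atom[2]}'$ is "consistent". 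Since $\bo{\term}\fml[2] \in \cl(\fml_0)$ gives $\term \in \tcl(\fml_0)$, part~\ref{lemma: prunning' 2 without Lob} applies and delivers $\tuple{\atom[2], \atom[2]'} \in \semifreePDL{\term}{\canonicalmodelwithoutLob^{\fml_0}}$ together with $\fml[2] \notin \atom[2]'$, which is the witness we need.

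The only real work is bookkeeping: at each inductive step one must check that every subexpression to which the induction hypothesis is applied lies in the appropriate closure set, which is precisely what \Cref{definition: FL-closure} is designed to ensure. Beyond that there is no new idea — replacing the ordered canonical model by the full-relation one removes the need for the ordering lemma altogether — and the completeness statement \Cref{theorem: PDL- completeness without Lob} then follows in the usual way, by contraposition: from $\nvdashifreePDL \fml$ one gets that $\lnot\fml$ is "consistent", hence belongs to some "atom" of $\cl(\fml)$ by \Cref{proposition: atoms without Lob consistent saturates}, and \Cref{lemma: prunning' without Lob} witnesses $\canonicalmodelwithoutLob^{\fml} \not\modelsfml \fml$ with $\canonicalmodelwithoutLob^{\fml} \in \REL{}$.
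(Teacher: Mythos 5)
Your proposal is correct and follows the paper's proof essentially step for step: parts 1 and 2 and the $\Longleftarrow$ direction of part 3 are transferred verbatim from the ordered canonical model using the $\vdashifreePDL$-versions of the atom propositions, and the $\Longrightarrow$ direction of part 3 is handled exactly as in the paper, by iterated diamond-saturation to extract a witnessing atom $\atom[2]'$ with $\fml[2]\not\in\atom[2]'$ and then an appeal to part 2. Nothing further is needed.
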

\begin{proof}
    By induction on $\expr$.

    \proofcase{For \ref{lemma: prunning' 1 without Lob}}
    By the same proof as \Cref{lemma: prunning'}.\ref{lemma: prunning' 1}, using \Cref{proposition: atoms without Lob cond}.

    \proofcase{For \ref{lemma: prunning' 2 without Lob}}
    By the same proof as \Cref{lemma: prunning'}.\ref{lemma: prunning' 2}, using \Cref{proposition: atoms without Lob consistent}.

    \proofcase{For $\Longrightarrow$ of \ref{lemma: prunning' 3}}
    Then $\atomtof{\atom[2]} \land \dia{\term}\lnot\fml[2]$ is "consistent".
    By applying \Cref{proposition: atoms without Lob consistent saturates} iteratively,
    there is some $\atom[2]'$ such that
    $\atomtof{\atom[2]} \land \dia{\term} (\lnot\fml[2] \land \atomtof{\atom[2]}')$ is \kl{consistent}.
    Thus $\atomtof{\atom[2]} \land \dia{\term} \atomtof{\atom[2]}'$ is \kl{consistent}
    and $\fml[2] \not\in \atom[2]'$.
    By IH of \ref{lemma: prunning' 2}, $\tuple{\atom[2], \atom[2]'} \in \semifreePDL{\term}{\canonicalmodelwithoutLob^{\fml_0}}$.

    \proofcase{For $\Longleftarrow$ of \ref{lemma: prunning' 3}}
    By the same proof as the direction $\Longleftarrow$ of \Cref{lemma: prunning'}.\ref{lemma: prunning' 3}, using \Cref{proposition: atoms without Lob cond}.
\end{proof}

By \Cref{lemma: prunning' without Lob}, we are now ready to prove the completeness theorem.
\begin{proof}[Proof of \Cref{theorem: PDL- completeness without Lob}]
    \proofcase{Soundness ($\Longleftarrow$)}
    Easy.
    \proofcase{Completeness ($\Longrightarrow$)}
    Similar to the proof of \Cref{theorem: PDL- completeness} using \Cref{lemma: prunning' without Lob}.
\end{proof}
\end{scope}
 
\clearpage

\begin{scope}\knowledgeimport{PDL-}\knowledgeimport{st}
    
\section{Appendix to \Cref{section: completeness match-language equivalence}~``\nameref*{section: completeness match-language equivalence}''}

\subsection{Proof of \Cref{theorem: PDL- completeness match-language equivalence}}\label{section: theorem: PDL- completeness match-language equivalence}
In this section, we prove the following.
\begin{theorem*}[Restatement of \Cref{theorem: PDL- completeness match-language equivalence}]
    \theoremPDLcompletenessmatchlanguageequivalence
\end{theorem*}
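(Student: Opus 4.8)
The soundness direction $(\Longleftarrow)$ is routine. Every axiom and rule of $\HilbertstylestifreePDL$ other than \eqref{equation: Det 1} and \eqref{equation: Det 2} already occurs in $\HilbertstyleifreePDLsfinlin$ and is valid on $\GRELsfinlin{}$ by the soundness half of \Cref{theorem: PDL- completeness}; since the strict version of each $\wordstruc^{\word}$ is a finite strict linear order we have $\GRELstsfinlin{}\subseteq\GRELsfinlin{}$, so those axioms stay valid. For the two new axioms, note that in any $\struc\in\GRELstsfinlin{}$ each $\aterm^{\struc}$ is the graph of a partial map sending a position to the next one, and doing so exactly when the corresponding letter of the underlying word is $\aterm$; hence a position has at most one outgoing edge, carrying a single label. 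This makes \eqref{equation: Det 1} valid (the unique $\aterm$-successor, if any, already witnesses the box) and \eqref{equation: Det 2} valid (a position with an $\aterm$-edge has no $\aterm[2]$-edge when $\aterm[2]\neq\aterm$). As \eqref{rule: MP} and \eqref{rule: NEC} preserve validity, soundness follows.

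For completeness $(\Longrightarrow)$ I would prove the contrapositive: assume $\nvdashstifreePDL\fml$, so $\lnot\fml$ is consistent for $\vdashstifreePDL$ by \nameref{equation: PROP}. The proofs of \Cref{proposition: atoms consistent saturates,proposition: atoms cond,proposition: atoms consistent} and of \Cref{lemma: add root,lemma: ordering atoms} invoke only rules and axioms still present (in particular \eqref{equation: Lob} and the derived L{\"o}b's rule), so they hold verbatim for $\vdashstifreePDL$. Thus, exactly as in \Cref{section: canonical model}, one obtains a linearly ordered enumeration $\atom_1\dots\atom_n$ of $\at(\fml)$ with the forward-looking property of \Cref{lemma: ordering atoms}, together with a canonical model $\canonicalmodelst^{\fml}\in\GRELsfinlin{}$ built on it; by \Cref{proposition: atoms consistent saturates} fix an atom $\atom_{\star}$ with $\lnot\fml\in\atom_{\star}$.

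The remaining work is to reshape $\canonicalmodelst^{\fml}$ into a member of $\GRELstsfinlin{}$ without destroying the truth lemma, and this is where \eqref{equation: Det 1} and \eqref{equation: Det 2} are used. They make the system prove $\vdashstifreePDL\lnot(\dia{\aterm}\truec\land\dia{\aterm[2]}\truec)$ for $\aterm\neq\aterm[2]$ (so an atom is consistent with at most one $\dia{\aterm}\truec$) and $\vdashstifreePDL\lnot(\dia{\aterm}\psi\land\dia{\aterm}\lnot\psi)$ for every $\psi$ (so the $\aterm$-successor is functional on any formula the closure decides). Using these---after, if necessary, enlarging the working closure to a finite FL-closed superset of $\cl(\fml)$ rich enough that each atom commits to a single outgoing edge; \Cref{lemma: prunning'} uses only FL-closedness---the canonical model $\canonicalmodelst^{\fml}$ becomes \emph{deterministic}: every atom has at most one outgoing edge, carrying a single label. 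Following the unique forward path $\atom_{\star}=\atom_{j_0}\to\atom_{j_1}\to\dots\to\atom_{j_k}$ (ending at the first atom with no outgoing edge), deleting the other vertices, and re-indexing $\atom_{j_0},\dots,\atom_{j_k}$ as $0,\dots,k$, the inherited universal relation is exactly $\set{\tuple{a,b}\mid a<b}$ and the only retained edges go from $a$ to $a+1$ with a single label each; hence the pruned model equals the strict version of $\wordstruc^{\word}$, where $\word$ is read off those labels, with $\afml$ interpreted as $\set{a\mid\afml\in\atom_{j_a}}$, and therefore lies in $\GRELstsfinlin{}$.

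Finally one checks that \Cref{lemma: prunning'} still holds for the pruned model. The $(\Longrightarrow)$ direction for box formulas and the $\to$, $\falsec$, atomic cases are unaffected, since pruning only deletes edges and vertices (so the box can only become easier to satisfy). The one delicate point---which I expect to be the main obstacle---is the $(\Longrightarrow)$ direction of \Cref{lemma: prunning'}.\ref{lemma: prunning' 3}: when $\bo{\term}\fml'\notin\atom$ for a retained atom $\atom$, \Cref{lemma: ordering atoms} supplies a later atom $\atom_{j}$ with $\atomtof{\atom}\land\dia{\term}\atomtof{\atom}_{j}$ consistent and $\fml'\notin\atom_{j}$, and one must verify that $\atom_{j}$ is itself retained. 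By determinism this is forced: unwinding $\atomtof{\atom}\land\dia{\term}\atomtof{\atom}_{j}$ via \Cref{proposition: atoms consistent} produces a term-trace each of whose term-variable steps is, by the uniqueness above, the retained outgoing edge of the current atom, so the trace stays on the forward path out of $\atom$, which is a suffix of the path out of $\atom_{\star}$; hence $\atom_{j}$ is retained, and by the induction hypothesis on the pruned model the pair $(\atom,\atom_{j})$ lies in the semantics of $\term$ there. With the truth lemma in hand, the pruned model satisfies $\lnot\fml$ at the image of $\atom_{\star}$, so it witnesses $\GRELstsfinlin{}\not\modelsfml\fml$, completing the contrapositive.
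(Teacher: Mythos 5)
Your strategy is the same as the paper's: carry the atom machinery over to $\vdashstifreePDL$, use \eqref{equation: Det 1} and \eqref{equation: Det 2} to single out one outgoing edge per atom, follow the resulting chain from an atom containing $\lnot\fml$, and re-verify the truth lemma on that chain. The soundness half is also as in the paper.

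The gap is the assertion that \eqref{equation: Det 1} and \eqref{equation: Det 2} make the canonical model over \emph{all} atoms deterministic. They do not. An edge $\tuple{\atom,\atom'}\in\aterm^{\canonicalmodelst^{\fml}}$ only records that $\atomtof{\atom}\land\dia{\aterm}\atomtof{\atom}'$ is consistent; if $\atomtof{\atom}\land\dia{\aterm}\atomtof{\atom}''$ is also consistent for a different $\atom''$, these are two \emph{separate} consistency assertions that cannot be combined into a single refutable formula, because $\atom$ need not decide $\dia{\aterm}\psi$ for the formulas $\psi$ on which $\atom'$ and $\atom''$ differ (atoms only decide members of $\cl(\fml)$). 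Semantically, two word models can realize the same atom at some position yet continue with successors realizing different atoms, so an atom genuinely can have several outgoing consistency-edges; there is then no unique forward path, and your closing argument that the witness $\atom_j$ supplied by \Cref{lemma: ordering atoms} is automatically retained collapses. Enlarging the closure does not repair this, since pinning down the successor atom over the enlarged closure would require boxes over the enlarged closure, and so on. What \eqref{equation: Det 1} and \eqref{equation: Det 2} actually yield---and what the paper isolates as \Cref{lemma: st ordering atoms strong}---is weaker but sufficient: every $\bo{\aterm[2]}\fml[3]\in\cl(\fml)\setminus\atom$ carries the same label (otherwise \eqref{equation: Det 2} makes $\atomtof{\atom}$ inconsistent), and the single witness $\atom_{f(i)}$ chosen via \Cref{lemma: ordering atoms} for one such box already refutes all the others, since $\fml[3]\in\atom_{f(i)}$ together with $\lnot\bo{\aterm}\fml[3]$ being a conjunct of $\atomtof{\atom}$ would contradict \eqref{equation: Det 1}. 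The model is then \emph{defined} on the chain of these chosen successors, and the forward direction of the truth lemma for $\bo{\term}\fml[2]$ is re-proved by structural induction on $\term$ (with an inner induction along the chain for $\term[2]^{+}$), reducing everything to the base case $\term=\aterm$, where the chosen successor is by construction the unique edge. Substituting this for your determinism claim makes the argument go through and recovers the paper's proof.
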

The soundness ($\Longleftarrow$) is clear,
because both \eqref{equation: Det 1} and \eqref{equation: Det 2} hold on $\wordstruc^{\word}$.
Below, we consider the completeness ($\Longrightarrow$).
The proof is similar to that of \Cref{theorem: PDL- completeness} in \Cref{section: PDL- completeness},

\AP%
\phantomintro(st){consistent}%
\phantomintro(st){atom}%
We redefine ``"consistent"'' and its related notations,
by replacing $\vdashifreePDLsfinlin$ with $\vdashstifreePDL$.
We recall \Cref{proposition: atoms cond,proposition: atoms consistent saturates,proposition: atoms consistent}.
They hold also for $\vdashstifreePDL$, by the same proof as in \Cref{section: PDL- completeness}.
\begin{proposition}\label{proposition: atoms st consistent saturates}
All the conditions in \Cref{proposition: atoms consistent saturates} hold for $\vdashstifreePDL$.
\end{proposition}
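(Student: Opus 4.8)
The plan is to observe that $\vdashstifreePDL$ is simply an extension of $\vdashifreePDLsfinlin$: the system of \Cref{figure: PDL- axioms} augmented with \eqref{equation: Det 1} and \eqref{equation: Det 2} has exactly the same rules \eqref{rule: MP} and \eqref{rule: NEC} and strictly more axioms, so every $\vdashifreePDLsfinlin$-derivation is still a $\vdashstifreePDL$-derivation. In particular \eqref{equation: K} is present, hence the derived fact \eqref{equation: Ksub} — that all substitution-instances of $\mathbf{K}$-valid formulas whose modalities are restricted to a single box $\bo{\aterm}$ are derivable — continues to hold for $\vdashstifreePDL$, since its justification appeals only to the completeness of the modal logic $\mathbf{K}$ together with \eqref{rule: NEC} and \eqref{equation: K}, none of which is touched by adding \eqref{equation: Det 1} and \eqref{equation: Det 2}.

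First I would recall (from \Cref{section: proposition: atoms consistent saturates}) that both items of \Cref{proposition: atoms consistent saturates} are established by contraposition using only \eqref{rule: MP} and \eqref{equation: Ksub}: for item~\ref{proposition: atoms consistent saturate}, $\lnot \fml[1]$ is derived from $\lnot(\fml[1] \land \fml[2])$ and $\lnot(\fml[1] \land \lnot \fml[2])$ by propositional reasoning; and for item~\ref{proposition: atoms consistent saturate dia}, one additionally invokes the $\mathbf{K}$-instance $(\bo{\term}\lnot(\fml[2] \land \fml[3]) \land \bo{\term}\lnot(\fml[2] \land \lnot \fml[3])) \to \bo{\term}\lnot \fml[2]$. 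Since every line of that argument is phrased purely in terms of \eqref{rule: MP} and \eqref{equation: Ksub}, and both remain available for $\vdashstifreePDL$, the identical argument carries over verbatim once ``\kl{consistent}'' is read with $\nvdashstifreePDL$ in place of $\nvdashifreePDLsfinlin$. So the proof reduces to the one-line pointer: \emph{by the same proof as \Cref{proposition: atoms consistent saturates}, since \eqref{rule: MP} and \eqref{equation: Ksub} hold for $\vdashstifreePDL$.}

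I expect no genuine obstacle here. Enlarging the axiom set can only enlarge the set of derivable formulas, so it preserves every fact of the form ``$X$ is derivable'' used in the proof, while it can only shrink the set of \kl{consistent} formulas — which is harmless, because each item of \Cref{proposition: atoms consistent saturates} is an implication whose hypothesis is itself a consistency assumption (indeed, were the extended system inconsistent, every item would hold vacuously, as no formula would be \kl{consistent}). The only point that deserves an explicit sentence is that \eqref{equation: Ksub} survives the extension, which it does for the reason given above; everything else is routine monotonicity.
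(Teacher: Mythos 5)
Your proposal is correct and matches the paper's own (essentially one-line) justification: the paper likewise observes that the proof of \Cref{proposition: atoms consistent saturates} goes through its contrapositive using only \eqref{rule: MP} and \eqref{equation: Ksub}, both of which persist when the axioms \eqref{equation: Det 1} and \eqref{equation: Det 2} are added, since $\vdashstifreePDL$ has the same rules and a superset of the axioms of $\vdashifreePDLsfinlin$. Your extra remark that the contrapositive form makes the claim a pure derivability implication, hence monotone under enlarging the axiom set, is exactly the right reason the argument transfers verbatim.
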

\begin{proposition}\label{proposition: atoms st cond}
All the conditions in \Cref{proposition: atoms cond} hold for $\vdashstifreePDL$.
\end{proposition}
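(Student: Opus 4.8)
The plan is to observe that the statement follows ``for free'' from \Cref{proposition: atoms cond}. Indeed, $\HilbertstylestifreePDL$ is obtained from $\HilbertstyleifreePDLsfinlin$ by adjoining the two axioms \eqref{equation: Det 1} and \eqref{equation: Det 2}, while keeping the same rules \eqref{rule: MP}, \eqref{rule: NEC} and all of the previous axioms. Provability is therefore monotone: ${}\vdashifreePDLsfinlin \fml$ implies ${}\vdashstifreePDL \fml$, so every axiom and derived rule used in \Cref{section: proposition: atoms cond} --- \text{\nameref{equation: PROP}}, \eqref{equation: union}, \eqref{equation: compo}, \eqref{equation: *}, \eqref{equation: test L}, \eqref{equation: test R}, and the modal congruence rules --- remains available for $\vdashstifreePDL$.

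First I would record that, since ``consistent'' and ``atom'' have been redefined uniformly here by substituting $\vdashstifreePDL$ for $\vdashifreePDLsfinlin$ while $\cl(\fml_0)$ is unchanged, an atom of $\cl(\fml_0)$ is still a maximal $\vdashstifreePDL$-consistent selection of one of $\fml[2], \lnot\fml[2]$ for each $\fml[2]\in\cl(\fml_0)$; the basic facts $\vdashstifreePDL \bigvee\at(\fml_0)$ and the FL-closure rules persist. Then, for each of the clauses (1)--(7), I would replay the corresponding step of the proof of \Cref{proposition: atoms cond} verbatim: derive the relevant biconditional $\atomtof{\atom}\land\fml \leftrightarrow \atomtof{\atom}\land\fml'$ from the appropriate axiom via congruence, and read off membership from maximal consistency; the clause $\falsec\notin\atom$ follows, as before, from $\falsec$ being $\vdashstifreePDL$-inconsistent by \text{\nameref{equation: PROP}}.

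I do not expect any genuine obstacle here. The only thing to check is that enlarging the axiom set does not invalidate a step of the original argument, which is immediate because that argument never uses the \emph{absence} of axioms, only the presence of the shared ones. Hence the statement follows by rereading the proof of \Cref{proposition: atoms cond} with $\vdashifreePDLsfinlin$ replaced by $\vdashstifreePDL$ throughout; the same remark will also dispatch \Cref{proposition: atoms st consistent saturates}, since \Cref{proposition: atoms consistent saturates} is proved purely from \eqref{rule: MP} and the derived rule \eqref{equation: Ksub}, both of which survive the extension.
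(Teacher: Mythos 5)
Your proposal is correct and matches the paper's own justification, which simply asserts that \Cref{proposition: atoms cond} ``holds also for $\vdashstifreePDL$, by the same proof as in \Cref{section: PDL- completeness}'': the derivations of the relevant biconditionals via \text{\nameref{equation: PROP}}, \eqref{equation: union}, \eqref{equation: compo}, \eqref{equation: *}, \eqref{equation: test L}, \eqref{equation: test R} and congruence persist under the larger axiom set, and each clause is read off from the (re-defined) consistency of the atom exactly as before. Your closing observation that monotonicity of provability is all that is needed, since the original argument never exploits the absence of axioms, is precisely the point.
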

\begin{proposition}\label{proposition: atoms st consistent}
All the conditions in \Cref{proposition: atoms consistent} hold for $\vdashstifreePDL$.
\end{proposition}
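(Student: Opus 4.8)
The plan is to show that the five statements transfer \emph{verbatim} from $\vdashifreePDLsfinlin$ to $\vdashstifreePDL$, exploiting that the latter is obtained from the former by adjoining only the two axioms \eqref{equation: Det 1} and \eqref{equation: Det 2}. First I would record that derivability is monotone under the addition of axioms, so every rule and axiom invoked in the proof of \Cref{proposition: atoms consistent} in \Cref{section: proposition: atoms consistent} remains available in $\vdashstifreePDL$: concretely \eqref{rule: MP}, \eqref{rule: NEC}, \text{\nameref{equation: PROP}}, \eqref{equation: K}, \eqref{equation: union}, \eqref{equation: compo}, \eqref{equation: test L}, \eqref{equation: test R}, the derived rule \eqref{rule: TC}, and the schema \eqref{equation: Ksub}. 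In particular \eqref{equation: Ksub} still holds because $\vdashstifreePDL$ retains \eqref{rule: NEC} and \eqref{equation: K}, so the underlying modal logic $\mathbf{K}$ completeness argument is untouched, and \eqref{rule: TC} remains derivable by exactly the derivation of \Cref{proposition: LI and TC}, which uses only the shared Segerberg-style axioms (in particular \eqref{equation: ind}).

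The one genuinely new feature is that ``"consistent"'' is now taken relative to $\vdashstifreePDL$, i.e.\ $\fml$ is "consistent" iff $\nvdashstifreePDL \lnot\fml$; since more formulas are derivable, fewer are "consistent". I would check that this strengthening is harmless. Each clause of \Cref{proposition: atoms consistent} has the form ``from the consistency of one formula, infer the consistency of another formula (or a purely syntactic membership $\fml[3] \in \atom$)''. Consistency occurs only as a hypothesis, which is carried through unchanged, while every conclusion is reached by a \emph{derivation}: by a provable equivalence for the union clause, by a derivation terminating in a contradiction with the consistency hypothesis for the composition and $\bl^{+}$ clauses, and directly from the test axioms for the test clauses. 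As all these derivations remain valid in $\vdashstifreePDL$ and the extra axioms can only shrink the class of consistent formulas, no step is invalidated.

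Concretely, for the composition clause I would reuse \Cref{proposition: atoms consistent saturates}, which for $\vdashstifreePDL$ is supplied by \Cref{proposition: atoms st consistent saturates}, together with \eqref{equation: compo} and \eqref{rule: NEC}\eqref{equation: Ksub}; for the $\bl^{+}$ clause I would build the same smallest "atom" set $\atomset$ closed under $\term[2]$-successors and derive $\vdashstifreePDL \atomtof{\atom} \to \bo{\term[2]^{+}} \bigvee \atomset$ via \eqref{rule: TC}, contradicting the consistency of $\atomtof{\atom} \land \dia{\term[2]^{+}} \atomtof{\atom}'$ whenever $\atom' \notin \atomset$, and then extract the witnessing chain of "atoms" from the definition of $\atomset$. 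The union, test L, and test R clauses are then immediate from \eqref{equation: union}, \eqref{equation: test L}, and \eqref{equation: test R}, just as before.

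I do not expect a substantive obstacle: the content of the proposition is exactly that the determinism axioms \eqref{equation: Det 1} and \eqref{equation: Det 2} play no role in these structural consistency lemmas. The only point deserving care is confirming that each auxiliary fact the original proof borrows---above all \eqref{equation: Ksub} and the derived rule \eqref{rule: TC}---is still derivable in $\vdashstifreePDL$; since each of these depends only on axioms and rules common to both systems, the entire case analysis of \Cref{proposition: atoms consistent} reproduces the five statements for $\vdashstifreePDL$.
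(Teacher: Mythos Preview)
Your proposal is correct and takes essentially the same approach as the paper: the paper's proof is simply the one-line remark that the propositions ``hold also for $\vdashstifreePDL$, by the same proof as in \Cref{section: PDL- completeness}'', and you have spelled out in detail why that transfer is legitimate---derivability is monotone under adding \eqref{equation: Det 1} and \eqref{equation: Det 2}, the original argument uses only axioms common to both systems (in particular not \eqref{equation: Lob} beyond what is already in $\HilbertstyleifreePDLsfinlin$), and the shift in the meaning of ``consistent'' only strengthens hypotheses.
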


\begin{lemma}["Cf", \Cref{lemma: ordering atoms}]\label{lemma: st ordering atoms}
    Let $\fml_0$ be a "formula".
    Then there is a sequence $\atom_1, \dots, \atom_n$ of pairwise distinct \kl{atoms}
    with $\at(\fml_0) = \set{\atom_1, \dots, \atom_n}$ satisfying the following:
    for all $i \in \rangeone{n}$ and $\bo{\term} \fml[2] \in \cl(\fml_0)$,
    if $\bo{\term} \fml[2] \notin \atom_i$, then
    there is some $j > i$ such that
    $\atomtof{\atom}_i \land \dia{\term} \atomtof{\atom}_j$ is \kl{consistent} and $\fml[2] \not\in \atom_j$.
\end{lemma}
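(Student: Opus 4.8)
The plan is to replay the proof of \Cref{lemma: ordering atoms} essentially verbatim, now reading ``"consistency"'' with respect to $\vdashstifreePDL$ and replacing every invocation of $\HilbertstyleifreePDLsfinlin$ by $\HilbertstylestifreePDL$. The only point that needs checking is that nothing used in that proof is destroyed by adjoining the two axioms \eqref{equation: Det 1} and \eqref{equation: Det 2}; but $\HilbertstylestifreePDL$ \emph{extends} $\HilbertstyleifreePDLsfinlin$, so every $\vdashifreePDLsfinlin$-derivable formula is also $\vdashstifreePDL$-derivable. In particular \eqref{equation: Ksub} still holds (its justification uses only \eqref{rule: NEC} and \eqref{equation: K}), the "L{\"o}b's rule" is still "derivable" from \eqref{equation: Lob}, and \Cref{proposition: all trace} still holds; likewise the saturation and atom facts carry over via \Cref{proposition: atoms st consistent saturates}, \Cref{proposition: atoms st cond}, and \Cref{proposition: atoms st consistent}.

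Concretely, I would first record the $\vdashstifreePDL$-analog of \Cref{lemma: add root}: for non-empty $\atomset \subseteq \at(\fml_0)$ there is some $\atom \in \atomset$ with $\atomtof{\atom} \land \bo{\term^{+}}\atomsettof{(\at(\fml_0) \setminus \atomset)}$ "consistent". The proof is unchanged: assuming the contrary, combine the resulting derivations with $\vdashstifreePDL \bo{\term^{+}}((\bigvee \at(\fml_0) \setminus \atomset) \leftrightarrow \lnot \bigvee \atomset)$ (from \nameref{equation: PROP} and \eqref{rule: NEC}) using \eqref{rule: MP} and \eqref{equation: Ksub} to get $\vdashstifreePDL \bo{\term^{+}}(\lnot \atomsettof{\atomset}) \to \lnot \atomsettof{\atomset}$, apply the "L{\"o}b's rule" to obtain $\vdashstifreePDL \lnot \atomsettof{\atomset}$, and contradict the "consistency" of the (non-empty) "atoms" in $\atomset$. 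Iterating this with $\term = \sum \exprvsig(\fml_0)$ yields pairwise distinct $\atom_1 \dots \atom_n$ exhausting $\at(\fml_0)$ with $\atomtof{\atom}_i \land \bo{(\sum \exprvsig(\fml_0))^{+}}\bigvee_{j > i}\atomtof{\atom}_j$ "consistent" for each $i$, and \Cref{proposition: all trace} upgrades this to $\atomtof{\atom}_i \land \bo{\term}\bigvee_{j > i}\atomtof{\atom}_j$ "consistent" for every "term" $\term$ with $\exprvsig(\term) \subseteq \exprvsig(\fml_0)$, in particular for every $\term$ occurring in some $\bo{\term}\fml[2] \in \cl(\fml_0)$. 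Finally, for the stated property I would argue by contraposition exactly as in \Cref{lemma: ordering atoms}: if for some $i$ and some $\bo{\term}\fml[2] \in \cl(\fml_0)$ there were no $j > i$ with $\atomtof{\atom}_i \land \dia{\term}\atomtof{\atom}_j$ "consistent" and $\fml[2] \notin \atom_j$, then $\vdashstifreePDL \atomtof{\atom}_i \to \bo{\term}(\fml[2] \lor \lnot\bigvee_{j > i}\atomtof{\atom}_j)$ by \eqref{rule: MP} and \eqref{equation: Ksub}, whence $\atomtof{\atom}_i \land \bo{\term}\fml[2]$ is "consistent" and so $\bo{\term}\fml[2] \in \atom_i$, a contradiction.

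I do not expect a genuine obstacle in this lemma: the content is pure bookkeeping, the single substantive remark being that the L\"ob machinery and the atom propositions are insensitive to the extra axioms \eqref{equation: Det 1} and \eqref{equation: Det 2} (adjoining axioms only enlarges derivability). The real new work for \Cref{theorem: PDL- completeness match-language equivalence} lies not here but in the canonical-model construction built on top of this ordering, where \eqref{equation: Det 1} and \eqref{equation: Det 2} are used to prune "edges" and "vertices" so that the resulting "generalized structure" becomes "isomorphic" to $\wordstruc^{\word}$ for some "string" $\word$, i.e.\ lands in $\GRELstsfinlin{}$.
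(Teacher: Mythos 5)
Your proposal is correct and matches the paper's (implicit) argument exactly: the paper states \Cref{lemma: st ordering atoms} without a separate proof, relying on the observation you make explicit — that $\HilbertstylestifreePDL$ extends $\HilbertstyleifreePDLsfinlin$, so the derivability facts (\eqref{equation: Ksub}, the "L{\"o}b's rule", \Cref{proposition: all trace}) and the atom propositions carry over verbatim to the new notion of "consistency", and the proofs of \Cref{lemma: add root} and \Cref{lemma: ordering atoms} replay unchanged. Your closing remark is also accurate: the genuinely new content for \Cref{theorem: PDL- completeness match-language equivalence} is in \Cref{lemma: st ordering atoms strong} and the pruned "canonical model@@st", not here.
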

Moreover, using \eqref{equation: Det 1} and \eqref{equation: Det 2},
\Cref{lemma: st ordering atoms} is strengthened as follows.
\begin{lemma}\label{lemma: st ordering atoms strong}
    Let $\fml_0$ be a "formula" and let $\atom_1, \dots, \atom_n$ be the sequence obtained in \Cref{lemma: st ordering atoms}.
    Then there is a partial function $f \colon \rangeone{n} \pfun \rangeone{n}$
    such that 
    for each $i \in \rangeone{n}$,
    if there is some $\bo{\aterm} \fml[2] \in \cl(\fml_0) \setminus \atom_i$,
    then $f(i) > i$ holds and
    for all $\bo{\aterm[2]} \fml[3] \in \cl(\fml_0) \setminus \atom_i$,
    $\atomtof{\atom}_i \land \dia{\aterm[2]} \atomtof{\atom}_{f(i)}$ is \kl{consistent} and $\fml[3] \not\in \atom_{f(i)}$.
\end{lemma}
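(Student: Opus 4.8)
The plan is to strengthen the sequence from \Cref{lemma: st ordering atoms} so that, for each index $i$, all outgoing diamond obligations of $\atom_i$ are witnessed by a \emph{single} later atom $\atom_{f(i)}$, not by many different ones. The two determinism axioms are exactly what makes this possible: \eqref{equation: Det 1} forces that if $\dia{\aterm}\fml$ holds at a point then $\bo{\aterm}\fml$ holds there too (so the $\aterm$-successor is essentially unique), and \eqref{equation: Det 2} forces that distinct labels $\aterm \neq \aterm[2]$ cannot both have successors (so at most one label is "active"). I would first record these as derived facts about atoms: working in $\vdashstifreePDL$, from \eqref{equation: Det 1}\eqref{equation: Det 2} with \nameref{equation: PROP} one gets that for any atom $\atom$ and any two formulas $\bo{\aterm}\fml[2], \bo{\aterm[2]}\fml[3] \in \cl(\fml_0)$, if $\atomtof{\atom} \land \dia{\aterm}(\lnot\fml[2] \land \atomtof{\atom}')$ is consistent and $\atomtof{\atom} \land \dia{\aterm[2]}(\lnot\fml[3] \land \atomtof{\atom}'')$ is consistent, then in fact $\atom' = \atom''$ (hence $\fml[3] \not\in \atom'$). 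This is the crucial algebraic step, and I expect it to be the main obstacle: one has to push the determinism axioms through the atom machinery carefully, using \Cref{proposition: atoms st consistent} (the saturation/consistency properties) to collapse two potentially different witness atoms into one. Concretely, if $\atomtof{\atom} \land \dia{\aterm}\atomtof{\atom}'$ and $\atomtof{\atom} \land \dia{\aterm[2]}\atomtof{\atom}''$ are both consistent with $\atom' \neq \atom''$, pick a formula $\fml[1] \in \atom' \setminus \atom''$ (or its negation); then $\atomtof{\atom} \land \dia{\aterm}(\atomtof{\atom}' \land \fml[1])$ is consistent, so $\lnot \bo{\aterm}\lnot(\atomtof{\atom}' \land \fml[1])$, whence by \eqref{equation: Det 1} (instantiated to the formula $\atomtof{\atom}' \land \fml[1]$, together with $\mathbf{K}$-reasoning \eqref{equation: Ksub}) we get $\bo{\aterm}(\atomtof{\atom}' \land \fml[1])$ is consistent with $\atomtof{\atom}$, forcing $\fml[1] \in \atom''$ via the other diamond — contradiction; and the case $\aterm \neq \aterm[2]$ is handled by \eqref{equation: Det 2} in the same style.

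Granting that "unique successor" lemma, the construction of $f$ is then routine. Given the sequence $\atom_1 \dots \atom_n$ from \Cref{lemma: st ordering atoms}, I would define $f(i)$ as follows: if $\cl(\fml_0) \setminus \atom_i$ contains no formula of the form $\bo{\aterm}\fml[2]$, leave $f(i)$ undefined; otherwise pick any such $\bo{\aterm}\fml[2] \not\in \atom_i$ and let $j > i$ be the witness provided by \Cref{lemma: st ordering atoms}, so $\atomtof{\atom}_i \land \dia{\aterm}\atomtof{\atom}_j$ is consistent and $\fml[2] \not\in \atom_j$; set $f(i) \defeq j$. It remains to check the "for all" clause: given an arbitrary $\bo{\aterm[2]}\fml[3] \in \cl(\fml_0) \setminus \atom_i$, apply \Cref{lemma: st ordering atoms} again to obtain a witness $\atom_k$ with $\atomtof{\atom}_i \land \dia{\aterm[2]}\atomtof{\atom}_k$ consistent and $\fml[3] \not\in \atom_k$; by the uniqueness lemma applied to the two consistent diamonds at $\atom_i$ (one labelled $\aterm$ landing in $\atom_j$, the other labelled $\aterm[2]$ landing in $\atom_k$), we conclude $\atom_k = \atom_j = \atom_{f(i)}$, so $\atomtof{\atom}_i \land \dia{\aterm[2]}\atomtof{\atom}_{f(i)}$ is consistent and $\fml[3] \not\in \atom_{f(i)}$, as required. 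The inequality $f(i) > i$ is immediate from $j > i$.

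One small subtlety I would flag: the uniqueness lemma as sketched compares witnesses landing in atoms of $\at(\fml_0)$, which is fine since all witnesses produced by \Cref{lemma: st ordering atoms} are among the $\atom_1, \dots, \atom_n$; and it is applied only to term variables $\aterm, \aterm[2] \in \exprvsig(\fml_0)$, matching the scope of \eqref{equation: Det 1}\eqref{equation: Det 2}. I would organize the write-up as: (1) state and prove the "unique successor" sublemma from the determinism axioms and \Cref{proposition: atoms st consistent}; (2) define $f$ from \Cref{lemma: st ordering atoms}; (3) verify the displayed property using (1). Step (1) is where essentially all the work is; steps (2)–(3) are bookkeeping. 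This $f$ is then exactly what lets the canonical-model construction in the proof sketch of \Cref{theorem: PDL- completeness match-language equivalence} prune down to a linear chain with at most one outgoing labelled edge per vertex, i.e.\ to a structure isomorphic to some $\wordstruc^{\word}$.
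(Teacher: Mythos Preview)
Your ``unique successor'' sublemma is false, and the step ``forcing $\fml[1] \in \atom''$ via the other diamond'' is exactly where it breaks. From two \emph{separate} consistency facts $\atomtof{\atom} \land \dia{\aterm}\atomtof{\atom}'$ and $\atomtof{\atom} \land \dia{\aterm}\atomtof{\atom}''$ you cannot conclude that any conjunction combining them is consistent: the atom $\atom$ only decides formulas in $\cl(\fml_0)$, and neither $\bo{\aterm}(\atomtof{\atom}' \land \fml[1])$ nor $\dia{\aterm}\atomtof{\atom}''$ lies there. A concrete counterexample: take $\fml_0 = p \to \bo{\aterm}q$, $\atom = \{p,\; q,\; \lnot\bo{\aterm}q,\; \lnot(p \to \bo{\aterm}q)\}$, $\atom' = \{p,\; \lnot q,\; \bo{\aterm}q,\; p \to \bo{\aterm}q\}$, $\atom'' = \{\lnot p,\; \lnot q,\; \bo{\aterm}q,\; p \to \bo{\aterm}q\}$. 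Both $\atomtof{\atom} \land \dia{\aterm}\atomtof{\atom}'$ and $\atomtof{\atom} \land \dia{\aterm}\atomtof{\atom}''$ are satisfiable in $\GRELstsfinlin{}$ (two-point $\aterm$-chains differing only in the truth of $p$ at the successor), hence consistent in $\HilbertstylestifreePDL$, yet $\atom' \neq \atom''$. Note that $\bo{\aterm}p \notin \cl(\fml_0)$, so $\atom$ carries no information forcing the value of $p$ at the successor.

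Your definition of $f$ (step~2) is the same as the paper's; it is the verification (step~3) that must change. The paper never produces a second witness $\atom_k$ and never argues uniqueness. It works only with the single consistency fact $\atomtof{\atom}_i \land \dia{\aterm}\atomtof{\atom}_{f(i)}$. For any $\bo{\aterm[2]}\fml[3] \in \cl(\fml_0) \setminus \atom_i$: first $\aterm[2] = \aterm$, since otherwise $\lnot\bo{\aterm}\fml[2],\, \lnot\bo{\aterm[2]}\fml[3] \in \atom_i$ together with \eqref{equation: Det 2} make $\atom_i$ inconsistent. Second, if $\fml[3] \in \atom_{f(i)}$ then that one consistency fact yields $\atomtof{\atom}_i \land \dia{\aterm}\fml[3]$ consistent; because $\bo{\aterm}\fml[3] \in \cl(\fml_0)$ the atom $\atom_i$ already contains $\lnot\bo{\aterm}\fml[3]$, so $\lnot\bo{\aterm}\fml[3] \land \dia{\aterm}\fml[3]$ is consistent, contradicting \eqref{equation: Det 1}. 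The decisive move is to instantiate \eqref{equation: Det 1} with $\fml[3]$ itself---a formula whose box is in the closure and hence decided by $\atom_i$---rather than with $\atomtof{\atom}'$ or an arbitrary distinguishing formula.
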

\begin{proof}
    We define $f$ as follows:
    let $f(i) > i$ be such that $\atomtof{\atom}_i \land \dia{\aterm} \atomtof{\atom}_{f(i)}$ is \kl{consistent} and $\fml[2] \not\in \atom_{f(i)}$ obtained from \Cref{lemma: st ordering atoms}
    if there is some $\bo{\aterm} \fml[2] \in \cl(\fml_0) \setminus \atom_i$,
    and let $f(i)$ be undefined otherwise.
    Let $\bo{\aterm[2]} \fml[3] \in \cl(\fml_0) \setminus \atom_i$ be any.
    If $\aterm[2] \neq \aterm[1]$, then $\dia{\aterm} \fml[2] \land \dia{\aterm[2]} \fml[3]$ is "consistent",
    which contradicts \eqref{equation: Det 2}.
    Thus, $\aterm[2] = \aterm[1]$.
    If $\fml[3] \in \atom_{f(i)}$,
    then $\lnot \bo{\aterm} \fml[3] \land \dia{\aterm} \fml[3]$ is "consistent",
    which contradicts \eqref{equation: Det 1}.
    Thus, $\fml[3] \not\in \atom_{f(i)}$.
    Hence, this $f$ satisfies the required conditions.
\end{proof}

\subsubsection{Canonical Model} \label{section: canonical model st}
Let $\fml_0$ be a "consistent" "formula".
Let $\atom_1, \dots, \atom_n \in \at(\fml_0)$ be linearly ordered \kl{atoms} obtained from \Cref{lemma: st ordering atoms}
and let $f \colon \rangeone{n} \pfun \rangeone{n}$ (with $i < f(i)$ for all $i \in \fdom(f)$).
Let $\mathfrak{i}_1$ be such that $\fml_0 \in \atom_{\mathfrak{i}_1}$
and let $\mathfrak{i}_2, \dots, \mathfrak{i}_{n'}$ be such that
$\mathfrak{i}_{j} = f(\mathfrak{i}_{j-1})$ for each $j \in \range{2}{n'}$ and
$f(\mathfrak{i}_{n'})$ is undefined.
We then let $\atom'_{j} \defeq \atom_{\mathfrak{i}_j}$ for $j \in \rangeone{n'}$.

The ""canonical model@@st"" $\intro*\canonicalmodelst^{\fml_0}$ is the "generalized structure" defined as follows:
\begin{align*}
    \univ{\canonicalmodelst^{\fml_0}} &\defeq
    \set{\atom'_{1}, \dots, \atom'_{n'}},&
    \strucuniv^{\canonicalmodelst^{\fml_0}} &\defeq \set{\tuple{\atom'_{i}, \atom'_{j}} \mid 1 \le i < j \le n'},\\
    \aterm^{\canonicalmodelst^{\fml_0}} &\defeq \set{\tuple{\atom'_{i-1}, \atom'_{i}} \mid i \in \range{2}{n'} \text{ and } \bo{a}\fml[2] \in \cl(\fml_0) \setminus \atom'_{{i-1}} \text{ for some } \fml[2]},\span\span\\
    \afml^{\canonicalmodelst^{\fml_0}} &\defeq \set{\atom \in \univ{\canonicalmodelst^{\fml_0}} \mid \afml \in \atom}.\span\span
\end{align*}
By construction, we have $\canonicalmodelst^{\fml_0} \in \GRELstsfinlin{}$.

We have the following truth lemma.
\begin{lemma}[Truth lemma]\label{lemma: prunning' st}
    Let $\fml_0$ be a "consistent" "formula".
    For every "expression" $\expr$,
    we have the following:
    \begin{enumerate}
        \item \label{lemma: prunning' st 1}
        If $\expr = \fml \in \cl(\fml_0)$ is a "formula",
        then for all $\atom[2] \in \at(\fml_0)$,
        $\fml \in \atom[2]$ "iff" $\atom[2] \in \semifreePDL{\fml}{\canonicalmodelst^{\fml_0}}$.

        \item \label{lemma: prunning' st 2}
        If $\expr = \term \in \tcl(\fml_0)$ is a "term",
        then for all $i \in \range{2}{n'}$,
        if $\atomtof{\atom}'_{i-1} \land \dia{\term}\atomtof{\atom}'_{i}$ is "consistent", then $\tuple{\atom'_{{i-1}}, \atom'_{i}} \in \semifreePDL{\term}{\canonicalmodelst^{\fml_0}}$.

        \item \label{lemma: prunning' st 3}
        If $\expr = \term$ is a "term" and $\fml[2]$ is a "formula" "st" $\bo{\term}\fml[2] \in \cl(\fml_0)$,
        then for all $\atom[2] \in \at(\fml_0)$,
        $\bo{\term}\fml[2] \not\in \atom[2]$ "iff" there is some $\atom[2]'$ such that
        $\tuple{\atom[2], \atom[2]'} \in \semifreePDL{\term}{\canonicalmodelst^{\fml_0}}$ and $\fml[2] \not\in \atom[2]'$.
    \end{enumerate}
\end{lemma}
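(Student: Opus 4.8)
The plan is to establish all three parts simultaneously by induction on the structure of $\expr$, mirroring the proof of the corresponding Truth Lemma for $\GRELsfinlin{}$ (\Cref{lemma: prunning'}), but with two systematic replacements: every appeal to \Cref{proposition: atoms cond}, \Cref{proposition: atoms consistent saturates}, \Cref{proposition: atoms consistent} is replaced by its $\vdashstifreePDL$-counterpart (\Cref{proposition: atoms st cond}, \Cref{proposition: atoms st consistent saturates}, \Cref{proposition: atoms st consistent}), and every appeal to \Cref{lemma: ordering atoms} is replaced by \Cref{lemma: st ordering atoms} together with its deterministic refinement \Cref{lemma: st ordering atoms strong}. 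For part~\ref{lemma: prunning' st 1}: $\fml = \afml$ is immediate from the definition of $\afml^{\canonicalmodelst^{\fml_0}}$; $\fml = \fml[2] \to \fml[3]$ and $\fml = \falsec$ follow from \Cref{proposition: atoms st cond} and the induction hypothesis; and $\fml = \bo{\term}\fml[2]$ is obtained by chaining the induction hypothesis of part~\ref{lemma: prunning' st 3} for $\term$ with that of part~\ref{lemma: prunning' st 1} for $\fml[2]$, exactly as in \Cref{lemma: prunning'}.

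For part~\ref{lemma: prunning' st 2}, the base case $\term = \aterm$ is where the additional axioms \eqref{equation: Det 1} and \eqref{equation: Det 2} do the work. Suppose $\atomtof{\atom}'_{i-1} \land \dia{\aterm}\atomtof{\atom}'_i$ is "consistent". Since $\atom'_i$ is the $f$-successor of $\atom'_{i-1}$ along the chosen path, \Cref{lemma: st ordering atoms strong} supplies a "term variable" $\aterm^{*}$ and a "formula@@PDL-" $\fml^{*}$ with $\bo{\aterm^{*}}\fml^{*} \in \cl(\fml_0) \setminus \atom'_{i-1}$; if $\aterm \neq \aterm^{*}$, then from $\lnot\bo{\aterm^{*}}\fml^{*} \in \atom'_{i-1}$ the axiom \eqref{equation: Det 2}, suitably instantiated (with $\lnot\atomtof{\atom}'_i$ in the consequent), yields $\vdashstifreePDL \atomtof{\atom}'_{i-1} \to \bo{\aterm}\lnot\atomtof{\atom}'_i$, contradicting consistency; hence $\aterm = \aterm^{*}$, so $\bo{\aterm}\fml^{*} \in \cl(\fml_0) \setminus \atom'_{i-1}$ and $\tuple{\atom'_{i-1}, \atom'_i} \in \aterm^{\canonicalmodelst^{\fml_0}}$ by the definition of the canonical model. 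The compound cases (for $\term[2] \union \term[3]$, $\term[2] \compo \term[3]$, $\term[2]^{+}$, $\fml[3]? \compo \term[2]$, $\term[2] \compo \fml[3]?$) proceed through \Cref{proposition: atoms st consistent} and the induction hypothesis as in \Cref{lemma: prunning'}, now additionally exploiting that $\canonicalmodelst^{\fml_0}$ is isomorphic to a "string structure" $\wordstruc^{\word}$ with the identity removed — a strict linear chain on which every "\ifreePDL" "term@@PDL-" is identity-free — so that the intermediate atoms produced by the saturation arguments are pinned to the path (or else the hypothesis of part~\ref{lemma: prunning' st 2} cannot hold). Finally, part~\ref{lemma: prunning' st 3} splits as before: for $\Longrightarrow$, \Cref{lemma: st ordering atoms} (sharpened by \Cref{lemma: st ordering atoms strong} in the atomic case) exhibits a later atom on the path witnessing $\bo{\term}\fml[2] \notin \atom[2]$, whereupon the induction hypothesis of part~\ref{lemma: prunning' st 2} applies; for $\Longleftarrow$, one carries out the case analysis on $\term$ verbatim from the corresponding direction of \Cref{lemma: prunning'}, using \Cref{proposition: atoms st cond} and the induction hypothesis.

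The step I expect to be the main obstacle is making the compound cases of part~\ref{lemma: prunning' st 2} rigorous against the backdrop of the heavily pruned model: in \Cref{section: canonical model} the canonical model retains every atom of $\at(\fml_0)$ with the full strict order, whereas here only the atoms on the $f$-path survive and each carries a single outgoing labelled edge determined by \eqref{equation: Det 1}/\eqref{equation: Det 2}. The crux is therefore to verify that the "off-path" atoms produced by the saturation lemmas (\Cref{proposition: atoms st consistent}, items~\ref{proposition: atoms consistent compo} and~\ref{proposition: atoms consistent *}) are forced to coincide with path atoms — precisely where determinism, together with identity-freeness of "\ifreePDL" "terms@@PDL-" on a strict linear order, is indispensable — so that the induction hypothesis of part~\ref{lemma: prunning' st 2}, stated only for consecutive atoms of the path, can still be invoked. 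Once the Truth Lemma is in place, the completeness direction of \Cref{theorem: PDL- completeness match-language equivalence} closes as in the proof of \Cref{theorem: PDL- completeness}: from $\nvdashstifreePDL \fml$ one uses \eqref{equation: PROP} and \Cref{proposition: atoms st consistent saturates} to extend $\lnot\fml$ to an "atom" $\atom \in \at(\fml)$, then part~\ref{lemma: prunning' st 1} gives $\atom \in \semifreePDL{\lnot\fml}{\canonicalmodelst^{\fml}}$, and $\canonicalmodelst^{\fml} \in \GRELstsfinlin{}$ yields $\GRELstsfinlin{} \not\modelsfml \fml$.
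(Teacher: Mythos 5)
Your overall plan (redefine consistency for $\vdashstifreePDL$, port the three saturation/atom propositions, use \Cref{lemma: st ordering atoms strong} with \eqref{equation: Det 1}/\eqref{equation: Det 2} in the atomic case, and keep the $\Longleftarrow$ direction of part~\ref{lemma: prunning' st 3} verbatim) matches the paper, and your treatment of the base case of part~\ref{lemma: prunning' st 2} is essentially the paper's. However, there is a genuine gap in your $\Longrightarrow$ direction of part~\ref{lemma: prunning' st 3}. You propose to argue as in \Cref{lemma: prunning'}: use \Cref{lemma: st ordering atoms} to produce a later witness atom and then invoke the induction hypothesis of part~\ref{lemma: prunning' st 2}. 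This fails here for two reasons. First, \Cref{lemma: st ordering atoms} is a statement about the full enumeration $\atom_1,\dots,\atom_n$ of \emph{all} atoms, whereas $\univ{\canonicalmodelst^{\fml_0}}$ retains only the atoms $\atom'_1,\dots,\atom'_{n'}$ on the $f$-path; the witness $\atom_j$ need not survive the pruning, so it need not be a vertex of the model at all. Second, even when the witness is a path atom, part~\ref{lemma: prunning' st 2} is formulated only for \emph{consecutive} path atoms $\tuple{\atom'_{i-1},\atom'_i}$, so it cannot be applied to an arbitrary pair $\tuple{\atom[2],\atom_j}$. This is precisely why the paper abandons the reduction to part~\ref{lemma: prunning' st 2} for this direction and instead proves $\Longrightarrow$ of part~\ref{lemma: prunning' st 3} by a fresh structural case analysis on $\term$: the case $\term=\aterm$ reads the witness directly off the definition of $\aterm^{\canonicalmodelst^{\fml_0}}$ via \Cref{lemma: st ordering atoms strong}; the cases $\union$, $\compo$, and the tests push $\bo{\term}\fml[2]\notin\atom[2]$ to subterms via \Cref{proposition: atoms st cond} and apply the induction hypothesis of part~\ref{lemma: prunning' st 3} itself; and the case $\term=\term[2]^{+}$ needs an additional inner induction on the path index $i$ from $n'$ down to $1$, distinguishing whether $\bo{\term[2]}\fml[2]$ or $\bo{\term[2]}\bo{\term[2]^{+}}\fml[2]$ fails at $\atom'_i$. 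None of this machinery appears in your proposal, and without it the step does not go through.

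A secondary, smaller remark: your worry about the compound cases of part~\ref{lemma: prunning' st 2} (that the intermediate atoms supplied by \Cref{proposition: atoms st consistent} might fall off the path) is legitimate, but your proposed resolution is only a heuristic; the paper likewise dispatches these cases by appeal to the same saturation propositions and the induction hypothesis, so you are not diverging from it there. The decisive defect to repair is the $\Longrightarrow$ direction of part~\ref{lemma: prunning' st 3}.
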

\begin{proof}
    By induction on $\expr$.
    
    \proofcase{For \ref{lemma: prunning' st 1}}
    By the same proof as \Cref{lemma: prunning'}.\ref{lemma: prunning' 1},
    using \Cref{proposition: atoms st cond}.

    \proofcase{For \ref{lemma: prunning' st 2}}
    By the same proof as \Cref{lemma: prunning'}.\ref{lemma: prunning' 2},
    using \Cref{proposition: atoms st consistent}.
    Here, for Case $\term = \aterm$, we use \Cref{lemma: st ordering atoms strong}.
  
    \proofcase{For $\Longrightarrow$ of \ref{lemma: prunning' st 3}}
    We distinguish the following cases.

    \subproofcase{Case $\term = \aterm$}
    Let $\atom[2] = \atom_{i-1}'$.
    By definition of $\aterm^{\canonicalmodelst^{\fml_0}}$ with \Cref{lemma: st ordering atoms strong},
    $\atom[2]' = \atom_{i}'$ satisfies the condition.

    \subproofcase{Case $\term = \term[2] \union \term[3]$}
    By \Cref{proposition: atoms st cond}, either $\bo{\term[2]} \fml[2] \not\in \atom[2]$ or $\bo{\term[3]} \fml[2] \not\in \atom[2]$ holds.
    By IH, this case has been proved.

    \subproofcase{Case $\term = \term[2] \compo \term[3]$}
    By \Cref{proposition: atoms st cond}, $\bo{\term[2]}\bo{\term[3]} \fml[2] \not\in \atom[2]$.
    By IH (two times), this case has been proved.

    \subproofcase{Case $\term = \term[2]^+$}
    Let $\atom[2] = \atom'_i$.
    We prove this case by induction on $i$ from $n'$ to $1$.
    By \Cref{proposition: atoms st cond},
    either $\bo{\term[2]} \fml[2] \not\in \atom'_i$ or $\bo{\term[2]}\bo{\term[2]^{+}} \fml[2] \not\in \atom'_i$ holds.

    \subsubproofcase{Case $\bo{\term[2]} \fml[2] \not\in \atom'_i$}
    By IH, there is some $\atom[2]'$ such that $\tuple{\atom'_i, \atom[2]'} \in \semifreePDL{\term[2]}{\canonicalmodelst^{\fml_0}}$ and $\fml[2] \not\in \atom[2]'$.
    Hence, there is some $\atom[2]'$ such that $\tuple{\atom'_i, \atom[2]'} \in \semifreePDL{\term[2]^+}{\canonicalmodelst^{\fml_0}}$ and $\fml[2] \not\in \atom[2]'$.

    \subsubproofcase{Case $\bo{\term[2]}\bo{\term[2]^{+}} \fml[2] \not\in \atom'_i$}
    By IH, there is some $\atom[2]'$ such that $\tuple{\atom'_i, \atom[2]'} \in \semifreePDL{\term[2]}{\canonicalmodelst^{\fml_0}}$ and $\bo{\term[2]^{+}} \fml[2] \not\in \atom[2]'$.
    As $\tuple{\atom'_i, \atom[2]'} \in \semifreePDL{\term[2]}{\canonicalmodelst^{\fml_0}} \subseteq \strucuniv^{\canonicalmodelst^{\fml_0}}$,
    there is some $j > i$ such that $\atom[2]' = \atom'_j$.
    Thus by IH, there is some $\atom[2]''$ such that $\tuple{\atom'_j, \atom[2]''} \in \semifreePDL{\term[2]^+}{\canonicalmodelst^{\fml_0}}$ and $\fml[2] \not\in \atom[2]''$.
    Hence, by $\tuple{\atom'_i, \atom[2]''} \in \semifreePDL{\term[2]^{+}}{\canonicalmodelst^{\fml_0}}$,
    this case has been proved.

    \subproofcase{Case $\term = \fml[3]? \compo \term[2]$}
    By \Cref{proposition: atoms st cond},
    we have $\fml[3] \in \atom[2]$ and $\bo{\term[2]} \fml[2] \not\in \atom[2]$.
    By IH "wrt" $\term[2]$,
    there is some $\atom[2]'$ such that $\tuple{\atom[2], \atom[2]'} \in \semifreePDL{\term[2]}{\canonicalmodelst^{\fml_0}}$ and $\fml[2] \not\in \atom[2]'$.
    By IH "wrt" $\fml[3]$,
    we have $\atom[2] \in \semifreePDL{\fml[3]}{\canonicalmodelst^{\fml_0}}$.
    Thus, $\tuple{\atom[2], \atom[2]'} \in \semifreePDL{\fml[3]? \compo \term[2]}{\canonicalmodelst^{\fml_0}}$.

    \subproofcase{Case $\term = \term[2] \compo \fml[3]?$}
    By \Cref{proposition: atoms st cond},
    we have $\bo{\term[2]} (\fml[3] \to \fml[2]) \not\in \atom[2]$.
    By IH "wrt" $\term[2]$,
    there is some $\atom[2]'$ such that $\tuple{\atom[2], \atom[2]'} \in \semifreePDL{\term[2]}{\canonicalmodelst^{\fml_0}}$ and $\fml[3] \to \fml[2] \not\in \atom[2]'$.
    Thus, $\fml[3] \in \atom[2]'$ and $\fml[2] \not\in \atom[2]'$.
    By IH "wrt" $\fml[3]$,
    we have $\atom[2]' \in \semifreePDL{\fml[3]}{\canonicalmodelst^{\fml_0}}$.
    We thus have $\tuple{\atom[2], \atom[2]'} \in \semifreePDL{\term[2] \compo \fml[3]?}{\canonicalmodelst^{\fml_0}}$.

    \proofcase{For $\Longleftarrow$ of \ref{lemma: prunning' st 3}}
    By the same proof as \Cref{lemma: prunning'}.\ref{lemma: prunning' 3},
    using \Cref{proposition: atoms st cond}.
\end{proof}

\begin{proof}[Proof of \Cref{theorem: PDL- completeness match-language equivalence}]
    \proofcase{Soundness ($\Longleftarrow$)}
    Easy.
    \proofcase{Completeness ($\Longrightarrow$)}
    Similar to the proof of \Cref{theorem: PDL- completeness} using \Cref{lemma: prunning' st}.
\end{proof}
\end{scope}
 
\clearpage
\begin{scope} \knowledgeimport{PDLREwLAp}\knowledgeimport{complexity}
\section{Appendix to \Cref{section: complexity}~``\nameref*{section: complexity}''}

\subsection{Proof of the upper bound of \Cref{theorem: complexity PDLREwLAp substitution-closed}}\label{section: theorem: complexity PDLREwLAp substitution-closed}
In this section, we prove the upper bound of the following theorem.
\begin{theorem*}[Restatement of \Cref{theorem: complexity PDLREwLAp substitution-closed}]
\theoremcomplexityPDLREwLApsubstitutionclosed
\end{theorem*}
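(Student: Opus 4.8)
The plan is to prove the "EXPTIME" upper bound (the matching lower bound being the sketch already in the main body) by a polynomial-time reduction of the complement of the "theory" to the "emptiness problem" of alternating finite tree automata, which is in "EXPTIME" \cite{comonTreeAutomataTechniques2007}; the bound for "\PDL" then follows since it is a syntactic fragment of "\PDLREwLAp". The reduction has three stages. \textbf{Stage 1 (removing the identity part on term variables).} Given a "\PDLREwLAp" "formula@@PDLREwLAp" $\fml$ with term variables $\aterm_1, \dots, \aterm_k$, pick "fresh" formula variables $\afml_1, \dots, \afml_k$ and let $\fml^{\dagger}$ be $\fml$ with each occurrence of $\aterm_i$ replaced by $\afml_i? \union \aterm_i$. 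First I would show $\GRELfinlin{} \modelsfml \fml$ iff $\fml^{\dagger}$ is valid on all $\struc \in \GRELfinlin{}$ with $\aterm_i^{\struc} \cap \diagonal_{\univ{\struc}} = \emptyset$ for every $i$: from a refuting model $\struc$ one builds $\struc[2]$ by setting $\afml_i^{\struc[2]} \defeq \set{x \mid \tuple{x,x} \in \aterm_i^{\struc}}$ and $\aterm_i^{\struc[2]} \defeq \aterm_i^{\struc} \setminus \diagonal_{\univ{\struc}}$, and conversely one absorbs the $\afml_i$ back into the $\aterm_i$. Here one uses that $\semPDLREwLAp{\bl}{\struc}$ is independent of the frame relation $\strucuniv^{\struc}$ as long as it is a preorder containing every denoted relation, so both transformations preserve the semantics; they are clearly polynomial-time.

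\textbf{Stage 2 (finite tree model property).} Once all term variables are irreflexive, $\bigcup_{i} \aterm_i^{\struc} \subseteq \strucuniv^{\struc} \setminus \diagonal_{\univ{\struc}}$ is acyclic on a finite linear order, with every directed path shorter than $\card(\univ{\struc})$. Hence the "tree unwinding" of $\struc$ along $\bigcup_i \aterm_i^{\struc}$ is a \emph{finite} tree bisimilar to $\struc$, and a routine induction on subexpressions shows "\PDLREwLAp" is bisimulation-invariant on models with irreflexive term variables (for $\bl^{\adom}$, $\bl^{\capid}$, $\bl^{\capcomid}$ and tests one uses that these denote diagonal relations whose domains are determined by the truth of formulas at nodes). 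Together with the remark that any finite partial order extends to a finite linear order by a topological sort --- again changing only the irrelevant frame relation --- this gives: $\fml^{\dagger}$ is valid on the above subclass of $\GRELfinlin{}$ iff it is valid on every finite tree-shaped generalized structure (one whose frame is the reflexive transitive closure of the tree edges).

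\textbf{Stage 3 (the automaton).} Finally I would build, in polynomial time, an alternating finite tree automaton $\automaton$ over a bounded alphabet (recording at each node which $\afml_j$ hold and which $\aterm_i$ labels the edge to its parent) whose language is exactly the finite tree-shaped structures on which $\fml$ fails at the root. Its states are the polynomially many subexpressions in a Fischer--Ladner style closure of $\fml$; transitions perform the standard unfoldings of $\compo$, $\union$ and $\bl^{+}$ as in \cite{vardiAutomatatheoreticTechniquesModal1986,calvaneseAutomataTheoreticApproachRegular2009,degiacomoLinearTemporalLogic2013}, treat $\bl^{\adom}$ and tests as local Boolean conditions, and handle $\bl^{\capid}$ and $\bl^{\capcomid}$ by carrying a single extra control bit recording whether the current subterm is required to stay at the node ($\capid$-mode) or to leave it ($\capcomid$-mode) --- the automaton-level counterpart of the splitting $\tuple{\tonormone{\term}, \tonormtwo{\term}}$ of \Cref{definition: normal form}, but without duplicating subformulas. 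Then $\GRELfinlin{} \not\modelsfml \fml$ iff $L(\automaton) \neq \emptyset$, which puts the "theory" of "\PDLREwLAp" on $\GRELfinlin{}$ in "EXPTIME".

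\textbf{Main obstacle.} The delicate point is Stage 3: keeping $\automaton$ of polynomial size despite $\bl^{\capid}$. The naive route --- first eliminating $\bl^{\capid}$ and $\bl^{\capcomid}$ from $\fml$ via $\tonorm{\bl}$ and then invoking a standard "\PDL"-to-automaton construction --- incurs an exponential blow-up (already noted for the completeness reduction), so the identity/non-identity bookkeeping must be pushed into the automaton's control as one mode bit rather than into the formula. A secondary subtlety is the bisimulation-invariance claim of Stage 2, which genuinely relies on the self-loops having been removed in Stage 1, since $\bl^{\capid}$ is not bisimulation-invariant otherwise.
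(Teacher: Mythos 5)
Your proposal follows essentially the same route as the paper's proof: removing the identity-part of term variables first (the paper uses $\dia{\aterm^{\capid}}\truec? \union \aterm^{\capcomid}$ where you use $\afml_i? \union \aterm_i$, to the same effect), then a tree-unwinding/bisimulation argument that is explicitly noted to be sound only after that removal, and finally an alternating tree automaton whose states come from a Fischer--Ladner closure enriched with $\bo{\term^{\capid}}\fml[2]$ and $\bo{\term^{\capcomid}}\fml[2]$ --- exactly your ``one mode bit'' avoiding the exponential blow-up of $\tonorm{\bl}$. The only substantive difference is that the paper inserts further polynomial normalizations (one term variable, one formula variable, binary trees) so that the automaton's alphabet is genuinely bounded, whereas your alphabet $\pset(\psig \dcup \vsig)$ is exponential and would need a symbolic transition representation for the reduction to count as polynomial-time; this is a presentational rather than a mathematical gap.
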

For the upper bound,
it suffices to show for "\PDLREwLAp" (the others are its syntactic fragments).
The proof proceeds by an analog of the standard automata construction for modal logics and PDLs, "eg", \cite{vardiAutomatatheoreticTechniquesModal1986,calvaneseAutomataTheoreticApproachRegular2009,degiacomoLinearTemporalLogic2013}.
Here, we first remove the identity-part of "term variables" using "formula variables" (\Cref{section: theorem: complexity PDLREwLAp substitution-closed: removing identity}).
We then apply the following well-known techniques:
"tree unwinding" (\Cref{section: theorem: complexity PDLREwLAp substitution-closed: to tree}),
fixing the number of "term variables" and "formula variables" (\Cref{section: theorem: complexity PDLREwLAp substitution-closed: to one term variable,section: theorem: complexity PDLREwLAp substitution-closed: to one formula variable}),
and binary tree encoding like the reduction from S$\omega$S to S2S \cite{weyerDecidabilityS1SS2S2002} (\Cref{section: theorem: complexity PDLREwLAp substitution-closed: to bin}).
We then give a polynomial-time reduction to the "emptiness problem" of "alternating tree automata" (\Cref{section: theorem: complexity PDLREwLAp substitution-closed: to tree automata}), which is in "EXPTIME".

\subsubsection{Removing the identity-part}\label{section: theorem: complexity PDLREwLAp substitution-closed: removing identity}
In this section, we remove the identity-part "wrt" "term variables",
similar to \Cref{section: reduction to identity-free}.
Below, to present a polynomial-time reduction,
we give another reduction without translating to "identity-free \PDL".

We write $\intro*\GRELfinlinprime{}$ for the class of "generalized structures" $\struc \in \GRELsfinlin{}$,
where $\strucuniv^{\struc}$ has been replaced with $\strucuniv^{\struc} \cup \diagonal_{\univ{\struc}}$.
Note that $\bigcup_{\aterm \in \vsig} \aterm^{\struc}$ is still a subset of a "strict linear order",
whereas $\strucuniv^{\struc}$ is a "linear order".

For an "expression" $\expr \in \exprclassPDLREwLAp{\psig, \vsig}$,
we write $\intro*\tonormprime{\expr}$ for the "expression" $\expr$ in which
each $\aterm \in \vsig$ has been replaced with the "term" $\dia{\aterm^{\capid}} \truec? \union \aterm^{\capcomid}$.

We recall the reduction of \Cref{section: reduction to identity-free}.
Let $\psig'$ and $\vsig'$ be sets disjoint from $\psig$ and $\vsig$ having the same "cardinality" as $\vsig$,
let $\pi_1$ be a bijection from $\psig'$ to $\vsig$, and
let $\pi_2$ be a bijection from $\vsig'$ to $\vsig$.
We recall the "substitution" $\reintro*\fromifreePDLsubst$ mapping
each $\afml \in \psig$ to itself,
each $\afml \in \psig'$ to $\dia{\pi_1(\afml)^{\capid}} \truec$, and
each $\aterm \in \vsig'$ to $\pi_2(\aterm)^{\capcomid}$.
For each "expression" $\expr \in \exprclassPDLREwLAp{\psig \dcup \psig', \vsig'}$,
let $\fromifreePDL{\expr} \in \exprclassPDLREwLAp{\psig, \vsig}$ be the "\PDLREwLAp" "expression@@PDLREwLAp" obtained from $\expr$ by applying $\Theta_0$.
\AP
We observe that for every $\expr \in \exprclassPDLREwLAp{\psig, \vsig}$,
there is some $\expr[2] \in \exprclassifreePDL{\psig \dcup \psig', \vsig'}$ such that $\tonormprime{\expr} = \fromifreePDL{\expr[2]}$.
By construction, we have the following lemmas.
\begin{lemma}\label{lemma: reduction to identity-free complexity}
    For every "expression@@PDLREwLAp" $\expr \in \exprclassPDLREwLAp{\psig, \vsig}$,
    $\semPDLREwLAp{\expr}{\struc} = \semPDLREwLAp{\tonormprime{\expr}}{\struc}$.
\end{lemma}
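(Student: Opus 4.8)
The plan is to recognise $\tonormprime{\bl}$ as an instance of a "substitution" and then reduce the whole claim to a one-line check on a single "term variable". Let $\Theta$ be the "substitution" that fixes every "formula variable" and sends each $\aterm \in \vsig$ to the "term@@PDLREwLAp" $\dia{\aterm^{\capid}}\truec? \union \aterm^{\capcomid}$; then $\tonormprime{\expr} = \expr\exprsubst{\Theta}$ simply by definition of $\tonormprime{\bl}$. First I would record the standard substitution lemma for the "semantics@@PDLREwLAp", the exact analogue of the one used for \eqref{rule: subst} in \Cref{section: soundness}: for every "expression@@PDLREwLAp" $\expr$ and every $\struc$ on which the "semantics@@PDLREwLAp" is well-defined (in particular every $\struc \in \GRELpreorder{}$, hence every $\struc \in \GRELfinlinprime{}$), $\semPDLREwLAp{\expr\exprsubst{\Theta}}{\struc} = \semPDLREwLAp{\expr}{\struc\strucsubst{\Theta}}$, where $\struc\strucsubst{\Theta}$ has the same "universe", "universal relation", and "formula variables" as $\struc$ but $\aterm^{\struc\strucsubst{\Theta}} \defeq \semPDLREwLAp{\Theta(\aterm)}{\struc}$ for each $\aterm$. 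This goes through by an easy induction on $\expr$: every constructor case is immediate because $\semPDLREwLAp{\bl}{\bl}$ is a "homomorphism" and every "\PDLREwLAp" operator is interpreted in the same way in $\struc$ and in $\struc\strucsubst{\Theta}$.

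The remaining step is then to check $\struc\strucsubst{\Theta} = \struc$, i.e. $\semPDLREwLAp{\dia{\aterm^{\capid}}\truec? \union \aterm^{\capcomid}}{\struc} = \aterm^{\struc}$ for every $\aterm \in \vsig$. Writing $R = \aterm^{\struc}$ and $X = \univ{\struc}$, the second disjunct evaluates to $R \setminus \diagonal_{X}$ by the definition of $\bl^{\capcomid}$, and the first disjunct unfolds, via the definitions of $\bl^{\capid}$, the "diamond operator", and the test, to $\set{\tuple{\vertex[1], \vertex[1]} \mid \exists \vertex[2] \in X,\ \tuple{\vertex[1], \vertex[2]} \in R \cap \diagonal_{X}}$; since $R \cap \diagonal_{X} \subseteq \diagonal_{X}$, the only possible witness is $\vertex[2] = \vertex[1]$, so this set is exactly $R \cap \diagonal_{X}$. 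Their "union" is therefore $(R \cap \diagonal_{X}) \cup (R \setminus \diagonal_{X}) = R = \aterm^{\struc}$, as required. (One may equally well skip naming $\Theta$ and induct directly on $\expr$, with all constructor cases immediate by compositionality of $\semPDLREwLAp{\bl}{\bl}$ and only the "term variable" base case calling for the computation above.)

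There is no real obstacle here: the only slightly delicate point is the collapse of the existential quantifier coming from the "diamond operator" in the base case, which is valid precisely because $\semPDLREwLAp{\aterm^{\capid}}{\struc}$ is a subrelation of the diagonal. Note moreover that this base-case identity uses no property of $\strucuniv^{\struc}$ at all, which is why the statement carries no side condition on $\struc$ beyond well-definedness of the "semantics@@PDLREwLAp".
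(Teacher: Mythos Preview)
Your proposal is correct and matches the paper's own proof, which is simply ``By easy induction on $\expr$ using $\semPDLREwLAp{\aterm}{\struc} = \semPDLREwLAp{\dia{\aterm^{\capid}} \truec? \union \aterm^{\capcomid}}{\struc}$ for each $\aterm \in \vsig$.'' Your substitution-lemma packaging is a mild reformulation of the same induction, and you yourself note in parentheses that one can induct directly with only the term-variable base case requiring the computation you spell out.
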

\begin{proof}
    By easy induction on $\expr$
    using $\semPDLREwLAp{\aterm}{\struc} = \semPDLREwLAp{\dia{\aterm^{\capid}} \truec? \union \aterm^{\capcomid}}{\struc}$ for each $\aterm \in \vsig$.
\end{proof}

\begin{lemma}\label{lemma: finite linear orders to finite strict linear orders prime}
    For every "formula@@PDLREwLAp" $\fml \in \fmlclassPDLREwLAp{\psig, \vsig}$,
    we have:
    \[\GRELfinlin{} \modelsfml \fml \quad\iff\quad \GRELfinlinprime{} \modelsfml \fml[2],\]
    where $\fml[2]$ is the "formula@@PDLREwLAp" such that $\tonormprime{\fml} = \fromifreePDL{\fml[2]}$.
\end{lemma}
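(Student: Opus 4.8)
The plan is to combine \Cref{lemma: reduction to identity-free complexity} with a model translation in the spirit of the proof of \Cref{lemma: finite linear orders to finite strict linear orders}, but carried out entirely inside "\PDLREwLAp" so as to obtain a genuine equivalence rather than a one-sided implication. By \Cref{lemma: reduction to identity-free complexity} we have $\GRELfinlin{} \modelsfml \fml$ iff $\GRELfinlin{} \modelsfml \tonormprime{\fml}$, and $\tonormprime{\fml} = \fromifreePDL{\fml[2]}$ by the choice of $\fml[2]$; hence it suffices to prove
\[\GRELfinlin{\psig, \vsig} \modelsfml \fromifreePDL{\fml[2]} \quad\iff\quad \GRELfinlinprime{\psig \dcup \psig', \vsig'} \modelsfml \fml[2].\]

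The core is the following semantic-correspondence fact, which I would prove by an easy induction on the "expression@@PDLREwLAp": whenever $\struc \in \GRELfinlinprime{\psig \dcup \psig', \vsig'}$ and $\struc[2] \in \GRELfinlin{\psig, \vsig}$ share the same "universe" and "universal relation" and satisfy $\afml^{\struc} = \afml^{\struc[2]}$ for $\afml \in \psig$, $\afml^{\struc} = \semPDLREwLAp{\dia{\pi_1(\afml)^{\capid}}\truec}{\struc[2]}$ for $\afml \in \psig'$, and $\aterm^{\struc} = \semPDLREwLAp{\pi_2(\aterm)^{\capcomid}}{\struc[2]}$ for $\aterm \in \vsig'$, then $\semPDLREwLAp{\expr[2]}{\struc} = \semPDLREwLAp{\fromifreePDL{\expr[2]}}{\struc[2]}$ for every $\expr[2] \in \exprclassPDLREwLAp{\psig \dcup \psig', \vsig'}$. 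The base cases are immediate from the three hypotheses and the definition of $\fromifreePDLsubst$, and the inductive cases follow because applying $\fromifreePDLsubst$ commutes with all "\PDLREwLAp" operators while the "semantics@@PDLREwLAp" is a homomorphism; both structures carry "finite linear orders" (hence "preorders"), so the semantics is well-defined throughout.

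For ($\Longleftarrow$): given $\struc[2] \in \GRELfinlin{\psig, \vsig}$, put $\univ{\struc} \defeq \univ{\struc[2]}$ and $\strucuniv^{\struc} \defeq \strucuniv^{\struc[2]}$, and take the valuations of $\struc$ to be exactly the three expressions above. One checks $\struc \in \GRELfinlinprime{\psig \dcup \psig', \vsig'}$: $\strucuniv^{\struc}$ is a "finite linear order", and for $\aterm \in \vsig'$ the relation $\aterm^{\struc} = \semPDLREwLAp{\pi_2(\aterm)^{\capcomid}}{\struc[2]}$ is disjoint from $\diagonal_{\univ{\struc}}$, hence contained in the "finite strict linear order" $\strucuniv^{\struc} \setminus \diagonal_{\univ{\struc}}$. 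The fact then yields $\semPDLREwLAp{\fromifreePDL{\fml[2]}}{\struc[2]} = \semPDLREwLAp{\fml[2]}{\struc} = \univ{\struc} = \univ{\struc[2]}$. For ($\Longrightarrow$): given $\struc \in \GRELfinlinprime{\psig \dcup \psig', \vsig'}$, put $\univ{\struc[2]} \defeq \univ{\struc}$, $\strucuniv^{\struc[2]} \defeq \strucuniv^{\struc}$ (a "finite linear order"), $\afml^{\struc[2]} \defeq \afml^{\struc}$ for $\afml \in \psig$, and $\aterm^{\struc[2]} \defeq \set{\tuple{x, x} \mid x \in (\pi_1^{-1}(\aterm))^{\struc}} \cup (\pi_2^{-1}(\aterm))^{\struc}$ for $\aterm \in \vsig$; since $(\pi_2^{-1}(\aterm))^{\struc} \cap \diagonal_{\univ{\struc}} = \emptyset$ (as $\struc \in \GRELfinlinprime{}$), a short computation shows that $(\struc, \struc[2])$ satisfy the compatibility hypotheses above, so the fact gives $\semPDLREwLAp{\fromifreePDL{\fml[2]}}{\struc[2]} = \semPDLREwLAp{\fml[2]}{\struc}$, and the left-hand side equals $\univ{\struc[2]} = \univ{\struc}$ by the hypothesis $\GRELfinlin{} \modelsfml \fromifreePDL{\fml[2]}$.

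The only delicate point I anticipate is keeping the reflexive/strict distinction straight — a $\GRELfinlinprime{}$-structure has a reflexive (linear-order) "universal relation" but identity-free interpretations of its "term variables" — together with the bookkeeping that the two constructed structures really land in $\GRELfinlinprime{}$ and $\GRELfinlin{}$ respectively; everything else is the same homomorphism/substitution induction already used for \Cref{lemma: finite linear orders to finite strict linear orders}.
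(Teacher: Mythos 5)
Your proposal is correct and follows essentially the same route as the paper: reduce via \Cref{lemma: reduction to identity-free complexity} to the statement about $\fromifreePDL{\fml[2]}$, prove by induction a semantic correspondence $\semPDLREwLAp{\expr[2]}{\struc} = \semPDLREwLAp{\fromifreePDL{\expr[2]}}{\struc[2]}$ for compatible pairs of structures, and observe that every structure on either side admits a compatible partner on the other. The paper merely packages your two explicit constructions as a single bijection $f \colon \GRELfinlin{\psig,\vsig} \to \GRELfinlinprime{\psig \dcup \psig', \vsig'}$ and invokes its bijectivity, which is the same argument.
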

\begin{proof}
    Let $f$ be the bijection
    \begin{align*}
        f \colon \GRELfinlin{\psig, \vsig} &\to \GRELfinlinprime{\psig \dcup \psig', \vsig'}\\
        \struc[1] &\mapsto \struc[2],
    \end{align*}
    defined as follows:
    \begin{itemize}
        \item
        $\univ{\struc[2]} \defeq \univ{\struc[1]}$ and $\strucuniv^{\struc[2]} \defeq \strucuniv^{\struc} \setminus \diagonal_{\univ{\struc}}$;

        \item 
        $\afml^{\struc[2]} \defeq \afml^{\struc}$ for $\afml \in \psig$ and
        $\afml^{\struc[2]} \defeq \semPDLREwLAp{\dia{\pi_1(\afml)^{\capid}}\truec}{\struc}$ for $\afml \in \psig'$; and

        \item
        $\aterm^{\struc[2]} \defeq \semPDLREwLAp{\pi_2(\aterm)^{\capcomid}}{\struc}$ for $\aterm \in \vsig'$. 
    \end{itemize}
    By easy induction on $\expr$,
    for all $\expr \in \exprclassPDLREwLAp{\psig \dcup \psig', \vsig'}$ and $\struc[1] \in \GRELfinlinprime{}$,
    we have
    $\semPDLREwLAp{\fromifreePDL{\expr}}{\struc[1]} = \semPDLREwLAp{\expr}{f(\struc[1])}$.
    Thus,
    \begin{align*}
    \GRELfinlin{\psig, \vsig} \modelsfml \fml
    &\iff \GRELfinlin{\psig, \vsig} \modelsfml \tonormprime{\fml} \tag{By \Cref{lemma: reduction to identity-free complexity}}\\
    &\iff \GRELfinlinprime{\psig \dcup \psig', \vsig'} \modelsfml \fml[2] \tag{By above and that $f$ is a bijection}.
    \end{align*}
    Hence, this completes the proof.
\end{proof}

\subsubsection{Reduction to the "theory" on $\GRELstfintree{}$}\label{section: theorem: complexity PDLREwLAp substitution-closed: to tree}
We write $\intro*\GRELstfintree{\psig, \vsig}$ for the class of "generalized structures" $\struc \in \GRELfinpartialorder{\psig, \vsig}$ such that
\begin{itemize}
    \item $\struc$ is a labelled rooted "tree", that is,
    the relation $\bigcup_{\aterm \in \vsig} \aterm^{\struc}$ is well-founded,
    there is a root $r \in \univ{\struc}$ such that $\tuple{r, \vertex} \in (\bigcup_{\aterm \in \vsig} \aterm^{\struc})^*$ for every $\vertex \in \univ{\struc}$, and
    $\aterm[1]^{\struc} \cap \aterm[2]^{\struc} = \emptyset$ for distinct $\aterm[1], \aterm[2] \in \vsig$;
    \item $\strucuniv^{\struc} = (\bigcup_{\aterm \in \vsig} \aterm^{\struc})^*$.
\end{itemize}

In this section, by the "tree unwinding" argument,
we give a reduction to the "theory" on $\GRELstfintree{}$.
We recall the standard notion of "bisimulation" \cite{vanbenthemModalCorrespondenceTheory1976}.
\begin{definition}["Bisimulation@bisimulation"]\label{definition: bisimulation}
Let $\struc[1], \struc[2] \in \GREL{}$.
Given two sets $P$ and $A$,
a binary relation $\rel \subseteq \univ{\struc[1]} \times \univ{\struc[2]}$ between $\struc[1]$ and $\struc[2]$ is a \AP""bisimulation"" ("wrt" $P$ and $A$)
if for all $\tuple{\vertex[1], \vertex[2]} \in \rel$, the following holds.
\begin{description}
    \item[(atom)\label{definition: bisimulation atom}]
    For each $\afml \in P$,
    $\vertex[1] \in \semPDLREwLAp{\afml}{\struc[1]}$ "iff" $\vertex[2] \in \semPDLREwLAp{\afml}{\struc[2]}$.

    \item[(forth)\label{definition: bisimulation forth}]
    For each $\aterm \in A$,
    if $\tuple{\vertex[1], \vertex[1]'} \in \semPDLREwLAp{\aterm}{\struc[1]}$, there exists some $\vertex[2]'$ such that $\tuple{\vertex[2], \vertex[2]'} \in \semPDLREwLAp{\aterm}{\struc[2]}$ and $\tuple{\vertex[1]',\vertex[2]'} \in \rel$.

    \item[(back)\label{definition: bisimulation back}]
    For each $\aterm \in A$,
    if $\tuple{\vertex[2], \vertex[2]'} \in \semPDLREwLAp{\aterm}{\struc[2]}$, there exists some $\vertex[1]'$ such that $\tuple{\vertex[1], \vertex[1]'} \in \semPDLREwLAp{\aterm}{\struc[1]}$ and $\tuple{\vertex[1]', \vertex[2]'} \in \rel$.
\end{description}
For $\vertex[1] \in \univ{\struc[1]}$ and $\vertex[2] \in \univ{\struc[2]}$,
we say that $\tuple{\struc[1], \vertex[1]}$ and $\tuple{\struc[2], \vertex[2]}$ are \AP""bisimilar""
if there is a "bisimulation" $\rel$ between $\struc[1]$ and $\struc[2]$ such that $\tuple{\vertex[1], \vertex[2]} \in \rel$.
\lipicsEnd\end{definition}

We write $\AP\intro*\GRELspartialorder{}$ for the class of $\struc \in \GRELpartialorder{}$ "st" $\strucuniv^{\struc}$ is a "strict partial order".
We write $\AP\intro*\GRELpartialorderprime{}$ for the class of "generalized structures" $\struc \in \GRELspartialorder{}$,
where $\strucuniv^{\struc}$ has been replaced with $\strucuniv^{\struc} \cup \diagonal_{\univ{\struc}}$.
\begin{proposition}\label{proposition: bisimulation}
Let $\struc[1], \struc[2] \in \GRELpartialorderprime{}$.
Suppose that $\rel$ is a "bisimulation" between $\struc[1]$ and $\struc[2]$.
For all $\expr \in \exprclassPDLREwLAp{\psig,\vsig}$ and all $\tuple{x, y} \in \rel$, we have:
\begin{enumerate}
    \item\label{proposition: bisimulation 1} If $\expr = \fml$ is a "formula", then we have:
    $x \in \semPDLREwLAp{\fml}{\struc[1]}$ "iff" $y \in \semPDLREwLAp{\fml}{\struc[2]}$.

    \item\label{proposition: bisimulation 2} If $\expr = \term$ is a "term", then we have:
    if $\tuple{x, x'} \in \semPDLREwLAp{\term}{\struc[1]}$ then there exists $y'$ such that $\tuple{y, y'} \in \semPDLREwLAp{\term}{\struc[2]}$ and $\tuple{x',y'} \in \rel$.

    \item\label{proposition: bisimulation 3} If $\expr = \term$ is a "term", then we have:
    if $\tuple{y, y'} \in \semPDLREwLAp{\term}{\struc[2]}$ then there exists $x'$ such that $\tuple{x, x'} \in \semPDLREwLAp{\term}{\struc[1]}$ and $\tuple{x',y'} \in \rel$.
\end{enumerate}
\end{proposition}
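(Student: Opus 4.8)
The plan is to prove parts~\ref{proposition: bisimulation 1}, \ref{proposition: bisimulation 2}, and \ref{proposition: bisimulation 3} simultaneously by induction on the structure of $\expr$. Most cases are the familiar modal bisimulation arguments: for $\expr=\afml$ one invokes the (atom) clause; for $\to$ and $\falsec$ one uses the induction hypothesis on the immediate subformulas; for $\expr=\bo{\term[2]}\fml$ one combines the (forth)/(back) clauses, as packaged in the induction hypothesis \ref{proposition: bisimulation 2}--\ref{proposition: bisimulation 3} for $\term[2]$, with \ref{proposition: bisimulation 1} for $\fml$; and the composite terms $\term[2]\compo\term[3]$, $\term[2]\union\term[3]$, $\term[2]^{+}$, $\fml?$ are handled by chaining or concatenating witnesses. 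The antidomain case $\expr=\term[2]^{\adom}$ is the standard diamond-negation argument: by \ref{proposition: bisimulation 2}--\ref{proposition: bisimulation 3} for $\term[2]$, a point $x$ has no $\term[2]$-successor in $\struc[1]$ if and only if its $\rel$-partner $y$ has no $\term[2]$-successor in $\struc[2]$, and $\semPDLREwLAp{\term[2]^{\adom}}{\struc[1]}$ consists exactly of the diagonal pairs at such points.

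The genuinely new cases are $\expr=\term[2]^{\capid}$ and $\expr=\term[2]^{\capcomid}$, and here the hypotheses $\struc[1],\struc[2]\in\GRELpartialorderprime{}$ are essential: over $\GRELpreorder{}$ the bisimulation invariance of $\bl^{\capid}$ already fails. The obstacle is that the stated \ref{proposition: bisimulation 2} only yields \emph{some} witness $y'$, whereas $\bl^{\capid}$ needs the witness on the diagonal and $\bl^{\capcomid}$ needs it off the diagonal. I would therefore strengthen the induction with an auxiliary clause — for every term $\term$ and every $\tuple{x,y}\in\rel$,
\[\tuple{x,x}\in\semPDLREwLAp{\term}{\struc[1]}\iff\tuple{x,x}\in\semPDLREwLAp{\term}{\struc[2]},\]
together with a refinement of \ref{proposition: bisimulation 2} (and symmetrically \ref{proposition: bisimulation 3}) asserting that when $\tuple{x,x'}\in\semPDLREwLAp{\term}{\struc[1]}$ with $x\neq x'$, the witness $y'$ can be chosen with $y\neq y'$. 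The auxiliary clause is proved using antisymmetry of $\strucuniv^{\struc[1]}$ and $\strucuniv^{\struc[2]}$ (available since $\GRELpartialorderprime{}\subseteq\GRELpartialorder{}$): on a partial order every $\term$-cycle collapses, so $\tuple{x,x}\in\semPDLREwLAp{\term[2]\compo\term[3]}{\struc[1]}$ iff $\tuple{x,x}\in\semPDLREwLAp{\term[2]}{\struc[1]}$ and $\tuple{x,x}\in\semPDLREwLAp{\term[3]}{\struc[1]}$, and $\tuple{x,x}\in\semPDLREwLAp{\term[2]^{+}}{\struc[1]}$ iff $\tuple{x,x}\in\semPDLREwLAp{\term[2]}{\struc[1]}$; the subcases $\term=\aterm$ and $\term=\term[2]^{\capcomid}$ are trivial since their diagonal parts are empty (term variables are irreflexive over $\GRELpartialorderprime{}$), and $\term[2]^{\capid}$, $\union$, $\adom$, $\fml?$ reduce directly to the induction hypotheses. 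The refinement of \ref{proposition: bisimulation 2} is proved in parallel, again using antisymmetry and the irreflexivity of term variables to keep witnesses off the diagonal; for instance, for $\term[2]^{+}$ one builds, step by step, a $\term[2]$-path in $\struc[2]$ that is strictly increasing wherever the original $\struc[1]$-path was strictly increasing (using the refinement there) and self-looping elsewhere (using the auxiliary clause), and antisymmetry forces its two endpoints apart.

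Given these ingredients, the new cases close at once: if $\tuple{x,x'}\in\semPDLREwLAp{\term[2]^{\capid}}{\struc[1]}$ then $x=x'$ and $\tuple{x,x}\in\semPDLREwLAp{\term[2]}{\struc[1]}$, so the auxiliary clause gives $\tuple{y,y}\in\semPDLREwLAp{\term[2]^{\capid}}{\struc[2]}$ with $\tuple{x',y}\in\rel$; and if $\tuple{x,x'}\in\semPDLREwLAp{\term[2]^{\capcomid}}{\struc[1]}$ then $x\neq x'$, so the refinement for $\term[2]$ gives $y'\neq y$ with $\tuple{y,y'}\in\semPDLREwLAp{\term[2]}{\struc[2]}$, that is, $\tuple{y,y'}\in\semPDLREwLAp{\term[2]^{\capcomid}}{\struc[2]}$; finally, parts \ref{proposition: bisimulation 2}--\ref{proposition: bisimulation 3} in the stated form follow by splitting on whether $x=x'$. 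The main obstacle is thus purely the bookkeeping of the identity/non-identity split forced by $\bl^{\capid}$ and $\bl^{\capcomid}$; everything else is no harder than classical $\PDL$ bisimulation invariance. An alternative that avoids the strengthened induction would be: by \Cref{proposition: normal form} (sound over $\GRELpartialorder{}\supseteq\GRELpartialorderprime{}$), every expression is equivalent over $\GRELpartialorderprime{}$ to one in which $\bl^{\capid}$ has been replaced by $\tonormone{\bl}?$ and $\bl^{\capcomid}$ by $\tonormtwo{\bl}$ (\Cref{definition: normal form}); since term variables are irreflexive there, the remaining $\aterm^{\capid}$ and $\aterm^{\capcomid}$ simplify, leaving the plain $\PDL$ bisimulation invariance.
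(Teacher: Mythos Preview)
Your proposal is correct, and in fact the ``alternative'' you sketch at the end is exactly the paper's argument: it first invokes the normal-form translation of \Cref{definition: normal form} to reduce to expressions in which $\bl^{\capid}$ and $\bl^{\capcomid}$ occur only on term variables, and then exploits that over $\GRELpartialorderprime{}$ each $\aterm^{\struc}$ is irreflexive, so $\semPDLREwLAp{\aterm^{\capid}}{\struc}=\emptyset$ and $\semPDLREwLAp{\aterm^{\capcomid}}{\struc}=\semPDLREwLAp{\aterm}{\struc}$, after which only the standard $\PDL$ bisimulation cases remain. Your main route---strengthening the induction with a diagonal clause and an off-diagonal refinement of parts~\ref{proposition: bisimulation 2}--\ref{proposition: bisimulation 3}---would also succeed (the antisymmetry arguments you outline for $\compo$ and $\bl^{+}$ are sound), but it amounts to re-deriving by hand the semantic content of the $\capid$/$\capcomid$ axioms of \Cref{figure: PDLREwLAp axioms}; the paper's detour through \Cref{proposition: normal form} packages that work once and keeps the induction unstrengthened. (Minor slip: the right-hand side of your auxiliary clause should read $\tuple{y,y}\in\semPDLREwLAp{\term}{\struc[2]}$, not $\tuple{x,x}$.)
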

\begin{proof}
    Via the transformation of \Cref{definition: normal form},
    it suffices to show for the form in which $\bl^{\capid}$ and $\bl^{\capcomid}$ only applied to "term variables".
    We show this proposition by induction on the size of $\expr$.

    \proofcase{For \ref{proposition: bisimulation 1}}
    We distinguish the following cases:
    \proofcase{Case $\fml = \afml$}
    By the definition of \ref{definition: bisimulation atom}.
    \proofcase{Case $\fml = \fml[2] \to \fml[3]$, Case $\fml = \falsec$}
    Easy by IH.
    \proofcase{Case $\fml = \bo{\term}\fml[2]$}
    By IH with \ref{definition: bisimulation forth} and \ref{definition: bisimulation back}.

    \proofcase{For \ref{proposition: bisimulation 2}}
    We distinguish the following cases:
    \proofcase{Case $\term = \aterm^{\capid}$}
    Because $\semPDLREwLAp{\aterm}{\struc[2]} \subseteq \univ{\struc[2]}^2 \setminus \triangle_{\univ{\struc[2]}}$ by $\struc \in \GRELpartialorderprime{}$,
    we have
    $\semPDLREwLAp{\aterm^{\capid}}{\struc[2]} = \emptyset$.
    \proofcase{Case $\term = \aterm^{\capcomid}$}
    By \ref{definition: bisimulation forth},
    there is some $y'$ such that $\tuple{y, y'} \in \semPDLREwLAp{\aterm}{\struc[2]}$.
    Because $\semPDLREwLAp{\aterm}{\struc[2]} \subseteq \univ{\struc[2]}^2 \setminus \triangle_{\univ{\struc[2]}}$,
    we have $\tuple{y, y'} \in \semPDLREwLAp{\aterm^{\capcomid}}{\struc[2]}$.
    \proofcase{Case $\term = \term[2] \union \term[3]$, Case $\term = \term[2] \compo \term[3]$, Case $\term = \term[2]^{+}$}
    Easy by applying IH (multiple times).
    \proofcase{Case $\term = \term[2]^{\adom}$}
    Then $x' = x$ and there is no $x''$ such that $\tuple{x, x''} \in \semPDLREwLAp{\term[2]}{\struc[1]}$.
    By the contrapositive of \ref{definition: bisimulation back}, we have $\tuple{y, y''} \not\in \semPDLREwLAp{\term[2]}{\struc[2]}$ for all $y''$.
    Thus, $\tuple{y, y} \in \semPDLREwLAp{\term[2]^{\adom}}{\struc[2]}$.
    \proofcase{Case $\term = \fml[3]?$}
    By IH of \ref{proposition: bisimulation 1}.
    \proofcase{For \ref{proposition: bisimulation 3}}
    Similar to \ref{proposition: bisimulation 2}.
\end{proof}
\begin{remark}
    In \Cref{proposition: bisimulation},
    removing the identity-part (\Cref{section: theorem: complexity PDLREwLAp substitution-closed: removing identity}) is crucial.
    If not, for instance, consider the following two "bisimilar" "structures" $\struc[1]$ and $\struc[2]$:
    \[\struc = \begin{tikzpicture}[baseline = -.5ex]
    \tikzstyle{vert}=[draw = black, circle, fill = gray!10, inner sep = 2pt, minimum size = 1pt, font = \scriptsize]
    \graph[grow right = .8cm, branch down = 4.5ex]{
    {s1/{$r$}[vert]}
    };
    \graph[use existing nodes, edges={color=black, pos = .5, earrow}, edge quotes={fill=white, inner sep=1pt,font= \scriptsize}]{
        s1 ->["$a$", out = 45, in = 135, looseness = 8] s1;
    };
\end{tikzpicture} \quad\leadsto\quad
\struc[2] = \begin{tikzpicture}[baseline = -.5ex]
    \tikzstyle{vert}=[draw = black, circle, fill = gray!10, inner sep = 2pt, minimum size = 1pt, font = \scriptsize]
    \graph[grow right = .8cm, branch down = 4.5ex]{
    {s1/{$r$}[vert]} -!- {s2/{}[vert]} -!- {s3/{}[vert]} -!- {s4/{$\dots$}}
    };
    \graph[use existing nodes, edges={color=black, pos = .5, earrow}, edge quotes={fill=white, inner sep=1pt,font= \scriptsize}]{
        s1 ->["$a$"] s2 ->["$a$"] s3 -> s4;
    };
\end{tikzpicture} 
\]
Then $r \in \semPDLREwLAp{a^{\capid}}{\struc[1]}$ but $r \not\in \semPDLREwLAp{a^{\capid}}{\struc[2]}$.
\end{remark}

\begin{lemma}\label{lemma: reduction to fintree}
    For all $\fml \in \fmlclassPDLREwLAp{\psig,\vsig}$,
    $\GRELfinlinprime{} \modelsfml \fml$ "iff" $\GRELstfintree{} \modelsfml \fml$.
\end{lemma}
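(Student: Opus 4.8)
The plan is to prove the two inclusions separately, using the bisimulation invariance of \Cref{proposition: bisimulation} as the main tool. Note first that both classes sit inside $\GRELpartialorderprime{}$: if $\struc \in \GRELstfintree{}$ then $(\bigcup_{\aterm \in \vsig}\aterm^{\struc})^{+}$ is transitive and, as the transitive closure of a finite well-founded (hence acyclic) relation, irreflexive, so it is a strict partial order whose reflexive closure is exactly $\strucuniv^{\struc}$; and if $\struc \in \GRELfinlinprime{}$ the corresponding strict order is even a strict linear order. Hence \Cref{proposition: bisimulation} applies to every pair of structures built below.

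For $(\Longrightarrow)$, assume $\GRELfinlinprime{} \modelsfml \fml$ and take $\struc \in \GRELstfintree{}$. I would let $<$ be any linear extension of the strict partial order $(\bigcup_{\aterm \in \vsig}\aterm^{\struc})^{+}$ (such an extension exists since $\univ{\struc}$ is finite) and define $\struc[2]$ by keeping $\univ{\struc}$, the $\aterm^{\struc}$ and the $\afml^{\struc}$ and setting $\strucuniv^{\struc[2]} \defeq {<} \cup \diagonal_{\univ{\struc}}$. Each $\aterm^{\struc}$ is contained in $<$, so the structure with $\strucuniv = {<}$ lies in $\GRELsfinlin{}$ and therefore $\struc[2] \in \GRELfinlinprime{}$. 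Since $\struc$ and $\struc[2]$ agree on everything except $\strucuniv$, the identity relation on $\univ{\struc}$ is a "bisimulation" between them (all three clauses are immediate), and \Cref{proposition: bisimulation}.\ref{proposition: bisimulation 1} yields $\semPDLREwLAp{\fml}{\struc} = \semPDLREwLAp{\fml}{\struc[2]}$; by hypothesis the latter equals $\univ{\struc[2]} = \univ{\struc}$, so $\struc \modelsfml \fml$. As $\struc$ was arbitrary, $\GRELstfintree{} \modelsfml \fml$.

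For $(\Longleftarrow)$, assume $\GRELstfintree{} \modelsfml \fml$, take $\struc \in \GRELfinlinprime{}$ and $\vertex_0 \in \univ{\struc}$; it suffices to show $\vertex_0 \in \semPDLREwLAp{\fml}{\struc}$. I would form the tree unwinding $\struc[2]$ of $\struc$ from $\vertex_0$: its universe is the set of labelled paths $\langle \vertex_0, \aterm_1, \vertex_1, \dots, \aterm_k, \vertex_k\rangle$ with $k \ge 0$ and $\tuple{\vertex_{i-1}, \vertex_i} \in \aterm_i^{\struc}$ for all $i$; we put $\tuple{p, p'} \in \aterm^{\struc[2]}$ when $p'$ is the one-step extension of $p$ by the label $\aterm$, let $\afml^{\struc[2]}$ consist of the paths whose last vertex lies in $\afml^{\struc}$, and set $\strucuniv^{\struc[2]} \defeq (\bigcup_{\aterm \in \vsig}\aterm^{\struc[2]})^{*}$ (the prefix order). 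Because $\bigcup_{\aterm \in \vsig}\aterm^{\struc}$ is contained in a strict linear order, each labelled path is strictly increasing, so there are finitely many of them and $\struc[2]$ is finite; by construction it is a rooted labelled tree with root $\langle\vertex_0\rangle$ and pairwise disjoint edge labels (each edge carries the single label recorded in the path), hence $\struc[2] \in \GRELstfintree{}$. The ``last vertex'' projection $\set{\tuple{\vertex, p} \mid p \in \univ{\struc[2]},\ \vertex \text{ is the last vertex of } p}$ is a "bisimulation" between $\struc$ and $\struc[2]$ — the forth and back clauses are witnessed exactly by the one-step extensions of a path — and it relates $\vertex_0$ to $\langle\vertex_0\rangle$. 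Thus \Cref{proposition: bisimulation}.\ref{proposition: bisimulation 1} gives $\vertex_0 \in \semPDLREwLAp{\fml}{\struc}$ iff $\langle\vertex_0\rangle \in \semPDLREwLAp{\fml}{\struc[2]}$, and the right-hand side holds since $\struc[2] \in \GRELstfintree{}$ and $\GRELstfintree{} \modelsfml \fml$.

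The routine parts are the verification that $\struc[2]$ is well-defined and that the last-vertex projection satisfies the three bisimulation clauses. The step needing the most care is the unwinding itself: in $\GRELfinlinprime{}$ a single edge may carry several term-variable labels, so the unwinding has to branch over labels — not merely over vertices — in order to obtain pairwise disjoint edge labels in the tree, and its finiteness must be derived from $\bigcup_{\aterm \in \vsig}\aterm^{\struc}$ being a suborder of a strict linear order rather than from $\strucuniv^{\struc}$, which is only a preorder and does not affect the semantics at all.
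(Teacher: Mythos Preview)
Your proof is correct and follows essentially the same route as the paper: linearly extend the tree's partial order for $(\Longrightarrow)$, and tree-unwind using labelled paths for $(\Longleftarrow)$, with \Cref{proposition: bisimulation} supplying the invariance. The only cosmetic difference is that for $(\Longrightarrow)$ the paper observes directly that $\strucuniv$ is never referenced by the semantics, whereas you phrase the same fact as an identity bisimulation; your extra care in checking $\GRELstfintree{},\GRELfinlinprime{}\subseteq\GRELpartialorderprime{}$, disjointness of edge labels, and finiteness of the unwinding fills in details the paper leaves implicit.
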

\begin{proof}
\proofcase{($\Longrightarrow$)}
Let $\struc \in \GRELstfintree{}$.
Let $\struc' \in \GRELfinlinprime{}$ be a "generalized structure" obtained from $\struc$ by extending a "partial order" $\strucuniv^{\struc}$ with a "linear order".
As the "universal relation" is not used in the evaluation of $\fml$,
$\struc \modelsfml \fml$ "iff" $\struc' \modelsfml \fml$.

\proofcase{($\Longleftarrow$)}
Let $\struc \in \GRELfinlinprime{}$ and $r \in \univ{\struc}$ be any.
Let $\struc[2] \in \GRELstfintree{}$ be the ``\AP""tree unwinding""'' of $\struc$ from $r$, defined as follows:
\begin{itemize}
    \item The universe $\univ{\struc[2]} \subseteq (\univ{\struc} \vsig)^* \univ{\struc}$ is the smallest set (of traces)
    with $r \in \univ{\struc[2]}$
    such that if $\word y \in \univ{\struc[2]}$ and $\tuple{y, y'} \in \aterm^{\struc}$,
    then $\word y \aterm y' \in \univ{\struc[2]}$,
    \item $\aterm^{\struc[2]} \defeq \set{\tuple{\word y, \word y \aterm y'} \mid \word \in (\univ{\struc} \vsig)^*, y \in \univ{\struc}, \aterm \in \vsig, y' \in \univ{\struc}} \cap \univ{\struc[2]}^2$,
    \item $\strucuniv^{\struc[2]}$ is the smallest "partial order" subsuming $\bigcup_{\aterm \in \vsig} \aterm^{\struc[2]}$.
\end{itemize}
For instance, we transform $\struc$ to $\struc[2]$ as follows:
\[\struc = \begin{tikzpicture}[baseline = -5.ex]
    \tikzstyle{vert}=[draw = black, circle, fill = gray!10, inner sep = 2pt, minimum size = 1pt, font = \scriptsize]
    \graph[grow right = .8cm, branch down = 4.5ex]{
    {/, s2/{}[vert]} -!- {s1/{$r$}[vert], /, s4/{}[vert]} -!- {/, s3/{}[vert], s5/{}[vert, yshift= 1ex]}
    };
    \graph[use existing nodes, edges={color=black, pos = .5, earrow}, edge quotes={fill=white, inner sep=1pt,font= \scriptsize}]{
        s1 ->["$a$"] s2; s1 ->["$b$"] s3;
        s2 ->["$c$"] s4; s3 ->["$d$"] s4;
    };
\end{tikzpicture} \quad\leadsto\quad
\struc[2] = \begin{tikzpicture}[baseline = -5.ex]
    \tikzstyle{vert}=[draw = black, circle, fill = gray!10, inner sep = 2pt, minimum size = 1pt, font = \scriptsize]
    \graph[grow right = .8cm, branch down = 4.5ex]{
    {/, s2/{}[vert], s4/{}[vert]} -!- {s1/{$r$}[vert], /} -!- {/, s3/{}[vert], s5/{}[vert]}
    };
    \graph[use existing nodes, edges={color=black, pos = .5, earrow}, edge quotes={fill=white, inner sep=1pt,font= \scriptsize}]{
        s1 ->["$a$"] s2; s1 ->["$b$"] s3;
        s2 ->["$c$", pos =.4] s4; s3 ->["$d$", pos =.4] s5;
    };
\end{tikzpicture} 
\]
Because $\tuple{\struc, r}$ and $\tuple{\struc[2], r}$ are "bisimilar" and $r \in \semPDLREwLAp{\fml}{\struc[2]}$ by $\struc[2] \in \GRELstfintree{}$,
we have $r \in \semPDLREwLAp{\fml}{\struc}$ by \Cref{proposition: bisimulation}.
Hence, $\GRELfinlinprime{} \modelsfml \fml$.
\end{proof}

\subsubsection{Reduction to the "theory" on $\GRELstfintree{\psig, \set{\const{S}}}$}\label{section: theorem: complexity PDLREwLAp substitution-closed: to one term variable}
Let $\const{S}$ (Successor) be a fresh "term variable".
In this section,
we give a reduction from the "theory" on $\GRELstfintree{\psig, \vsig}$ to the "theory" on $\GRELstfintree{\psig \dcup \vsig, \set{\const{S}}}$.
For a class $\strucclass$, we say that a "formula" $\fml$ is \AP""satisfiable"" if $\semPDLREwLAp{\fml}{\struc} \neq \emptyset$ for some $\struc \in \strucclass$.

For an $\expr \in \exprclassPDLREwLAp{\psig, \vsig}$,
let $\intro*\tonormonetermvariable{\expr} \in \exprclassPDLREwLAp{\psig \dcup \vsig, \set{\const{S}}}$ be the "expression" $\expr$ in which each $\aterm \in \vsig$ has been replaced with the "term" $\const{S} \aterm?$ ($\aterm$ is used as a "formula variable").
We then have:
\begin{lemma}\label{lemma: reduction to one term variable}
    Let $\fml[2]_0 \defeq \bo{\const{S}^+}((\bigvee_{\aterm \in \vsig} \aterm) \land (\bigwedge_{\substack{\aterm[1], \aterm[2] \in \vsig\\ "st" \aterm[1] \neq \aterm[2]}} \lnot \aterm[1] \lor \lnot \aterm[2]))$.
    For every $\fml \in \fmlclassPDLREwLAp{\psig, \vsig}$, we have:

    $\fml$ is "satisfiable" on $\GRELstfintree{\psig, \vsig}$ ~"iff"~
    $\tonormonetermvariable{\fml} \land \fml[2]_0$ is "satisfiable" on $\GRELstfintree{\psig \dcup \vsig, \set{\const{S}}}$.
\end{lemma}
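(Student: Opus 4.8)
The key observation is that in a labelled rooted tree every non-root node has a unique incoming edge, so the edge labels can be faithfully recorded as unary predicates at the target nodes. This lets us translate structures over $\vsig$ into structures over $\set{\const{S}}$ with the $\aterm\in\vsig$ re-read as formula variables, and back. I would prove the two directions of the biconditional by giving these two translations explicitly and checking, by a routine induction on $\expr$, that $\semPDLREwLAp{\tonormonetermvariable{\expr}}{\struc'}$ matches $\semPDLREwLAp{\expr}{\struc}$, the only base case being $\semPDLREwLAp{\const{S}\,\aterm?}{\struc'}=\aterm^{\struc}$.

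\textbf{Direction $(\Longrightarrow)$.} Suppose $\semPDLREwLAp{\fml}{\struc}\neq\emptyset$ for some $\struc\in\GRELstfintree{\psig,\vsig}$, and pick $r\in\semPDLREwLAp{\fml}{\struc}$. Since $\struc$ is a labelled tree with $\aterm[1]^{\struc}\cap\aterm[2]^{\struc}=\emptyset$ for $\aterm[1]\neq\aterm[2]$, every node $v\neq$ root has a unique incoming edge, labelled by a unique $\aterm(v)\in\vsig$. Define $\struc'\in\GRELstfintree{\psig\dcup\vsig,\set{\const{S}}}$ on the same universe by $\const{S}^{\struc'}\defeq\bigcup_{\aterm\in\vsig}\aterm^{\struc}$, $\afml^{\struc'}\defeq\afml^{\struc}$ for $\afml\in\psig$, $\aterm^{\struc'}\defeq\set{v\mid \aterm(v)=\aterm}$ for $\aterm\in\vsig$, and $\strucuniv^{\struc'}\defeq(\const{S}^{\struc'})^{*}$. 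Then $\semPDLREwLAp{\const{S}\,\aterm?}{\struc'}=\set{\tuple{u,v}\in\const{S}^{\struc'}\mid\aterm(v)=\aterm}=\aterm^{\struc}$, so an easy induction on $\expr\in\exprclassPDLREwLAp{\psig,\vsig}$ gives $\semPDLREwLAp{\tonormonetermvariable{\expr}}{\struc'}=\semPDLREwLAp{\expr}{\struc}$; in particular $r\in\semPDLREwLAp{\tonormonetermvariable{\fml}}{\struc'}$. Moreover every $v$ with $\tuple{w,v}\in(\const{S}^{\struc'})^{+}$ for some $w$ is a non-root node, hence carries exactly one $\aterm$; so $\struc'\modelsfml\fml[2]_0$, and $r\in\semPDLREwLAp{\tonormonetermvariable{\fml}\land\fml[2]_0}{\struc'}$.

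\textbf{Direction $(\Longleftarrow)$.} Suppose $r\in\semPDLREwLAp{\tonormonetermvariable{\fml}\land\fml[2]_0}{\struc'}$ for some $\struc'\in\GRELstfintree{\psig\dcup\vsig,\set{\const{S}}}$. Replacing $\struc'$ by the subtree rooted at $r$ (which again lies in $\GRELstfintree{}$, and preserves $r$'s membership in the relevant semantics since the universal relation is not used and $\tuple{\struc',r}$ is bisimilar to its unwinding at $r$, exactly as in the proof of \Cref{lemma: reduction to fintree}), we may assume $r$ is the root. Then $r\in\semPDLREwLAp{\fml[2]_0}{\struc'}$ forces each non-root $v$ to satisfy exactly one $\aterm\in\vsig$; call it $\aterm(v)$. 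Define $\struc\in\GRELstfintree{\psig,\vsig}$ on the same universe by $\aterm^{\struc}\defeq\set{\tuple{u,v}\in\const{S}^{\struc'}\mid\aterm(v)=\aterm}$ for $\aterm\in\vsig$, $\afml^{\struc}\defeq\afml^{\struc'}$ for $\afml\in\psig$, and $\strucuniv^{\struc}\defeq(\bigcup_{\aterm\in\vsig}\aterm^{\struc})^{*}$; since each edge target carries exactly one label we get $\bigcup_{\aterm}\aterm^{\struc}=\const{S}^{\struc'}$ with the $\aterm^{\struc}$ pairwise disjoint, so $\struc$ is a labelled tree in $\GRELstfintree{\psig,\vsig}$, and $\semPDLREwLAp{\const{S}\,\aterm?}{\struc'}=\aterm^{\struc}$ again. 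The same induction yields $r\in\semPDLREwLAp{\fml}{\struc}$, so $\fml$ is satisfiable on $\GRELstfintree{\psig,\vsig}$.

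\textbf{Main obstacle.} The inductive equality of semantics is mechanical once the base case is fixed. The one delicate point, and where I would spend care, is the $(\Longleftarrow)$ reduction to the case that the evaluation point is the root: $\fml[2]_0$ only pins down the fragment of $\struc'$ reachable from $r$ along $\const{S}^{+}$, so a node outside this region could carry zero or two labels and break the construction of $\struc$. This is handled by passing to the generated subtree at $r$ and invoking the tree-unwinding/bisimulation argument already used for \Cref{lemma: reduction to fintree}.
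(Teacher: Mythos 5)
Your proposal is correct and follows essentially the same route as the paper: the same forward construction ($\const{S}^{\struc'}=\bigcup_{\aterm}\aterm^{\struc}$ with edge labels recorded as unary predicates on targets), the same backward construction via $\semPDLREwLAp{\const{S}\,\aterm?}{\struc'}$, the same routine induction on expressions, and the same use of the bisimulation/generated-subtree argument to reduce the $(\Longleftarrow)$ direction to the case where the witness point is the root so that $\fml[2]_0$ pins down the labelling everywhere. The "delicate point" you flag is exactly the step the paper handles with the same w.l.o.g.\ appeal to \Cref{proposition: bisimulation}.
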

\begin{proof}
    \proofcase{$(\Longrightarrow)$}
    Let $\struc \in \GRELstfintree{\psig, \vsig}$.
    Let $\struc[2] \in \GRELstfintree{\psig \dcup \vsig, \set{\const{S}}}$ be the "generalized structure" defined as follows:
    \begin{itemize}
        \item
        $\univ{\struc[2]} \defeq \univ{\struc}$;

        \item
        $\const{S}^{\struc[2]} \defeq \bigcup_{\aterm \in \vsig} \aterm^{\struc}$;

        \item
        $\afml^{\struc[2]} \defeq \afml^{\struc[1]}$ for $\afml \in \psig$ and
        $\aterm^{\struc[2]} \defeq \set{y \mid \tuple{x, y} \in \aterm^{\struc[1]}}$ for $\aterm \in \vsig$;

        \item
        $\strucuniv^{\struc[2]}$ is the smallest "partial order" subsuming the binary relation $\const{S}^{\struc[2]}$.
    \end{itemize}
    For instance, we transform $\struc$ to $\struc[2]$ as follows:
    \[\struc = \begin{tikzpicture}[baseline = -3.ex]
    \tikzstyle{vert}=[draw = black, circle, fill = gray!10, inner sep = 2pt, minimum size = 1pt, font = \scriptsize]
    \graph[grow right = .8cm, branch down = 5.ex]{
    {/, s1/{}[vert]} -!- {/, s2/{}[vert]} -!- {s/{$r$}[vert]} -!- {/, s3/{}[vert]} -!- {/, s4/{}[vert]}
    };
    \graph[use existing nodes, edges={color=black, pos = .5, earrow}, edge quotes={fill=white, inner sep=1pt,font= \scriptsize}]{
        s ->["$a$"] s1; s ->["$b$"] s2; s ->["$c$"] s3; s ->["$d$"] s4;
    };
\end{tikzpicture} \quad\leadsto\quad
\struc[2] = \begin{tikzpicture}[baseline = -3.ex]
    \tikzstyle{vert}=[draw = black, circle, fill = gray!10, inner sep = 2pt, minimum size = 1pt, font = \scriptsize]
    \graph[grow right = .8cm, branch down = 5.ex]{
    {/, s1/{}[vert]} -!- {/, s2/{}[vert]} -!- {s/{$r$}[vert]} -!- {/, s3/{}[vert]} -!- {/, s4/{}[vert]}
    };
    \node[below left = .1ex and 0.cm of s1, fill = gray!10, font = \tiny, inner sep = 1pt]{${a}$};
    \node[below = .1ex of s2, fill = gray!10, font = \tiny, inner sep = 1pt]{${b}$};
    \node[below = .1ex of s3, fill = gray!10, font = \tiny, inner sep = 1pt]{${c}$};
    \node[below = .1ex of s4, fill = gray!10, font = \tiny, inner sep = 1pt]{${d}$};
    \graph[use existing nodes, edges={color=black, pos = .5, earrow}, edge quotes={fill=white, inner sep=1pt,font= \scriptsize}]{
        s ->["$\const{S}$"] {s1,s2,s3,s4};
    };
\end{tikzpicture}\]
    We then have $\semPDLREwLAp{\expr}{\struc} = \semPDLREwLAp{\tonormonetermvariable{\expr}}{\struc[2]}$ by easy induction on $\expr$.
    We also have $\semPDLREwLAp{\fml[2]_0}{\struc[2]} = \univ{\struc[2]}$ by construction.
    Hence, $\tonormonetermvariable{\fml} \land \fml[2]_0$ is "satisfiable" at $\struc[2]$.

    \proofcase{$(\Longleftarrow)$}
    By using the "bisimulation" (\Cref{proposition: bisimulation}),
    "wlog", we can assume that $\fml$ is true at the root $r$ of some $\struc \in \GRELstfintree{\psig \dcup \vsig, \set{\const{S}}}$.
    Let $\struc[2] \in \GRELstfintree{\psig, \vsig}$ be the "generalized structure" defined as follows:
    \begin{itemize}
        \item
        $\univ{\struc[2]} \defeq \univ{\struc}$;

        \item
        $\aterm^{\struc[2]} \defeq \semPDLREwLAp{\const{S} \aterm?}{\struc}$ for $\aterm \in \vsig$;

        \item
        $\afml^{\struc[2]} \defeq \afml^{\struc}$ for $\afml \in \psig$;

        \item $\strucuniv^{\struc[2]}$ is the smallest "partial order" extending the binary relation $\bigcup_{\aterm \in \vsig} \aterm^{\struc[2]}$.
    \end{itemize}
    We then have $\semPDLREwLAp{\tonormonetermvariable{\expr}}{\struc} = \semPDLREwLAp{\expr}{\struc[2]}$ by easy induction on $\expr$.
    By using the condition of $r \in \semPDLREwLAp{\fml[2]_0}{\struc[2]}$,
    we have $\struc[2] \in \GRELstfintree{\psig, \vsig}$.
\end{proof}

\subsubsection{Reduction to the "theory" on $\GRELstfinbintree{\psig, \set{\const{S}}}$}\label{section: theorem: complexity PDLREwLAp substitution-closed: to bin}
Let $\AP\intro*\GRELstfinbintree{\psig, \vsig}$ denote the class of $\struc \in \GRELstfintree{\psig, \vsig}$ such that
\begin{itemize}
    \item $\#\set{\vertex[2] \mid \tuple{\vertex, \vertex[2]} \in \bigcup_{\aterm \in \vsig} \aterm^{\struc}} \le 2$ for every $\vertex \in \univ{\struc}$.
\end{itemize}
Let $\const{D}$ (Down) be a fresh "formula variable".
In this section, by a binary tree encoding, we give a reduction from
the "theory" on $\GRELstfintree{\psig, \set{\const{S}}}$ to
the "theory" on $\GRELstfinbintree{\psig \dcup \set{\const{D}}, \set{\const{S}}}$. 

For an $\expr \in \exprclassPDLREwLAp{\psig, \set{\const{S}}}$,
let $\intro*\tonormbin{\expr} \in \exprclassPDLREwLAp{\psig \dcup \set{\const{D}}, \set{\const{S}}}$ be the "expression" $\expr$ in which
each $\const{S}$ has been replaced with the "term" $\const{S} \const{D}? (\const{S} \lnot \const{D}?)^*$.
We then have:
\begin{lemma}\label{lemma: reduction to finbintree}
    For every $\fml \in \fmlclassPDLREwLAp{\psig, \set{\const{S}}}$, we have:

    $\fml$ is "satisfiable" on $\GRELstfintree{\psig, \set{\const{S}}}$
    ~"iff"~
    $\tonormbin{\fml} \land \bo{\const{S}} \const{D}$ is "satisfiable" on $\GRELstfinbintree{\psig \dcup \set{\const{D}}, \set{\const{S}}}$.
\end{lemma}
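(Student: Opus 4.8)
The plan is to formalize the classical \emph{first-child/next-sibling} encoding of finitely-branching trees into binary trees. Before constructing anything, I would reduce both directions to the case where the distinguished point is the \emph{root}: by \Cref{proposition: bisimulation}, passing to a generated subtree at the satisfying node preserves both truth of the formula at that node and membership in $\GRELstfintree{}$ resp.\ $\GRELstfinbintree{}$ (the generated subtree of a tree, resp.\ of a binary tree, is again such), so we may assume $\fml$ (resp.\ $\tonormbin{\fml}\land\bo{\const{S}}\const{D}$) is satisfied at the root of some $\struc\in\GRELstfintree{\psig,\set{\const{S}}}$ (resp.\ $\struc[2]\in\GRELstfinbintree{\psig\dcup\set{\const{D}},\set{\const{S}}}$).

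For $(\Longrightarrow)$, let $\struc\in\GRELstfintree{\psig,\set{\const{S}}}$ satisfy $\fml$ at its root $r$, and fix an arbitrary linear ordering $w^{v}_1,\dots,w^{v}_{k_v}$ of the $\const{S}$-successors of each $v\in\univ{\struc}$ (finitely many, as $\struc$ is finite). I would define $\struc[2]$ on the same universe $\univ{\struc[2]}\defeq\univ{\struc}$ by letting $\const{S}^{\struc[2]}$ consist of the ``down'' edges $\tuple{v,w^{v}_1}$ and the ``right'' edges $\tuple{w^{v}_j,w^{v}_{j+1}}$; setting $\afml^{\struc[2]}\defeq\afml^{\struc}$ for $\afml\in\psig$; taking $\const{D}^{\struc[2]}$ to be the complement of the set of proper right-siblings $\set{w^{v}_j \mid v\in\univ{\struc},\ 2\le j\le k_v}$ (so $r$ and every first child lie in $\const{D}^{\struc[2]}$); and letting $\strucuniv^{\struc[2]}$ be the reflexive-transitive closure of $\const{S}^{\struc[2]}$. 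Each node has at most one down- and one right-successor, $\struc[2]$ is finite, rooted at $r$ (reach $v$, go down, then right $j-1$ times to reach $w^{v}_j$), and $\const{S}^{\struc[2]}$ is acyclic since ordering nodes lexicographically by (original depth, sibling index) strictly increases along every edge; hence $\struc[2]\in\GRELstfinbintree{}$. The heart of the matter is the identity $\semPDLREwLAp{\const{S}}{\struc}=\semPDLREwLAp{\const{S}\const{D}?(\const{S}\lnot\const{D}?)^{*}}{\struc[2]}$ on the common universe: from $v$, the step $\const{S}\const{D}?$ reaches exactly the down-child $w^{v}_1\in\const{D}^{\struc[2]}$, and $(\const{S}\lnot\const{D}?)^{*}$ then walks the right-chain $w^{v}_1,w^{v}_2,\dots$, since the only $\lnot\const{D}$-successor of $w^{v}_j$ is $w^{v}_{j+1}$; so the composite relates $v$ to precisely $\set{w^{v}_1,\dots,w^{v}_{k_v}}$. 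By a routine induction on $\expr$ — using that the two structures share their universe, that $\semPDLREwLAp{\const{S}}{\struc}=\semPDLREwLAp{\tonormbin{\const{S}}}{\struc[2]}$, that the $\bl^{\capid}$, $\bl^{\capcomid}$, and test operators are computed uniformly from the universe and the argument relations (here $\diagonal_{\univ{\struc}}=\diagonal_{\univ{\struc[2]}}$ and $\const{D}$ is interpreted by $\const{D}^{\struc[2]}$), and that $\strucuniv$ is irrelevant to evaluation — we obtain $\semPDLREwLAp{\expr}{\struc}=\semPDLREwLAp{\tonormbin{\expr}}{\struc[2]}$ for every $\expr\in\exprclassPDLREwLAp{\psig,\set{\const{S}}}$. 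Thus $\tonormbin{\fml}$ is true at $r$ in $\struc[2]$, and $\bo{\const{S}}\const{D}$ holds there because the unique $\const{S}^{\struc[2]}$-successor of $r$ is $w^{r}_1\in\const{D}^{\struc[2]}$.

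For $(\Longleftarrow)$, let $\struc[2]\in\GRELstfinbintree{\psig\dcup\set{\const{D}},\set{\const{S}}}$ satisfy $\tonormbin{\fml}\land\bo{\const{S}}\const{D}$ at its root $r$. I would \emph{decode} $\struc[2]$ to $\struc$ on $\univ{\struc}\defeq\univ{\struc[2]}$ by $\const{S}^{\struc}\defeq\semPDLREwLAp{\const{S}\const{D}?(\const{S}\lnot\const{D}?)^{*}}{\struc[2]}$, $\afml^{\struc}\defeq\afml^{\struc[2]}$ for $\afml\in\psig$, and $\strucuniv^{\struc}\defeq(\const{S}^{\struc})^{*}$. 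That $\struc$ is a finite well-founded tree follows because each $\const{S}^{\struc}$-edge contains at least one $\const{S}^{\struc[2]}$-edge, so acyclicity and well-foundedness transfer from the finite tree $\struc[2]$; and $\struc$ is rooted at $r$ by reparsing: every $\const{S}^{\struc[2]}$-path from $r$ begins with a down-edge (as $\bo{\const{S}}\const{D}$ forbids $r$ a right-child) and splits into maximal ``one down, then some rights'' blocks, each a single $\const{S}^{\struc}$-step, so every node of $\struc[2]$ is $\const{S}^{\struc}$-reachable from $r$; hence $\struc\in\GRELstfintree{\psig,\set{\const{S}}}$. By construction $\semPDLREwLAp{\const{S}}{\struc}=\semPDLREwLAp{\tonormbin{\const{S}}}{\struc[2]}$, so the same induction gives $\semPDLREwLAp{\expr}{\struc}=\semPDLREwLAp{\tonormbin{\expr}}{\struc[2]}$, whence $\fml$ is true at $r$ in $\struc$ and thus satisfiable on $\GRELstfintree{\psig,\set{\const{S}}}$.

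I expect the main obstacle to be the bookkeeping needed to confirm that the two constructed structures genuinely lie in $\GRELstfinbintree{}$ resp.\ $\GRELstfintree{}$ — the well-foundedness/acyclicity of the edge relation and, especially in the decoding direction, the ``rooted'' requirement — and, relatedly, pinning down the exact role of the marker $\const{D}$: the conjunct $\bo{\const{S}}\const{D}$ is precisely what forbids the root of $\struc[2]$ a stray right-sibling (so that decoding yields a genuine tree rooted at $r$), while placing first children into $\const{D}^{\struc[2]}$ and proper siblings into its complement is precisely what makes the regular term $\const{S}\const{D}?(\const{S}\lnot\const{D}?)^{*}$ trace out the full set of children of a node.
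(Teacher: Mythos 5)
Your proof is correct and follows essentially the same route as the paper's: the first-child/next-sibling encoding on the same universe for ($\Longrightarrow$), the decoding $\const{S}^{\struc}\defeq\semPDLREwLAp{\const{S}\const{D}?(\const{S}\lnot\const{D}?)^{*}}{\struc[2]}$ for ($\Longleftarrow$), the key identity $\semPDLREwLAp{\const{S}}{\struc}=\semPDLREwLAp{\tonormbin{\const{S}}}{\struc[2]}$ followed by induction on expressions, and the bisimulation-based reduction to the root. The only (immaterial) deviations are that you place the root in $\const{D}^{\struc[2]}$ where the paper marks exactly the first children, and you are somewhat more explicit about verifying membership in the target classes.
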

\begin{proof}
    \proofcase{$(\Longrightarrow)$}
    By the "bisimulation" (of \Cref{proposition: bisimulation}),
    "wlog", we can assume that $\fml$ is true at the root $r$ of some $\struc \in \GRELstfintree{\psig, \set{\const{S}}}$.
    Let $\le$ be a "linear order" extending $\strucuniv^{\struc}$.
    Let $\struc[2] \in \GRELstfinbintree{\psig \dcup \set{\const{D}}, \set{\const{S}}}$ be the "generalized structure" defined as follows:
    \begin{itemize}
        \item
        $\univ{\struc[2]} \defeq \univ{\struc}$;

        \item
        $\const{S}^{\struc[2]}$ and $\const{D}^{\struc[2]}$ are defined the smallest sets satisfying the following:
        \begin{itemize}
            \item For each $y_0 \in \univ{\struc[2]}$,
            let $y_1 \le y_2 \le \dots \le y_n$ be "st" $\set{y_i \mid i \in \rangeone{n}} = \set{y \mid \tuple{y_0, y} \in \bigcup_{\aterm \in \vsig} \aterm^{\struc}}$.
            Then,
            $y_1 \in \const{D}^{\struc[2]}$ and $\tuple{y_{i-1}, y_i} \in \const{S}^{\struc[2]}$ for all $i \in \rangeone{n}$.
        \end{itemize}

        \item
        $\afml^{\struc[2]} \defeq \afml^{\struc[1]}$ for each $\afml \in \psig$.
 
        \item
        $\strucuniv^{\struc[2]}$ is the smallest "partial order" subsuming the binary relation $\const{S}^{\struc[2]}$.
    \end{itemize}
    For instance, we transform $\struc$ to $\struc[2]$ as follows:
    \[\struc = \begin{tikzpicture}[baseline = -5.5ex]
    \tikzstyle{vert}=[draw = black, circle, fill = gray!10, inner sep = 2pt, minimum size = 1pt, font = \scriptsize]
    \graph[grow right = .8cm, branch down = 5.ex]{
    {/, s1/{}[vert], s11/{}[vert]} -!- {/, s2/{}[vert], s12/{}[vert]} -!- {s/{$r$}[vert]} -!- {/, s3/{}[vert]} -!- {/, s4/{}[vert]}
    };
    \node[below left = .1ex and 0 of s2, fill = gray!10, font = \tiny, inner sep = 1pt]{${\const{B}}$};
    \node[below left = .1ex and 0 of s4, fill = gray!10, font = \tiny, inner sep = 1pt]{${\const{B}}$};
    \graph[use existing nodes, edges={color=black, pos = .5, earrow}, edge quotes={fill=white, inner sep=1pt,font= \scriptsize}]{
        s ->["$\const{S}$"] s1; s ->["$\const{S}$"] s2; s ->["$\const{S}$"] s3; s ->["$\const{S}$"] s4;
        s1 ->["$\const{S}$"] {s11, s12};
    };
\end{tikzpicture} \quad\leadsto\quad
\struc[2] = \begin{tikzpicture}[baseline = -5.5ex]
    \tikzstyle{vert}=[draw = black, circle, fill = gray!10, inner sep = 2pt, minimum size = 1pt, font = \scriptsize]
    \graph[grow right = .8cm, branch down = 5.ex]{
    {/, s1/{}[vert], s11/{}[vert]} -!- {/, s2/{}[vert], s12/{}[vert]} -!- {s/{$r$}[vert]} -!- {/, s3/{}[vert]} -!- {/, s4/{}[vert]}
    };
    \node[below left = .1ex and 0 of s1, fill = gray!10, font = \tiny, inner sep = 1pt]{${\const{D}}$};
    \node[below left = .1ex and 0 of s11, fill = gray!10, font = \tiny, inner sep = 1pt]{${\const{D}}$};
    \node[below left = .1ex and 0 of s2, fill = gray!10, font = \tiny, inner sep = 1pt]{${\const{B}}$};
    \node[below left = .1ex and 0 of s4, fill = gray!10, font = \tiny, inner sep = 1pt]{${\const{B}}$};
    \graph[use existing nodes, edges={color=black, pos = .5, earrow}, edge quotes={fill=white, inner sep=1pt,font= \scriptsize}]{
        s ->["$\const{S}$"] s1 ->["$\const{S}$"] s2 ->["$\const{S}$"] s3 ->["$\const{S}$"] s4;
        s1 ->["$\const{S}$"] s11 ->["$\const{S}$"] s12;
    };
\end{tikzpicture}\]
    We then have $\semPDLREwLAp{\expr}{\struc} = \semPDLREwLAp{\tonormbin{\expr}}{\struc[2]}$ by easy induction on $\expr$.
    We also have $r \in \semPDLREwLAp{\fml[2]_0}{\struc[2]}$.
    Hence, $\tonormbin{\fml} \land \fml[2]_0$ is true at the root $r$ of $\struc[2]$.

    \proofcase{$(\Longleftarrow)$}
     By using the "bisimulation" (\Cref{proposition: bisimulation}),
    "wlog", we can assume that $\fml$ is true at the root $r$ of some $\struc \in \GRELstfinbintree{\psig \dcup \set{\const{D}}, \set{\const{S}}}$.
    Let $\struc[2]$ be the "generalized structure" defined as follows:
    \begin{itemize}
        \item
        $\univ{\struc[2]} \defeq \univ{\struc}$;

        \item
        $\const{S}^{\struc[2]} \defeq \semPDLREwLAp{\const{S} \const{D}? (\const{S} \lnot \const{D}?)^*}{\struc}$;

        \item
        $\afml^{\struc[2]} \defeq \afml^{\struc}$ for each $\afml \in \psig$;

        \item $\strucuniv^{\struc[2]}$ is the smallest "partial order" subsuming the binary relation $\const{S}^{\struc[2]}$.
    \end{itemize}
    For instance, we transform $\struc$ to $\struc[2]$ as follows:
    \[\struc = \begin{tikzpicture}[baseline = -5.5ex]
    \tikzstyle{vert}=[draw = black, circle, fill = gray!10, inner sep = 2pt, minimum size = 1pt, font = \scriptsize]
    \graph[grow right = .8cm, branch down = 5.ex]{
    {/, s1/{}[vert], s11/{}[vert]} -!- {/, s2/{}[vert], s12/{}[vert]} -!- {s/{$r$}[vert], /, s13/{}[vert]} -!- {/, s4/{}[vert]} -!- {/, s5/{}[vert]}
    };
    \node[below left = .1ex and 0 of s1, fill = gray!10, font = \tiny, inner sep = 1pt]{${\const{D}}$};
    \node[below left = .1ex and 0 of s4, fill = gray!10, font = \tiny, inner sep = 1pt]{${\const{D}}$};
    \node[below left = .1ex and 0 of s11, fill = gray!10, font = \tiny, inner sep = 1pt]{${\const{D}}$};
    \graph[use existing nodes, edges={color=black, pos = .5, earrow}, edge quotes={fill=white, inner sep=1pt,font= \scriptsize}]{
        s ->["$\const{S}$"] s1 ->["$\const{S}$"] s2; s ->["$\const{S}$"] s4 ->["$\const{S}$"] s5;
        s1 ->["$\const{S}$"] s11 ->["$\const{S}$"] s12 ->["$\const{S}$"] s13;
    };
    \end{tikzpicture} \quad\leadsto\quad
\struc[2] = \begin{tikzpicture}[baseline = -5.5ex]
    \tikzstyle{vert}=[draw = black, circle, fill = gray!10, inner sep = 2pt, minimum size = 1pt, font = \scriptsize]
     \graph[grow right = .8cm, branch down = 5.ex]{
    {/, s1/{}[vert], s11/{}[vert]} -!- {/, s2/{}[vert], s12/{}[vert]} -!- {s/{$r$}[vert], /, s13/{}[vert]} -!- {/, s4/{}[vert]} -!- {/, s5/{}[vert]}
    };
    \graph[use existing nodes, edges={color=black, pos = .5, earrow}, edge quotes={fill=white, inner sep=1pt,font= \scriptsize}]{
        s ->["$\const{S}$"] {s1,s2,s4,s5};
        s1 ->["$\const{S}$"] {s11, s12, s13};
    };
\end{tikzpicture}\]
    We then have $\semPDLREwLAp{\tonormbin{\expr}}{\struc} = \semPDLREwLAp{\expr}{\struc[2]}$ by easy induction on $\expr$.
    Hence, $\fml$ is true at the root $r$ of $\struc[2]$.
    By the condition of $r \in \semPDLREwLAp{\bo{\const{S}} \const{D}}{\struc[2]}$,
    every vertex $\vertex \in \univ{\struc}$ satisfies $\tuple{r, \vertex} \in (\const{S}^{\struc[2]})^*$.
    Hence, $\struc[2] \in \GRELstfintree{\psig, \set{\const{S}}}$.
\end{proof}

\subsubsection{Reduction to the "theory" on $\GRELstfinbintree{\set{\const{B}}, \set{\const{S}}}$}\label{section: theorem: complexity PDLREwLAp substitution-closed: to one formula variable}
Let $\const{B}$ (Black) be a fresh "formula variable" different from $\const{S}$.
In this section, by a unary encoding of "formula variables",
we give a reduction from the "theory" on $\GRELstfinbintree{\psig, \set{\const{S}}}$ to the "theory" on $\GRELstfinbintree{\set{\const{B}}, \set{\const{S}}}$.

For an $\expr \in \exprclassPDLREwLAp{\psig, \set{\const{S}}}$,
let $\exprpsig(\expr) = \set{\afml_0, \afml_1, \dots, \afml_{n-1}}$ where $\afml_0, \afml_1, \dots, \afml_{n-1}$ are pairwise distinct, and
let $\AP\intro*\tonormuni{\expr} \in \exprclassPDLREwLAp{\set{\const{B}}, \set{\const{S}}}$ be the $\expr$ in which
each occurrence $\afml_i \in \psig$ has been replaced with the "formula" $\dia{\const{S}^{i}} \const{B}$
and each occurrence $\const{S}$ has been replaced with the "term" $\const{S}^{n}$.
We then have:
\begin{lemma}\label{lemma: reduction to one formula variable}
    For every $\fml \in \fmlclassPDLREwLAp{\psig, \set{\const{S}}}$, we have:
    
    $\fml$ is "satisfiable" on $\GRELstfinbintree{\psig, \set{\const{S}}}$
    ~"iff"~
    $\tonormuni{\fml}$ is "satisfiable" on $\GRELstfinbintree{\set{\const{B}}, \set{\const{S}}}$.
\end{lemma}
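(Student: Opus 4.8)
The plan is to establish each implication by translating \emph{structures} (not just formulas) and then checking, by an easy induction on $\expr$, that the semantics of $\tonormuni{\bl}$ match — exactly in the style of the proofs of \Cref{lemma: reduction to one term variable} and \Cref{lemma: reduction to finbintree}. Throughout write $\exprpsig(\fml) = \set{\afml_0, \dots, \afml_{n-1}}$, so that $\tonormuni{\bl}$ replaces each $\afml_i$ by $\dia{\const{S}^i}\const{B}$ and each $\const{S}$ by $\const{S}^n$.

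For $(\Longrightarrow)$, given $\struc \in \GRELstfinbintree{\psig, \set{\const{S}}}$ with $\fml$ true at its root, I would build $\struc[2] \in \GRELstfinbintree{\set{\const{B}}, \set{\const{S}}}$ by a \emph{unary block expansion}: replace every vertex $x$ of $\struc$ by a fresh $\const{S}$-path $x^{(0)} \to \dots \to x^{(n-1)}$, declare $x^{(i)} \in \const{B}^{\struc[2]}$ iff $x \in \afml_i^{\struc}$, and for each $\const{S}$-child $y$ of $x$ add the edge $x^{(n-1)} \to y^{(0)}$ (so branching occurs only at the block-ends, whence $\struc[2]$ is again a finite binary tree). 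The two local observations are that from $x^{(0)}$ the length-$i$ path is unique and ends at $x^{(i)}$ for every $i \le n-1$, so $\dia{\const{S}^i}\const{B}$ holds at $x^{(0)}$ iff $x \in \afml_i^{\struc}$; and that the length-$n$ paths from $x^{(0)}$ are precisely $x^{(0)} \to \dots \to x^{(n-1)} \to y^{(0)}$ over the children $y$ of $x$, so $\semPDLREwLAp{\const{S}^n}{\struc[2]}$ restricted to block-starts mirrors $\semPDLREwLAp{\const{S}}{\struc}$. Setting $\iota(x) \defeq x^{(0)}$, an induction on $\expr$ then gives $x \in \semPDLREwLAp{\fml}{\struc}$ iff $\iota(x) \in \semPDLREwLAp{\tonormuni{\fml}}{\struc[2]}$, and $\tuple{x, x'} \in \semPDLREwLAp{\term}{\struc}$ iff $\tuple{\iota(x), \iota(x')} \in \semPDLREwLAp{\tonormuni{\term}}{\struc[2]}$ — the $\bl^{\adom}$ case using the auxiliary fact that from $\iota(x)$ the relation $\semPDLREwLAp{\tonormuni{\term}}{\struc[2]}$ can only reach block-starts.

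For $(\Longleftarrow)$, given $\struc \in \GRELstfinbintree{\set{\const{B}}, \set{\const{S}}}$ with $\tonormuni{\fml}$ satisfiable, I would — using bisimulation invariance (\Cref{proposition: bisimulation}), as in the earlier lemmas — assume $\tonormuni{\fml}$ true at the root $r$, and then pass to the \emph{$n$-step contraction} $\struc[2]$: let $\univ{\struc[2]}$ be the set of vertices reachable from $r$ along $\semPDLREwLAp{\const{S}^n}{\struc}$, put $\const{S}^{\struc[2]} \defeq \semPDLREwLAp{\const{S}^n}{\struc} \cap \univ{\struc[2]}^2$, $\afml_i^{\struc[2]} \defeq \semPDLREwLAp{\dia{\const{S}^i}\const{B}}{\struc} \cap \univ{\struc[2]}$, and let $\strucuniv^{\struc[2]}$ be the induced partial order. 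An easy induction on $\expr$ gives $\semPDLREwLAp{\tonormuni{\expr}}{\struc} = \semPDLREwLAp{\expr}{\struc[2]}$ on $\univ{\struc[2]}$, and hence $\fml$ holds at $r$ in $\struc[2]$.

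The step I expect to be the main obstacle is the remaining verification in $(\Longleftarrow)$, namely that $\struc[2]$ actually lies in $\GRELstfinbintree{\psig, \set{\const{S}}}$. It is plainly a finite rooted tree, but its out-degree need not be at most $2$: a naive $n$-step contraction of an arbitrary binary tree can have out-degree up to $2^n$, since a block-start may carry that many depth-$n$ descendants. To keep the contraction binary one must first know that the witness for $\tonormuni{\fml}$ is \emph{block-regular} — out-degree $1$ except at depths $\equiv -1 \pmod n$, so that branching is confined to block-ends — and bisimulation alone cannot impose this on an arbitrary witness. Accordingly I expect the argument to require, exactly as with the conjuncts $\fml[2]_0$ in \Cref{lemma: reduction to one term variable} and $\bo{\const{S}}\const{D}$ in \Cref{lemma: reduction to finbintree}, an additional structural formula conjoined with $\tonormuni{\fml}$ in the target that forces block-regularity; one must then check that this conjunct is made true by the block expansion of $(\Longrightarrow)$ and is precisely what renders the contraction of $(\Longleftarrow)$ binary. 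Engineering such a conjunct using only the single predicate $\const{B}$ — so that block-boundaries can be detected internally — is the delicate point of the proof.
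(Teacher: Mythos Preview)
Your constructions for both directions are precisely the paper's, and your diagnosis of the $(\Longleftarrow)$ obstacle is correct --- in fact sharper than the paper, which after building the contraction $\struc[2]$ simply writes ``Hence, $\fml$ is also satisfiable on $\GRELstfinbintree{\psig, \set{\const{S}}}$'' without ever checking that $\struc[2]$ has out-degree at most~$2$. The worry is fatal to the \emph{statement}, not just the proof: take
\[
\fml \;\defeq\; \bigwedge_{i=1}^{3}\dia{\const{S}}\afml_i \;\land\; \bigwedge_{1\le i<j\le 3}\bo{\const{S}}(\lnot\afml_i \lor \lnot\afml_j),
\]
which forces three children with pairwise-disjoint labels and is therefore unsatisfiable on any binary tree; yet with $n=3$ its translation $\tonormuni{\fml}$ merely asks for three depth-$3$ descendants realising, respectively and exclusively, $\const{B}$, $\dia{\const{S}}\const{B}$, $\dia{\const{S}^2}\const{B}$ --- and a binary tree already has up to eight depth-$3$ vertices, so $\tonormuni{\fml}$ is easily satisfied in $\GRELstfinbintree{\set{\const{B}},\set{\const{S}}}$.

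So you are right that an extra conjunct would be needed; but forcing block-regularity with only $\const{B}$ runs into the usual obstacle that out-degree constraints are not bisimulation-invariant, hence not modally definable over a single successor. The clean repair is to swap the order of reductions --- perform the formula-variable collapse on $\GRELstfintree{}$ (where no branching bound must be preserved) and only afterwards apply the binary encoding of \Cref{lemma: reduction to finbintree} --- or to drop this step altogether, since the ATA of \Cref{definition: to ATA} already has polynomially many states for arbitrary finite $\psig$ and the exponential alphabet is harmless for the \textsc{ExpTime} emptiness test.
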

\begin{proof}
    \proofcase{$(\Longrightarrow)$}
    Let $\struc \in \GRELstfinbintree{\psig, \set{\const{S}}}$.
    Let $\struc[2] \in \GRELstfinbintree{\set{\const{B}}, \set{\const{S}}}$ be the "generalized structure" defined as follows:
    \begin{itemize}
        \item
        $\univ{\struc[2]} \defeq \univ{\struc} \times \range{0}{n-1}$ (we write $\tuple{x, i} \in \univ{\struc[2]}$ for $(x)_i$, for short);

        \item
        $\const{S}^{\struc[2]} \defeq \set{\tuple{(x)_{i-1}, (x)_{i}} \mid x \in \univ{\struc}, i \in \rangeone{n-1}}
        \cup \set{\tuple{(x)_{n-1}, (y)_0} \mid \tuple{x, y} \in \const{S}^{\struc[1]}}$;

        \item
        $\const{B}^{\struc[2]} \defeq \set{(x)_{i} \mid x \in \afml_{i}^{\struc}, i \in \range{0}{n-1}}$;

        \item
        $\strucuniv^{\struc[2]}$ is the smallest "partial order" extending the binary relation $\const{S}^{\struc[2]}$.
    \end{itemize}
    For instance, when $n = 4$, we transform $\struc$ to $\struc[2]$ as follows:
    \[\struc = \begin{tikzpicture}[baseline = -.5ex]
    \tikzstyle{vert}=[draw = black, circle, fill = gray!10, inner sep = 2pt, minimum size = 1pt, font = \scriptsize]
    \graph[grow right = .8cm, branch down = 3ex]{
    {s1/{$r$}[vert]} -!- {s2/{}} -!- {s3/{}} -!- {s4/{}} -!- {s5/{}[vert], s5'/{}[vert]}
    };
    \node[below = .1ex of s1, fill = gray!10, font = \tiny, inner sep = 1pt]{$\afml_0, \afml_2$};
    \graph[use existing nodes, edges={color=black, pos = .5, earrow}, edge quotes={fill=white, inner sep=1pt,font= \scriptsize}]{
    s1 ->["$\const{S}$"] {s5,s5'};
    };
\end{tikzpicture} \quad\leadsto\quad
\struc[2] = \begin{tikzpicture}[baseline = -.5ex]
    \tikzstyle{vert}=[draw = black, circle, fill = gray!10, inner sep = 2pt, minimum size = 1pt, font = \scriptsize]
    \graph[grow right = .8cm, branch down = 3ex]{
    {s1/{$r$}[vert]} -!- {s2/{}[vert, inner sep = 1pt]} -!- {s3/{}[vert, inner sep = 1pt]} -!- {s4/{}[vert, inner sep = 1pt]} -!- {s5/{}[vert], s5'/{}[vert]}
    };
    \node[below = .1ex of s1, fill = gray!10, font = \tiny, inner sep = 1pt]{$\const{B}$};
    \node[below = .1ex of s3, fill = gray!10, font = \tiny, inner sep = 1pt]{$\const{B}$};
    \graph[use existing nodes, edges={color=black, pos = .5, earrow}, edge quotes={fill=white, inner sep=1pt,font= \scriptsize}]{
    s1 ->["$\const{S}$"] s2 ->["$\const{S}$"] s3 ->["$\const{S}$"] s4 ->["$\const{S}$"] {s5, s5'};
    };
\end{tikzpicture}\]
    First, we have that $\tuple{(x)_i, (y)_j} \in \semPDLREwLAp{\tonormuni{\term}}{\struc[2]}$ implies $i = j$,
    by easy induction on $\term$ (note that $\const{S}$ always occurs as $\const{S}^{n}$ in $\tonormuni{\term}$).
    Using this (for the cases of "composition" and "Kleene plus"),
    by easy induction on $\expr$,
    for every "\PDLREwLAp" "expression@@PDLREwLAp" $\expr$, we have:
    \begin{itemize}
        \item if $\expr = \term$ is a "term@@PDLREwLAp", then
        $\tuple{x, y} \in \semPDLREwLAp{\term}{\struc}$ "iff" $\tuple{(x)_0, (y)_0} \in \semPDLREwLAp{\tonormuni{\term}}{\struc[2]}$;
        \item if $\expr = \fml$ is a "formula@@PDLREwLAp", then
        $x \in \semPDLREwLAp{\fml}{\struc}$ "iff" $(x)_0 \in \semPDLREwLAp{\tonormuni{\fml}}{\struc[2]}$.
    \end{itemize}
    Hence,
    $\tonormuni{\fml}$ is also "satisfiable" on $\GRELstfinbintree{\set{\const{B}}, \set{\const{S}}}$.
    
    \proofcase{$(\Longleftarrow)$}
    By using the "bisimulation" (\Cref{proposition: bisimulation}),
    "wlog", we can assume that $\tonormuni{\term}$ is true at the root $r$ of some $\struc \in \GRELstfinbintree{\set{\const{B}}, \set{\const{S}}}$.
    Let $\struc[2]$ be the "generalized structure" defined as follows:
    \begin{itemize}
        \item
        $\univ{\struc[2]} \defeq \set{x \mid \tuple{r, x} \in \semPDLREwLAp{(\const{S}^{n})^*}{\struc}}$;

        \item
        $\afml_i^{\struc[2]} \defeq \semPDLREwLAp{\dia{\const{S}^{i}} \const{B}}{\struc} \cap \univ{\struc[2]}$ for $i \in \range{0}{n-1}$;

        \item
        $\const{S}^{\struc[2]} \defeq \semPDLREwLAp{\const{S}^{n}}{\struc} \cap \univ{\struc[2]}^2$;
        
        \item $\strucuniv^{\struc[2]}$ is the smallest "partial order" extending the binary relation $\const{S}^{\struc}$.
    \end{itemize}
    First, we have $\semPDLREwLAp{\tonormuni{\term}}{\struc} \subseteq \semPDLREwLAp{(\const{S}^n)^*}{\struc}$,
    by easy induction on $\term$.
    We then have
    both $\semPDLREwLAp{\tonormuni{\term}}{\struc} \cap \univ{\struc[2]}^2 = \semPDLREwLAp{\term}{\struc[2]}$
    and $\semPDLREwLAp{\tonormuni{\fml}}{\struc} \cap \univ{\struc[2]} = \semPDLREwLAp{\fml}{\struc[2]}$
    by easy mutual induction on $\term$ and $\fml$.
    Hence,
    $\fml$ is also "satisfiable" on $\GRELstfinbintree{\psig, \set{\const{S}}}$.
\end{proof}

\nointro{positive boolean formula}
\subsubsection{Reduction to the "emptiness problem" of "tree automata@alternating tree automata"}\label{section: theorem: complexity PDLREwLAp substitution-closed: to tree automata}
\begin{scope}\knowledgeimport{ATA}%
\AP
An ""alternating tree automaton"" (\reintro*\kl{ATA}) is a tuple
$\automaton = \tuple{Q, \Sigma, \delta, q_0}$,
where
$Q$ is a finite set of \AP""states"", 
$\Sigma$ is a finite set of "characters",
$\delta$ is a transition function $\delta: Q \times \Sigma \to \mathbb{B}_{+}(\set{0, 1, 2} \times Q)$, and
$q_0 \in Q$ is the initial state.
Given an input (non-empty, finite, and binary) "tree" $T \colon \set{1, 2}^* \pfun \Sigma$,
a ""run"" of $\automaton$ on $T$ is a "tree" $\trace \colon \nat^* \to Q \times \fdom(T)$ such that
for all $\gamma \in \fdom(\trace)$ with $\trace(\gamma) = \tuple{q, \word}$,
the \kl{positive boolean formula} $\delta(q, T(\word))$ is semantically equivalent to $\truec$ if
the elements in
$\set{\tuple{q', d'} \in Q \times \set{0, 1, 2} \mid \trace(\gamma d) = \tuple{q', \word \series d'} \text{ for $d \in \nat$}}$
are set to $\truec$,
where $\word \AP\intro*\series d \defeq \begin{cases}
\word d & \text{if $d \in \set{1, 2}$}\\
\word & \text{if $d = 0$}
\end{cases}$.
The \AP""tree language"" $\intro*\treelangATA(\automaton)$ is defined as follows:
\[\treelangATA(\automaton) ~\defeq~ \set{T ~\text{an input "tree"} \mid \text{there is a "run" $\trace$ of $\automaton$ on $T$ "st" $\trace(\eps) = \tuple{q_0, \eps}$}}.\]
We consider the "emptiness problem" for "ATAs"---%
given an "ATA" $\automaton$, does $\treelangATA(\automaton) = \emptyset$ hold?
For the complexity, we have:
\begin{proposition}[{\cite[Theorem 7.5.1]{comonTreeAutomataTechniques2007}}]\label{proposition: ATA EXPTIME}
    The "emptiness problem" for "ATAs" is in "EXPTIME".
\end{proposition}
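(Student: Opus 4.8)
The plan is to use the classical two-step route for deciding emptiness of an alternating automaton: first remove alternation, producing an equivalent nondeterministic tree automaton of exponential size, and then decide emptiness of that nondeterministic automaton in polynomial time.

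First I would carry out the subset construction. For a macrostate $P \subseteq Q$ and a character $\sigma \in \Sigma$, call $P$ \emph{$\sigma$-consistent with residuals $\tuple{P_1, P_2}$} if one can choose, for each $q \in P$, a set $S_q \subseteq \set{0,1,2} \times Q$ satisfying the positive boolean formula $\delta(q, \sigma)$ such that every atom $\tuple{0, q'}$ occurring in some $S_q$ has $q' \in P$ and $P_d = \bigcup_{q \in P} \set{q' \mid \tuple{d, q'} \in S_q}$ for $d \in \set{1, 2}$. Resolving the direction-$0$ atoms is a closure (fixpoint) step internal to a single node, and by monotonicity of positive boolean formulas it suffices to consider the minimal sets of obligations; for each fixed $\sigma$ the relation ``$\sigma$-consistent with residuals $\tuple{P_1, P_2}$'' is computable in time polynomial in $2^{\card Q}$. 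I then build the nondeterministic tree automaton $\automaton' \defeq \tuple{\pset(Q), \Sigma, \delta', \set{q_0}}$ whose transitions are $\tuple{P, \sigma} \to \tuple{P_1, P_2}$ whenever $P$ is $\sigma$-consistent with residuals $\tuple{P_1, P_2}$, together with a leaf transition at $\tuple{P, \sigma}$ when $P$ is $\sigma$-consistent by witnesses $S_q$ containing no atom of direction $1$ or $2$. A routine induction on the finite input tree $T$ gives $\treelangATA(\automaton') = \treelangATA(\automaton)$: an accepting run of $\automaton$ collapses, node by node, to a run of $\automaton'$ by collecting at each $w \in \fdom(T)$ the set of states that $\automaton$ places there, and conversely a run of $\automaton'$ expands back to a run of $\automaton$ using the witnessing sets $S_q$. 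The automaton $\automaton'$ has $2^{\card Q}$ states and total size $2^{O(\card Q)}$.

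Finally I would decide emptiness of $\automaton'$ by the standard bottom-up marking algorithm: mark $P$ \emph{productive} as soon as some leaf transition at $\tuple{P, \sigma}$ exists, then repeatedly mark $P$ productive whenever some transition $\tuple{P, \sigma} \to \tuple{P_1, P_2}$ has both $P_1$ and $P_2$ already marked; then $\treelangATA(\automaton') \neq \emptyset$ iff $\set{q_0}$ is eventually marked. This runs in time polynomial in the size of $\automaton'$, hence in $2^{O(\card Q)}$, so the whole decision procedure is in "EXPTIME". The one point requiring care is the alternation-removal step — one must handle the direction-$0$ atoms of $\delta$ consistently (the ``$0$''-closure within a node) and check that the subset construction preserves the accepted language over finite trees; the emptiness test and the complexity bookkeeping are then entirely routine.
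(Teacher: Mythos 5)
The paper does not actually prove this proposition; it imports it verbatim from \cite[Theorem 7.5.1]{comonTreeAutomataTechniques2007}, and your sketch is precisely the standard argument behind that citation: eliminate alternation by a subset construction yielding a nondeterministic tree automaton with $2^{\card Q}$ states, then decide its emptiness by the polynomial-time bottom-up productivity marking. The outline and the complexity bookkeeping are correct. The one step that deserves more than a wave is the one you yourself flag: the direction-$0$ atoms. Under the paper's definition a run is a tree $\trace \colon \nat^* \to Q \times \fdom(T)$ that is not explicitly required to be finite, so when you compute the ``$0$-closure'' inside a single input node you must decide whether cyclic chains of $0$-obligations (state $q$ requiring $q'$ requiring $q$ at the same node) count as satisfiable; if runs are required to be well-founded they do not, and your closure must then be a least fixpoint over the $0$-successor relation rather than an arbitrary consistent choice of witnesses. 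Either convention keeps the $\sigma$-consistency check computable in time polynomial in $2^{\card Q}$, so the \textsc{ExpTime} bound is unaffected, but a complete write-up should fix the convention and carry it through the run-collapsing and run-expanding inductions.
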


In the following,
we give a reduction
from the "theory" of "\PDLREwLAp" on $\GRELstfinbintree{\set{\const{D}}, \set{\const{S}}}$
to the "emptiness problem" for "ATAs".
For a technical reason, we use \textbf{"Kleene star" ($\bl^{*}$) as a primitive operator}.
\begin{definition}\label{definition: cl'}
\AP
The ""FL-closure"" $\clex(\fml)$ of a "\PDLREwLAp" "formula" $\fml$
is the smallest set of "\PDLREwLAp" "formulas" closed under the following rules:
\begin{align*}
    &\fml \in \clex(\fml),&
    &\fml[2] \to \fml[3] \in \clex(\fml) \Longrightarrow \fml[2], \fml[3] \in \clex(\fml),\\
    &\bo{\term[1]} \fml[2] \in \clex(\fml) \Longrightarrow \fml[2] \in \clex(\fml),&
    &\bo{\term[1] \union \term[2]} \fml[2] \in \clex(\fml) \Longrightarrow \bo{\term[1]}\fml[2], \bo{\term[2]}\fml[2] \in \clex(\fml),\\
    &\bo{\term[1]^{+}} \fml[2] \in \clex(\fml) \Longrightarrow \bo{\term[1]}\bo{\term[1]^{*}}\fml[2] \in \clex(\fml),&
    &\bo{\term[1] \compo \term[2]} \fml[2] \in \clex(\fml) \Longrightarrow \bo{\term[1]}\bo{\term[2]}\fml[2] \in \clex(\fml),\\
    &\bo{\term[1]^{*}} \fml[2] \in \clex(\fml) \Longrightarrow \bo{\term[1]}\bo{\term[1]^{*}}\fml[2] \in \clex(\fml),\\
    &\bo{\term[1]^{\adom}} \fml[2] \in \clex(\fml) \Longrightarrow \bo{\term[1]}\fml[2] \in \clex(\fml),&
    &\bo{\fml[3]?} \fml[2] \in \clex(\fml) \Longrightarrow \fml[3] \in \clex(\fml),\\
    &\bo{\term[1]^{\capid}} \fml[2] \in \clex(\fml) \Longrightarrow \bo{\term[1]}\fml[2] \in \clex(\fml),&
    &\bo{\term[1]^{\capcomid}} \fml[2] \in \clex(\fml) \Longrightarrow \bo{\term[1]}\fml[2] \in \clex(\fml). \tag*{\lipicsEnd}
\end{align*}
\end{definition}
Similar to \Cref{section: definition: FL-closure}, we can alternatively define $\clex(\fml)$ as follows:
\begin{align*}
    \clex(\afml)
    &\defeq \set{\afml},&
    \clex(\fml[2] \to \fml[3])
    &\defeq \set{\fml[2] \to \fml[3]} \cup \clex(\fml[2]) \cup \clex(\fml[3]),\\
    \clex(\falsec)
    &\defeq \set{\falsec},&
    \clex(\bo{\term} \fml[2])
    &\defeq \clexbo(\term, \fml[2]) \cup \clex(\fml[2]),\span\span\\[.5ex]
    \clexbo(\aterm, \fml[2])
    &\defeq \set{\bo{\aterm}\fml[2]}, \span\span\\
    \clexbo(\term[2] \union \term[3], \fml[2])
    &\defeq \set{\bo{\term[2] \union \term[3]}\fml[2]}
    \cup \clexbo(\term[2], \fml[2])
    \cup \clexbo(\term[3], \fml[2]),\span\span\\
    \clexbo(\term[2] \compo \term[3], \fml[2])
    &\defeq \set{\bo{\term[2] \compo \term[3]}\fml[2]}
    \cup \clexbo(\term[2], \bo{\term[3]}\fml[2])
    \cup \clexbo(\term[3], \fml[2]),\span\span\\
    \clexbo(\term[2]^{+}, \fml[2])
    &\defeq \set{\bo{\term[2]^{+}}\fml[2]}
    \cup \clexbo(\term[2], \bo{\term[2]^{*}}\fml[2]),\span\span\\
    \clexbo(\term[2]^{*}, \fml[2])
    &\defeq \set{\bo{\term[2]^{*}}\fml[2]}
    \cup \clexbo(\term[2], \bo{\term[2]^{*}}\fml[2]),\span\span\\
    \clexbo(\term[2]^{\adom}, \fml[2])
    &\defeq \set{\bo{\term[2]^{\adom}}\fml[2]}
    \cup \clexbo(\term[2], \fml[2]),\span\span\\
    \clexbo(\term[2]^{\capid}, \fml[2])
    &\defeq \set{\bo{\term[2]^{\capid}}\fml[2]}
    \cup \clexbo(\term[2], \fml[2]),\span\span\\
    \clexbo(\term[2]^{\capcomid}, \fml[2])
    &\defeq \set{\bo{\term[2]^{\capcomid}}\fml[2]}
    \cup \clexbo(\term[2], \fml[2]),\span\span\\
    \clexbo(\fml[3]?, \fml[2])
    &\defeq \set{\bo{\fml[3]?}\fml[2]}
    \cup \clex(\fml[3]).\span\span
\end{align*}
Thus, $\card\clex(\fml) \le \len{\fml}$ and $\card\clexbo(\term) \le \len{\term}$, by easy induction on "expressions".
We also define:
\[\AP\intro*\clexex(\fml) \defeq \clex(\fml) \cup \set{\bo{\term^{\capid}}\fml[2], \bo{\term^{\capcomid}}\fml[2], \bo{\term}\falsec, \bo{\term^{\capid}}\falsec, \bo{\term^{\capcomid}}\falsec \mid \bo{\term}\fml[2] \in \clex(\fml)}.\]
We then have $\card \clexex(\fml) \le 6 \len{\fml}$.

Below, we give an "ATA" construction.
We use the \AP""Iverson bracket"" notation:
$\intro*\Iverson{P} \defeq \begin{cases}
    \truec & \text{if $P$ holds}\\
    \falsec & \text{otherwise}.
\end{cases}$
For $\tuple{p, \fml} \in \cl(\fml_0) \times \set{0, 1}$, we write $(\fml)_{p}$.
Intuitively, $(\fml)_1$ is used to express ``$\fml$'' and $(\fml)_0$ is used to express ``$\lnot \fml$''.
For $i \in \set{1, 2}$, we use ``$i{\downarrow}$'' for indicating that the current vertex has the $i$-th child.
\begin{definition}\label{definition: to ATA}
    For a "formula" $\fml_0 \in \GRELstfinbintree{\psig, \set{\const{S}}}$,
    the "ATA" $\automaton_{\fml_0}$ is defined as
    $\tuple{\clexex(\fml_0) \times \set{0, 1}, \wp(\psig \dcup \set{1{\downarrow}, 2{\downarrow}}), \delta, (\fml_0)_1}$,
    where the transition function $\delta$ is defined as in \Cref{figure: transition function}.
    \begin{figure}[t]
    \begin{tcolorbox}[colback=black!3, top = .3ex, bottom = .3ex, left = .2em, right = .2em]
    \vspace{-2ex}
    \scalebox{.84}{\parbox{1.18\linewidth}{%
    \begin{align*}
        \delta((\afml)_1, P) &\defeq \Iverson{\afml \in P},&
        \delta((\fml[2] \to \fml[3])_1, P) &\defeq \tuple{(\fml[2])_0, 0} \lor \tuple{(\fml[3])_1, 0},\\
        \delta((\falsec)_1, P) &\defeq \falsec,\\
        \delta((\bo{\term} \fml[2])_1, P) &\defeq \tuple{(\bo{\term^{\capid}}\fml[2])_1, 0} \land \tuple{(\bo{\term^{\capcomid}}\fml[2])_1, 0} \mbox{ if $\term$ does not match $\term[2]^{\capid}$ or $\term[2]^{\capcomid}$},\span\span \\
        \delta((\bo{\const{S}^{\capid}}\fml[2])_1, P) &\defeq \truec,\span\span\\
        \delta((\bo{(\term[2]\term[3])^{\capid}}\fml[2])_1, P) &\defeq \tuple{(\bo{\term[2]^{\capid}}\fml[2])_1, 0} \land \tuple{(\bo{\term[3]^{\capid}} \fml[2])_1, 0},\span\span\\
        \delta((\bo{(\term[2] \union \term[3])^{\capid}}\fml[2])_1, P) &\defeq \tuple{(\bo{\term[2]^{\capid}}\fml[2])_1, 0} \land \tuple{(\bo{\term[3]^{\capid}} \fml[2])_1, 0},\span\span\\
        \delta((\bo{(\term[2]^{+})^{\capid}}\fml[2])_1, P) &\defeq \tuple{(\bo{\term[2]^{\capid}}\fml[2])_1, 0},&
        \delta((\bo{(\term[2]^{*})^{\capid}}\fml[2])_1, P) &\defeq \tuple{(\fml[2])_1, 0},\span\span\\
        \delta((\bo{(\term[2]^{\adom})^{\capid}}\fml[2])_1, P) &\defeq \tuple{(\bo{\term[2]}\falsec)_0, 0} \lor \tuple{(\fml[2])_1, 0},&
        \delta((\bo{(\fml[3]?)^{\capid}}\fml[2])_1, P) &\defeq \tuple{(\fml[3])_0, 0} \lor \tuple{(\fml[2])_1, 0},\span\span\\
        \delta((\bo{(\term[2]^{\capid})^{\capid}}\fml[2])_1, P) &\defeq \tuple{(\bo{\term[2]^{\capid}}\fml[2])_1, 0},& 
        ((\bo{(\term[2]^{\capcomid})^{\capid}}\fml[2])_1, P) &\defeq \truec,\span\span\\
        \delta((\bo{\const{S}^{\capcomid}}\fml[2])_1, P) &\defeq (\Iverson{1{\downarrow} \not\in P} \lor \tuple{(\fml[2])_1, 1}) \land (\Iverson{2{\downarrow} \not\in P} \lor \tuple{(\fml[2])_1, 2}),\span\span\\
        \delta((\bo{(\term[2]\term[3])^{\capcomid}}\fml[2])_1, P) &\defeq \tuple{(\bo{\term[2]^{\capcomid}}\bo{\term[3]}\fml[2])_1, 0} \land
        (\tuple{(\bo{\term[2]^{\capid}}\falsec)_1, 0} \lor \tuple{(\bo{\term[3]^{\capcomid}}\fml[2])_1, 0}),\span\span\\
        \delta((\bo{(\term[2] \union \term[3])^{\capcomid}}\fml[2])_1, P) &\defeq \tuple{(\bo{\term[2]^{\capcomid}}\fml[2])_1, 0} \land \tuple{(\bo{\term[3]^{\capcomid}}\fml[2])_1, 0},\span\span\\
        \delta((\bo{(\term[2]^{+})^{\capcomid}}\fml[2])_1, P) &\defeq \tuple{(\bo{\term[2]^{\capcomid}}\bo{\term[2]^{*}}\fml[2])_1, 0},&
        \delta((\bo{(\term[2]^{*})^{\capcomid}}\fml[2])_1, P) &\defeq \tuple{(\bo{\term[2]^{\capcomid}}\bo{\term[2]^{*}}\fml[2])_1, 0},\span\span\\
        \delta((\bo{(\term[2]^{\adom})^{\capcomid}}\fml[2])_1, P) &\defeq \truec,&
        \delta((\bo{(\fml[3]?)^{\capcomid}}\fml[2])_1, P) &\defeq \truec,\span\span\\
        \delta((\bo{(\term[2]^{\capid})^{\capcomid}}\fml[2])_1, P) &\defeq \truec,&
        \delta((\bo{(\term[2]^{\capcomid})^{\capcomid}}\fml[2])_1, P) &\defeq \tuple{(\bo{\term[2]^{\capcomid}}\fml[2])_1, 0}.\span\span
    \end{align*}
    Additionally, $\delta((\fml)_0, P)$ is defined as the ``dual'' of the $\delta((\fml)_1, P)$,
    that is, the \kl{formula} $\delta((\fml)_1, P)$ in which
    $\tuple{(\fml)_{p}, i}$,
    $\land$,
    $\lor$,
    $\falsec$, and
    $\truec$ have been replaced with
    $\tuple{(\fml)_{1-p}, i}$,
    $\lor$,
    $\land$,
    $\truec$, and
    $\falsec$, respectively.
    }}
    \end{tcolorbox}
    \caption{Definition of the transition function $\delta$ of the \kl{ATA} $\automaton_{\fml_0}$.}
    \label{figure: transition function}
    \end{figure}
    \lipicsEnd
\end{definition}

For a (non-empty, finite, and binary) "tree" $T \colon \set{1, 2}^* \pfun \wp(\psig \dcup \set{1{\downarrow}, 2{\downarrow}})$,
we define the "generalized structure" $\struc_{T} \in \GRELstfinbintree{\psig, \set{\const{S}}}$ as follows:
\begin{itemize}
    \item $\univ{\struc_{T}}$ is the smallest set with $\eps \in \univ{\struc_{T}}$ such that
    for all $\word \in \univ{\struc_{T}}$,
    if $d{\downarrow} \in T(\word)$ and $\word d \in \fdom(T)$, then $\word d \in \univ{\struc_{T}}$.

    \item $\const{S}^{\struc_{T}} \defeq \set{\tuple{\word, \word d} \mid \word \in \set{1, 2}^*, d \in \set{1, 2}} \cap \univ{\struc_{T}}^2$,

    \item $\afml^{\struc_{T}} \defeq \set{\word \in \univ{\struc_{T}} \mid \afml \in T(\word)}$ for $\afml \in \psig$, and

    \item $\strucuniv^{\struc_{T}}$ is the smallest "partial order" extending $\const{S}^{\struc_{T}}$.
\end{itemize}
We also define:
\[\AP\intro*\runlangATA(\automaton) ~\defeq~ \set{\tuple{T, q, \word} \mid \text{there is a "run" $\trace$ of $\automaton$ on $T$ "st" $\trace(\eps) = \tuple{q, \word}$}}.\]
We observe $\treelangATA(\automaton) = \set{T \mid \tuple{T, q_0, \eps} \in \runlangATA(\automaton)}$ where $q_0$ is the initial state of $\automaton$.
By construction of $\automaton_{\fml_0}$, we have:
\begin{lemma}\label{lemma: to ATA}
    Let $\fml_0 \in \fmlclassPDLREwLAp{\psig, \set{\const{S}}}$
    and let $T$ be any input "tree".
    For every "expression" $\expr$ and $\word \in \univ{\struc_{T}}$, we have:
    \begin{enumerate}
        \item \label{lemma: to ATA fml} if $\expr = \fml$ is a "formula" "st" $\fml \in \clexex(\fml_0)$, then 
        \[\tuple{T, (\fml)_{p}, \word} \in \runlangATA(\automaton_{\fml_0})
        \quad\iff\quad 
        \word \in_{p} \semPDLREwLAp{\fml}{\struc_{T}},\]

        \item \label{lemma: to ATA term} if $\expr = \term$ is a "term" "st" $\bo{\term^{\capcomid}}\fml[2] \in \clexex(\fml_0)$ for some $\fml[2]$, then 
        \[\tuple{T, (\bo{\term^{\capcomid}}\fml[2])_{p}, \word} \in \runlangATA(\automaton_{\fml_0})
        \quad\iff\quad 
        \word \in_{p} \semPDLREwLAp{\bo{\term^{\capcomid}}\fml[2]}{\struc_{T}}.\]
    \end{enumerate}
    Here, $(\in_{1}) = (\in)$ and $(\in_{0}) = (\not\in)$.
\end{lemma}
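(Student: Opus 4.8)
The plan is to prove parts \ref{lemma: to ATA fml} and \ref{lemma: to ATA term} simultaneously by induction on an \kl{FL-closure}-guided well-founded measure on $\expr$: to a formula $\fml$ (resp.\ to a term $\term$ together with its surrounding box) one assigns a weight dominating the weight of every state that occurs in $\delta((\fml)_1,P)$ (resp.\ in $\delta((\bo{\term^{\capcomid}}\fml[2])_1,P)$) and is reached without moving, i.e.\ via a $\tuple{\bl,0}$-edge. By the design of $\clexex(\fml_0)$ (\Cref{definition: cl'}), every such state lies again in $\clexex(\fml_0)\times\set{0,1}$, so the induction stays inside the automaton's state space. Throughout I use that $\struc_T\in\GRELstfinbintree{\psig,\set{\const{S}}}$, hence $\strucuniv^{\struc_T}$ is a partial order whose strict part is well-founded, $\semPDLREwLAp{\const{S}}{\struc_T}$ is disjoint from $\diagonal_{\univ{\struc_T}}$, and the $\const{S}$-successors of $\word$ are exactly $\word 1$ and $\word 2$, present precisely when $1{\downarrow}\in T(\word)$, resp.\ $2{\downarrow}\in T(\word)$.

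A preliminary reduction is a \emph{duality} observation: for each $q\in\clexex(\fml_0)$, the set of $\tuple{T,\word}$ with $\tuple{T,(q)_0,\word}\in\runlangATA(\automaton_{\fml_0})$ is the complement within $\univ{\struc_T}$ of the set with $\tuple{T,(q)_1,\word}\in\runlangATA(\automaton_{\fml_0})$, because $\delta((q)_0,P)$ is by construction the De Morgan dual of $\delta((q)_1,P)$ and, on a fixed input tree, the existence of an accepting (necessarily finite) run is a self-dual condition. Hence it suffices to establish the $p=1$ direction of \ref{lemma: to ATA fml} and \ref{lemma: to ATA term}, and in each inductive case I may apply the IH in whichever polarity I need.

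The bulk of the work is the case analysis against \Cref{figure: transition function}. The propositional clauses and the clause for $\bo{\term}\fml[2]$ with $\term$ not of the shape $\term[2]^{\capid}$, $\term[2]^{\capcomid}$ (which just records $\bo{\term}\fml\leftrightarrow\bo{\term^{\capid}}\fml\land\bo{\term^{\capcomid}}\fml$, valid on $\GRELpreorder{}$) are immediate from the IH. For the $\bl^{\capid}$-clauses the key is that, by antisymmetry of $\strucuniv^{\struc_T}$, a $\term$-loop at $\word$ can only be made of test/antidomain steps, never of a genuine $\const{S}$-move; this makes each clause agree with the semantics, realizing the axioms \eqref{equation: capid-compo}, \eqref{equation: capid-*}, \eqref{equation: capid-adom}, \eqref{equation: capid-capid}, \eqref{equation: capid-capcomid} (and, for $(\term[2]^{*})^{\capid}$, the fact that such a loop sees only $\word$, leaving the bare check $\tuple{(\fml[2])_1,0}$). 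For the $\bl^{\capcomid}$-clauses I use the decompositions $(\term[2]\term[3])^{\capcomid}=\term[2]^{\capcomid}\term[3]^{\capid}\cup\term[2]^{\capid}\term[3]^{\capcomid}\cup\term[2]^{\capcomid}\term[3]^{\capcomid}$ and $(\term[2]^{+})^{\capcomid}=(\term[2]^{\capcomid})^{+}$ (the axioms \eqref{equation: capcomid-compo}, \eqref{equation: capcomid-*}, valid on $\GRELpartialorder{}$), the local rewriting of $\bo{\term[2]^{\capid}}\psi$, and the base clause $\bo{\const{S}^{\capcomid}}\fml[2]$, which literally demands $\fml[2]$ at the existing children and matches $(\Iverson{1{\downarrow}\notin P}\lor\tuple{(\fml[2])_1,1})\land(\Iverson{2{\downarrow}\notin P}\lor\tuple{(\fml[2])_1,2})$.

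The step I expect to be the main obstacle is controlling the $\tuple{\bl,0}$ (non-moving) transition chains, so that the simultaneous induction is genuinely well-founded and the box modalities — which read as greatest fixpoints — are nonetheless captured by finite runs. Unfolding $\bo{(\term[2]^{+})^{\capcomid}}\fml[2]$ or $\bo{(\term[2]^{*})^{\capcomid}}\fml[2]$ yields $\tuple{(\bo{\term[2]^{\capcomid}}\bo{\term[2]^{*}}\fml[2])_1,0}$, which stays at $\word$, and the $\bl^{\capid}$-rules likewise recurse at $\word$; I must show every maximal such chain is finite. The invariant is that the term sitting immediately under the outermost box strictly shrinks along each non-moving step, so the chain eventually reaches $\const{S}^{\capcomid}$ (which moves strictly down, hence is iterated only finitely often by well-foundedness of the strict part of $\strucuniv^{\struc_T}$) or a test/antidomain (which terminates). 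Combined with the finiteness of $\clexex(\fml_0)$, this pins down the equivalence between acceptance by a finite run and the semantic condition at $\word$. Instantiating the lemma at $\tuple{(\fml_0)_1,\eps}$ and using $\treelangATA(\automaton_{\fml_0})=\set{T\mid\tuple{T,(\fml_0)_1,\eps}\in\runlangATA(\automaton_{\fml_0})}$ then gives the polynomial-time reduction, and with \Cref{proposition: ATA EXPTIME} the "EXPTIME" upper bound.
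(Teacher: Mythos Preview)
Your approach is essentially the paper's: case analysis against the transition table in \Cref{figure: transition function}, using that every $0$-direction step shrinks the term immediately under the outermost box while a genuine $\const{S}^{\capcomid}$-step descends strictly in the finite tree. Two differences are worth noting. First, the paper packages the well-foundedness as a single \emph{lexicographic} induction on the pair of (i)~the reversed strict order $\strucuniv^{\struc_T}\setminus\diagonal_{\univ{\struc_T}}$ and (ii)~the size of $\expr$; with this, each IH call is immediate (the $\const{S}^{\capcomid}$ case uses~(i), all other cases use~(ii)), and your separate ``non-moving chains terminate'' discussion becomes unnecessary. Your phrase ``induction on a \dots\ measure on $\expr$'' is not literally right, since after an $\const{S}^{\capcomid}$-move the expression resets to $\fml[2]$, which can be larger; you clearly see this, but stating the lexicographic pair up front would sharpen the argument. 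Second, your duality reduction to $p=1$ is not in the paper, which simply carries both polarities through the same induction (writing ``similarly for $p=0$'' in each case). The reduction is correct, but note that complementation-by-dualisation for ATAs with $\epsilon$-moves already presupposes that every play of the acceptance game is finite --- precisely the well-foundedness you establish anyway --- so it reorganises the work rather than avoiding it.
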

\begin{proof}
    By induction on the pair of the reversed ordering of $\strucuniv^{\struc_{T}} \setminus \diagonal_{\univ{\struc_{T}}}$ and the size of $\expr$.

    \proofcase{For \ref{lemma: to ATA fml}}
    \proofcase{Case $\fml = \afml, \fml[2] \to \fml[3]$}
    Easy.
    \proofcase{Case $\fml = \bo{\term^{\capid}} \fml[2]$}
    All the cases follow easily from the semantics and IH.
    \proofcase{Case $\fml = \bo{\term^{\capcomid}} \fml[2]$}
    By IH of \ref{lemma: to ATA term}.
    \proofcase{Case $\fml = \bo{\term} \fml[2]$}
    By the same argument as the cases of $\bo{\term^{\capid}} \fml[2]$ and $\bo{\term^{\capcomid}} \fml[2]$,
    we can show this case.

    \proofcase{For \ref{lemma: to ATA term}}
    Many cases follow easily from the semantics and IH.
    We write some selected cases.

    \proofcase{Case $\term = \const{S}$}
    When $p = 1$, $1{\downarrow} \in T(\word)$, and $2{\downarrow} \not\in T(\word)$ (similarly for the other cases), we have:
    \begin{align*}
    &\tuple{T, (\bo{\const{S}^{\capcomid}}\fml[2])_1, \word} \in \runlangATA(\automaton_{\fml_0})\\
    &\iff
    (\mbox{false} \mbox{ or } \tuple{T, (\bo{\const{S}^{\capcomid}}\fml[2])_1, \word 1} \in \runlangATA(\automaton_{\fml_0}))\\
    &\hspace{5em} \mbox{ and }
    (\mbox{true} \mbox{ or } \tuple{T, (\bo{\const{S}^{\capcomid}}\fml[2])_1, \word 2} \in \runlangATA(\automaton_{\fml_0}))
    \tag{By definition of $\delta$}\\
    &\iff
    \tuple{T, (\bo{\const{S}^{\capcomid}}\fml[2])_1, \word 1} \in \runlangATA(\automaton_{\fml_0})\\
    &\iff
    \word 1 \in \semPDLREwLAp{\fml[2]}{\struc_{T}} \tag{By IH "wrt" the first parameter}\\
    &\iff \word \in \semPDLREwLAp{\bo{\const{S}^{\capcomid}}\fml[2]}{\struc_{T}}. \tag{The only $\const{S}$-successor of $\word$ is $\word 1$}
    \end{align*}

    \proofcase{Case $\term = \term[2] \term[3]$}
    When $p = 1$ (similarly for the case of $p = 0$), we have:
    \begin{align*}
    &\tuple{T, (\bo{(\term[2]\term[3])^{\capcomid}}\fml[2])_1, \word} \in \runlangATA(\automaton_{\fml_0})\\
    &\iff
    \tuple{T, (\bo{\term[2]^{\capcomid}}\bo{\term[3]}\fml[2])_1, \word} \in \runlangATA(\automaton_{\fml_0})
    \mbox{ and }\\
    & \hspace{5em} (\tuple{T, (\bo{\term[2]^{\capid}}\falsec)_0, \word} \in \runlangATA(\automaton_{\fml_0})
    \mbox{ or }
    \tuple{T, (\bo{\term[3]^{\capcomid}}\fml[2])_1, \word} \in \runlangATA(\automaton_{\fml_0})) \tag{By definition of $\delta$}\\
    &\iff
    \word \in \semPDLREwLAp{\bo{\term[2]^{\capcomid}}\bo{\term[3]}\fml[2]}{\struc_{T}} \mbox{ and } (\word \not\in \semPDLREwLAp{\bo{\term[2]^{\capid}}\falsec}{\struc_{T}}
    \mbox{ or }
    \word \in \semPDLREwLAp{\bo{\term[3]^{\capcomid}}\fml[2]}{\struc_{T}}) \tag{By IH}\\
    &\iff
    \word \in \semPDLREwLAp{\bo{\term[2]^{\capcomid}}\bo{\term[3]}\fml[2]}{\struc_{T}} \mbox{ and } \word \not\in \semPDLREwLAp{\bo{\term[2]^{\capid}}\bo{\term[3]^{\capcomid}}\fml[2]}{\struc_{T}} \\
    &\iff
    \word \in \semPDLREwLAp{\bo{(\term[2]\term[3])^{\capcomid}}\fml[2]}{\struc_{T}}. 
    \end{align*}
    
    \proofcase{Case $\term = \term[2]^*$}
    When $p = 1$ (similarly for the case of $p = 0$), we have:
    \begin{align*}
    \tuple{T, (\bo{(\term[2]^*)^{\capcomid}}\fml[2])_1, \word} \in \runlangATA(\automaton_{\fml_0})
    &\iff \tuple{T, (\bo{\term[2]^{\capcomid}}\bo{\term[2]^*} \fml[2])_1, \word} \in \runlangATA(\automaton_{\fml_0}) \tag{By definition of $\delta$}\\
    &\iff \word \in \semPDLREwLAp{\bo{\term[2]^{\capcomid}}\bo{\term[2]^*} \fml[2]}{\struc_{T}} \tag{By IH}\\
    &\iff \word \in \semPDLREwLAp{\bo{(\term[2]^*)^{\capcomid}}\fml[2]}{\struc_{T}}.
    \end{align*}

    Hence, the proof is completed.
\end{proof}
Hence, we have:
$\treelangATA(\automaton_{\fml_0}) \neq \emptyset$
"iff"
$\fml_0$ is "satisfiable" on $\GRELstfinbintree{\psig, \set{\const{S}}}$.
\begin{proof}[Proof for the upper bound of \Cref{theorem: complexity PDLREwLAp substitution-closed}]
By combining the reductions in this section,
there is a polynomial-time reduction from the "theory" of "\PDLREwLAp" on $\GRELfinlin{}$
to "that@theory" on $\GRELstfinbintree{\set{\const{D}}, \set{\const{S}}}$.
By \Cref{lemma: to ATA},
we can give a polynomial-time this "theory" to
the "emptiness problem" of "ATAs".
Thus by \Cref{proposition: ATA EXPTIME}, we have obtained the "EXPTIME" upper bound.
\end{proof}
\end{scope}

\subsection{Proof of the upper bound of \Cref{theorem: complexity PDLREwLAp standard}}\label{section: theorem: complexity PDLREwLAp standard}
\begin{scope}\knowledgeimport{ASA}%
An \AP""alternating string automaton"" (\reintro*\kl{ASA}) is a tuple
$\automaton = \tuple{Q, \Sigma, \delta, q_0}$,
where
$Q$ is a finite set of "states", 
$\Sigma$ is a finite set of "characters",
$\delta$ is a transition function $\delta: Q \times \Sigma \to \mathbb{B}_{+}(\set{0, 1} \times Q)$,
$q_0 \in Q$ is the initial state.
Given an input (non-empty and finite) ``"string"'' $T \colon \set{1}^* \pfun \Sigma$,
a ""run"" of $\automaton$ on $T$ is a "tree" $\trace \colon \nat^* \to Q \times \fdom(T)$ such that
for all $\gamma \in \fdom(\trace)$ with $\trace(\gamma) = \tuple{q, \word}$,
the \kl{positive boolean formula} $\delta(q, T(\word))$ is semantically equivalent to $\truec$ if
the elements in
$\set{\tuple{q', d'} \in Q \times \set{0, 1} \mid \trace(\gamma d) = \tuple{q', \word \series d'} \text{ for $d \in \nat$}}$
are set to $\truec$,
where $\word \series d \defeq \begin{cases}
\word d & \text{if $d = 1$}\\
\word & \text{if $d = 0$}.
\end{cases}$
\AP
The ""string language"" $\intro*\wlangASA(\automaton)$ is defined as follows:
\[\wlangASA(\automaton) ~\defeq~ \set{T ~\text{an input "string"} \mid \text{there is a "run" $\trace$ of $\automaton$ on $T$ "st" $\trace(\eps) = \tuple{q_0, \eps}$}}.\]
We consider the "emptiness problem" for "ASAs"---%
given an "ASA" $\automaton$, does $\wlang(\automaton) = \emptyset$ hold?
For the complexity, we have:
\begin{proposition}[{\cite[Theorem 3.1]{jiangNoteSpaceComplexity1991}}]\label{proposition: ASA PSPACE}
    The "emptiness problem" for "ASAs" is in "PSPACE".
\end{proposition}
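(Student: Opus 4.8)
The statement is \cite[Theorem 3.1]{jiangNoteSpaceComplexity1991}, so strictly one may simply cite it; for self-containedness I would recall the standard argument, which is the word analogue of the "ATA" construction of \Cref{section: theorem: complexity PDLREwLAp substitution-closed: to tree automata}, the one difference being that over a word the relevant subset search can be run on the fly in polynomial space instead of being solved by an exponential fixpoint. Fix an "ASA" $\automaton = \tuple{Q, \Sigma, \delta, q_0}$. The plan is: (i) group the nodes of an accepting run according to the string position each occupies, which turns a run into a sequence of \emph{macrostates} in $\wp(Q)$; (ii) characterise the feasible such sequences by a transition relation on $\wp(Q)$ that is checkable in polynomial time; and (iii) decide the existence of a feasible sequence by a nondeterministic on-the-fly reachability search using space polynomial in the size of $\automaton$, then appeal to Savitch's theorem.

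For (ii) the key device is, for $a \in \Sigma$ and $S' \subseteq Q$, the set $G(S', a) \subseteq Q$ of states that can be discharged while reading the single letter $a$, using arbitrarily long chains of direction-$0$ (stay) moves whose direction-$1$ exits are forced to land in $S'$. Concretely, $G(S', a)$ is the appropriate fixpoint (the least one if accepting runs are required to be well-founded, the greatest one otherwise, either way monotone) of the operator $X \mapsto \set{q \in Q \mid \delta(q, a) \text{ evaluates to } \truec \text{ under the assignment } \tuple{q', 1} \mapsto \Iverson{q' \in S'},\ \tuple{q', 0} \mapsto \Iverson{q' \in X}}$ on subsets of $Q$; since each $\delta(q, a)$ is a "positive boolean formula" the operator is monotone, so the fixpoint stabilises within $\card Q$ iterations, each of which merely evaluates, on a fixed assignment, the $\card Q$ formulas $\delta(q, a)$ for $q \in Q$, and hence $G(S', a)$ is computable in polynomial time. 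Reading an accepting run layer by layer (and, conversely, gluing together per-letter sub-forests) yields the characterisation: $\wlangASA(\automaton) \neq \emptyset$ if and only if there are some $n \geq 1$, letters $a_1, \dots, a_n \in \Sigma$, and sets $D_1, \dots, D_n \subseteq Q$ with $D_1 = \set{q_0}$, with $D_i \subseteq G(D_{i+1}, a_i)$ for every $i \in \range{1}{n-1}$, and with $D_n \subseteq G(\emptyset, a_n)$ (no direction-$1$ move being available at the last letter). Monotonicity of $\delta$ lets one replace any $D_i$ by a superset, so this is exactly reachability, from $\set{q_0}$, of some set $D$ for which $D \subseteq G(\emptyset, a)$ holds for some $a$, inside the directed graph on $\wp(Q)$ with an edge $D \to D'$ whenever $D \subseteq G(D', a)$ for some $a \in \Sigma$.

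Step (iii) is then routine: a nondeterministic procedure keeps in memory only the current set $D \subseteq Q$, the currently guessed letter, and a step counter bounded by $2^{\card Q}$ (so $\card Q$ bits suffice); it repeatedly guesses the next letter $a$ and next set $D'$, verifies the edge condition $D \subseteq G(D', a)$ by the polynomial-time evaluation of $G$, and sets $D \defeq D'$; it halts and accepts as soon as $D \subseteq G(\emptyset, a)$ holds for a guessed $a$, and rejects on counter overflow. This uses space polynomial in the size of $\automaton$, so the "emptiness problem" for "ASAs" is in NPSPACE, hence in "PSPACE" by Savitch's theorem.

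I expect the only genuinely delicate point to be the treatment of the direction-$0$ (stay) moves, which is precisely what the per-letter closure $G$ and the choice of fixpoint are designed to absorb; once $G$ is pinned down, both directions of the macrostate characterisation are the usual unfolding and gluing arguments for the subset construction of alternating word automata, and nothing else requires real care.
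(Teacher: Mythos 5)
The paper does not prove this proposition at all; it is imported verbatim as a citation to Jiang and Ravikumar, so there is no in-paper argument to compare against. Your self-contained proof is the standard one for emptiness of alternating word automata (macrostates in the powerset of $Q$, a per-letter closure to absorb stay moves, on-the-fly reachability in nondeterministic polynomial space, Savitch), and it is sound: the operator defining $G(S',a)$ is monotone because every $\delta(q,a)$ is a positive boolean formula, so the fixpoint is reached in at most $\card Q$ iterations of polynomial-time formula evaluation, and monotonicity in $S'$ justifies collapsing the layer-by-layer characterisation to plain reachability. The one point you leave open --- least versus greatest fixpoint for chains of direction-$0$ moves --- does need to be pinned down, since for a transition such as $\delta(q,a)=\tuple{q,0}$ the two choices give opposite answers to emptiness; under the convention of the cited reference (and the only reading under which the paper's reduction in \Cref{section: theorem: complexity PDLREwLAp standard} is meaningful), accepting runs are finite, i.e.\ well-founded, so the least fixpoint is the correct choice. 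With that fixed, your argument is complete and, if anything, more informative than the paper's bare citation.
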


\begin{theorem*}[Restatement of \Cref{theorem: complexity PDLREwLAp standard}]
\theoremcomplexityPDLREwLApstandard
\end{theorem*}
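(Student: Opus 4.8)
The plan is to establish the "PSPACE" upper bound; the matching lower bound is already obtained in the main body from the "PSPACE"-hardness of "\RE" equivalence together with the reduction of \Cref{proposition: REwLAp to PDLREwLAp}. As usual it suffices to treat "\PDLREwLAp", since "\PDL" is a syntactic fragment, and --- "PSPACE" being closed under complementation --- it is enough to decide, for a given formula $\fml$, whether $\fml$ is "satisfiable" on $\GRELstfinlin{}$ in polynomial space. I would run the same reduction chain as in the proof of \Cref{theorem: complexity PDLREwLAp substitution-closed}, but considerably simplified: every member of $\GRELstfinlin{}$ is (isomorphic to) a "string structure" $\wordstruc^{\word}$ carrying an arbitrary labelling of "formula variables", and since such structures are linear, neither the "tree unwinding" step nor the binary-tree encoding is needed --- one reduces directly to the "emptiness problem" for "ASAs", which is in "PSPACE" by \Cref{proposition: ASA PSPACE}.

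First, in every $\struc \in \GRELstfinlin{}$ each "term variable" denotes a single forward step, so $\semPDLREwLAp{\aterm^{\capid}}{\struc} = \emptyset$ and the ``removing the identity-part'' step becomes trivial. Next, exactly as in \Cref{lemma: reduction to one term variable}, since every non-initial position of $\wordstruc^{\word}$ has a unique incoming edge with a unique label, I would re-read the old "term variables" as "formula variables" and use a single fresh "term variable" $\const{S}$: replace each $\aterm$ occurring in $\fml$ by $\const{S}\aterm?$ and conjoin the structural guard $\fml[2]_0 \defeq \bo{\const{S}^{+}}\bigl((\bigvee_{\aterm \in \exprvsig(\fml)} \aterm) \land \bigwedge_{\aterm[1] \neq \aterm[2]} (\lnot \aterm[1] \lor \lnot \aterm[2])\bigr)$, obtaining an equi-satisfiable formula over the class of finite paths labelled only by the single "term variable" $\const{S}$ (with "formula variables" $\psig \dcup \exprvsig(\fml)$). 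Then, as in \Cref{lemma: reduction to one formula variable}, a unary encoding of the "formula variables" --- replacing $\afml_i$ by $\dia{\const{S}^{i}}\const{B}$ and $\const{S}$ by $\const{S}^{n}$ --- reduces further to satisfiability over $\GRELstfinlin{\set{\const{B}}, \set{\const{S}}}$; the ``truth transfer'' arguments needed at each stage are exactly those of \Cref{proposition: bisimulation}.

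Finally, I would specialise the automaton construction of \Cref{definition: to ATA} to "ASAs": take the closure $\clex(\fml_0)$ of \Cref{definition: cl'}, the state set $\clexex(\fml_0) \times \set{0, 1}$, and the transition function of \Cref{figure: transition function} with the child-direction $2$ deleted (on a "string structure" $\const{S}$ has no self-loop, so $\bo{\const{S}^{\capid}}\fml[2]$ keeps the transition $\truec$, and $\bo{\const{S}^{\capcomid}}\fml[2]$ just asserts $\fml[2]$ at the unique successor, if any). Defining $\struc_{T}$ from an input string $T$ as in the tree case, a truth lemma analogous to \Cref{lemma: to ATA} --- proved by induction on the lexicographic pair formed by the reversed ordering of $\struc_T$ and the size of the expression --- then shows that the resulting "ASA" $\automaton_{\fml_0}$ accepts exactly the encodings of the "string structures" at which $\fml_0$ holds, so $\wlangASA(\automaton_{\fml_0}) \neq \emptyset$ "iff" $\fml_0$ is "satisfiable" on $\GRELstfinlin{\set{\const{B}}, \set{\const{S}}}$. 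All stages of the chain are polynomial-time computable and $\card \clexex(\fml_0)$ is linear in $\len{\fml_0}$, so composing with \Cref{proposition: ASA PSPACE} yields the claimed bound.

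The step I expect to be the main obstacle is this truth lemma, for the same reasons as in the "ATA" case: the clauses for $(\term[2]\term[3])^{\capcomid}$ and $(\term[2]^{+})^{\capcomid}$, together with the interaction of the "antidomain", "rich tests", and the identity/non-identity decomposition, are what force the induction to proceed on the lexicographic pair (position in the reversed linear order, expression size) rather than on the expression alone; once this is set up correctly, the rest is a routine specialisation of the already-established substitution-closed argument to linear structures.
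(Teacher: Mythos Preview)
Your proposal is correct and follows essentially the same approach as the paper's proof: skip the tree-unwinding and binary-tree encoding steps (unnecessary for linear structures), apply the single-term-variable and single-formula-variable reductions of \Cref{lemma: reduction to one term variable,lemma: reduction to one formula variable}, then specialise the automaton construction of \Cref{definition: to ATA} to an "ASA" by dropping direction $2$, and conclude via \Cref{proposition: ASA PSPACE}. Your observation that the identity-part removal is trivial on $\GRELstfinlin{}$ (since each $\aterm^{\struc}$ is already irreflexive in a "string structure") is exactly why the paper's appendix proof invokes only those two lemmas and not the analogue of \Cref{lemma: finite linear orders to finite strict linear orders prime}.
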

\begin{proof}[Proof Sketch for the upper bound of \Cref{theorem: complexity PDLREwLAp standard}]
Using the same reductions from \Cref{lemma: reduction to one formula variable,lemma: reduction to one term variable},
we can give a polynomial-time reduction from the "theory" of "\PDLREwLAp" on $\GRELstfinlin{}$
to "that@theory" on $\GRELstfinlin{\set{\const{D}}, \set{\const{S}}}$.
By the same automata construction as \Cref{lemma: to ATA}
(where we redefine $\delta((\bo{\const{S}^{\capcomid}}\fml[2])_1, P)$
as $(\Iverson{1{\downarrow} \not\in P} \lor \delta((\fml[2])_1, P))$),
we can give a polynomial-time this "theory" to
the "emptiness problem" of "ASAs".
Thus by \Cref{proposition: ASA PSPACE}, we have obtained the "PSPACE" upper bound.
\end{proof}
\end{scope}

\end{scope}

\end{document}